\documentclass[a4paper,twocolumn]{revtex4}
\usepackage[latin1]{inputenc}
\usepackage{amsmath}
\usepackage{amsfonts}
\usepackage{amssymb}
\usepackage{amsthm}
\usepackage{graphicx}

\newtheorem{Proposition}{Proposition}
\newtheorem{Lemma}{Lemma}
\newtheorem{Corollary}{Corollary}
\newtheorem{Definition}{Definition}
\newtheorem{Assumptions}{Assumptions}

\newcommand{\Real}{{\mathrm{Re}}}
\newcommand{\Imag}{{\mathrm{Im}}}
\newcommand{\Tr}{{\mathrm{Tr}}} 
\newcommand{\Sp}{{\mathrm{Sp}}}

\begin{document}
\title{Fully quantum fluctuation theorems}
\author{Johan {\AA}berg}
\email{johan.aberg@uni-koeln.de}
\affiliation{Institute for Theoretical Physics, University of Cologne, Z\"ulpicher Strasse 77, D-50937, Cologne, Germany}

\begin{abstract}
Systems that are driven out of thermal equilibrium typically dissipate random quantities of energy on microscopic scales. Crooks fluctuation theorem relates the distribution of these random work costs with the corresponding distribution for the reverse process. By an analysis that explicitly incorporates the energy reservoir that donates the energy, and the control system that implements the dynamic, we here obtain a quantum generalization of Crooks theorem that not only includes  the energy changes in the reservoir, but the full description of its evolution, including coherences. This approach moreover opens up for generalizations of the concept of fluctuation relations. Here we introduce `conditional' fluctuation relations that are applicable to non-equilibrium systems, as well as approximate fluctuation relations that allow for the analysis of autonomous evolution  generated by global time-independent Hamiltonians. We furthermore extend these notions to Markovian master equations, implicitly modeling the influence of the heat bath.
\end{abstract}

\maketitle

\section{Introduction}

Imagine a physical system with a Hamiltonian  $H_S(x)$ that depends on some external parameter $x$, e.g., electric or magnetic fields that we can vary at will. By changing $x$ we can push the system out of thermal equilibrium. This would typically require work that may be dissipated due to interactions with the surrounding heat bath. The latter may also make the dissipation random, in the sense that the work cost $w$ is different each time we implement the same change of the Hamiltonian   \cite{Bustamante05,Rondoni07,ReviewJarzynski,ReviewSeifert,ReviewMarconi,ReviewFluctThm}. (Think of a spoon pushed through syrup. On microscopic scales the friction resolves into random molecular collisions.) These fluctuations can be described via a probability distribution $P_{+}(w)$. Crooks theorem \cite{CrooksTheorem} shows that there is a surprisingly simple relation between $P_{+}$ and the corresponding distribution  $P_{-}$ obtained for the process run in reverse (i.e., where the time-schedule for the change of $x$ is mirrored) namely
\begin{equation}
\label{StandardCrooks}
Z(H^i_S)P_{+}(w) =  Z(H^f_S)e^{\beta w}P_{-}(-w),
\end{equation}
where $H^i_S$ and $H^f_S$ are the initial and final Hamiltonians, respectively, and 
  $Z(H)$ is the partition function, which in the quantum case takes the form $Z(H) = \Tr e^{-\beta H}$. Moreover, $\beta = 1/(kT)$, with $k$ Boltzmann's constant, and where $T$ is the absolute temperature of the heat bath. In this investigation we only consider a single heat bath with a fixed temperature $T$. 

Crooks theorem was originally derived in a classical setting \cite{CrooksTheorem} (for reviews on classical fluctuation theorems, see \cite{Bustamante05,Rondoni07,ReviewJarzynski,ReviewSeifert,ReviewMarconi,ReviewFluctThm}) but there has accumulated a considerable body of quantum fluctuation relations (for reviews see \cite{Esposito09,Campisi11a,Hanggi15}). 
However, the latter often include measurements, and in particular energy measurements (see e.g.~\cite{Tasaki00,Kurchan01,Mukamel03b,Jarzynski04,Talkner05,Saito08,Quan08,Talkner08,Talkner09,Andrieux09,Campisi10,Campisi10b,Deffner11b,Campisi11b,Talkner13}) that typically destroy coherences and quantum correlations. Here we avoid such auxiliary components, and obtain  fluctuation relations that retain all quantum aspects of the evolution.

The key is to explicitly model all degrees of freedom involved in the process.  This includes the control mechanism that implements the change of the external parameter, i.e., $x$ in $H_S(x)$, as well as the `energy reservoir', e.g., a battery or an excited atom, which donates the energy for the work $w$.
Since work corresponds to a change of energy in this reservoir, it follows that the reservoir  itself is forced to evolve as it fuels the dynamics. The general theme of this investigation is to formulate fluctuation relations in terms of this induced evolution. To make this more concrete, let us briefly display the first of the fluctuation theorems that we will derive. The roles of the probability distributions $P_{+}$ and $P_{-}$ are here taken over by  the channels $\mathcal{F}_{+}$ and  $\mathcal{F}_{-}$ that are induced on the energy reservoir by the forward and reverse processes, respectively. As we shall see, these channels can, under suitable conditions, be related by the following quantum Crooks relation 
\begin{equation}
\label{QuantumCrooks}
Z(H^i_{S}) \mathcal{F}_{+} = Z(H^f_{S})\mathcal{J}_{\beta H_E}\mathcal{F}_{-}^{\ominus}\mathcal{J}_{\beta H_E}^{-1},
\end{equation}
where $H_E$ denotes the Hamiltonian of the energy reservoir $E$.
The combined application of $\mathcal{J}_{\beta H_E}$ and $\mathcal{J}_{\beta H_E}^{-1} $, where $\mathcal{J}_{\beta H_E}(Q) = e^{-\beta H_E/2}Q e^{-\beta H_E/2}$, can be viewed as a  counterpart to the term $e^{\beta w}$ in (\ref{StandardCrooks}).  The mapping from $\mathcal{F}_{-}$ to $\mathcal{F}^{\ominus}_{-}$  (to be described in detail later) is related to time-reversals, and can in some sense be regarded as a generalization of the transformation of $P_{-}(w)$ to $P_{-}(-w)$ at the right hand side of (\ref{StandardCrooks}). Note that the right hand side of (\ref{QuantumCrooks}) should be interpreted in the sense of compositions of functions, i.e., (\ref{QuantumCrooks}) can be written more explicitly as 
$Z(H^i_{S}) \mathcal{F}_{+}(\sigma) = Z(H^f_{S})\mathcal{J}_{\beta H_E}\Big(\mathcal{F}_{-}^{\ominus}\big(\mathcal{J}_{\beta H_E}^{-1}(\sigma)\big)\Big)$, with $\sigma$ being an arbitrary operator on the energy reservoir.

To get an alternative perspective on fully quantum fluctuation theorems and their relation to the second law, the reader is encouraged to consult \cite{Alhambra16}.  While we here primarily consider generalizations of Crooks theorem, \cite{Alhambra16} mainly focuses on equalities, including generalizations of the Jarzynski equality. However, our paths do  occasionally cross, for example in section \ref{SecMainEnergytransl} where we share the focus on energy translation invariance, and section  \ref{SecMainFlctnMstrEq} where thermal operations play an important role.

In the following section we derive (\ref{QuantumCrooks}) and moreover show that it can be decomposed into diagonal and off-diagonal Crooks relations, thus yielding fluctuation relations for  coherences. We also derive  quantum Jarzynski equalities, as well as   bounds on the work cost. In section \ref{SecMainEnergytransl} we regain the classical Crooks relation (\ref{StandardCrooks}) from (\ref{QuantumCrooks}) via the additional assumption of energy translation invariance.  We next turn to  generalizations of  (\ref{QuantumCrooks}), where section \ref{MainConditional} introduces conditional fluctuation relations, and section \ref{MainApproximate} approximate versions. In section \ref{SecMainFlctnMstrEq} we formulate fluctuation theorems for master equations.

\section{\label{MainAFluctThm}A quantum fluctuation theorem}

\subsection{\label{MainTheModel}The model}

To derive the quantum Crooks relation in (\ref{QuantumCrooks}) we employ a general class of models that previously has been used in the context of quantum thermodynamics to analyze, e.g., work extraction, information erasure, and coherence  \cite{Janzing00,Horodecki11,Brandao13,Brandao13b,Skrzypczyk13,Aberg13,Ng14,Lostaglio14a,Lostaglio14b,Lostaglio15c,Cwiklinski15,Narasimhachar15,Korzekwa15,Masanes15,Wehner15,Gallego15,Woods15,Alhambra15}. The main idea is that we include all the relevant degrees of freedom (which in our case consist of four subsystems, see Fig.~\ref{FigSystems}) and assign a global time-independent Hamiltonian $H$ to account for energy. On this joint system  we are allowed to act with any unitary operation $V$ that conserves energy, which is formalized by the condition that $V$ commutes with the total Hamiltonian $[H,V] = 0$. (See \cite{Skrzypczyk14} for an alternative notion of energy conservation.) 

At first sight it may be difficult to see how such a manifestly time-independent Hamiltonian can be used to describe the evolving Hamiltonians in Crooks theorem.  
 For this purpose we introduce a control system $C$ such that the Hamiltonian of $S' = SB$ depends on the state of $C$. As an illustration, suppose that we wish to describe a transition from an initial Hamiltonian $H^{i}_{S'}$ to a final Hamiltonian $H^{f}_{S'}$. One possibility would be to define a joint Hamiltonian of the form $H_{S'C} = H^{f}_{S'}\otimes |c_f\rangle\langle c_f| + H^{i}_{S'}\otimes |c_i\rangle\langle c_i|$, where $|c_f\rangle$ and $|c_i\rangle$ are two normalized and orthogonal states of the control system $C$. If the evolution would change the control from $|c_i\rangle$ to $|c_f\rangle$, then this would effectively change the Hamiltonian of $S'$ from  $H^i_{S'}$ to $H^{f}_{S'}$. However, this change can in general not be achieved by an energy conserving unitary operation on $S'C$ alone, since the transition from $H^{i}_{S'}$ to $H^{f}_{S'}$ typically will involve a change of energy. The role of the energy reservoir $E$ is to make these transitions possible by  donating or absorbing the necessary energy. With a suitable choice of Hamiltonian $H_E$ of the energy reservoir, the global Hamiltonian $H = H_{S'C}\otimes \hat{1}_E + \hat{1}_{S'C}\otimes H_E$ allows for non-trivial energy conserving unitary operators $V$. 

\begin{figure}[h]
 \includegraphics[width= 8cm]{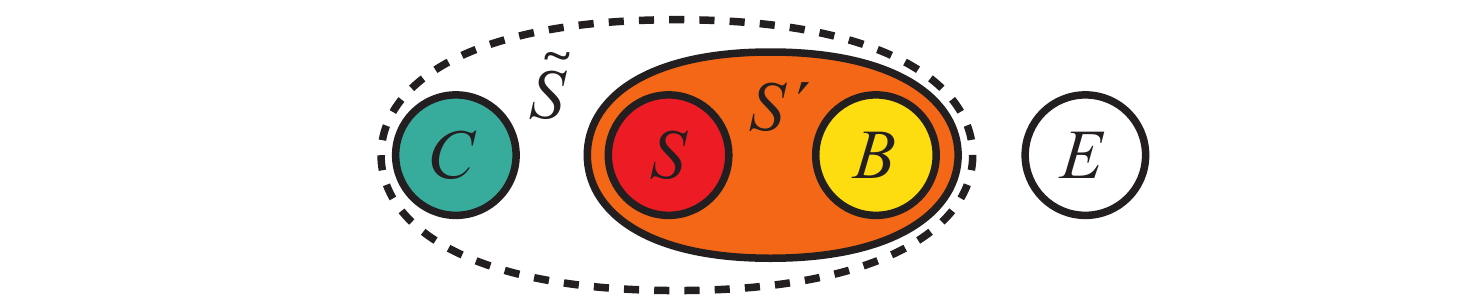} 
   \caption{\label{FigSystems}  {\bf The systems.} 
The model explicitly includes the `system' $S$ on which we operate, the heat bath $B$, the control $C$ that implements the change of the Hamiltonian on $SB$, and the energy reservoir $E$ that donates or accepts the energy required to drive the processes. Most of our fluctuation relations are expressed in terms of the dynamics induced on the energy reservoir $E$. \\
For the main part of this investigation (Secs.~\ref{MainAFluctThm} and \ref{MainConditional}) we assume that the total system is described by a time-independent Hamiltonian that is non-interacting between $SBC$ and $E$, i.e., is of the form $H = H_{SBC}\otimes \hat{1}_E + \hat{1}_{SBC}\otimes H_E$. We moreover model the evolution by energy conserving unitary operators $V$, i.e., $[H,V] = 0$. (We go beyond these assumptions when we consider approximate fluctuation relations in section \ref{MainApproximate}.)\\
For most of the derivations it is not necessary to make any distinction between $S$ and $B$, and for this reason we will often bundle them together into an extended system $S' = SB$. This `rationalization' can be taken one step further to  $\tilde{S} = S'C = SBC$ when we turn to the conditional fluctuation theorems in section \ref{MainConditional}.  
 }
\end{figure}

In this investigation the energy reservoir does not only serve as a source or sink for energy, but also acts as a probe of the dynamics of the system. Instead of using auxiliary measurements, which inevitably would introduce additional interactions that potentially may disturb the dynamics, we here let the energy reservoir take over this role. We do, so to speak, make an additional use of a component that anyway has to be included.  A further benefit of using an explicit quantum probe is that we not only capture the flow of energy, but also the change in coherence and  
correlations. Needless to say, standard macroscopic energy reservoirs would be of little use for the latter purposes, due to the levels of decoherence that they are exposed to. However, our notion of energy reservoirs  includes nano-systems like, e.g., single atoms or spins.

\subsection{\label{MainSecInterm}An intermediate version}

In this section we derive a simplified version of our fluctuation theorem. This serves as a convenient intermediate step, and illustrates why time-reversal symmetry is useful.

We wish to determine the channel induced on the energy reservoir by a non-equilibrium process {\it a la} Crooks, where the system initially is in equilibrium, and is forced out of it. To model this we let the control system initially be in state $|c_i\rangle$, and we let the combined system and heat bath $S'$ be in the Gibbs state $G(H^{i}_{S'})$, where $G(H) = e^{-\beta H}/Z(H)$. Moreover, we let the reservoir $E$ be in an arbitrary state $\sigma$. Hence, the joint system is initially in the state $G(H^{i}_{S'})\otimes |c_i\rangle\langle c_i|\otimes \sigma$.  After a global energy-conserving unitary operation $V$, the state of the energy reservoir is consequently given by the channel
 \begin{equation}
\label{nsdlkvn}
\mathcal{F}(\sigma) = \Tr_{S'C}\boldsymbol{(}V[G(H^i_{S'})\otimes|c_{i}\rangle\langle c_{i}| \otimes \sigma]V^{\dagger}\boldsymbol{)}.
\end{equation} 
To obtain a Crooks-like relation we should somehow relate this forward process with a `reversed' process, where the system instead is initiated in the Gibbs state of the final Hamiltonian, i.e., the global system should be in the joint state $G(H^{f}_{S'})\otimes |c_f\rangle\langle c_f|\otimes \sigma$. The question is how to formalize the idea of a reversed process acting on this initial state. A rather brutal interpretation would be to simply invert the entire global evolution, thus substituting $V$ with $V^{\dagger}$  (see Fig.~\ref{FigSimulation}). The resulting channel on the reservoir would in this case be
\begin{equation}
\label{sfldkbn}
\mathcal{R}(\sigma) = \Tr_{S'C}\boldsymbol{(}V^{\dagger}[G(H^f_{S'})\otimes|c_{f}\rangle\langle c_{f}| \otimes \sigma]V\boldsymbol{)}.
\end{equation} 
Let us now assume that we have a perfectly functioning control system, in the sense that the state $|c_i\rangle$ is transformed into $|c_f\rangle$ with certainty.  More precisely, we demand that the unitary operator $V$ should satisfy the condition
\begin{equation}
\label{MainPerfectControl}
[\hat{1}_{S'}\otimes |c_{f}\rangle\langle c_{f}|\otimes \hat{1}_E]V = V[\hat{1}_{S'}\otimes |c_{i}\rangle\langle c_{i}|\otimes \hat{1}_E].
\end{equation}
  One can verify (see Appendix \ref{SecIdealized}) that this assumption implies the following relation between the channels $\mathcal{F}$ and $\mathcal{R}$   
\begin{equation}
\label{Preliminary}
Z(H^i_S) \mathcal{F} = Z(H^f_S)\mathcal{J}_{\beta H_E}\mathcal{R}^{*}\mathcal{J}_{\beta H_E}^{-1}.
\end{equation}
Here $\mathcal{R}^{*}$ denotes the conjugate \cite{KrausBook} of the channel $\mathcal{R}$. If the channel has Kraus representation $\mathcal{R}(\rho) = \sum_{j}V_{j}\rho V_{j}^{\dagger}$, then the conjugate can be written $\mathcal{R}^{*}(\rho) = \sum_{j}V_{j}^{\dagger} \rho V_{j}$ \cite{KrausBook}. 

As the reader may have noticed, we have written $Z(H^i_S)$ and $Z(H^f_S)$ in (\ref{Preliminary}) rather than the more generally valid  $Z(H^i_{S'})$ and $Z(H^f_{S'})$. To obtain the former we can assume that $H^{i}_{S'} = H^{i}_{S}\otimes \hat{1}_B + \hat{1}_S\otimes H_{B}$ and $H^{f}_{SB} = H^{f}_{S}\otimes \hat{1}_B + \hat{1}_S\otimes H_{B}$,  and thus $Z(H^{i}_{S'}) = Z(H^{i}_S)Z(H_B)$ and $Z(H^{f}_{S'}) = Z(H^{f}_S)Z(H_B)$.

One may observe that the right hand side of  (\ref{Preliminary}) is similar to Crooks' quantum operation time reversal \cite{Crooks08}  (not to be confused with the time reversals $\mathcal{T}$ or the mapping $\ominus$ discussed in the next section) and closely related to Petz recovery channel \cite{Petz86,Petz88,Barnum02,Hayden04}, where a relation to work extraction was identified recently   \cite{Wehner15}. See also the discussion on time-reversals for quantum channels in \cite{Ticozzi09}. For further comments on this topic, see Appendix \ref{SecPetzRecovery}.

The approach of this investigation can be compared with another construction that also focuses on quantum channels \cite{Albash13,Rastegin13}, where the starting point is to assume a property of a channel (unitality) and derive fluctuation relations for the resulting probability distribution for suitable classes of initial and final measurements that sandwich the channel (see also the generalization in \cite{Rastegin14}).
Sequences of channels are considered in \cite{Manzano15}.

Although (\ref{Preliminary}) indeed can be regarded as a kind of quantum Crooks relation, it does suffer from an inherent flaw.  The swap between $V$ and $V^{\dagger}$ means that we invert the entire evolution of all involved systems, including the heat bath. Apart from assuming an immense level of control, this assumption does not quite fit with the spirit of Crooks relation. The latter only assumes a reversal of the time-schedule of the control parameters, not a reversal of the entire evolution. In the following we will resolve this issue by invoking time-reversal symmetry.

 \begin{figure}[t]
 \includegraphics[width= 8cm]{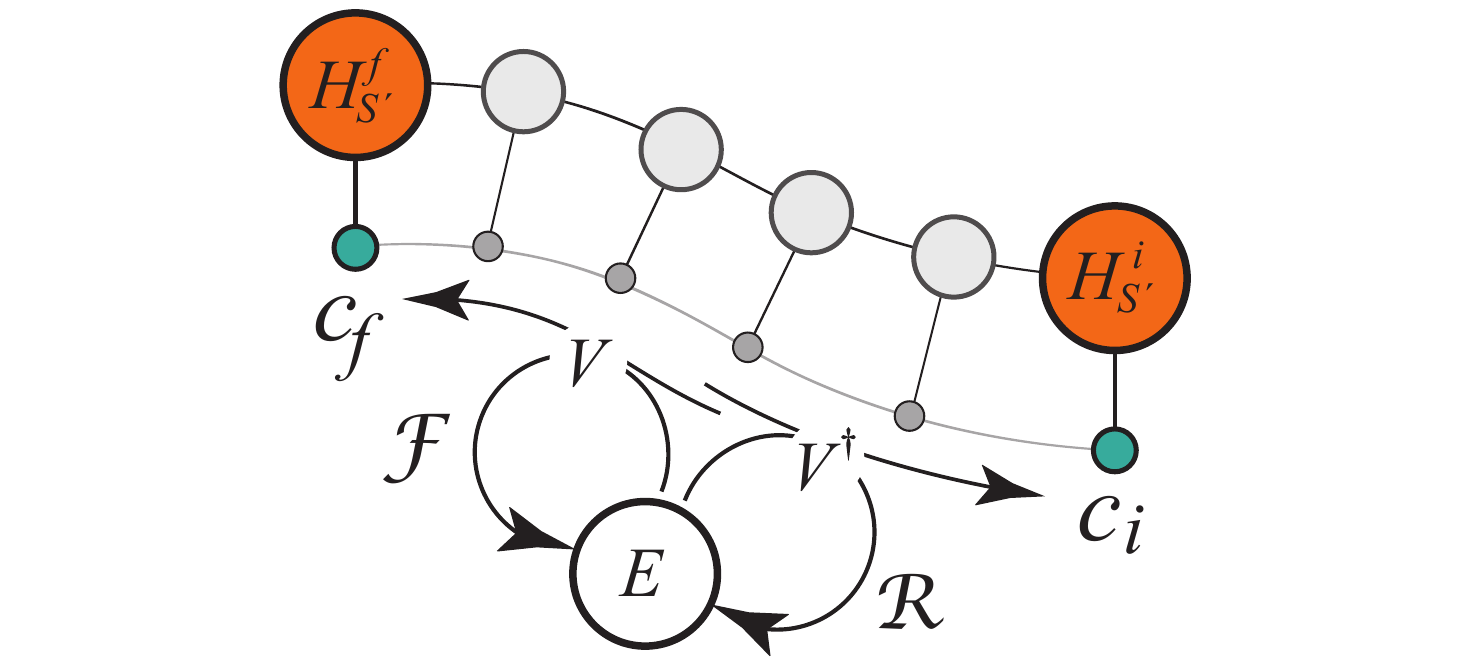} 
   \caption{\label{FigSimulation}  {\bf An intermediate quantum Crooks relation.}
Our quantum Crooks theorem is based on the idealized idea of a perfect control mechanism. This assumes that the energy conserving unitary evolution $V$ on the joint system $S'CE$ turns the control state $|c_i\rangle$ into $|c_f\rangle$ with certainty, thus implementing the change of the Hamiltonian $H^{i}_{S'}$ to $H^{f}_{S'}$ perfectly. To model the forward process of the Crooks relation we assume that 
 $S'C$ starts in the `conditional' equilibrium state $G(H^{i}_{S'})\otimes |c_i\rangle \langle c_i|$. The subsequent evolution under $V$ induces a channel $\mathcal{F}$ on the energy reservoir $E$. 
As an intermediate step towards the quantum Crooks relation  (\ref{QuantumCrooks}) we assume that the reverse process is given by the globally reversed evolution $V^{\dagger}$.  With $S'C$ in the initial state $G(H^{f}_{S'})\otimes |c_f\rangle \langle c_f|$, this results in a channel $\mathcal{R}$ on $E$. The preliminary Crooks relation in Eq.~(\ref{Preliminary}) relates the channels $\mathcal{F}$ and $\mathcal{R}$.\\
This result does not rely on any additional assumptions on how $V$ comes about, or on the nature of the dynamics at intermediate times. However, if one so wishes, it is possible to explicitly model the path $H(x)$ in the space of Hamiltonians. This path can be discretized as $H_l$ for $l=0,\ldots, L$ with $H_0 = H^{i}_{S'}$ and $H_L = H^{f}_{S'}$. To these Hamiltonians we associate states $|c_l\rangle$ in the control $C$, and construct the joint Hamiltonian  $H_{S'C} = \sum_{l=0}^{L}H_l\otimes |c_l\rangle\langle c_l|$. Hence,  if the evolution traverses the family of control states, then we progress along the path of Hamiltonians. An alternative method to model evolving Hamiltonians is discussed in section \ref{MainApproximate}.
 }
\end{figure}

\subsection{Time-reversals and time-reversal symmetry} 

In standard textbooks on quantum mechanics  (see e.g.~\cite{Sakurai}) time-reversals are often introduced on the level of Hilbert spaces via complex conjugation of wave-functions (also applied in the context of quantum fluctuation relations, see e.g.~\cite{Jarzynski04,Andrieux08,Campisi10b,Campisi11a,Campisi11b}). Here  we  instead regard time-reversals as acting on  operators (cf.~\cite{Sanpera97,Bullock05}). Moreover, our time-reversals have the flavor of  transpose operations rather than complex conjugations. Since density operators and observables are Hermitian, the choice between conjugation or transposition is largely a matter of taste. Here we opt for the transpose, since this choice avoids the inconvenient anti-linearity of complex conjugation. A couple of key properties of our time-reversals  $\mathcal{T}$  are $\mathcal{T}(AB) = \mathcal{T}(B)\mathcal{T}(A)$ and $\mathcal{T}(A)^{\dagger} = \mathcal{T}(A^{\dagger})$.
(For the complete characterization, see Appendix \ref{SecTimeReversal}.)

To see why it is reasonable to refer to  $\mathcal{T}$ as a `time-reversal', let us assume that the evolution operator $V$ is time-reversal invariant, i.e., $\mathcal{T}(V) = V$. If an initial state $\rho_i$  is evolved into $\rho_f = V\rho_i V^{\dagger}$, then the properties of $\mathcal{T}$ mentioned above yield $\mathcal{T}(\rho_i) = V\mathcal{T}(\rho_f)V^{\dagger}$. In other words, the reversed final state $\mathcal{T}(\rho_f)$  evolves to the reversed initial state $\mathcal{T}(\rho_i)$. Note that this `backwards' transformation is implemented by the forward evolution $V\rho V^{\dagger}$,  rather than the inverted evolution $V^{\dagger}\rho V$. 

The $\ominus$-transformation that appears in (\ref{QuantumCrooks}) can be defined via the chosen time-reversal as 
\begin{equation}
\label{MainDefominus}
\mathcal{F}^{\ominus} = \mathcal{T}\mathcal{F}^{*}\mathcal{T}.
\end{equation}
 As mentioned earlier, if $\mathcal{F}(\sigma) = \sum_{j}V_{j}\sigma V_{j}^{\dagger}$, then the conjugate is given by $\mathcal{F}^{*}(\sigma) = \sum_{j}V_{j}^{\dagger}\sigma V_{j}$. By the properties of $\mathcal{T}$ mentioned above it thus follows that $\mathcal{F}^{\ominus}(\sigma) = \sum_{j}\mathcal{T}(V_j)\sigma \mathcal{T}(V_{j})^{\dagger}$, i.e., we time-reverse the operators in the Kraus representation. 

In the finite-dimensional case each time-reversal can be implemented by a transpose with respect to some orthonormal basis, followed by a special class of unitary operations. (Hence, our time-reversals do strictly speaking include a bit larger set of operations than proper transposes, see Appendix \ref{SecTimeReversal}.) Since the transpose is a positive but not completely positive map (as is illustrated by the effect of the partial transpose on entangled states \cite{Peres96,Horodecki96}) we can conclude that time-reversals generally are not physical operations. Hence, one should not be tempted to think of time-reversals as  something that we actually apply to a system. However, given a \emph{description} of a state $\rho$, there is nothing that  in principle prevents us from preparing the time reversed state $\mathcal{T}(\rho)$, which is sufficient for our purposes.

\subsection{\label{DerivingQuantumCrooks}Deriving the quantum Crooks relation}

Similar to our intermediate version in section \ref{MainSecInterm} we here assume a non-interacting global Hamiltonian $H = H_{S'C}\otimes\hat{1}_E + \hat{1}_{S'C}\otimes H_E$, and a global energy conserving unitary evolution $[V,H]=0$. As a substitute for the global inversion of the evolution we  assume time-reversal symmetry in terms of a suitably chosen time-reversal of the form $\mathcal{T} = \mathcal{T}_{S'C}\otimes\mathcal{T}_E$, i.e., the global time reversal is composed of  local reversals on $S'C$ and on $E$. We impose the time reversal symmetry by assuming that $\mathcal{T}(V) = V$, $\mathcal{T}_{S'C}(H_{S'C}) = H_{S'C}$, and $\mathcal{T}_E(H_E) = H_E$.

Another novel component compared to the intermediate version is that we associate \emph{pairs} of control states to the Hamiltonians on $S'$. We assign the pair $|c_{i+}\rangle$, $|c_{i-}\rangle$ for the initial Hamiltonian $H^{i}_{S'}$, and the pair  $|c_{f+}\rangle$, $|c_{f-}\rangle$ for the final Hamiltonian $H^{f}_{S'}$. The general idea is that the members of these pairs correspond to the forward and reverse evolution of the control system. One way to  formalize this notion is via the following relations
\begin{equation}
\label{MainAssumpIdc}
\begin{split}
\mathcal{T}_{S'C}(\hat{1}_{S'}\otimes |c_{i+}\rangle\langle c_{i+}|) =  & \hat{1}_{S'}\otimes |c_{i-}\rangle\langle c_{i-}|,\\
 \mathcal{T}_{S'C}(\hat{1}_{S'}\otimes |c_{f+}\rangle\langle c_{f+}|) =  &\hat{1}_{S'}\otimes |c_{f-}\rangle\langle c_{f-}|.
\end{split}
\end{equation} 
Hence, the time-reversal swaps the two control states into each other. Note that if $V$ transforms $|c_{i+}\rangle$ to $|c_{f+}\rangle$ perfectly, then $V$ also transforms $|c_{f-}\rangle$ to $|c_{i-}\rangle$ perfectly. In the latter case one should keep in mind the reverse ordering of $i$ and $f$; the state $|c_{f-}\rangle$ is the initial control state of the reverse process (see Fig.~\ref{FigTimeReversal}).  

\begin{figure}[t]
 \includegraphics[width= 8cm]{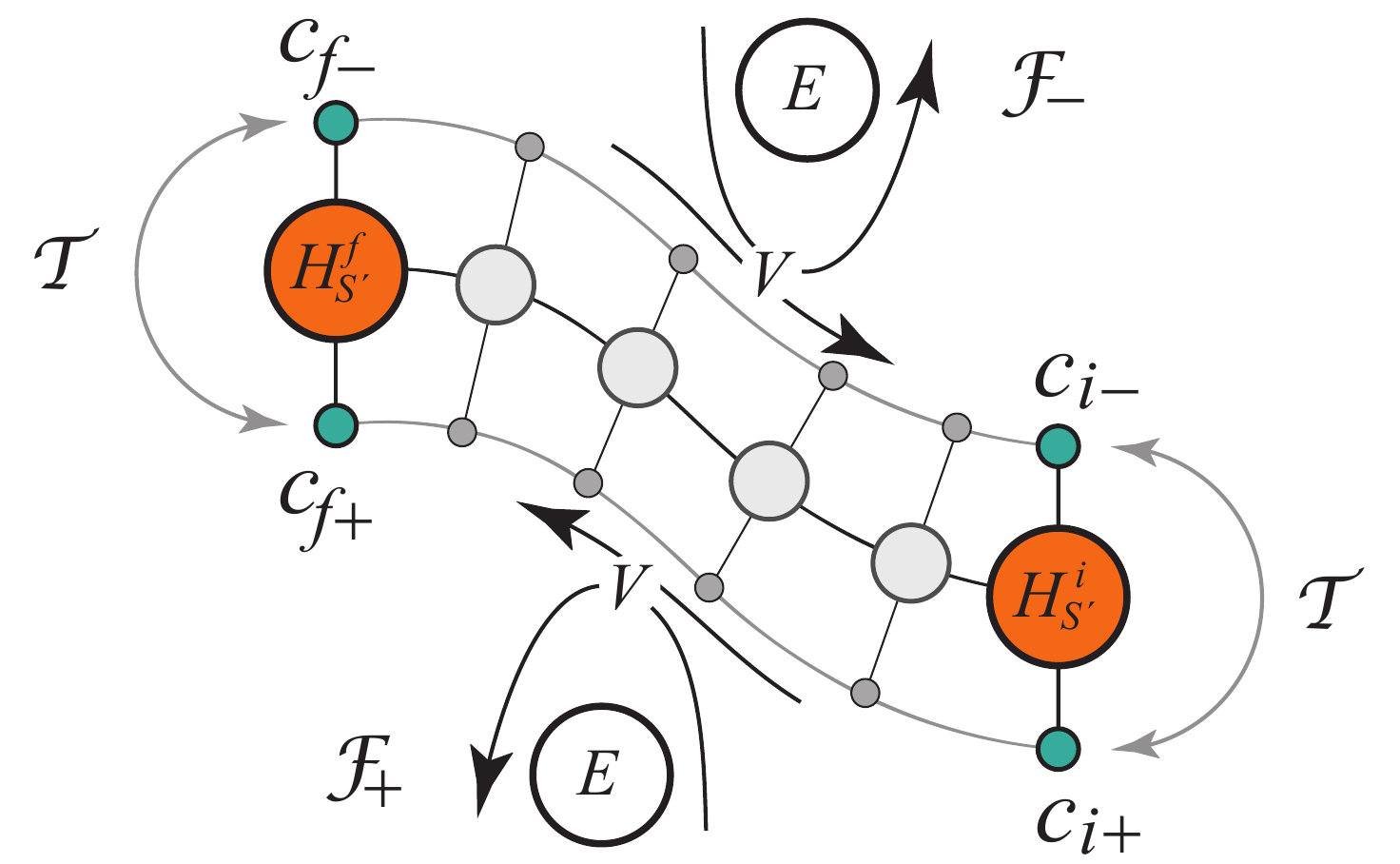} 
   \caption{\label{FigTimeReversal}  {\bf A quantum Crooks relation.}  
   Derivations of fluctuation relations often rely on time-reversals and time reversal symmetry in some form \cite{ReviewJarzynski,ReviewSeifert,ReviewMarconi,ReviewFluctThm,Esposito09,Campisi11a,Hanggi15}. In our case, time-reversal symmetry enables us to in effect let the evolution run `backwards' by swapping the control states from a `forward track' to a `reverse track'. By this we amend for the over-ambitious global reversal of the evolution that we employed for the intermediate fluctuation relation, and obtain a quantum counterpart to the reversal of the control parameter in the classical Crooks relation.\\
  Analogous to the setup in Fig.~\ref{FigSimulation}, the initial control state $|c_{i+}\rangle$ yields  an evolution where the initial Hamiltonian $H^{i}_{S'}$ is transformed to the final Hamiltonian $H^{f}_{S'}$ when the control state $|c_{f+}\rangle$ is reached. This process induces a channel $\mathcal{F}_{+}$ on the energy reservoir $E$. 
  To each of these control states there exists a time-reversed partner starting with $|c_{f-}\rangle$ and ending with $|c_{i-}\rangle$ thus bringing the Hamiltonian $H^{f}_{S'}$ back to $H^{i}_{S'}$. This  induces a channel $\mathcal{F}_{-}$ on the reservoir. By assuming time-reversal symmetry, the channels  $\mathcal{F}_{+}$ and $\mathcal{F}_{-}$ can be related via the quantum Crooks relation in (\ref{QuantumCrooks}).
   }
\end{figure}

Let us assume that the unitary operator $V$ and the initial and final control states $|c_{i+}\rangle, |c_{f+}\rangle$ satisfy the conditions of the intermediate fluctuation relation. The channels $\mathcal{F}_{+}$ and $\mathcal{R}_{+}$, obtained by substituting $|c_i\rangle$ with $|c_{i+}\rangle$, and $|c_f\rangle$ with $|c_{f+}\rangle$ in equations (\ref{nsdlkvn}) and (\ref{sfldkbn}), respectively, are according to  (\ref{Preliminary}) related by 
\begin{equation}
\label{lvnlnfnf}
Z(H^i_{S'}) \mathcal{F}_{+} = Z(H^f_{S'})\mathcal{J}_{\beta H_E}\mathcal{R}_{+}^{*}\mathcal{J}_{\beta H_E}^{-1}.
\end{equation}
 For the reverse process we obtain the channels $\mathcal{F}_{-}$ and $\mathcal{R}_{-}$, where $|c_i\rangle$ this time is substituted by $|c_{f-}\rangle$, and $|c_f\rangle$ with $|c_{i-}\rangle$.  By using the properties of the time-reversal one can show (for details, see Appendix \ref{SecAQuantumFlctnThrm}) that $\mathcal{R}_{+}\mathcal{T}_E = \mathcal{T}_E\mathcal{F}_{-}$.
 By applying this equality to 
(\ref{lvnlnfnf}) our desired quantum Crooks relation (\ref{QuantumCrooks}) follows immediately.
Note that both $\mathcal{F}_{+}$ and $\mathcal{F}_{-}$ correspond to the `forward' evolution, i.e., we no longer need to impose a global inversion of the dynamics.

After having established (\ref{QuantumCrooks}) an immediate question is how to interpret it physically.
The classical Crooks relation (\ref{StandardCrooks}) can be phrased as a comparison between two experiments, one for the forward path of control parameters, and one for the reversed. For each of these two setups we repeat the experiment in order to obtain good statistical estimates of the probability distributions $P_{+}$ and $P_{-}$. The claim, so to speak, of Crooks theorem is that these two probability distributions are related as in (\ref{StandardCrooks}). Our quantum fluctuation relation in (\ref{QuantumCrooks}) can be interpreted in a similar manner. Here the channels $\mathcal{F}_{+}$ and $\mathcal{F}_{-}$ induced by the forward and reverse process can in principle be determined by process tomography (see e.g.~\cite{Nielsen96,DAriano02,Altepeter03,Mohseni08} and Sec.~8.4.2 in \cite{NielsenChuang} for an overview). Analogous to the classical Crooks relation, our quantum fluctuation theorem tells us that these two channels should be related as in (\ref{QuantumCrooks}). One may note that the mapping $\mathcal{F}_{-}\mapsto \mathcal{J}_{\beta H_E}\mathcal{F}_{-}^{\ominus}\mathcal{J}_{\beta H_E}^{-1}$ in this setup is something that we \emph{calculate} given an experimental estimate of the channel $\mathcal{F}_{-}$, analogous to how we would calculate the function $e^{\beta w}P_{-}(-w)$ from the estimated distribution $P_{-}(w)$. In section \ref{MainSecGlobal} we consider a reformulation of  (\ref{QuantumCrooks}) that is associated with a scenario that does not require such a post-processing.

As the reader may have noticed, our quantum Crooks relation (\ref{QuantumCrooks}) does, in contrast to the classical Crooks relation (\ref{StandardCrooks}), not contain any explicit reference to `work'.
Hence, we do in essence circumvent the issue of how to translate the classical notion of work into the quantum setting; a question that has been discussed rather extensively in relation to fluctuation theorems \cite
{Bochkov81,Yukawa00,Monnai03,Engel07,Allahverdyan05,Gelin08,Chernyak04,Chernyak06,Li13,Kurchan01,Tasaki00,Mukamel03,Roeck04,Monnai05,Talkner05,Talkner07,Allahverdyan14}. 
We can nevertheless in some sense associate work with the loss of energy in the reservoir, which we employ in section \ref{SecMainEnergytransl} where we regain the classical Crooks relation (\ref{StandardCrooks}).

\subsection{\label{MainSecDiagonalOffdiagonal}Diagonal and off-diagonal Crooks relations}

Quantum systems do not only carry energy, but also coherence, i.e., superpositions between energy eigenstates. (An atom can, apart from being in the ground state or an exited state, also be in a superposition between the two.) Such coherences manifest themselves as non-zero off-diagonal elements in the matrix representation of the density operator in the energy eigenbasis. In the quantum regime, coherence emerges as a relevant resource alongside energy \cite{Janzing00,Scully03,Skrzypczyk14,Brandao13b,Aberg13,Rosario13,Lostaglio14a,Lostaglio14b,Klimowski14,Uzdin15,Cwiklinski15, Narasimhachar15,Korzekwa15,Streltsov16,Lostaglio17,Cirstoiu17,Gour17}.
While the classical Crooks theorem relates the distribution of work (and thus change of energy in the implicit energy reservoir) of the forward and reverse processes, the quantum version (\ref{QuantumCrooks}) additionally incorporates the evolution of coherence, as we shall see here.

The energy conservation and the diagonal initial state with respect to the Hamiltonian of $SBC$ conspire to decouple the dynamics on $E$ with respect to the `modes of coherence'  \cite{Lostaglio14b} (which in turn is a  part of the wider context of symmetry preserving operations \cite{Marvian14}). To make this more concrete, assume that $H_E$ has a complete set of energy eigenstates $|n\rangle$ with corresponding non-degenerate energies $E_n$. Let us also assume that  $\mathcal{T}_E$ acts as the transposition with respect to this energy eigenbasis. The decoupling does in this case mean that the channels $\mathcal{F}_{\pm}$ can map an element $|n\rangle\langle n'|$  to $|m\rangle\langle m'|$ only if $E_{n} -E_{n'} = E_{m}- E_{m'}$ (see Appendix \ref{SecDiagonalAndOffdiagonal} for further details). Hence, each mode of coherence is characterized by an `offset' $\delta = E_{n}-E_{n'}$ from the main diagonal (see Fig.~\ref{FigDiagonalOffdiagonal}). 

In particular, the main diagonal is given by the offset $\delta = 0$.
The dynamics of the diagonal elements under the forward process can be described via a conditional probability distribution  
\begin{equation}
\label{DiagonalConditional}
p_{\pm}(m|n) = \langle m|\mathcal{F}_{\pm}(|n\rangle\langle n|)|m\rangle =  \langle n|\mathcal{F}^{*}_{\pm}(|m\rangle\langle m|)|n\rangle.
\end{equation}
The application of (\ref{QuantumCrooks}) yields
\begin{equation}
\label{diagonalCrooks}
Z(H^i_S)p_{+}(m|n)  = e^{\beta (E_{n}-E_{m})} Z(H^f_S)p_{-}(n|m),
\end{equation}
which thus can be regarded as a diagonal Crooks relation. In section \ref{SecMainEnergytransl} we shall see how this in turn can be used to re-derive (\ref{StandardCrooks}) via the additional assumption of energy translation invariance. 

For each permissible value of $\delta$ (which depends on the spectrum of $H_E$)  we can characterize the dynamics of the corresponding off-diagonal mode by
\begin{equation*} 
q_{\pm}^{\delta}(m|n) =  \langle m|\mathcal{F}_{\pm}(|n\rangle\langle n'|)|m'\rangle
=  \langle n'|\mathcal{F}^{*}_{\pm}(|m'\rangle\langle m|)|n\rangle,
\end{equation*}
where the assumption of non-degeneracy implies that $n'$ is uniquely determined by $n,\delta$, and similarly $m'$ by $m,\delta$.
We obtain an off-diagonal analogue of (\ref{diagonalCrooks}) in the form of
\begin{equation}
\label{offdiagonalCrooks}
Z(H^i_S)q^{\delta}_{+}(m|n)  = e^{\beta (E_{n}-E_{m})} Z(H^f_S)q^{\delta}_{-}(n|m).
\end{equation}
In other words, the functional form of the relations for the main diagonal and all the off-diagonals are identical. However, one should keep in mind that the numbers $q^{\delta}_{\pm}(m|n)$ in general are complex, and thus cannot be interpreted as conditional probability distributions. 

\begin{figure}[t]
 \includegraphics[width= 6cm]{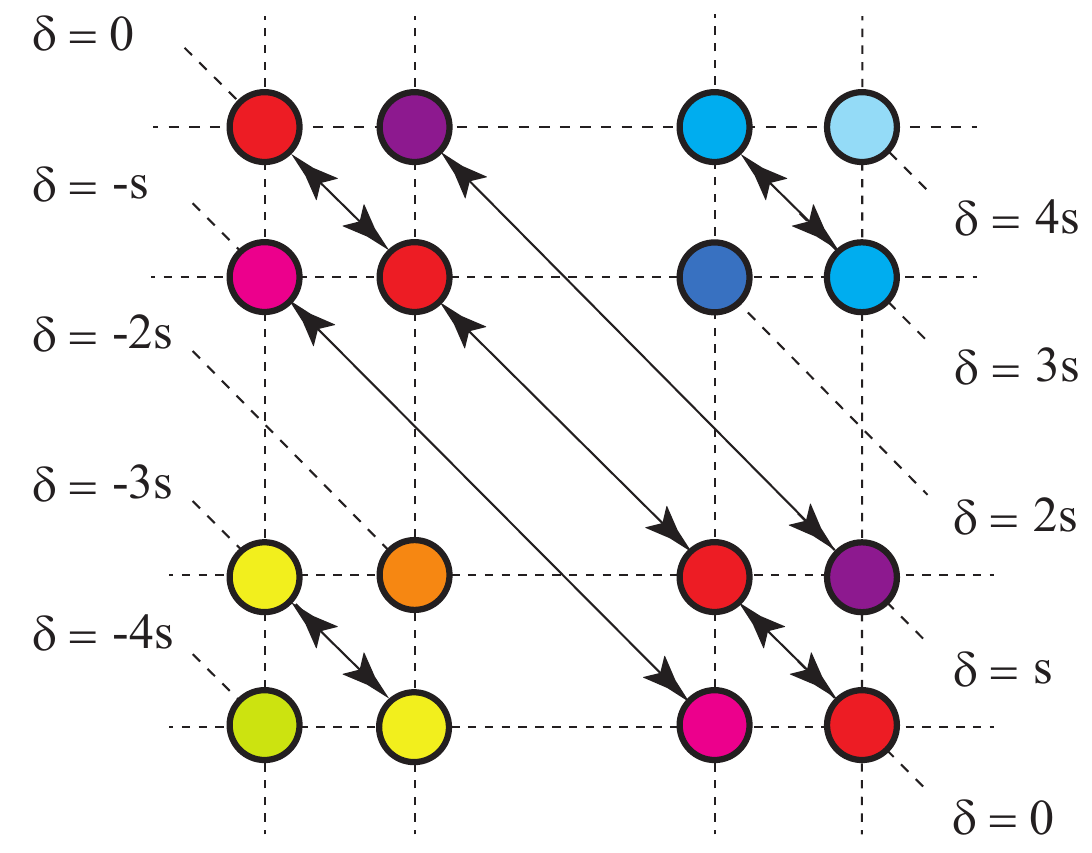} 
   \caption{\label{FigDiagonalOffdiagonal}  {\bf Diagonal and off-diagonal Crooks relations.}
Due to energy conservation in combination with the particular class of initial states, it turns out that the dynamics of the energy reservoir decouples along the modes of coherence \cite{Lostaglio14b}. One can think of these as the collection of diagonals (main and off-diagonals) of the density operator represented in an energy eigenbasis. Due to the decoupling it follows that the quantum fluctuation relation (\ref{QuantumCrooks}) can be separated into individual Crooks relations for each mode of coherence. As an example, suppose that the spectrum of the reservoir would include the energy levels  $E_1 = 0, E_2 = s, E_3 = 3s, E_4 = 4s$ for some $s>0$. 
The dynamics on the main diagonal, $\delta =0$, satisfies the relation in (\ref{diagonalCrooks}). Each of the diagonals with offsets $\delta = \pm s, \pm 2s,\pm 3s, \pm 4s$ satisfies a Crooks relation as in (\ref{offdiagonalCrooks}). For a more concrete example of the decoupling, see Fig.~\ref{FigDecoupling}.
   }
\end{figure}

The conditional distributions $p_{\pm}(m|n)$ and all  $q^{\delta}_{\pm}(m|n)$ can be regarded as representing different aspects of the channels $\mathcal{F}_{\pm}$. However, one can also give more direct physical interpretations to these quantities. To this end one should keep in mind that the energy reservoir can be viewed as being part of our experimental equipment, and we thus in principle are free to prepare and measure  it in any way that we wish. The conditional distribution $p_{\pm}(m|n)$ can be interpreted as the probability to measure the energy $E_m$ in the reservoir after the experiment has been completed, given that we prepared the reservoir in energy $E_n$ before we connected it to the system. Equivalently, we could replace the initial preparation with an energy measurement, thus approaching the  `two-time' or `two-point' measurements that play a central role in several investigations on quantum fluctuation relations  (see e.g.~\cite{Tasaki00,Kurchan01,Mukamel03b,Jarzynski04,Talkner05,Saito08,Quan08,Talkner08,Talkner09,Andrieux09,Campisi10,Campisi10b,Deffner11b,Campisi11b,Talkner13}, and for  overviews see \cite{Esposito09,Campisi11a}), although we here apply these measurements on the energy reservoir, rather than on the system. An obvious issue with sequential quantum measurements is that the first measurement typically perturbs the statistics of the second measurement. However, in our case the decoupling comes to our aid.  If we measure the energy in the reservoir after the process, then the statistics of this measurement would not be affected if we would insert an additional energy measurement before the process, irrespective of how off-diagonal the initial state may be. The decoupling thus justifies the use of two-point measurements on the energy reservoir, if our purpose is to determine $p_{\pm}$. However, the  very same argument also implies that if we wish to determine $q^{\delta}_{\pm}$, then we have to use non-diagonal initial states, as well as non-diagonal measurements. For an example of such a setup, see Appendix \ref{Secoffdiagonal}.

\subsection{Jarzynski equalities}

From the classical Crooks relation one can directly derive the Jarzynski equality $\langle e^{-\beta W} \rangle = Z(H^f)/Z(H^i)$ \cite{Jarzynski97,ReviewFluctThm}. As mentioned above, it is not clear how to translate the random variable $W$ into the quantum setting, and consequently it is also not evident how to translate the expectation value $\langle e^{-\beta W} \rangle$ (which yields a variety of approaches to the quantum Jarzynski equality, see overviews in \cite{Esposito09,Campisi11a,Hanggi15}). Like for the Crooks relation, we do not attempt a direct translation, but rather focus on the general dynamics in the energy reservoir. 

By applying (\ref{QuantumCrooks}) to an initial state $\sigma$ of the reservoir and taking the expectation value of the operator $e^{\beta H_E}$, it follows that 
\begin{equation}
\label{nvanlkavlkn}
\frac{\Tr[e^{\beta H}\mathcal{F}_{+}(\sigma)]}{\Tr[e^{\beta H_E/2}\mathcal{R}_{+}(\hat{1}) e^{\beta H_E/2}\sigma]} 
=  \frac{Z(H^f_{S})}{Z(H^i_{S}) }.
\end{equation} 
Although this equality has the flavor of a Jarzynski equality, we can bring it one step further by additionally assuming that $\mathcal{R}_{+}(\hat{1}) = \hat{1}$, or equivalently $\mathcal{F}_{-}(\hat{1}) = \hat{1}$, i.e., that these channels are unital. This assumption yields 
\begin{equation}
\label{fsjaflkslk}
\frac{\Tr[e^{\beta H_E}\mathcal{F}_{+}(\sigma)]}{\Tr(e^{\beta H_E}\sigma)} = \frac{Z(H^{f}_{S'})}{Z(H^{i}_{S'})}.
\end{equation}
This relation has a clear physical interpretation in terms of an experiment where we measure the expectation value of $e^{\beta H_E}$ on the input state of the reservoir (e.g., from the statistics of energy measurements), and in a separate experiment measure the same observable on the evolved state $\mathcal{F}_{+}(\sigma)$. 

In a similar fashion one can derive a whole family of Jarzynski-like equalities (with as well as without the assumption $\mathcal{R}_{+}(\hat{1}) = \hat{1}$, see Appendix  \ref{SecQuantumJarzynski}) and another member of this family is
\begin{equation}
\label{eq2jarzynski1}
\Tr[e^{\beta H_E}\mathcal{F}_{+}(e^{-\beta H_E/2} \sigma e^{-\beta H_E/2})] = \frac{Z(H^f_{S})}{Z(H^i_{S})}.
\end{equation}
This equality does not have a quite as direct physical interpretation as (\ref{fsjaflkslk}), but one can resort two-point energy measurements, resulting in fluctuation relations akin to e.g.~\cite{Tasaki00,Kurchan01,Mukamel03b,Monnai05,Talkner07}. To see this, assume that $H_E$ has a point spectrum. Then equation (\ref{eq2jarzynski1}) takes the form $\sum_{mn}e^{\beta (E_m-E_n)}p_{+}(m|n)  \langle n|\sigma|n\rangle =  Z(H^f_{S})/Z(H^i_{S})$, where we have made use of the decoupling of the diagonal mode of coherence. This can alternatively be obtained directly from the diagonal Crooks relation (\ref{diagonalCrooks}), where the unitality of $\mathcal{F}_{-}$ implies that $p_{-}$ is not only stochastic, but doubly stochastic.

There are other types of modified fluctuation relations that, in the spirit of Maxwell's demon, incorporate feedback control, resulting in efficacy factors (see e.g.~\cite{Sagawa10,Toyabe10,Morikuni11}). The fluctuation relations considered in this section do not include explicit feedback processes (as opposed to the conditional fluctuation relations in section \ref{MainConditional}, which can be said to represent a crude form of feedback, where we throw away results if we get the `wrong' outcome in a control measurement).  Nevertheless, one may wonder whether the global unitary evolution under some circumstances can be regarded as  implicitly representing a feedback process on the system. If so, this could potentially pave the way for a merging of these approaches. However,  we leave this as an open question.

\subsection{\label{MainSecBounds}Bounds on the work cost}

As mentioned above, the classical Jarzynski equality follows directly from Crooks relation. It is therefore perhaps a bit surprising that the condition of unitality of the channel $\mathcal{R}_{+}$ emerges in the derivations of our Jarzynski equalities. 
Classical fluctuation relations imply various inequalities, often regarded as manifestations, or refinements, of the second law (see e.g.~\cite{ReviewFluctThm,ReviewJarzynski}). Here we explore the condition of unitality of $\mathcal{R}_{+}$ by establishing quantum counterparts of some of these classical bounds. (For other discussions on the relation between the second law and fully quantum fluctuation theorems, see \cite{Alhambra16}.)

In macroscopic thermodynamics the second law implies that the amount of work required to change a system from one equilibrium state to another, when in contact with a single heat bath,  is at least equal to the free energy difference. 
In a statistical setting we would thus expect  (e.g.~as a special case of more general bounds  \cite{Procaccia76,Takara10,Esposito11,Lindblad83}) that the average work cost should satisfy 
\begin{equation}
\label{ClassicalBound}
\langle W\rangle \geq F(H^f)-F(H^i),
\end{equation}
 i.e., the work cost is at least equal to the difference in  (equilibrium) free energy between the final and initial Hamiltonian (where equality typically is reached in the limit of quasi-static processes).

The classical Jarzynski equality  implies (\ref{ClassicalBound}) via convexity of the exponential function \cite{Jarzynski97}. 
It turns out that one similarly can obtain a quantum counterpart to (\ref{ClassicalBound}). More precisely, one can use the diagonal Crooks relation (\ref{diagonalCrooks}) to show that
\begin{equation}
\label{zthetzj}
\begin{split}
 \Tr(H_E\sigma) - \Tr\big(H_E \mathcal{F}_{+}(\sigma)\big)  \geq & F(H^{f}_{S'})  -F(H^{i}_{S'}) \\
& - kT\ln \Tr\big(\sigma\mathcal{R}_{+}(\hat{1}_E)\big).
\end{split}
\end{equation}
(As a technical remark,  we do for this derivation additionally assume that $H_E$ has a non-degenerate point spectrum. For details see Appendix \ref{SecBoundsOnWork}.) If we identify the decrease in average energy of the reservoir, $\Tr(H_E\sigma)-\Tr\big(H_E\mathcal{F}_{+}(\sigma)\big)$  with the  average work cost $\langle W\rangle$, we thus regain the standard bound if $\mathcal{R}_{+}$ is unital. Hence, the unitality of $\mathcal{R}_{+}$ appear again, this time to guarantee the standard work bound. 

The inequality (\ref{zthetzj}) does not necessarily mean that the standard bound is violated if $\mathcal{R}_{+}$ fails to be unital. However, one can construct an explicit example where the work cost  is smaller than the standard bound for a process with a non-unital $\mathcal{R}_{+}$  (see Appendix \ref{SecViolationStndrdWorkB}). At first this may seem a bit alarming since it would appear to suggest violations of basic thermodynamics. However, one has to keep in mind that we here include the energy reservoir as a physical system, and that we cannot expect to regain standard bounds if we allow ourselves to use the energy reservoir \emph{per se} as a resource (cf.~discussions in \cite{Skrzypczyk14}), and one may suspect that the unitality of $\mathcal{R}_{+}$ is related to this issue. A further indication in this direction is that the energy translation invariance, which is the topic of the next section, not only allows us to re-derive the classical Crooks relation (\ref{StandardCrooks}), but also guarantees that the channel $\mathcal{R}_{+}$ is unital (see Appendix \ref{SecEnergyIndependence}).

For another counterpart to a classical bound, one can combine the forward process with the reverse processes to obtain a cycle (cf.~section 5 in  \cite{ReviewJarzynski}). 
In the spirit of the Kelvin-Planck formulation of the second law (see e.g.~\cite{BookHaar}), one would not expect that such a cyclic process would result in a net gain of work, and thus $\langle W_{+}\rangle + \langle W_{-}\rangle \geq 0$, with $W_{+}$ and $W_{-}$ being the (random) work cost of the forward and the reverse process, respectively. In the quantum case one can obtain a counterpart of this relation (see Appendix \ref{KelvinPlanckBounds}). We find, perhaps unsurprisingly, that this analogue is valid when $\mathcal{R}_{+}$ and $\mathcal{R}_{-}$ are unital.

The inequality in (\ref{ClassicalBound}) includes an expectation value, and thus opens up for the possibility that the (macroscopic) bound is violated for single instances of repeated experiments. In the classical case, fluctuation relations do imply more refined estimates of the work regarded as random variables. The probability that the work $W$  violates the macroscopic bound $W \geq F(H^f)-F(H^i)$ with more than $\zeta$ can be bounded as $P[W < F(H^f)-F(H^i)-\zeta]\leq e^{-\beta \zeta}$, see \cite{Jarzynski99}, or \cite{ReviewJarzynski}. In other words, the probability of a `second law violation' is exponentially suppressed in the size of that violation.   In the quantum case one can formulate a counterpart to this bound (see Appendix \ref{SecViolationBound}).

As suggested by the discussions above, the appearance of $\mathcal{R}_{+}$ is due to the explicit energy reservoir. One may thus suspect that this phenomenon would also occur in the purely classical case if explicit energy carriers are included. If  so, one may wonder what property would take over the role of the unitality of the channels $\mathcal{R}_{\pm}$. However, we will not consider these questions further in this investigation.

\subsection{\label{SecMainEnergytransl}Energy translation invariance: A bridge to standard fluctuation relations}

Let us for a moment reconsider the classical Crooks relation in equation (\ref{StandardCrooks}). As formulated,  it makes no explicit reference to any energy reservoir. Moreover, the distributions $P_{\pm}$ are typically phrased as functions only of the work $w$, and thus only functions of the \emph{change} of energy in the implicit reservoir (although it is difficult to see that anything in principle would prevent $P_{\pm}$ from depending on additional parameters, like the actual energy level of the reservoir). One could even argue that for a well-designed experiment $P_{\pm}$ should not depend on how much energy there is in the  reservoir (as long as there is enough), since this in some sense is a property of the experimental equipment rather than a property of the system under study. This is in contrast with our more general formulation in terms of channels on the reservoir, which very explicitly allows for the possibility that the  exact choice of initial state of the reservoir (and thus the energy) can make a difference. In order to formalize the idea that the experiment should be independent of the amount of energy in the reservoir, we do in this section assume energy translation invariance. This enables us to derive the classical Crooks relation (\ref{StandardCrooks}) from the quantum relation  (\ref{QuantumCrooks}). In this context one should note the approach in \cite{Alhambra16}, where energy translation invariance plays an important role.

The technique for implementing energy translation invariance has previously been used to study work extraction and coherence \cite{Aberg13,Skrzypczyk14}. Here we use a model where the spectrum of $H_E$ forms an equi-spaced doubly infinite energy-ladder \cite{Aberg13} (see also the continuum version in \cite{Skrzypczyk14}), i.e., the energy eigenstates $|n \rangle$ correspond to energies $E_{n} = sn$ for some fixed number $s>0$ and $n \in \mathbb{Z}$. We impose the energy translation invariance by assuming that the unitary operators $V$ should commute with energy translations along the energy-ladder, $[V,\Delta] = 0$, where $\Delta = \sum_{n}|n+1\rangle\langle n|$. (Technically, we also assume that the eigenvalues of $H^{i}_{S'}$ and $H^{f}_{S'}$ are multiples of $s$.)
As a consequence, the channels $\mathcal{F}_{\pm}$ also become energy-translation invariant in the sense that
\begin{equation}
\label{channeltranslinv}
\Delta^{j}\mathcal{F}_{\pm}(\sigma){\Delta^{\dagger}}^{k} =\mathcal{F}_{\pm}\big( \Delta^{j}\sigma{\Delta^{\dagger}}^{k}\big), 
\end{equation}
for all $j,k\in\mathbb{Z}$ (see Appendix \ref{SecEnergyIndependence}).  A further consequence is that the diagonal transition probabilities $p_{\pm}$, defined in  (\ref{DiagonalConditional}), inherit the translation invariance
\begin{equation}
\label{nvfleqlv}
p_{\pm}(m|n) = p_{\pm}(m-n|0) = p_{\pm}(0|n-m).
\end{equation}
Moreover, the translation invariance (\ref{channeltranslinv}) conspires with the decoupling of the diagonals described in section \ref{MainSecDiagonalOffdiagonal} such that we can rewrite $\mathcal{F}_{\pm}$ as
\begin{equation}
\label{ruetwporiu}
\mathcal{F}_{\pm}(\rho) =  \sum_{n,n',k\in\mathbb{Z}} p_{\pm}(k|0)\langle n|\rho|n'\rangle |n+k\rangle\langle n'+k|.
\end{equation}
Hence, the channels $\mathcal{F}_{\pm}$ are completely determined by the transition probabilities $p_{\pm}(k|0)$ from $|0\rangle\langle 0|$ to the other energy eigenstates. Alternatively, one may say that the dynamics is identical along all of the diagonals (which does \emph{not} imply that the different diagonals of the density matrix are identical).

Our primary goal in this section is to regain  (\ref{StandardCrooks}) from (\ref{QuantumCrooks}). However, we face an immediate obstacle in that  (\ref{StandardCrooks}) very explicitly refers to `work', while we in our constructions deliberately have avoided specifying what the quantum version of work exactly is supposed to be. A way to test Crooks theorem in a macroscopic setting would be to measure the energy content in the energy source before and after the experiment, in order to see how much energy that has been spent.  This macroscopic type of two-point measurements makes sense since we can expect a macroscopic source to be in states  that would not be significantly disturbed by energy measurements (e.g.~in the sense of pointer states and quantum Darwinism \cite{Zurek03,BlumeKohout06,Zurek09}). Recalling the discussion in section \ref{MainSecDiagonalOffdiagonal}, the decoupling of the diagonal and off-diagonal modes of coherence can be regarded as yielding a weaker form of stability. In this case all coherences are blatantly annihilated, but the evolution of the diagonal distribution is unaffected by repeated energy measurements. Since our aim is to regain the classical Crooks relation, this appears as an acceptable form of stability, and it also seems reasonable to identify work with the energy loss in the reservoir, $w = E_n -E_m = s(n-m)$, for the initial measurement outcome $E_n$ and the final outcome $E_m$. For these two-point measurements, the probability $P_{\pm}(w)$ is obtained by summing over all possible transitions that result in the work $w$ for the given initial state $\sigma$, and thus
\begin{equation}
\label{weeqwre}
P_{\pm}(w) = \sum_{m,n:E_n-E_m = w}p_{\pm}(m|n)\langle n|\sigma|n\rangle.
\end{equation}
To be able  to treat the energy reservoir as an implicit object, as it is in the standard classical case, the distributions  $P_{\pm}$ should not be functions of the initial state $\sigma$. Here the energy translation invariance comes into play. By combining (\ref{weeqwre}) with (\ref{nvfleqlv}) we find that $P_{\pm}(w) =  p_{\pm}(-w/s|0) = p_{\pm}(0|w/s)$, which thus removes the dependence on the reservoir. In the final step we combine the last equality with the diagonal Crooks relation  (\ref{diagonalCrooks}) and obtain the classical Crooks relation (\ref{StandardCrooks}).

\section{\label{MainConditional}Conditional fluctuation theorems}

There are several reasons for why it is useful to generalize the type of quantum fluctuation theorems that we have considered so far. First of all, the fluctuation relation (\ref{QuantumCrooks}) and its accompanying setup is in many ways an idealization. For example, the requirement of perfect control together with energy conservation is a strong assumption that easily leads to an energy reservoir that has to have an energy spectrum that is unbounded from both above and below (see Appendix \ref{SecConditionsOnE}) and a bottomless spectrum is certainly not very reasonable from a physical point of view. A further consequence of the idealization, which may not be apparent unless one dives into the technicalities in the appendices, is that (\ref{QuantumCrooks}) is based on rather elaborate assumptions. Apart from resolving these issues, the conditional fluctuation theorems naturally incorporate non-equilibrium initial states. 

There exist previous generalizations to non-canonical initial states. See e.g.~\cite{Cleuren06} for a classical case, and \cite{Talkner08,Campisi08,Talkner13,Allahverdyan14} for the quantum case. In the classical context of stochastic thermodynamics there are also fluctuation relations valid for arbitrary initial distributions \cite{Seifert05,Seifert12}.
Another example is the classical notion of extended fluctuation relations (EFRs) \cite{Maragakis08,Junier09,Alemany12,Liphardt12,Roldan14,Gavrilov17}, typically based on metastable states, or partial equilibrium conditions (akin to the  conditional equilibrium states discussed in Appendix \ref{TheInitalStates}). In contrast, the conditional fluctuation relation that we consider here allows for arbitrary non-equilibrium initial quantum states. In section \ref{SecMainQEFT}  we consider a quantum generalization of the classical EFRs.

To see how the conditional fluctuation relations can come about, let us again consider the perfect control mechanism, i.e., that the evolution transforms the initial control state into the final with certainty. Imagine now that we abandon this assumption, and instead include a measurement device that checks whether the control system succeeds in reaching the final control state or not. The idea is that we only accept the particular run of the experiment if the control measurement is successful. Analogous to our previous fluctuation relations, the conditional fluctuation theorem relates the induced dynamics of the forward and reverse processes, but which now are conditioned on successful control measurements (see Fig.~\ref{FigConditional}). 

It is useful to keep in mind that one should be cautious when interpreting post-selected dynamics, as it easily can end up in seemingly spectacular results. The conditional dynamics can for example create states with off-diagonal elements from globally diagonal states  (see Appendix \ref{ExampleTwoQubits} for an explicit example). However, this should not be viewed as a mysterious creation of coherence, but simply as the result of a post selection with respect to a non-diagonal measurement operator.

\begin{figure}
 \includegraphics[width= 8cm]{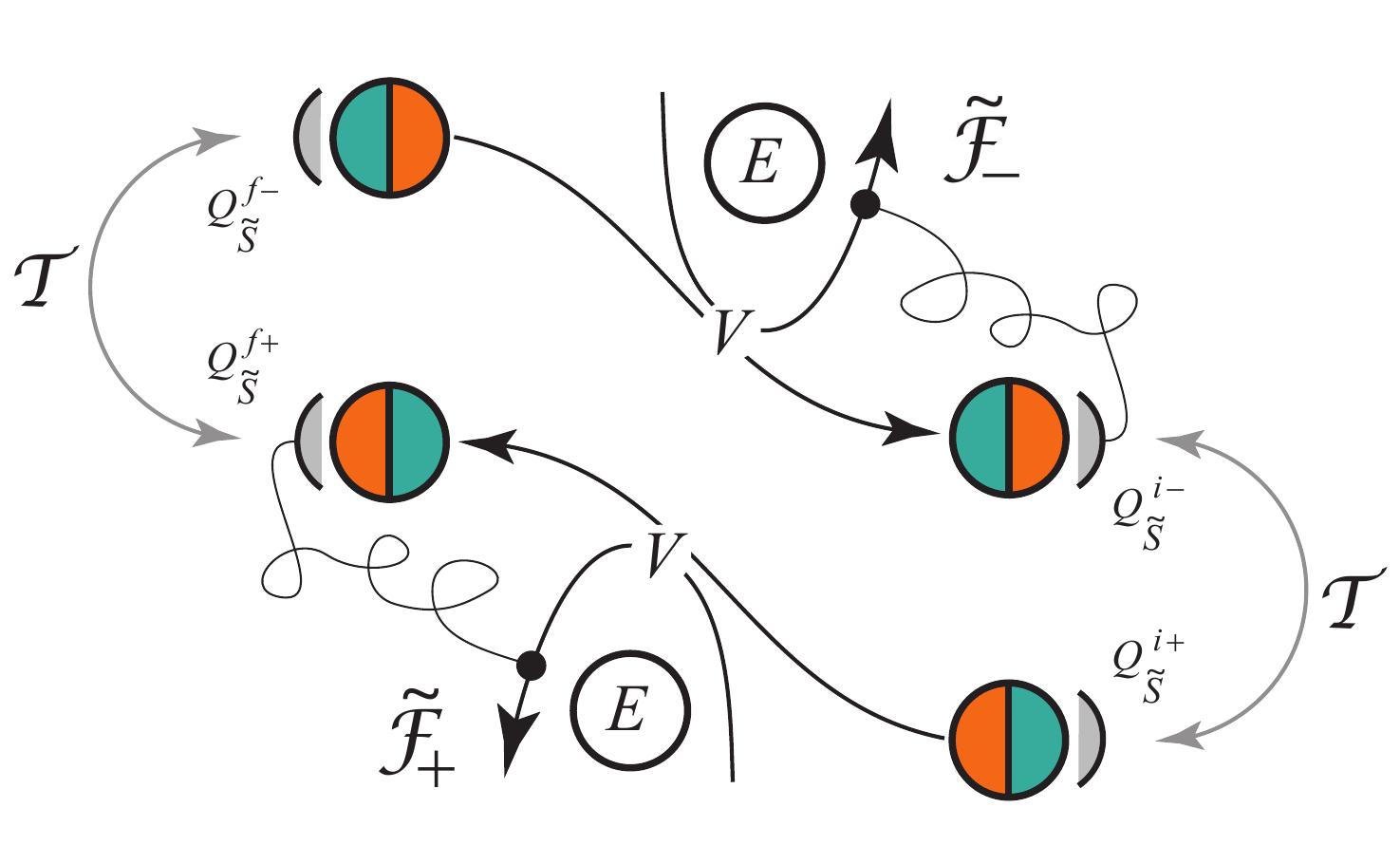} 
   \caption{\label{FigConditional}  {\bf Conditional fluctuation theorems.}   
 Similar to our previous fluctuation theorems we here relate the dynamics induced on the energy reservoir $E$ by the forward and reverse process, but conditioned on successful control measurements on $\widetilde{S} = SBC$. 
  The forward process is characterized by a pair of positive semi-definite operators  $Q_{\widetilde{S}}^{i+}, Q_{\widetilde{S}}^{f+}$ on $\widetilde{S}$, such that the initial state is given via the Gibbs map $\mathcal{G}_{\beta H_{\widetilde{S}}}(Q_{\widetilde{S}}^{i+})$, and the control measurement is represented by $Q_{\widetilde{S}}^{f+}$. More precisely, $Q_{\widetilde{S}}^{f+}$ corresponds to the `successful' outcome in the POVM $\{Q_{\widetilde{S}}^{f+},\hat{1}_{\widetilde{S}}-Q_{\widetilde{S}}^{f+}\}$, and results in the induced CPM $\tilde{\mathcal{F}}_{+}$ on the reservoir. One should keep in mind that the mapping $Q_{\widetilde{S}}^{i+}\mapsto \mathcal{G}_{\beta H_{\widetilde{S}}}(Q_{\widetilde{S}}^{i+})$ only serves as a convenient description of the initial state, and should not be interpreted as representing a particular physical process, or measurement.\\
 The reverse process, resulting in the CPM $\tilde{\mathcal{F}}_{-}$, is correspondingly characterized by the pair $Q_{\widetilde{S}}^{f-} = \mathcal{T}_{\widetilde{S}}(Q_{\widetilde{S}}^{f+})$, $Q_{\widetilde{S}}^{i-} = \mathcal{T}_{\widetilde{S}}(Q_{\widetilde{S}}^{i+})$, where $ \mathcal{G}_{\beta H_{\widetilde{S}}}(Q_{\widetilde{S}}^{f-})$ is the initial state, and $Q_{\widetilde{S}}^{i-}$ gives the control measurement. The CPMs $\tilde{\mathcal{F}}_{\pm}$ are related via the conditional fluctuation theorem in (\ref{ConditionalFluctuationThm}).
\newline
We are free to choose the operators $Q_{\widetilde{S}}^{i+}, Q_{\widetilde{S}}^{f+}$ in any way we wish, and in the finite-dimensional case we can obtain arbitrary initial states. Hence, we are not restricted to initial equilibrium states. The price  that we pay for this additional freedom is that non-equilibrium initial states are translated to  non-trivial control  measurements in the reverse process.
 }
\end{figure}

In the introduction it was pointed out that an advantage with fluctuation theorem (\ref{QuantumCrooks}) is that it does not rely on any auxiliary measurements. It may thus appear a bit contradictory that we here  re-introduce measurements as a fundamental component in the formalism. However, these only serve as control measurements upon which we condition the quantum evolution in the energy reservoir. This is in contrast to  approaches where the purpose of the measurements is to generate classical outcomes  (as for  two-point energy measurements \cite{Esposito09,Tasaki00,Kurchan01,Mukamel03b,Jarzynski04,Talkner05,Saito08,Quan08,Talkner08,Talkner09,Andrieux09,Campisi10,Campisi10b,Deffner11b,Campisi11b,Talkner13}) and where the quantum fluctuation relations are based on the resulting probability distributions, much in the spirit of classical fluctuation relations.

As a final general remark one should note that it turns out to be unnecessary to restrict the control measurements to the control system; we can let them act on the joint system $\tilde{S} = S'C=SBC$  in any manner  that we wish, resulting in a conditional evolution on $E$.

\subsection{The Gibbs map and the partition map}

Each control measurement has the two possible outcomes `yes' or `no', and can be described via measurement operators $0 \leq Q\leq \hat{1}$, where $\Tr(Q\rho)$ is the probability of a `yes' when measured on a state $\rho$. Similarly $\Tr[(1-Q)\rho]$ is the probability of the outcome `no'. In other words, $\{Q,\hat{1}-Q\}$ is a binary  positive operator valued measure (POVM) \cite{Davies76,Holevo82,KrausBook}. These operators serve a dual purpose in our formalism; not only describing control measurements, but also parameterizing initial states. The latter is done via what we here refer to as the `Gibbs map' $\mathcal{G}_{\beta H}$ and the `partition map' $\mathcal{Z}_{\beta H}$ defined by 
\begin{equation}
\label{GibbsMapDef}
\mathcal{G}_{\beta H}(Q) =  \frac{1}{\mathcal{Z}_{\beta H}(Q)}\mathcal{J}_{\beta H}(Q),\quad \mathcal{Z}_{\beta H}(Q)  =   \Tr \mathcal{J}_{\beta H}(Q).
\end{equation}
 For  finite-dimensional Hilbert spaces all density operators can be  reached via $\mathcal{G}_{\beta H}$ for suitable choices of $Q$. These mappings are generalizations of the Gibbs state and the partition function in the sense that $\mathcal{G}_{\beta H}(\hat{1}) = G(H)$ and $\mathcal{Z}_{\beta H}(\hat{1}) = Z(H)$.

\subsection{A conditional fluctuation relation}

As before, we assume a global non-interacting Hamiltonian of the form 
\begin{equation}
\label{StructureGlobalHamil}
H = H_{\widetilde{S}}\otimes \hat{1}_E + \hat{1}_{\widetilde{S}}\otimes H_E.
\end{equation}
We furthermore assume time-reversals $\mathcal{T}_{\widetilde{S}}$ and $\mathcal{T}_E$ such that $\mathcal{T}_{\widetilde{S}}(H_{\widetilde{S}}) = H_{\widetilde{S}}$, and $\mathcal{T}_E(H_E) = H_E$. We also assume that the global evolution $V$ preserves energy $[V,H] = 0$, and that it is time-reversal symmetric, $\mathcal{T}(V) = V$, with respect to  $\mathcal{T} = \mathcal{T}_{\widetilde{S}}\otimes\mathcal{T}_E$.

The forward process is characterized by a pair of positive semi-definite operators $Q_{\widetilde{S}}^{i+}, Q_{\widetilde{S}}^{f+}$ on system $\widetilde{S}$. The operator $Q_{\widetilde{S}}^{f+}$ characterizes a  measurement at the end of the process (and it does not have to be related to energy measurements). The operator $Q_{\widetilde{S}}^{i+}$ has the rather different role of characterizing the initial state $\rho_{\widetilde{S}}^{i+}$ via the Gibbs map, such that  $\rho_{\widetilde{S}}^{i+} =  \mathcal{G}_{\beta H_{\widetilde{S}}}(Q_{\widetilde{S}}^{i+})$.
One should note that the initial state $\rho_{\widetilde{S}}^{i+}$ can be prepared in any manner that we wish.
The Gibbs map is only used as a convenient method to describe the initial state, and is not intended to imply any particular choice of preparation procedure. For example, one should not be tempted to interpret $\mathcal{G}_{\beta H_{\widetilde{S}}}(Q_{\widetilde{S}}^{i+})$ as necessarily involving a measurement of $Q_{\widetilde{S}}^{i+}$. 
Even though $Q_{\widetilde{S}}^{i+}, Q_{\widetilde{S}}^{f+}$ do not always represent measurements, we will nevertheless for the sake of convenience refer to them as `measurement operators' (simply meaning that they satisfy $0\leq Q\leq \hat{1}$).

The global initial state $\rho_{\widetilde{S}}^{i+}\otimes \sigma$ evolves via an energy-preserving unitary operation $V$, and we condition the result on a successful measurement of $Q_{\widetilde{S}}^{f+}$. The resulting 
completely positive map (CPM) on $E$ is  given by
\begin{equation}
\label{tildeFplus}
\begin{split}
\tilde{\mathcal{F}}_{+}(\sigma) = & \Tr_{\widetilde{S}}\boldsymbol{(}[Q_{\widetilde{S}}^{f+}\otimes\hat{1}_E]V[ \rho_{\widetilde{S}}^{i+} \otimes \sigma]V^{\dagger}\boldsymbol{)}.
\end{split}
\end{equation}
That we generically get a CPM rather than a channel corresponds to the fact that the control measurement may fail. The quantity  $\Tr\tilde{\mathcal{F}}_{+}(\sigma)$ is the probability that the control measurement succeeds.

To obtain the reversed process we define the operators $Q_{\widetilde{S}}^{i-} = \mathcal{T}_{\widetilde{S}}(Q_{\widetilde{S}}^{i+})$ and $Q_{\widetilde{S}}^{f-} = \mathcal{T}_{\widetilde{S}}(Q_{\widetilde{S}}^{f+})$, and use $Q_{\widetilde{S}}^{f-}$ to generate the initial state, and $Q_{\widetilde{S}}^{i-}$ to characterize the final measurement. In other words, we not only time-reverse the measurement operators, but also swap their roles. The resulting CPM on $E$ for the reversed process thus becomes 
\begin{equation}
\label{tildeFminus}
\begin{split}
\tilde{\mathcal{F}}_{-}(\sigma) =  & \Tr_{\widetilde{S}}\boldsymbol{(}[Q_{\widetilde{S}}^{i-}\otimes\hat{1}_E]V[\rho_{\widetilde{S}}^{f-}\otimes \sigma]V^{\dagger}\boldsymbol{)},\\
\rho_{\widetilde{S}}^{f-}= &\mathcal{G}_{\beta H_{\widetilde{S}}}(Q_{\widetilde{S}}^{f-}).
\end{split}
\end{equation}
Hence, for the reverse process we do again have a single control measurement at the end of the process, but now characterized by $Q_{\widetilde{S}}^{i-}$.

With the above assumptions one can show (see Appendix \ref{SecConditional}) that the CPMs $\tilde{\mathcal{F}}_{\pm}$ are related by the following `conditional' fluctuation theorem 
 \begin{equation}
 \label{ConditionalFluctuationThm}
\mathcal{Z}_{\beta H_{\widetilde{S}}}(Q_{\widetilde{S}}^{i})\tilde{\mathcal{F}}_{+} = \mathcal{Z}_{\beta H_{\widetilde{S}}}(Q_{\widetilde{S}}^{f})\mathcal{J}_{\beta H_E}\tilde{\mathcal{F}}_{-}^{\ominus}\mathcal{J}_{\beta H_E}^{-1},
\end{equation}
where we use the notation $\mathcal{Z}_{\beta H_{\widetilde{S}}}(Q_{\widetilde{S}}^{i}) = \mathcal{Z}_{\beta H_{\widetilde{S}}}(Q_{\widetilde{S}}^{i\pm})$ and $\mathcal{Z}_{\beta H_{\widetilde{S}}}(Q_{\widetilde{S}}^{f}) = \mathcal{Z}_{\beta H_{\widetilde{S}}}(Q_{\widetilde{S}}^{f\pm})$.

The conditional fluctuation relation (\ref{ConditionalFluctuationThm}) allows (up to infinite-dimensional technicalities) for arbitrary initial states on $\widetilde{S}$, and thus in particular arbitrary coherences. We only need to choose the appropriate initial operator $Q_{\widetilde{S}}^{i+}$ that generates the desired initial state via the Gibbs map. 

The quantum fluctuation relation in  (\ref{QuantumCrooks}) can be regained from (\ref{ConditionalFluctuationThm}), and as
 a first step we choose the measurement operators to be $Q_{\widetilde{S}}^{i\pm} = \hat{1}_{S'}\otimes |c_{i\pm}\rangle\langle c_{i\pm}|$ and $Q_{\widetilde{S}}^{f\pm} = \hat{1}_{S'}\otimes |c_{f\pm}\rangle\langle c_{f\pm}|$. For the appropriate Hamiltonian $H_{\widetilde{S}}$, the Gibbs map gives the conditional equilibrium states $\mathcal{G}_{\beta H_{\widetilde{S}}}(Q_{\widetilde{S}}^{i\pm}) = G(H^{i}_{S'})\otimes |c_{i\pm}\rangle\langle c_{i\pm}|$ and $\mathcal{G}_{\beta H_{\widetilde{S}}}(Q_{\widetilde{S}}^{f\pm}) = G(H^{f}_{S'})\otimes |c_{i\pm}\rangle\langle c_{i\pm}|$. Due to the additional assumption of perfect control, the evolution $V$  is such that we can replace the final measurement operators with identity operators, i.e., we do not have to perform any measurements at the end of the processes.

This sketchy re-derivation indicates that we can abolish the assumption of perfect control if we instead keep the measurements at the end of the protocol. Hence, these control measurements may fail, e.g., if the energy reservoir does not contain sufficient energy, which thus allows us to avoid an energy spectrum that is unbounded from below. (For an explicit example, see Appendix \ref{AbolishPerfectControl}.)

For the conditional maps $\tilde{\mathcal{F}}_{\pm}$ there is in general no decomposition of the dynamics into different diagonals. However, in the special case that $Q_{\widetilde{S}}^{i+}$ and $Q_{\widetilde{S}}^{f+}$ are diagonal with respect to an eigenbasis of $H_{\widetilde{S}}$ one can regain the decomposition of the dynamics (see Appendix \ref{SecDiagonalMeasuremnts}). In particular, one can additionally impose the energy translation invariance and obtain  a conditional version of the classical  Crooks relation (Appendix \ref{SecDiagonalMeasuremnts}).

\section{\label{MainSecGlobal}An alternative formulation}

So far we have formulated our fluctuation relations solely in terms of channels and CPMs induced on the energy reservoir.
Here we consider a reformulation that highlights two elementary properties on which our fluctuation relations in some sense are based.

Given some channel or CPM $\mathcal{F}$, a pair of measurement operators $Q^i$ and $Q^f$, and a Hamiltonian $H$, we define the `transition probability' from $Q^i$  to $Q^f$ by 
\begin{equation}
\label{DefTransitionProb}
P^{\mathcal{F}}_{\beta H}[Q^i\rightarrow Q^{f}]= \Tr\Big(Q^f\mathcal{F}\big(\mathcal{G}_{\beta H}(Q^{i})\big)\Big).
\end{equation}
Hence, $P^{\mathcal{F}}_{\beta H}[Q^i\rightarrow Q^{f}]$ is the probability that we would detect $Q^i$ after we have evolved the initial state $\mathcal{G}_{\beta H}(Q^{i})$ under the map $\mathcal{F}$. The Gibbs map thus again appears as a method to parametrize the set of initial states. 

The quantum Crooks relation (\ref{QuantumCrooks}) can be rephrased in terms of these transition probabilities as
\begin{equation}
\label{ReformulationQuantumCrooks}
\begin{split}
& Z(H^i_{S})\mathcal{Z}_{\beta H_E}(Q^{i}_E)P^{\mathcal{F}_{+}}_{\beta H_E}[Q^{i+}_E\rightarrow Q^{f+}_E] \\
&= Z(H^f_{S})\mathcal{Z}_{\beta H_E}(Q^{f}_E)P^{\mathcal{F}_{-}}_{\beta H_E}[Q^{f-}_E\rightarrow Q^{i-}_E], 
\end{split}
\end{equation} 
where $Q^{i+}_E, Q^{f+}_E$ are measurement operators on the energy reservoir (not on $\widetilde{S}$), and where, as one may expect,  $Q^{i-}_E = \mathcal{T}_E(Q^{i+}_E)$ and $Q^{f-}_E = \mathcal{T}_E(Q^{f+}_E)$.

In section \ref{DerivingQuantumCrooks} we noted the possibility to interpret the quantum Crooks theorem (\ref{QuantumCrooks}) in terms of process tomography of the two induced channels $\mathcal{F}_{\pm}$. In that interpretation the mapping $\mathcal{F}_{-}\mapsto \mathcal{J}_{\beta H_E}\mathcal{F}^{\ominus}_{-}\mathcal{J}^{-1}_{\beta H_E}$ is something that we implement `by hand' on the experimentally determined description of the channel $\mathcal{F}_{-}$. The reformulation (\ref{ReformulationQuantumCrooks}) suggests a different interpretation where the fluctuation relation can be tested more directly via the estimated transition probabilities. The mapping $\mathcal{J}_{\beta H_E}$  (as well as the time-reversal $\mathcal{T}$ and the conjugate $*$ contained in $\ominus$) is in some sense put in by hand also in this scenario, but enters in the parametrization of the initial states via the Gibbs map, rather than via a post processing of the measurement data.

The interpretation of the quantum Crooks theorem (\ref{QuantumCrooks}), or conditional fluctuation theorem (\ref{ConditionalFluctuationThm}), in terms of process tomography may suggest rather extensive experimental procedures, since the number of measurement settings required to perform such a tomography increases rapidly with the size of the involved systems. (We need sufficiently many initial states and final observables in order to span the input and output spaces of $\mathcal{F}_{\pm}$ or $\tilde{\mathcal{F}}_{\pm}$. See e.g.~Sec.~8.4.2 in \cite{NielsenChuang}.) However, the reformulation (\ref{ReformulationQuantumCrooks}) suggests milder tests, where we only make one single choice of initial state and final measurement for the forward and reverse process. The estimation of free energy differences, discussed in section \ref{SecMainQEFT}, is an example.

It is no coincidence that the measurement operators $Q^{i\pm}_{E}, Q^{f\pm}_E$ undergo the same transformations as the operators $Q^{i\pm}_{\widetilde{S}},Q^{f\pm}_{\widetilde{S}}$ that we know from the conditional fluctuation relations.  
These similarities stem from the fact that all of our fluctuation relations can be regarded as special cases of a `global' fluctuation relation, defined for global measurement operators $Q^{i+},Q^{f+}$ on the whole of $SBCE$. 
It is straightforward to confirm (or to consult Appendix \ref{SecAltformTransProb}) that energy conservation $[H,V] = 0$ combined with time reversal symmetry $\mathcal{T}(H) = H$, $\mathcal{T}(V) = V$, yield
\begin{equation}
\label{MainGlobal}
\mathcal{Z}_{\beta H}(Q^{i})P^V_{\beta H}[Q^{i+}\!\rightarrow\!Q^{f+}] = \mathcal{Z}_{\beta H}(Q^{f})P^V_{\beta H}[Q^{f-}\!\rightarrow\! Q^{i-}],
\end{equation}
where $Q^{f-} = \mathcal{T}(Q^{f+})$ and $Q^{i-} = \mathcal{T}(Q^{i+})$. 

The global relation (\ref{MainGlobal}) can alternatively be phrased as the invariance of the quantity  $\Tr\big(Q^{f}V\mathcal{J}_{\beta H}(Q^{i})V^{\dagger}\big)$ under the transformation $(Q^i,Q^f)\mapsto (Q^{i'},Q^{f'}) = \big(\mathcal{T}(Q^f),\mathcal{T}(Q^i)\big)$.  All our fluctuation relations do in some sense  express this basic invariance in different guises.

One can indeed re-derive  (\ref{QuantumCrooks}) and  (\ref{ConditionalFluctuationThm}) from (\ref{MainGlobal}). However, when doing so one quickly realizes that this hinges on another property (that we have used repeatedly without comment). Namely, if the global Hamiltonian $H$ is non-interacting over two subsystems, i.e. $H = H_1\otimes \hat{1}_2 + \hat{1}_1\otimes H_2$, then the mapping $\mathcal{J}$ satisfies the factorization property
\begin{equation}
\label{MainFactorization}
\mathcal{J}_{\beta H}(Q_1\otimes Q_2) = \mathcal{J}_{\beta H_1}(Q_1)\otimes \mathcal{J}_{\beta H_2}(Q_2).
\end{equation}
Although elementary, it is in some sense this property (in conjunction with product time-reversals $\mathcal{T} = \mathcal{T}_{1}\otimes\mathcal{T}_2$) that makes it possible to formulate fluctuation relations solely in terms of the dynamics of the energy reservoir, and also to eliminate unaccessible degrees of freedom (see Appendix \ref{SecAltformTransProb} for further details). The central role of this property may become more apparent when we abandon it in section \ref{MainApproximate} .

As a side-remark one can note that if one attempts to formulate quantum fluctuation relations for non-exponential forms of generalized equilibrium distributions, then the lack of a factorization property for non-exponential functions makes the generalization problematic (see Appendix \ref{SecGeneralizedGibbs}).

\section{\label{MainPrecorrelations} Correlated initial states}
 
The conditional relations extend the notion of fluctuation theorems to non-equilibrium initials states on $\widetilde{S}$. An obvious question is if also entanglement and general pre-correlations can be included. (One could even argue that in order to deserve the label `fully quantum' our fluctuation relations must include  quantum correlations.) The global fluctuation relation (\ref{MainGlobal}) already provides an affirmative answer to this question, in the sense that we can generate arbitrary global initial states via suitable choices of measurement operators (including quantum correlations to external reference systems). However, (\ref{MainGlobal}) is not the only option for how to describe such scenarios, and here we highlight three special cases that focus on descriptions via channels or CPMs.

\subsection{Correlations between $E$ and an external reference}

Imagine that the energy reservoir not only carries energy and coherences, but also carries entanglement and correlations with an external reference $R$ (not included in $SBCE$).  One option for obtaining a fluctuation relation that incorporates this scenario would be to regard $ER$ as a new extended energy reservoir, with a Hamiltonian $H_{ER}$ and a new time-reversal $\mathcal{T}_{ER}$, with  $\mathcal{T}_{ER}(H_{ER}) = H_{ER}$.  Both the  quantum fluctuation relation (\ref{QuantumCrooks}) and the conditional fluctuation relation (\ref{ConditionalFluctuationThm}) are applicable to this scenario.
 In the latter case, the induced CPMs  $\mathcal{F}_{\pm}$ on the new energy reservoir $ER$ satisfies the corresponding quantum Crooks relation $\mathcal{Z}_{\beta H_{\widetilde{S}}}(Q_{\widetilde{S}}^{i})\mathcal{F}_{+} = \mathcal{Z}_{\beta H_{\widetilde{S}}}(Q_{\widetilde{S}}^{f})\mathcal{J}_{\beta H_{ER}}\mathcal{F}_{-}^{\ominus}\mathcal{J}_{\beta H_{ER}}^{-1}$.

\subsection{Correlations between $\widetilde{S}$ and an external reference}

An obvious alternative to the energy reservoir carrying correlations with an external reference is that  $\widetilde{S}$ carries these correlations. 
We can analogously incorporate $R$ into  $\widetilde{S}$, with the joint Hamiltonian $H_{\widetilde{S}R}$ and an extended time-reversal $\mathcal{T}_{\widetilde{S}R}$ such that $\mathcal{T}_{\widetilde{S}R}(H_{\widetilde{S}R}) = H_{\widetilde{S}R}$. The requirements for the conditional fluctuation theorem (\ref{ConditionalFluctuationThm}) are satisfied for this extended system, thus resulting in the relation $\mathcal{Z}_{\beta H_{\widetilde{S}R}}(Q_{\widetilde{S}R}^{i})\tilde{\mathcal{F}}_{+} = \mathcal{Z}_{\beta H_{\widetilde{S}R}}(Q_{\widetilde{S}R}^{f})\mathcal{J}_{\beta H_E}\tilde{\mathcal{F}}_{-}^{\ominus}\mathcal{J}_{\beta H_E}^{-1}$.

\subsection{\label{MainCorrelatedSE}Pre-correlated $\widetilde{S}$ and $E$}

The two previous examples can both be regarded as variations of the conditional fluctuation relation, and in this sense do not add anything essentially new to the general picture. A maybe more interesting case is if we allow for pre-correlations between $\widetilde{S}$ and $E$. (It is not difficult to imagine cases where the energy reservoir $E$ interacts repeatedly with $\widetilde{S}$, and thus may build up correlations.) Apart from the global fluctuation relation (\ref{MainGlobal}) this case does not fit very well with our previous scenarios. However, it is straightforward to adapt our general formalism to find a fluctuation relation for the evolution on the joint system $SE$.

For the sake of illustration we consider one particular case related to the setup 
in section \ref{DerivingQuantumCrooks}, but where we allow for initial correlations between $S$ and $E$. In essence (for details on the setup, see Appendix \ref{SecPrecorrelations}) we have an initial Hamiltonian $H^{i}_{SE}$ and a final Hamiltonian $H^{f}_{SE}$ and assume perfect control that transfers one into the other, resulting in the induced channels
\begin{equation*}
\begin{split}
\overline{\mathcal{F}}_{+}(\chi) = & \Tr_{CB}(V [|c_{i+}\rangle\langle c_{i+}|\otimes G(H_B)\otimes\chi] V^{\dagger}),\\
\overline{\mathcal{F}}_{-}(\chi) = & \Tr_{CB}(V [|c_{f-}\rangle\langle c_{f-}|\otimes G(H_B)\otimes\chi] V^{\dagger})\\
\end{split}
\end{equation*}
on the combined system $SE$. For these channels one can derive the fluctuation relation
\begin{equation}
\label{fdbvlqfv}
\overline{\mathcal{F}}_{+} = \mathcal{J}_{\beta H_{SE}^{f}}\overline{\mathcal{F}}_{-}^{\ominus}\mathcal{J}^{-1}_{\beta H_{SE}^{i}}.
\end{equation}  
The partition maps on the left and right hand side cancel due to the identical initial and final Hamiltonian for the heat bath, which starts in the Gibbs state. Note also that the two applications of the $\mathcal{J}$-map may potentially involve an initial Hamiltonian $H_{SE}^{i}$ that is different from the final Hamiltonian $H_{SE}^{f}$.

\section{\label{MainApproximate}Approximate fluctuation relations} 

The global Hamiltonian $H$ has so far in this investigation only characterized the notion of energy conservation, while the dynamics has been modeled by a unitary operator $V$, with the only restriction that $H$ and $V$ should commute. In view of standard textbook quantum mechanics it would be very reasonable to demand a much more tight connection, where the global Hamiltonian $H$ induces the evolution according to Schr\"odinger's equation, thus yielding $V = e^{-it H/\hbar}$. 
An additional benefit with the latter arrangement would be that it does not require any further interventions beyond the preparation of the initial state, the measurement of the final state, and the time-keeping for when to do the measurement. In other words, once we have started the global system it evolves autonomously.  (For  discussions on autonomous and clock controlled thermal machines in the quantum regime, see \cite{Malabarba15,Frenzel15}.)  This should be compared with the models that we have employed so far, where one in principle should analyze the mechanism that implements the evolution $V$.

The problem is that if we would impose the condition  $V = e^{-it H/\hbar}$ for non-interacting Hamiltonians as in (\ref{StructureGlobalHamil}), the resulting dynamics on the energy reservoir would become trivial. If we on the other hand would abandon the non-interacting Hamiltonians, then we also have to abandon the factorization property (\ref{MainFactorization}), which, as pointed out in section \ref{MainSecGlobal}, plays an important role in our derivations. The main observation in this section is that we can obtain approximate fluctuation relations as long as the factorization holds approximately, which also is enough to obtain a non-trivial evolution. In the following section we shall illustrate the general ideas with two special cases. (For a more general version that includes both of them, see Appendix \ref{SecApproximateFluct}.)

\subsection{\label{MainApproxConditional}Approximate conditional fluctuation relations}

\begin{figure}[t]
 \includegraphics[width= 7cm]{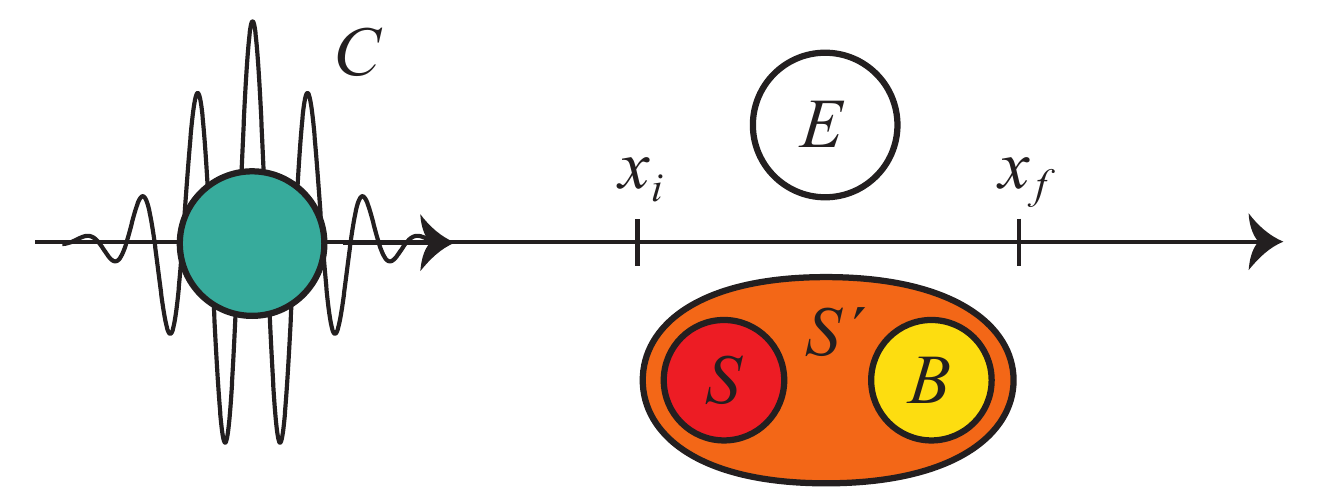} 
   \caption{\label{FigApproxC}  {\bf A control particle.}
One way to `quantize' the control parameter $x$ of the classical Crooks relation would be to regard it as the position of a quantum particle. The propagation of the particle would approximately implement the time dependent Hamiltonian for suitable wave packets. 
 A particularly clean example is obtained if $E$ and $S'$ only interact when the control particle is in an `interaction region' $[x_i,x_f]$  (cf.~the construction in \cite{Malabarba15}). Since the global Hamiltonian contains interactions between $E$ and $S'$ it does not fit with our previous classes of models, and in particular, it does not satisfy the factorization property (\ref{MainFactorization}). However, for suitable choices of measurement operators $Q_{C}^{i+}$ and $Q_C^{f+}$ that are localized outside the interaction region, the systems becomes approximately non-interacting, and the factorization property (\ref{MainFactorization}) is approximately satisfied. This makes it possible to derive an approximate conditional fluctuation relation on the energy reservoir. 
   }
\end{figure}

There is an additional reason for why it is useful to go beyond our previous settings. Namely that there exist rather evident ways to `quantize' the classical control mechanism in Crooks relation that do not fit particularly well within the machinery that we have employed so far. These quantized control mechanisms  moreover provide a good starting point for introducing approximate fluctuation relations. 

Imagine a particle whose position $x$ determines the parameter in the family of Hamiltonians $H_{S'E}(x)$. More precisely, assume a joint Hamiltonian of the form
\begin{equation}
\label{MainGlobalHamiltonianScenario1}
H = \frac{1}{2M_C}\hat{P}_{C}^2\otimes\hat{1}_{S'E} + H_{S'E}(\hat{X}_{C}),
\end{equation}
where $M_C$ is the mass, and $\hat{X}_{C},\hat{P}_{C}$ are the canonical position and momentum operators of the control particle. If the control particle is reasonably well localized in both space and momentum, then one can imagine that its propagation approximately implements the evolving control parameter $x$ in the family of Hamiltonians $H_{S'E}(x)$. For the sake of simplicity, let us assume that the control particle only affects the other systems in an `interaction region' (see Fig.~\ref{FigApproxC}) corresponding to an interval  $[x_i,x_f]$, while outside of this region it is the case that
\begin{equation}
\label{FinalInitialMain}
H_{S'E}(x) = 
 \left\{ \begin{matrix}
 H^{i}_{S'}\otimes\hat{1}_E + \hat{1}_{S'} \otimes H_E,\quad x \leq x_{i},\\
 H^{f}_{S'}\otimes\hat{1}_E + \hat{1}_{S'} \otimes H_E,\quad x \geq x_{f}.
\end{matrix}\right.
\end{equation}
Although the systems thus are non-interacting outside the interaction region $[x_i,x_f]$  one should keep in mind that  the kinetic energy term $\hat{K}  = \hat{P}_{C}^2/(2M_C)$ in (\ref{MainGlobalHamiltonianScenario1}) does not commute with $X_C$, and thus prohibits a factorization as in  (\ref{MainFactorization}). In some sense it is thus the kinetic operator $\hat{K}$ that causes the failure of the factorization property. On the other hand, it is also the kinetic operator that yields a non-trivial evolution of the control mechanism. It is this tension that we strive to handle via the approximate fluctuation relations.
The basic idea is that if the measurement operators $Q^{i+}_C$ and $Q^{f+}_C$ are well localized outside the interaction region, then the systems are approximately non-interacting, and consequently the factorization (\ref{MainFactorization}) should hold approximately. It is worth keeping in mind that even if the measurement operators are well localized in non-interacting regions, this does not exclude the possibility that the system evolves into states where the interactions are strong, even at the very moment of the control measurement. (For an explicit example, see Fig.~\ref{FigDynamics} in Appendix \ref{SecNumericalEvaluation}.)

Although the special case of the control particle provides intuition, the more general setting of the conditional fluctuation theorems yields a more concise description. 
For the chosen measurement operators  $Q_{\widetilde{S}}^{i\pm}$ and $Q_{\widetilde{S}}^{f\pm}$, let us assume that there exist local approximate Hamiltonians  $H_{\widetilde{S}}^{i}$, $H_{\widetilde{S}}^{f}$, $H_{E}^{i}$, $H_{E}^{f}$, such that the factorization holds approximately
\begin{equation}
\label{ApproxMain}
\begin{split}
 & \mathcal{J}_{\beta H}(Q_{\widetilde{S}}^{i+}\otimes Q)   \approx \mathcal{J}_{\beta H_{\widetilde{S}}^{i}}(Q_{\widetilde{S}}^{i+})\otimes \mathcal{J}_{\beta H^i_E}(Q),\\
 & \mathcal{J}_{\beta H}(Q_{\widetilde{S}}^{f+}\otimes Q)   \approx \mathcal{J}_{\beta H_{\widetilde{S}}^{f}}(Q_{\widetilde{S}}^{f+})\otimes \mathcal{J}_{\beta H^f_E}(Q).
\end{split}
\end{equation}
For time-reversal symmetric systems one can show that this leads to the approximate fluctuation relation
\begin{equation}
\label{MainApproxFluctThm}
\mathcal{Z}_{\beta H_{\widetilde{S}}^{i}}(Q_{\widetilde{S}}^{i+})\tilde{\mathcal{F}}_{+}\mathcal{J}_{\beta H^i_E} \approx \mathcal{Z}_{\beta H_{\widetilde{S}}^{f}}(Q_{\widetilde{S}}^{f-})\mathcal{J}_{\beta H^f_E}\tilde{\mathcal{F}}^{\ominus}_{-}.
\end{equation}
This can be turned into a quantitative statement where the size of the error in  (\ref{MainApproxFluctThm}) is bounded by the errors in (\ref{ApproxMain}), see  Appendix \ref{SecApproxConditional}.

\subsection{Joint control system and energy reservoir}

Compared to the control particle in Fig.~\ref{FigApproxC} there exists an even simpler setup where the control particle simultaneously serves as the energy reservoir (see Fig.~\ref{FigApproxCE}). This corresponds to the global Hamiltonian 
\begin{equation}
\label{MainGlobalHamiltonianScenario2}
H = \frac{1}{2M_{CE}}\hat{P}_{CE}^2\otimes\hat{1}_{S'} + H_{S'}(\hat{X}_{CE}),
\end{equation}
again with an interaction region
\begin{equation}
\label{MainFinalInitialCE}
H_{S'}(x) = \left\{ \begin{matrix}H^{i}_{S'},\quad x \leq x_{i},\\
 H^{f}_{S'},\quad x \geq x_{f},
\end{matrix}\right.
\end{equation}
with a non-trivial dependence on $x$ inside $[x_i,x_f]$. 
As opposed to the previous setting, we here would have to perform the control measurement on the energy reservoir itself.  This situation is conveniently described in terms of the transition probabilities discussed in section \ref{MainSecGlobal}, resulting in the approximate fluctuation relation
\begin{equation}
\label{MainApproxFluctQuantitative}
\begin{split}
 & \mathcal{Z}_{\beta H^i_{S'}}(Q_{S'}^{i})\mathcal{Z}_{\beta H^i_{CE}}(Q_{CE}^{i}) \\
&\quad \times P^{V}_{\beta H^i}[ Q_{S'}^{i+}\otimes Q_{CE}^{i+}\rightarrow Q_{S'}^{f+}\otimes Q_{CE}^{f+}] \\
&  \approx \mathcal{Z}_{\beta H^f_{S'}}(Q_{S'}^{f})\mathcal{Z}_{\beta H^f_{CE}}(Q_{CE}^{f})\\
&\quad \times P^{V}_{\beta H^f}[ Q_{S'}^{f-}\otimes Q_{CE}^{f-}\rightarrow Q_{S'}^{i-}\otimes Q_{CE}^{i-}],
\end{split}
\end{equation}
where for the particular choice of Hamiltonian (\ref{MainGlobalHamiltonianScenario2}) we have $H^i_{CE} = H^f_{CE} = \hat{P}_{CE}^2/(2M_{CE})$.
This approximate relation can also be made quantitative, see Appendix \ref{SecApproximateTransProb}. For a numerical evaluation of the errors in a concrete model, see Appendix \ref{SecNumericalEvaluation}.

\begin{figure}[t]
 \includegraphics[width= 7cm]{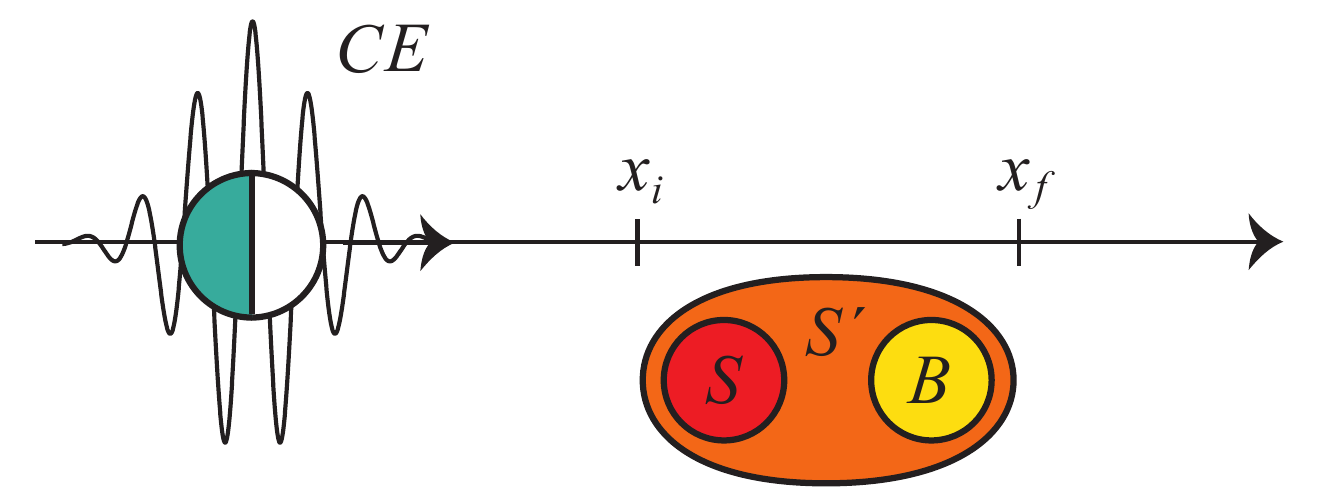} 
   \caption{\label{FigApproxCE}  {\bf A control particle that also serves as energy reservoir.}
An alternative to the setup in Fig.~\ref{FigApproxC} is a particle that simultaneously serves as both control system and energy reservoir. Hence, it is the kinetic energy of the control particle that drives the non-equilibrium process (if the other systems would start in equilibrium). This is another example of a system that would generally not satisfy the factorization property (\ref{MainFactorization}), but for measurement operators $Q_{CE}^{i+}$ and $Q_{CE}^{f+}$ that are localized outside the interaction region, the systems become approximately non-interacting. For this setup one can obtain an approximate fluctuation relation in terms of the transition probabilities introduced in section \ref{MainSecGlobal}. 
   }
\end{figure}

\section{\label{SecMainFlctnMstrEq} Fully quantum fluctuation relations for Markovian master equations}

We have step by step extended the range of applicability of the fully quantum fluctuation relations, away from the idealized setting of (\ref{QuantumCrooks}), towards the purely Hamiltonian evolution in section \ref{MainApproximate}. The general approach may nevertheless appear abstract and rather remote from the standard machinery of open quantum systems. Here we aim to include a fundamental component in the toolbox of the latter, namely master equations \cite{BreuerPetruccione}, which opens up for the application of a range of modelling techniques. As such, this extension aligns with recent efforts to bridge resource theoretic approaches to quantum thermodynamics with the master equation formalism \cite{Lostaglio17}.

So far we have explicitly included all degrees of freedom that have some role to play, including the heat bath,
which allows us to keep track of the evolution of all resources, and in particular coherences. Here we turn to the question of effective models where the heat bath is only included implicitly via master equations. Under suitable conditions, the notion of fully quantum fluctuation relations can be extended to this setting. Apart from providing a bridge to standard notions of open systems theory, an additional advantage of this generalization is that it avoids the rather extensive models that the `all inclusive' approach yields for realistic heat baths. Although the latter is no issue for the type of purely theoretical questions that we have focused on so far, it can be problematic, e.g., for numerical assessments of how good the approximations in the approximate fluctuation relations are.

One should note previous approaches to fluctuation relations for master equations \cite{Esposito06,Kawamoto11,Chetrite12,Leggio13a,Suomela14}, including unravellings and quantum jump methods 
\cite{Leggio13b,Suomela15}, and Brownian motion models \cite{Bai14}. Like for the previous sections, a prominent difference is that we here explicitly focus on the system that delivers the energy, and the changes that the thermal process induces on this energy reservoir.

\subsection{\label{SecMainTimeRevesalSymmetricThermalOperations} Fluctuation relations for time-reversal symmetric thermal operations}

Since we aim at removing the heat bath $B$ from the explicit description, we first  consider the fluctuation relation obtained on the remaining systems $SCE$. One should note that this setting, and the corresponding fluctuation relation  (\ref{ReductionHeatBath}), is closely related to a fluctuation relation obtained in \cite{Alhambra16}, with the difference that we here include time-reversals.

We assume that the heat bath is non-interacting with respect to the rest of the systems, $H = H_{SCE}\otimes \hat{1}_B + \hat{1}_{SCE}\otimes H_B$. We furthermore assume a time reversal of the form $\mathcal{T} = \mathcal{T}_{SCE}\otimes\mathcal{T}_B$ with $\mathcal{T}_{SCE}(H_{SCE}) = H_{SCE}$ and $\mathcal{T}_{B}(H_B) = H_B$. The global unitary evolution $V$ is energy conserving  $[H,V] = 0$, and time-reversal symmetric $\mathcal{T}(V) = V$.
With the heat bath initially in the equilibrium state, and with no measurements performed on it, the resulting induced channel on $SCE$ is
\begin{equation}
\label{TTO}
\mathcal{F}(\rho) = \Tr_{B}(V[\rho\otimes G_{\beta}(H_B)]V^{\dagger}).
\end{equation}
Much in line with our previous derivations, one finds  that 
\begin{equation}
\label{ReductionHeatBath}
\mathcal{F}\mathcal{J}_{\beta H_{SCE}}=  \mathcal{J}_{\beta H_{SCE}}\mathcal{F}^{\ominus}.
\end{equation}
In other words, (\ref{ReductionHeatBath}) is the fluctuation relation that would be satisfied on the joint system $SCE$, assuming an  energy conserving time-reversal symmetric dynamics. 
Due to the energy conserving dynamics with an equilibrium heat bath, the channels (\ref{TTO}) belong to the class of thermal operations \cite{Janzing00,Janzing06,Horodecki11,Brandao13b,Gour,Brandao13,Renes14,Faist15,Lostaglio14b,Perry16,Scharlau16,Lostaglio16}, although we here additionally require the time-reversal symmetric implementation described above.
In the following we refer to this as `time-reversal symmetric thermal operations'. As one might expect, time-reversal symmetric thermal operations form a proper subset to the set of thermal operations (see Appendix \ref{SecComp}).

\subsection{\label{SecMainTurningTables}Yet another extension: Assuming a global fluctuation relation}

In the previous section we found that a time-reversal symmetric energy conserving dynamics with a thermal heat bath leads to the fluctuation relation  (\ref{ReductionHeatBath}). In the following we shall consider a generalization where we turn things around and instead \emph{assume} that the dynamics on $SCE$ is such that it satisfies the relation (\ref{ReductionHeatBath}).  Hence, we enlarge the set of channels that we allow for, beyond the set of time-reversal symmetric thermal operations. (It is indeed an extension, see Appendix \ref{SecComp}.) It turns out that many (but not all) of the results of the previous sections can be regained for any channel $\mathcal{F}$ on $SCE$ that satisfies (\ref{ReductionHeatBath}), for some Hermitian operator $H_{SCE}$ and time-reversal $\mathcal{T}_{SCE}$ such that $T_{SCE}(H_{SCE}) = H_{SCE}$.

In particular, with this extended assumption as starting point, and assuming a non-interacting Hamiltonian $H_{SCE} = H_{SC}\otimes\hat{1}_E + \hat{1}_{SC}\otimes H_E$, and $\mathcal{T}_{SCE} = \mathcal{T}_{SC}\otimes\mathcal{T}_E$ with $\mathcal{T}_{SC}(H_{SC}) = H_{SC}$ and $\mathcal{T}_E(H_E) = H_E$, we can  re-derive the conditional fluctuation relations in 
section \ref{MainConditional} (see Appendix \ref{SecGlobalChnlIdeal}). In other words,
\begin{equation}
\label{fbfbbfsMain}
\mathcal{Z}_{\beta H_{SC}}(Q^{i}_{SC})  \tilde{\mathcal{F}}_{+} \mathcal{J}_{\beta H_{E}} = \mathcal{Z}_{\beta H_{SC}}(Q^{f}_{SC})\mathcal{J}_{\beta H_{E}}\tilde{\mathcal{F}}^{\ominus}_{-},
\end{equation}
holds for the CPMs
\begin{equation}
\label{euezuie}
\begin{split}
\tilde{\mathcal{F}}_{+}(\sigma) = & \Tr_{SC}\Big([\hat{1}_{E}\otimes Q^{f+}_{SC}]\mathcal{F}\big(\sigma\otimes\mathcal{G}_{\beta H_{SC}}(Q^{i+}_{SC}) \big)\Big),\\
\tilde{\mathcal{F}}_{-}(\sigma) = & \Tr_{SC}\Big([\hat{1}_{E}\otimes Q^{i-}_{SC}]\mathcal{F}\big(\sigma\otimes \mathcal{G}_{\beta H_{SC}}(Q^{f-}_{SC}) \big)\Big).
\end{split}
\end{equation}
Alternatively one can use the formulation in terms of transition the probabilities in section \ref{MainSecGlobal}
\begin{equation}
\label{nbmvbmn}
\begin{split}
 & \mathcal{Z}_{\beta H_{SC}}(Q^{i}_{SC})\mathcal{Z}_{\beta H_E}(Q^{i}_E)\\
& \quad\times  P^{\mathcal{F}}_{\beta H_{SCE}}[Q^{i+}_{SC} \otimes Q^{i+}_E\rightarrow Q^{f+}_{SC} \otimes Q^{f+}_E]\\
& =   \mathcal{Z}_{\beta H_{SC}}(Q^{f}_{SC})\mathcal{Z}_{\beta H_E}(Q^{f}_E)\\
& \quad\times P^{\mathcal{F}}_{\beta H_{SCE}}[Q^{f-}_{SC} \otimes Q^{f-}_E\rightarrow Q^{i-}_{SC} \otimes Q^{i-}_E].
\end{split}
\end{equation}
In the spirit of section \ref{MainApproximate}, one can also obtain approximate versions of these fluctuation relations, with error bounds (see Appendix \ref{GlobalFlctnThmApproximate}).

\subsection{\label{SecMainMarkovian} A condition on generators}
The primary reason for why we consider the extension provided by (\ref{ReductionHeatBath}) is that it can be translated to a convenient condition on the generators of Markovian master equations. 
Here we consider master equations $\frac{d}{dt}\mathcal{F}_t = \mathcal{L}\mathcal{F}_t$ with $\mathcal{F}_{0} = \mathcal{I}$, where the generator $\mathcal{L}$ is time-independent, and can be written on the Lindblad form 
\begin{equation}
\label{nvsfklbnmain}
\begin{split}
\mathcal{L}(Q) =&  -\frac{i}{\hbar}[H,Q] \\
& +  \sum_{k}L_k Q L_{k}^{\dagger} -\frac{1}{2}\sum_{k}L_{k}^{\dagger}L_{k}Q - \frac{1}{2}Q \sum_{k}L_{k}^{\dagger}L_{k},
\end{split}
\end{equation}
where $H$ is a Hermitian operator and $L_k$ are general operators, which guarantee trace preservation and complete positivity of the solution $\mathcal{F}_t$ \cite{Lindblad76,Gorini76}.

If we assume that the generator $\mathcal{L}$ satisfies 
\begin{equation}
\label{GeneratorRelationMain}
\mathcal{L}\mathcal{J}_{\beta H_{SCE}}=  \mathcal{J}_{\beta H_{SCE}}\mathcal{L}^{\ominus}.
\end{equation} 
for some Hermitian $H_{SCE}$  and some time-reversal $\mathcal{T}_{SCE}$, then it follows that the solution $\mathcal{F}_t = e^{t\mathcal{L}}$ satisfies (\ref{ReductionHeatBath}) for each time $t\geq 0$. (See Appendix \ref{ExamplesSatisfyingCond} for some examples of generators that satisfy the type of condition in (\ref{GeneratorRelationMain}).)
Consequently, (\ref{GeneratorRelationMain}) guarantees that we can apply the observations in the previous section. 
Hence, in the non-interacting case, the conditional evolution on system $E$ satisfies the conditional fluctuation relation (\ref{fbfbbfsMain}), or equivalently the global evolution satisfies (\ref{nbmvbmn}). In other words, under the condition that the generator of the Markovian evolution satisfies (\ref{GeneratorRelationMain}), we regain the results from sections \ref{MainConditional} and \ref{MainSecGlobal}, and moreover one can also regain the notion of approximate fluctuation relations of section \ref{MainApproximate}.

The relation (\ref{GeneratorRelationMain}) is,  up to the application of time-reversals, similar to quantum detailed balance for master equations (see e.g.~Definition 2 in \cite{Temme13}). In Appendix \ref{SecModelForThrmls} we consider a simple model of  thermalization, where the generator satisfies  (\ref{GeneratorRelationMain}) if the transition rates of the diagonal elements of the density matrix satisfy classical detailed balance.

\subsection{\label{SecMainGluingGen} Constructing generators}

When constructing models we may need to combine different components, e.g., a generator that models thermal relaxation of the system, and another that affects the energy reservoir, as well as some interaction between the two. In order to apply our machinery, we need to know that the total generator satisfies (\ref{GeneratorRelationMain}). It turns out that one indeed can construct such generators in a systematic manner.

Suppose  that we have two different subsystems with generators $\mathcal{L}_1$ and $\mathcal{L}_2$. Moreover, suppose that $\mathcal{L}_1\mathcal{J}_{\beta H_{1}}=  \mathcal{J}_{\beta H_{1}}\mathcal{L}_1^{\ominus}$ and $\mathcal{L}_2\mathcal{J}_{\beta H_{2}}=  \mathcal{J}_{\beta H_{2}}\mathcal{L}_2^{\ominus}$ with respect to some Hermitian operator $H_1$ and $H_2$, respectively. The generator $\mathcal{L}_1\otimes \mathcal{I}_2 + \mathcal{I}_1\otimes\mathcal{L}_2$ corresponds to  an independent evolution of the two systems. However, let us assume that there additionally is an interaction Hamiltonian $H_{\mathrm{int}}$, with corresponding generator $\mathcal{L}_{\mathrm{int}}(\rho) = -\frac{i}{\hbar}[H_{\mathrm{int}},\rho]$.
If $[H_{\mathrm{int}},  H_1\otimes\hat{1}_2 + \hat{1}_1\otimes H_2] = 0$, then it turns out that $\mathcal{L} =  \mathcal{L}_1\otimes \mathcal{I}_2 + \mathcal{I}_1\otimes\mathcal{L}_2 + \mathcal{L}_{\mathrm{int}}$ satisfies (\ref{GeneratorRelationMain}) with respect to $H_{SCE} = H_1\otimes\hat{1}_2 + \hat{1}_1\otimes H_2$ (see Appendix \ref{SecAssemblingGenerators} for details). In other words, we can use suitable interaction Hamiltonians in order to `glue' local generators in such a way that the result satisfies (\ref{GeneratorRelationMain}). We make use of this technique for the examples in sections \ref{SecMainTwoCoupledSpins}, \ref{SecMainQEFT}, and \ref{SecMainJCwithDissipation}.

\subsection{\label{SecMainDecoupling} Decoupling again}

A property that is no longer guaranteed to be true when allowing for all generators that satisfy (\ref{GeneratorRelationMain}) is the decomposition of the fluctuation relation into modes of coherence that was discussed in section \ref{MainSecDiagonalOffdiagonal} and in Appendix \ref{SecDiagonalMeasuremnts}. However, if the generator $\mathcal{L}$ in addition is time-translation symmetric  \cite{Lostaglio17}, then the  decomposition is regained (see Appendix \ref{SecDecouplingDiagAgain}). Moreover, analogous to the gluing of generators that satisfy (\ref{GeneratorRelationMain}), one can combine generators that satisfy time-translation symmetry. We will apply these observations in section \ref{SecMainJCwithDissipation}.

\subsection{\label{SecMainGenThrmOp}Generators of thermal and time-reversal symmetric thermal operations}
The condition (\ref{ReductionHeatBath}) for channels on $SCE$, and (\ref{GeneratorRelationMain}) for generators, have the advantage that they increase the range of applicability of the fully quantum fluctuation theorems. However, a drawback is that it is unclear what these conditions implicitly assume concerning the global initial state, as well as the evolution, on the complete system $SCEB$. Consequently, it is also unclear what these conditions imply concerning the requirements for initial resources and their evolution. Although clarifying such implications could be a subject of future studies, an alternative approach for gaining better control would be to instead impose stricter conditions than (\ref{GeneratorRelationMain}). In view of section \ref{SecMainTimeRevesalSymmetricThermalOperations}, 
a reasonable requirement would be that the generators induce time-reversal symmetric thermal operations, i.e., that $e^{t\mathcal{L}}$ is a time-reversal symmetric thermal operation for each $t\geq 0$. Although one indeed can find such generators (see Appendix \ref{SecGeneratorThrmOp}) it appears more tractable to drop the requirement of time-reversal symmetry, and only require generators of thermal operations (see Appendix \ref{SecGeneratorThrmOp}). Analogous to the gluing of generators that satisfy (\ref{GeneratorRelationMain}), described in section \ref{SecMainGluingGen}, it turns out that one also can glue generators of thermal operations (see Appendix \ref{SecGeneratorThrmOp}). One can apply these concepts to the model in the next section.

\subsection{\label{SecMainTwoCoupledSpins}Two coupled thermalizing spins}
As an illustration of the concepts introduced in the previous sections we here consider a model consisting of two two-level system, e.g.,  two resonantly coupled spins, where one spin acts as the energy reservoir of the other. More precisely, the spins have the Hamiltonians $H_{1} = \frac{1}{2}E\sigma_{z1}$ and $H_{2} = \frac{1}{2}E\sigma_{z2}$. We let $\mathcal{T}_1$ and $\mathcal{T}_2$ be the transposes in the eigenbasis of $\sigma_{z1}$ and $\sigma_{z2}$, respectively.
The spins are furthermore affected by a heat bath, and on each separate spin we assume that the resulting open system evolution is obtained via the generators
\begin{equation}
\label{MainTwoSpin}
\begin{split}
\mathcal{L}_1(\rho) = & -\frac{iE}{2\hbar}[\sigma_{z1},\rho] +r_1G_{\beta}(H_1)\Tr(\rho) -r_1\rho,\\
\mathcal{L}_2(\rho) = & -\frac{iE}{2\hbar}[\sigma_{z2},\rho] +r_2G_{\beta}(H_2)\Tr(\rho) -r_2\rho.
\end{split}
\end{equation}
The master equation corresponding to each of these generators drives the systems toward the Gibbs states $G_{\beta}(H_1)$ and $G_{\beta}(H_2)$, respectively. (For further details on this example, see  Appendix \ref{DissipativeSpins}.)
Moreover, it can be shown (see Appendix \ref{SecModelForThrmls}) that each of these generators separately satisfies  (\ref{GeneratorRelationMain}) with respect to $H_1$ and $H_2$, respectively.
Let us now further assume that the spins interact via the Hamiltonian $H_{\mathrm{int}} = \lambda |01\rangle\langle 10| + \lambda|10\rangle\langle 01|$ (where $|0\rangle$ and $|1\rangle$ are the eigenstates of $\sigma_z$). By construction,  $H_{\mathrm{int}}$ commutes with $H_1\otimes\hat{1}_2 + \hat{1}_1\otimes H_2$, and by the technique outlined in section \ref{SecMainGluingGen}, it follows that $\mathcal{L} = \mathcal{L}_1\otimes\mathcal{I}_2 + \mathcal{I}_1\otimes\mathcal{L}_2 + \mathcal{L}_{\mathrm{int}}$ satisfies (\ref{GeneratorRelationMain}) with respect to $H_{SCE} =  H_1\otimes\hat{1}_2 + \hat{1}_1\otimes H_2$. Consequently, the reduced conditional dynamics (\ref{euezuie}) on one of the spins satisfy the conditional  fluctuation relation (\ref{fbfbbfsMain}).

It turns out that the two generators in (\ref{MainTwoSpin}) are generators of thermal operations. Moreover, the interaction is such that the gluing mentioned in the previous section is applicable (see Appendix \ref{DissipativeSpins} for details).  Hence, the generator $\mathcal{L}$ not only satisfies  (\ref{GeneratorRelationMain}), but is in addition a generator of  thermal operations.

As an additional example we do in Appendix \ref{TwoSpinGlobalThrm} consider a system of two weakly coupled spins affected by a global thermalization. This system satisfies the approximate fluctuation relation developed in Appendix \ref{GlobalFlctnThmApproximate}. In Appendix \ref{SecMoreWidelyAppl} we also discuss the prospects of finding more widely applicable approximate fluctuation relations.

\subsection{\label{SecMainQEFT} Free energy differences}

As an application we here discuss the estimation of free energy differences, via a quantum generalization of the notion of extended fluctuation relations (EFRs) \cite{Maragakis08,Junier09,Alemany12,Liphardt12,Roldan14,Gavrilov17}.

 Free energy differences are traditionally measured via quasi-static changes of external parameters, in which case we can identify the resulting work cost with the change of free energy. A prominent feature of classical fluctuation relations is that they offer alternative means to determine free energy differences for arbitrary driving forces \cite{Bustamante05,Rondoni07,ReviewJarzynski,ReviewSeifert,ReviewMarconi,ReviewFluctThm}.    That our quantum fluctuation relations also can be used for this purpose may have been slightly obscured by the fact that we phrase our relations in terms of partition functions rather than free energies. However, since the (equilibrium) free energy $F(H)$ is related to the partition function via $F(H) = -kT\ln Z(H)$, it follows that the change of free energy directly corresponds to the quotient of partition functions via $F(H^f)-F(H^i) = -kT\ln[Z(H^f)/Z(H^i)]$.

\begin{figure}
 \includegraphics[width= 8cm]{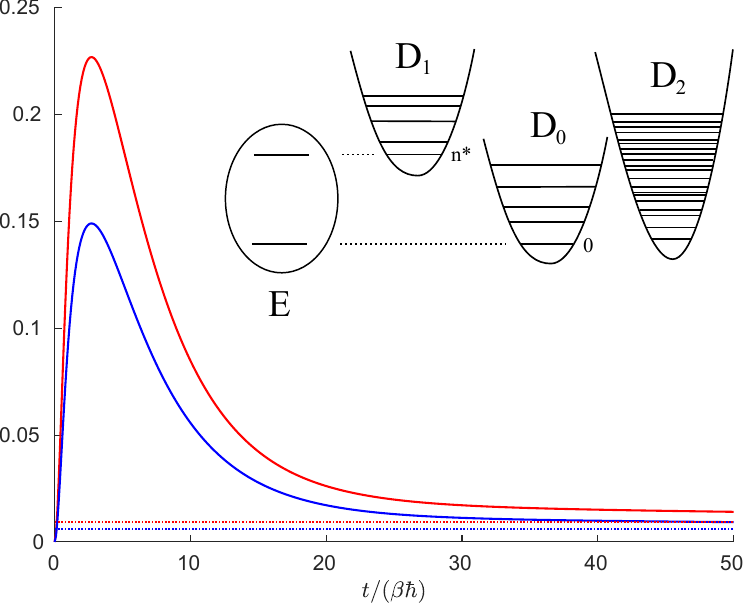} 
   \caption{\label{FigQEFT2}  
{\bf Free energy difference.}   We wish to determine the free energy difference between two basins $D_0$ and $D_1$ of states, embedded in a larger collection of configurations $D_2$. Within each such collection there is a fast thermalization process, while there is a slow global thermalization. 
An external energy reservoir $E$, in the form of a single two-level system, is in resonance with the transition between the global ground state $|0\rangle$ in basin $D_0$ and the local ground state $|n^{*}\rangle$ in basin $D_1$. 
The energy reservoir acts as the counterpart to the external control in the classical EFRs. 
The transition probability $\mathcal{P}^{+}$ (red solid curve) of the forward process,  and the transition probability $\mathcal{P}^{-}$ (blue solid curve) of the reverse process, as defined by (\ref{sdvlkvsdmain}), are plotted as functions of time, here expressed in a unit-free manner via $t/(\beta \hbar)$. The forward process is defined by $Q^{i+}_E = |1\rangle\langle 1|$ and $Q^{f+}_E = \hat{1}_E$. The transition probabilities $\mathcal{P}^{+}$ and $\mathcal{P}^{-}$ can be estimated by repeated experiments,
and determine, via (\ref{ndfklbklnbfdmain}), the quotient $Z_1/Z_0$ of the partition functions $Z_0$ and $Z_1$ of $D_0$ and $D_1$, respectively, and thus also the desired free energy difference. The dotted lines correspond to the transition probabilities in the limit of infinite evolution times, where the system has reached the fixpoint of the master equation. 
}
\end{figure}

By inspection one can realize that the quantum Crooks relation (\ref{QuantumCrooks}), as well as the quantum Jarzynski equality (\ref{fsjaflkslk}), in principle could be used to determine $Z(H^f)/Z(H^i)$, and thus the free energy difference, although as discussed in previous sections, the assumptions behind these relations are rather idealized.  (For a discussion on how one can determine approximate  free energy differences via the approximate fluctuation relations, see Appendix \ref{SecApproxFreeEnergyDiff}.)  However, the approach via master equations also allows us to determine these quantities, and as an application we here consider a variation on this theme, where we in addition take the opportunity to generalize the classical notion of extended fluctuation relations \cite{Maragakis08,Junier09,Alemany12,Liphardt12,Roldan14,Gavrilov17} to the quantum regime.

Extended fluctuation relations describe the transitions between meta-stable regions of configuration space, where we can associate a free energy to each such region \cite{Maragakis08,Junier09,Alemany12,Liphardt12,Roldan14,Gavrilov17}. By application of external forces, the system can be pushed between these meta-stable regions, and by recording the work cost, the EFRs can be used to determine the free energy difference. An example is the unfolding and refolding of DNA-strings by optical tweezers \cite{Alemany12}.

To mimic the classical setup of meta-stable configurations we partition the eigenstates of the Hamiltonian $H_{SC}$ into groups of states. One such collection, $D_0$, contains the ground state, and $D_1$ is the desired collection of target states, with the corresponding projectors $P_0$ and $P_1$. The set $D_2$ consists of all the remaining states. We wish to determine the quotient between the partition functions $Z_0 = \mathcal{Z}_{\beta H_{SC}}(P_0)$ and $Z_1 = \mathcal{Z}_{\beta H_{SC}}(P_1)$, and thus the free energy difference.

We model the evolution via a Markovian master equation where the generator $\mathcal{L}$ contains components that causes a fast thermalization within the groups of states, as well as a slow global thermalization, thus representing the meta-stability. We include an energy reservoir in the form of a single two-level system, and a resonant interaction that induces transitions between the ground state and the `local ground state' in the desired target basin. See Fig.~\ref{FigQEFT2} for a schematic illustration. The details of the model can be found in Appendix \ref{SecQEFT}, where the main observation is that the global generator  satisfies  (\ref{GeneratorRelationMain}), and thus all the induced channels $\mathcal{F}_{t}$ satisfy (\ref{nbmvbmn}).

The fluctuation relation (\ref{nbmvbmn}) can be rewritten as
\begin{equation}
\label{ndfklbklnbfdmain}
\frac{Z_1}{Z_0} = \frac{\mathcal{Z}_{\beta H_{E}}(Q^{i+}_E)}{\mathcal{Z}_{\beta H_{E}}(Q^{f-}_E)}\frac{\mathcal{P}^{+}(t)}{\mathcal{P}^{-}(t)},
\end{equation}
where the transition probabilities are
\begin{equation}
\label{sdvlkvsdmain}
\begin{split}
\mathcal{P}^{+}(t) =&  P^{\mathcal{F}_t}_{\beta H_{SCE}}[P_1\otimes Q^{i+}_E \rightarrow P_2\otimes Q^{f+}_E],\\
\mathcal{P}^{-}(t) = & P^{\mathcal{F}_t}_{\beta H_{SCE}}[P_2\otimes Q^{f-}_E \rightarrow P_1\otimes Q^{i-}_E].
\end{split}
\end{equation}
and $\mathcal{F}_t = e^{t\mathcal{L}}$ are the induced channels. 
Hence, in order to determine the desired quotient $Z_1/Z_0$ we have to repeat the forward experiment to estimate the transition probability $\mathcal{P}^{+}$, as well as the reverse experiment to estimate $\mathcal{P}^{-}$. As long as we know the values of $\mathcal{Z}_{\beta H_{E}}(Q^{i+}_E)$ and $\mathcal{Z}_{\beta H_{E}}(Q^{f-}_E)$ we can thus obtain $Z_1/Z_0$ via (\ref{ndfklbklnbfdmain}).
Another way to phrase (\ref{ndfklbklnbfdmain}) is to say that $\mathcal{P}^{+}$ and $\mathcal{P}^{-}$ are proportional for all times (cf.~the red and blue curve in Fig.~\ref{FigQEFT2}), and that this proportionality can be used determine the desired quotient.

We are in principle free to choose at which time to evaluate the transition probabilities, as well as  the initial state and control measurements on the energy reservoir. However, some choices may result in  low transition probabilities, e.g., if the initial state does not contain sufficient of energy to reach the desired excited basin $D_1$. Keeping this observation in mind, we do for the calculation of Fig.~\ref{FigQEFT2} choose $Q^{i+}_E = |1\rangle\langle 1|$ and $Q^{f+}_E = \hat{1}_E$. In other words, for the forward process, when the system starts in the conditional equilibrium of the ground state basin $D_0$, we let the energy reservoir start in the excited state $|1\rangle$. By this arrangement we compensate for the low initial energy in the system, with a high initial energy in the energy reservoir. For the reverse process this translates to the energy reservoir initially being in its equilibrium state, and the process being conditioned on the energy reservoir at the end being found in the excited state. However, in this case the system is initially in the excited basin $D_1$, and we only wish to reach the ground state basin $D_0$. One may thus expect that this arrangement could result in reasonable transition probabilities, which also is confirmed by Fig.~\ref{FigQEFT2}, where both the forward and reverse process yield transient transition probabilities that reach beyond their long term limits.

One may observe the correspondence with the classical scenario. There we also need some method to detect the transitions between the two relevant basins, as well as some means to keep track of the work implicitly provided by the external controls. Moreover, without the work input from the external control, we would have to wait passively to observe fortuitous thermal fluctuations that result in transitions between the two desired basins.

\subsection{\label{SecMainJCwithDissipation}  Jaynes-Cummings with dissipation: Fluctuation relations for coherences}

\begin{figure}
 \includegraphics[width= 8cm]{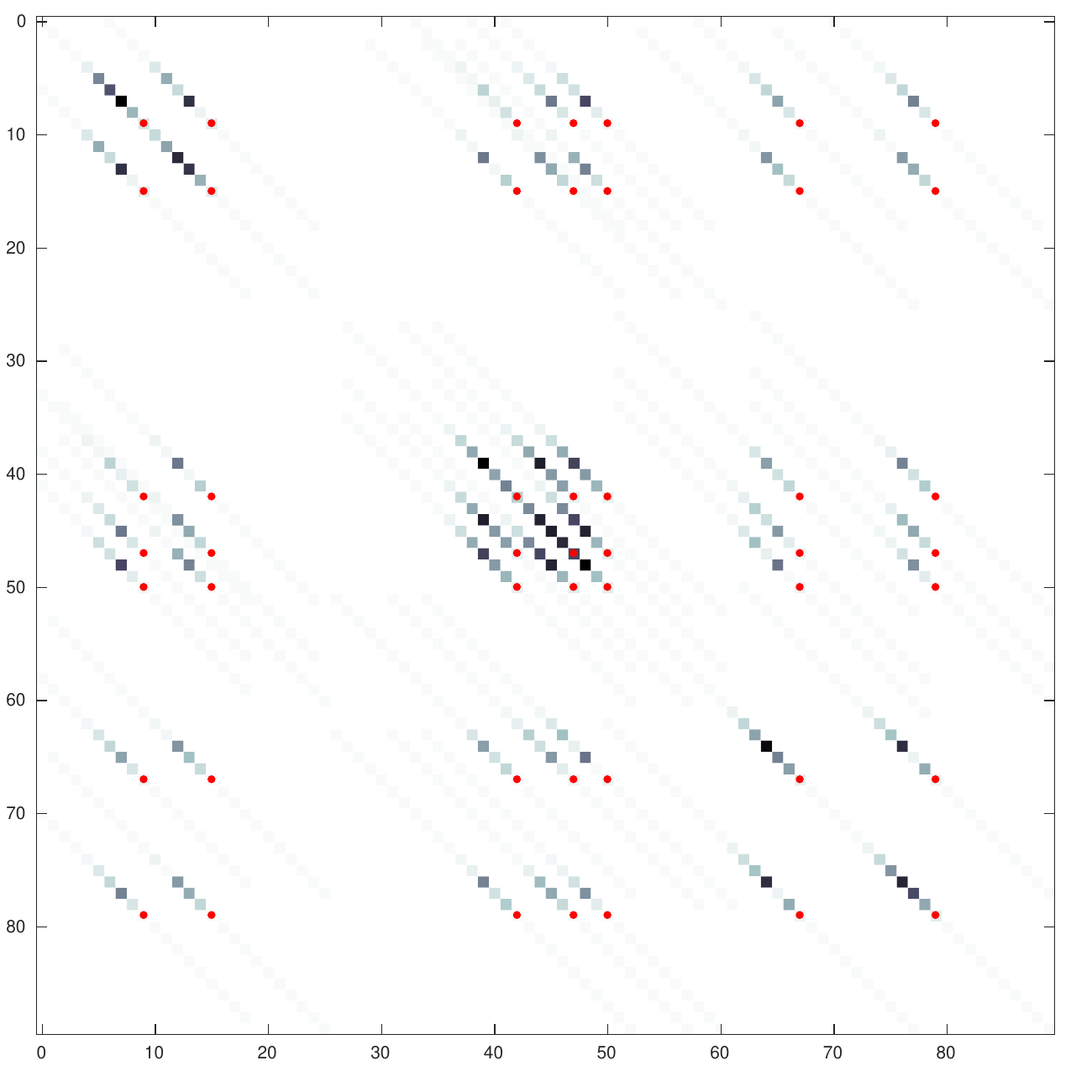} 
   \caption{\label{FigDecoupling}
  {\bf Decoupling of the modes of coherence in a dissipative Jaynes-Cummings model.}   
A harmonic oscillator, with energy eigenbasis $|n\rangle$, serves as the energy reservoir in interaction with an open two-level system. As the initial state of the energy reservoir we choose the (rather arbitrary) superposition of number states $|\psi\rangle = (|9\rangle + |15\rangle + |42\rangle +|47\rangle +|50\rangle + |67\rangle +|79\rangle)/\sqrt{7}$. The red dots represent the non-zero matrix elements of the density operator $|\psi\rangle\langle\psi|$ with respect to the number basis. The conditional evolution $\tilde{\mathcal{F}}_{+}$ on the energy reservoir, defined in (\ref{euezuie}), is calculated for the evolution time $t/(\beta\hbar) =1.5$, and the dark colors correspond to the values $|\langle n|\tilde{\mathcal{F}}_{+}(|\psi\rangle\langle\psi|)|n'\rangle|$.
The upper left corner, $n = n' = 0$, corresponds the probability to find the oscillator in the ground state, and the main diagonal $n = n'$ the probabilities for the excited states. The off-diagonal elements, $n \neq n'$, represent the coherences between the energy eigenstates. One can recognize the leakage of energy out of the oscillator, as well as the decay of coherence. It is also clearly visible how each initial off-diagonal element only evolves and spreads along the particular diagonal that it belongs to, thus illustrating the decoupling of the modes of coherence.
}
\end{figure}

Here we illustrate the conditional fluctuation theorem (\ref{fbfbbfsMain}), with focus on the decoupling into modes of coherence, and the fluctuation relations for coherences.

The Jaynes-Cummings (JC) model \cite{Jaynes63,Shore93} of two-level systems interacting with harmonic oscillators is a common approach to study atom-field interactions. This can be  further generalized to master equations, including various open system effects, such as dissipation and decoherence. Such types of models have been considered, e.g. for superconducting qubits interacting with field modes \cite{Gambetta08, Schmidt10,Houck12}, electron spins coupled to nano-mechanical vibrations \cite{Struck14,Palyi12}, and a nano-mechanical oscillator interacting with a Cooper pair box \cite{Tiwari08}. Here we consider a particular case of this general class of models, where we let a single field mode serve as the energy reservoir, and where this interacts with a single two-level system, which in turn is affected by thermalization and decoherence.

We let the Hamiltonian of the two-level system be $H_{SC} = \frac{1}{2}E\sigma_z = -\frac{1}{2}E|0\rangle\langle 0| +\frac{1}{2}E|1\rangle\langle 1|$, and let the Hamiltonian of the energy reservoir be $H_{E} = Ea^{\dagger}a$, with $E>0$. We also  include an interaction Hamiltonian of the form $H_{\textrm{int}} = \lambda|0\rangle\langle 1|\otimes a^{\dagger} + \lambda|1\rangle\langle 0|\otimes a$. 
The two latter Hamiltonians correspond to the generators $\mathcal{L}_E(\rho) =  -E\frac{i}{\hbar}[a^{\dagger}a,\rho]$ and $\mathcal{L}_{\mathrm{int}}(\rho) =  -\frac{i}{\hbar}[H_{\mathrm{int}},\rho]$.
We choose the time-reversal $\mathcal{T}_{SC}$ as the transpose with respect to the eigenbasis  $\{|0\rangle,|1\rangle\}$ of $H_{SC}$, and $\mathcal{T}_E$ is the transpose with respect to the number basis of the harmonic oscillator.

The two-level system is furthermore affected by an environment, which we model via the generator
\begin{equation}
\label{ndfbkjnjkdfb}
\begin{split}
  \mathcal{L}_{SC}(\rho) & =  -E\frac{i}{2\hbar}[\sigma_z,\rho] \\
& + re^{-\beta E/2} |1\rangle \langle 0|\rho |0\rangle\langle 1|   +  re^{\beta E/2} |0\rangle \langle 1|\rho |1\rangle\langle 0|  \\
& -\frac{1}{2}re^{-\beta E/2}|0\rangle\langle 0|\rho   -\frac{1}{2}re^{-\beta E/2}\rho|0\rangle\langle 0|\\
& -\frac{1}{2} re^{\beta E/2}|1\rangle\langle 1|\rho   -\frac{1}{2} re^{\beta E/2}\rho|1\rangle\langle 1|\\
&  + \kappa \sigma_z\rho\sigma_z -\kappa \rho,
\end{split}
\end{equation}
where  $\sigma_{+} =|1\rangle\langle 0|$ and $\sigma_{-} =|0\rangle\langle 1|$.
The first line in  (\ref{ndfbkjnjkdfb}) corresponds to the Hamiltonian evolution on the spin, while the next three lines model thermalization with rate $r>0$ (see Appendices \ref{SecModelForThrmls} and \ref{SecGenForThrmOp}), and the last line corresponds to an additional decoherence of rate $\kappa >0$ with respect to the energy eigenbasis (Appendix \ref{SecModelDecoherence}). The total generator is 
$\mathcal{L} = \mathcal{L}_{SC}\otimes\mathcal{I}_E + \mathcal{I}_{SC}\otimes\mathcal{L}_{E} + \mathcal{L}_{\mathrm{int}}$.
 For the particular case illustrated in figures \ref{FigDecoupling} and \ref{FigJCcoff} we choose the parameters such that $\beta E = 1$, $r\beta \hbar = 1$, $\kappa \beta\hbar = 0.1$, $\lambda\beta = 1$.

One can show (see Appendix \ref{DetailsJC}) that $\mathcal{L}$ satisfies (\ref{GeneratorRelationMain}) with respect to $H_{SC}\otimes\hat{1}_E + \hat{1}_{SC}\otimes H_{E}$, and thus the reduced conditional dynamics on the energy reservoir satisfies the conditional fluctuation relation (\ref{fbfbbfsMain}). 
Moreover, the generator $\mathcal{L}$ satisfies the time-translation symmetry mentioned in section \ref{SecMainDecoupling}, and thus $\tilde{\mathcal{F}}_{\pm}$ are decoupled into modes of coherence, if $Q^{i\pm}_{SC}$ and $Q^{f\pm}_{SC}$ commute with $H_{SC}$ (see  Appendix  \ref{DetailsJC}).
This is illustrated in Fig.~\ref{FigDecoupling}, where we choose $Q^{i+}_{SC} = |0\rangle\langle 0|$ and $Q_{SC}^{f+} = 0.05|0\rangle\langle 0| + |1\rangle\langle 1|$. (The term $0.05|0\rangle\langle 0|$ is only there in order to avoid that the red and blue curves in Fig.~\ref{FigJCcoff} overlap.)

Analogous to what we found in section \ref{MainSecDiagonalOffdiagonal} (and  Appendix \ref{SecDiagonalAndOffdiagonal})  each mode of coherence obeys a fluctuation relation of its own, implied by the general conditional fluctuation relation.
Let us now zoom in, so to speak, to a very detailed view of how (\ref{fbfbbfsMain}) constrains the dynamics of the coherences.  We  select two elements along a displaced diagonal, corresponding to $|e_0 \rangle\langle e_0 + d|$ and $| e_0 + \delta \rangle\langle e_0 + \delta +d|$. In other words, we select two elements along the diagonal with offset $d$, which are separated by the  energy difference $\delta$. The relation (\ref{fbfbbfsMain}) yields 
\begin{equation}
\label{mainfhdjbhfd}
\begin{split}
&\mathcal{Z}_{\beta H_{SC}}(Q^{i}_{SC})q_{+} = e^{-\beta E\delta}\mathcal{Z}_{\beta H_{SC}}(Q^{f}_{SC})q_{-},\\
& q_{+} =  \langle e_0 + \delta| \tilde{\mathcal{F}}_{+} \big(|e_0 \rangle\langle e_0+ d|\big)|e_{0} +\delta +d\rangle,\\
& q_{-} =\langle e_0|\tilde{\mathcal{F}}_{-}\big(|e_0 + \delta\rangle\langle e_0 + \delta + d|\big) |e_0 + d\rangle.
\end{split}
\end{equation}
 In other words, $q_{\pm}$ are the `amplitudes' for the transitions between the elements along the displaced diagonal. Equation  (\ref{mainfhdjbhfd}) predicts that these, generally complex, amplitudes are strictly related for all times, as is illustrated in Fig.~\ref{FigJCcoff}, where we have chosen $e_{0} = 20$, $d=  3$, and $\delta = -1$. 
Hence, $q_{+} =  \langle 19| \tilde{\mathcal{F}}_{+} (|20 \rangle\langle 23|)|22\rangle$ and 
$q_{-} =\langle 20|\tilde{\mathcal{F}}_{-}(|19\rangle\langle 22|) |23\rangle$.
It should be emphasized that neither does  $|20 \rangle\langle 23|$ correspond to a proper state, nor does $|22\rangle\langle 19|$ correspond to a POVM element.  Nevertheless, $q_{+}$ can be determined via a sufficient number of expectation values measured on the output, for a sufficiently large collection of different input states (see a similar discussion in Appendix \ref{Secoffdiagonal}).

It may be worth noting that the relation (\ref{mainfhdjbhfd})  remains valid even in the case when there is no decoupling between the diagonals. The effect of the decoupling is rather to turn `cross-mode relations' trivial. We discuss this further in Appendix \ref{DetailsJC}.

\begin{figure}
 \includegraphics[width= 8cm]{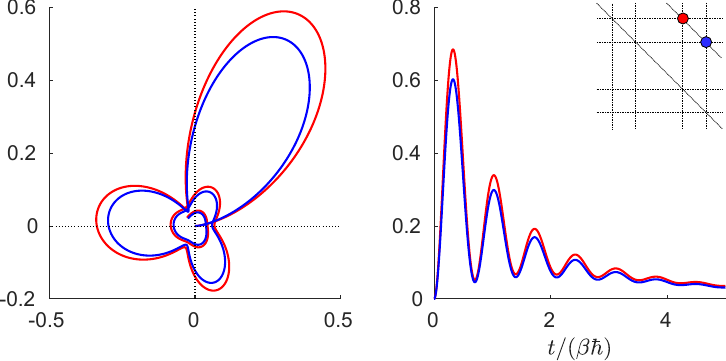} 
   \caption{\label{FigJCcoff} 
 {\bf Off-diagonal fluctuation relation.}
The fluctuation theorem (\ref{mainfhdjbhfd}) relates the evolution of the coherences carried by two off-diagonal elements along a displaced diagonal.
The plot on the right displays $|q_{+}|$ (red curve) and $|q_{-}|$ (blue curve) plotted as functions of time. The plot on the left depicts the trajectories that $q_{+}$ (red curve) and $q_{-}$ (blue curve) sweep in the complex plane during the same time interval. 
As predicted by (\ref{mainfhdjbhfd}), $|q_{+}|$  and $|q_{-}|$ are proportional to each other, and the phase factors of  $q_{+}$  and $q_{-}$ are identical. Analogous to how the classical Crooks relation relates the probability distribution of the work, and thus the change of energy in the reservoir, of the forward and reverse processes, the off-diagonal Crooks relation relates the changes of coherence. 
}
\end{figure}

\section{Conclusions and outlook}

We have generalized Crooks fluctuation theorem to a genuine quantum regime that incorporates the full quantum dynamics. This leads to a decomposition into diagonal and off-diagonal Crook's relations, one for each mode of coherence. We have also derived Jarzynski equalities, and re-derived  standard bounds on the average work cost under an additional assumption of unitality of a certain induced channel. We have furthermore shown that the classical Crooks relation can be regained under the additional assumption of energy translation invariant dynamics on the energy reservoir. The general approach moreover leads to the concept of conditional fluctuation relations, where a pair of measurement operators characterizes the initial state and the final measurement, and where the transformation from the forward to the reverse process corresponds to a transformation of the pair of measurement operators. This generalization allows for non-equilibrium initial states, and can also be extended to include correlations. We have demonstrated that by allowing for errors in the fluctuation relation, we can incorporate the `natural' setting where the global dynamics is determined by a single time-independent Hamiltonian. Finally, we have shown that the notion of fully quantum fluctuation theorems can be extended to master equations that implicitly model the influence of the heat bath.

Although we in this investigation have regained both the standard Crooks and Jarzynski relations, it would nevertheless be useful to obtain connections between the formalism of this investigation and the multitude of established quantum fluctuation relations (see e.g. the overviews in \cite{Esposito09,Campisi11a,Hanggi15}).

For the fluctuation relations for master equations we have here focused entirely on the time-independent Markovian case. It seems likely that the approach could be generalized to time-dependent generators, and it would be interesting to consider the extension to non-Markovian master equations. 
One may also speculate if it would be possible to use the global Hamiltonian of the approximate fluctuation relations as a starting point to obtain a version for master equations, via a reduction of the heat bath, along the lines of standard derivations of master equations (see e.g.~\cite{BreuerPetruccione}).

On a more general level it could prove fruitful to explore the  intersection between the resource theoretic perspective and the standard machinery of open quantum systems (see \cite{Lostaglio17} for recent contributions in this direction). 
A concrete question is how to construct generators that yield thermal or time-reversal symmetric thermal operations (see discussions in Appendix \ref{SecGeneratorThrmOp}). Another question is what the condition  (\ref{GeneratorRelationMain}) implies concerning the underlying global evolution, and the evolution of resources.

Classical fluctuation relations have been subject to several experimental tests \cite{Liphardt02,Collin05,Douarche05,Harris07,Toyabe10,Gieseler14}. Various setups have been suggested for the quantum case \cite{Huber08,Heyl12,Pekola13,Dorner13,Mazzola13,Campisi13b,Campisi}, with recent experimental implementations in NMR \cite{Batalhao14}, as well as in trapped ion systems \cite{An15}. See also \cite{Saira12} for tests in a single-electron box, and \cite{Schuler05} for an experiment on a relation for non-thermal noise. The conditional fluctuation theorems, and in particular the approximate version, as well as those based on master equations, allow for a considerable flexibility, which suggests that experimental tests may be feasible. 
Since the quantum fluctuation relation (\ref{QuantumCrooks}) and the conditional version (\ref{ConditionalFluctuationThm}) are phrased in terms of channels and CPMs, one may be tempted to conclude that every test of these relations necessarily would require a full process tomography, which generally is very demanding. However, the global fluctuation relation (\ref{MainGlobal}) suggests `milder' tests based on small sets of suitable chosen measurement operators (corresponding to a partial process tomography). An example is the determination of free energy differences discussed in section \ref{SecMainQEFT}.

One could also consider the possibility to experimentally verify cases of Jarzynski equalities with and without the condition of unitality of the channel $\mathcal{R}_{+}$, such as in (\ref{nvanlkavlkn}) and 
(\ref{fsjaflkslk}). It would also be desirable to get a better theoretical understanding of the role of the unitality of the channel $\mathcal{R}_{+}$, which one may suspect is related to the energy reservoir regarded as a resource. In this context one may also ask for the general conditions for the non-violation of standard bounds, and how this relates to the energy translation invariance (cf.~discussions in \cite{Skrzypczyk14}).

In this investigation we have tacitly assumed that the heat bath can be taken as initially being in the Gibbs state. It would be desirable to let go of this assumption, e.g.~via typicality \cite{Popescu06,Goldstein06} (see further discussions in Appendix \ref{SecAddtnlRmrks}).  

On a more technical note one may observe that whenever we  actively have referred to the properties of the spectrum of the Hamiltonian of the reservoir, we have always assumed that it is a point spectrum. An  analysis that explicitly investigates the effects of reservoir spectra that contain a continuum could potentially be useful.

One can imagine several generalizations of the results in this investigation. It does for example seem plausible with a grand canonical version, thus not only including energy flows, but also the flow of particles. (For a previous grand canonical fluctuation relation, see \cite{Yi12}.) More generally one could consider  settings with multiple conserved quantities \cite{Lostaglio15d,Halpern15c,Guryanova15,PerernauLlobet15}.

Another potential generalization concerns a classical version of the conditional fluctuation relations. We have already obtained a particular class of classical conditional fluctuation theorems  (Appendix \ref{SecClassicalConditional}). However, these are classical in the sense of being diagonal with respect to a fixed energy eigenbasis, which should not be confused with a classical phase space setting. It seems reasonable that the structure of pairs of measurement operators, translated to functions over phase space, with classical counterparts for the Gibbs and partition maps, could combine with phase space flows to yield classical conditional fluctuation theorems. It may also be possible to bridge such a classical phase space approach to the quantum setting via Wigner functions and other phase space representations of quantum states. 

The intermediate fluctuation relation (\ref{Preliminary}) can be rephrased in terms of Petz recovery channel  \cite{Petz86,Petz88,Barnum02,Hayden04}, which reminds of the recent finding  in  \cite{Wehner15} that relates Petz recovery channel with work extraction. One may wonder if these results hint at a deeper relation. See further comments in Appendix \ref{SecPetzRecovery}.

Several recent contributions to quantum thermodynamics have focused on resource theories, single-shot statistical mechanics, quantum correlations, and coherence \cite{Janzing00,Brandao13b,Horodecki13,Gour,Dahlsten,delRio,Aberg12,Horodecki11,Egloff,EgloffThesis,Faist,Skrzypczyk13,Aberg13,Lostaglio15c,Lostaglio14b,Lostaglio14a,Brandao13,Cwiklinski15,Narasimhachar15,Korzekwa15,Masanes15,Wehner15,Gallego15,Woods15,Ng14,Alhambra15,Hovhannisyan13,Reeb14,Renes14,Halpern14,Kammerlander15, Gemmer15,PerernauLlobet15b,PerernauLlobet15c}. (For general overviews on recent developments in quantum thermodynamics, see  \cite{Goold15,Vinjanampathy15}.)
The fluctuation relations in this investigation are at their core statements about dynamics, rather than about resources. Nevertheless, one could consider formulating quantitative characterizations of the evolution of resources in the spirit of Crooks theorem, and our fluctuation relations may serve as a starting point for such an analysis. In this context one should note recent efforts to link single-shot quantities and fluctuation theorems \cite{Halpern15a,Salek15,Dahlsten15,Halpern15b}. The fact that the present investigation is based on energy conserving dynamics, and thus brings the notion of fluctuation theorems under the same umbrella as previous investigations on quantum thermodynamics and coherence  \cite{Janzing00,Horodecki11,Brandao13,Brandao13b,Skrzypczyk13,Aberg13,Ng14,Lostaglio14a,Lostaglio14b,Lostaglio15c,Cwiklinski15,Narasimhachar15,Korzekwa15,Masanes15,Wehner15,Gallego15,Woods15,Alhambra15},  may further facilitate the merging of these subjects.

\section*{Acknowledgements}

This work was supported by the Excellence Initiative of the German Federal and State Governments (Grants ZUK 43 \& 81), and the DFG.

The author thanks Kilian Mitterweger for discussions, and \'Edgar Rold\'an for pointing out the literature on extended fluctuation relations.






\begin{widetext}
\end{widetext}

\begin{appendix}

\section{\label{SecIdealized} An intermediate quantum fluctuation theorem}

\subsection{Setting the stage}

The standard classical Crooks theorem compares the probability distributions of the random work costs of a forward and reverse process where the system is driven by external fields. Often this  external field is taken as a parameter $x$ in a Hamilton function. The system is usually imagined to additionally interact with a heat bath of a given temperature. The time-schedule of the parameter $x$ is implemented as a function $x_t$  of time $t$, which runs from $t = 0$ to $t = T$. At $t = 0$ we assume that the initial system is in equilibrium with the heat bath. For the reverse process, the external parameter evolves as $x'_t := x_{T-t}$ for $t = 0$ to $t = T$. In other words, the time-schedule of the parameter is run in reverse. Again we assume that the system initially is in equilibrium with the heat bath, but now for the parameter value $x'_0 = x_T$. It is useful to keep in mind that these initial equilibrium distributions are conditioned on the value of the control parameter. The aim of the following sections is to make a quantum version of this classical setup.

The model consists of four components, the `system' $S$, the heat bath $B$, the control $C$, and the energy reservoir $E$. Assumptions \ref{Def1} below does not mention the system $S$ or the heat bath $B$. The reason for this is that the main part of the derivations does not require any distinction between these subsystems, so they can be regarded as one single system $S' := SB$.

For a Hamiltonian $H$ and $\beta = 1/(kT)$, for Boltzmann's constant $k$ and the absolute temperature $T$, we denote the partition function by $Z_{\beta}(H) = \Tr e^{-\beta H}$, and (assuming that $Z_{\beta}(H)$ is finite) we denote the Gibbs state by $G_{\beta}(H) = e^{-\beta H}/Z_{\beta}(H)$. Since we here only consider heat baths with one single temperature, we will often suppress the subscript and write $G(H)$ and $Z(H)$.

\begin{Assumptions}
\label{Def1}
Let $\mathcal{H}_{S'}$, $\mathcal{H}_C$, and $\mathcal{H}_{E}$ be complex Hilbert spaces. Let $|c_i\rangle,|c_f\rangle\in\mathcal{H}_C$ be normalized and orthogonal to each other, and define the projector $P_{C}^{\perp} := \hat{1}_C -|c_i\rangle\langle c_i|-|c_f\rangle\langle c_f|$. 
\begin{itemize}
\item  Let $H^{i}_{S'}$ and $H^{f}_{S'}$ be Hermitian operators on $\mathcal{H}_{S'}$, such that $Z_{\beta}(H^i_{S'})$ and $Z_{\beta}(H^f_{S'})$ are finite. (This guarantees that $G_{\beta}(H^i_{S'})$ and $G_{\beta}(H^f_{S'})$ exist.) Let $H_E$ be a Hermitian operator on $\mathcal{H}_{E}$.  Let $H^{\perp}$ be a Hermitian operator on $\mathcal{H}_{S'}\otimes\mathcal{H}_C$ such that $[\hat{1}_{S'}\otimes P_{C}^{\perp}]H^{\perp}[\hat{1}_{S'}\otimes P_{C}^{\perp}] = H^{\perp}$, and define 
\begin{equation*}
H_{S'C} :=  H^{i}_{S'}\otimes |c_i\rangle\langle c_i|+ H^{f}_{S'}\otimes |c_f\rangle\langle c_f| + H^{\perp}
\end{equation*}
and $H :=  H_{S'C}\otimes\hat{1}_E + \hat{1}_{S'C}\otimes H_E$.
\item $V$ is a unitary operator on $\mathcal{H}_{S'}\otimes\mathcal{H}_C\otimes\mathcal{H}_E$ such that $[V,H] = 0$, 
 and 
\begin{equation}
\label{PerfectControl}
 V[\hat{1}_{S'}\otimes|c_i\rangle\langle c_i|\otimes \hat{1}_{E}] = [\hat{1}_{S'}\otimes|c_f\rangle\langle c_f|\otimes \hat{1}_{E}]V.
\end{equation}
\end{itemize}
\end{Assumptions}
In the following we briefly discuss the rationale behind these assumptions. 

The Hamiltonian $H_{S'C}$ describes how the state of the control system $C$ changes the Hamiltonian of $S'$ (see Fig.~\ref{FigSupplStructure}).
Since $|c_i\rangle$ is orthogonal to $|c_f\rangle$, and these in turn are orthogonal to the support of $H^{\perp}$, it follows that if $C$ is in state $|c_{i}\rangle$, then the Hamiltonian of $S'$ is $H^{i}_{S'}$. Similarly, if $C$ is in state $|c_f\rangle$, then $S'$ has Hamiltonian $H^{f}_{S'}$.  The Hamiltonian $H^{\perp}$ allows for the possibility of having intermediate Hamiltonians between the initial and final one (see Fig.~\ref{FigSupplStructure}). The global Hamiltonian is the sum of $H_{S'C}$ and the Hamiltonian $H_E$ of the energy reservoir, which thus by construction are non-interacting. 

The global evolution is given by unitary operations that conserve energy, which here is modeled via unitary operators $V$ on $S'CE$ such that $[H,V] =0$. (For an alternative notion of energy conservation, see \cite{Skrzypczyk14}.)  In addition to being energy conserving, we also require $V$ to satisfy (\ref{PerfectControl}). In other words, $V$ should rotate the subspace 
$\mathcal{H}_{S'E}\otimes\Sp\{|c_i\rangle\}$  to the subspace $\mathcal{H}_{S'E}\otimes\Sp\{|c_f\rangle\}$.
 This models the idealization of a perfect control mechanism, meaning that the evolution with certainty will bring the initial control state $|c_i\rangle$ to the final control state $|c_f\rangle$, and thus with certainty will transform the initial Hamiltonian  $H^{i}_{S'}$ to the final Hamiltonian $H^{f}_{S'}$.
In Appendices \ref{minimalwithout} and \ref{discretizedwithout} we demonstrate that there exist setups that satisfy all conditions in Assumptions \ref{Def1}.

As mentioned above, we do not need to make a distinction between the system $S$ and the heat bath $B$ in most of these derivations. However, to obtain fluctuation relations where the partition functions only refer to system $S$, we can additionally  assume that the initial and final Hamiltonians of the system and the heat bath are non-interacting. More precisely,  we would assume that there exist Hermitian operators $H^{i}_{S}$ and $H^{f}_{S}$ on $\mathcal{H}_{S}$ and a Hermitian operator $H_B$ on $\mathcal{H}_{B}$ such that
\begin{equation}
\label{Def1HeatBath}
\begin{split}
H^{i}_{S'} = & H^{i}_{S}\otimes\hat{1}_{B} + \hat{1}_{S}\otimes H_B,\\
H^{f}_{S'} = & H^{f}_{S}\otimes\hat{1}_{B} + \hat{1}_{S}\otimes H_B,
\end{split}
\end{equation}
and such that $Z_{\beta}(H^i_{S})$, $Z_{\beta}(H^f_{S})$,  and $Z_{\beta}(H_B)$ are finite.

\begin{figure}[t]
 \includegraphics[width= 6cm]{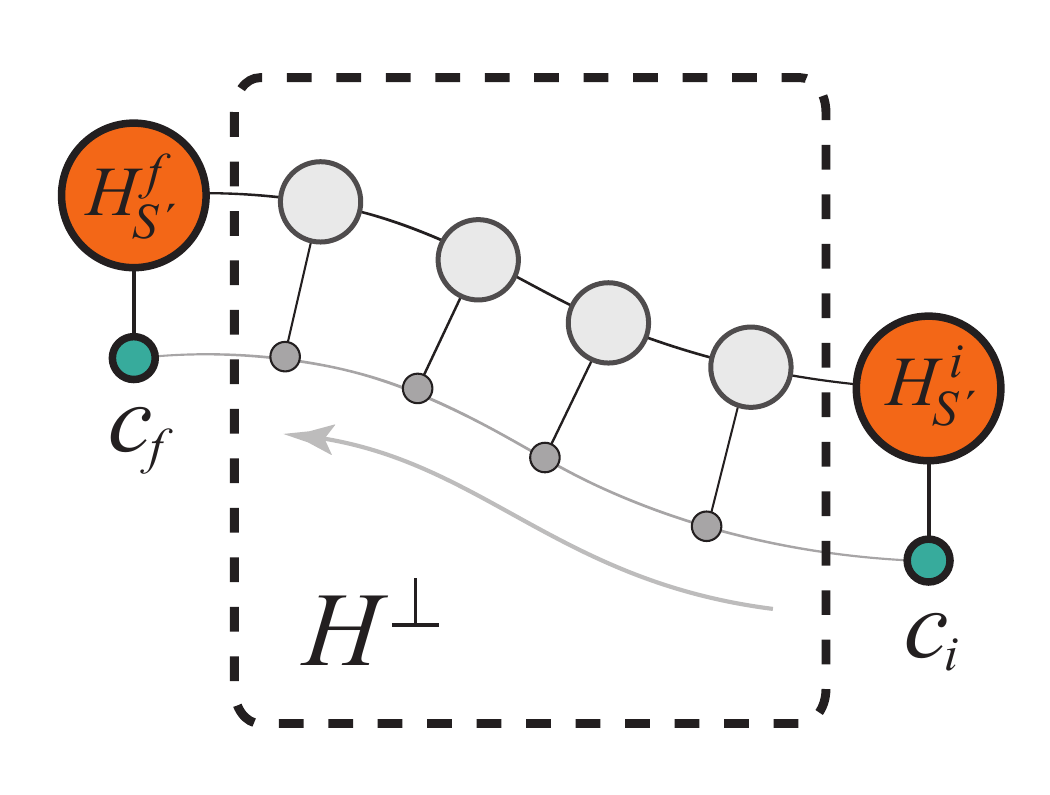} 
   \caption{\label{FigSupplStructure}  {\bf Structure of the Hamiltonian.}
The Hamiltonian of the extended system $S' = SB$ and the control has the form $H_{S'C}  = H^{i}_{S'}\otimes |c_i\rangle\langle c_i|+ H^{f}_{S'}\otimes |c_f\rangle\langle c_f| + H^{\perp}$. Here $H^{i}_{S'}$ and $H^{f}_{S'}$ are the initial and final Hamiltonians, respectively, and $|c_i\rangle, |c_f\rangle$ the corresponding orthonormal control states. If the control is in state $|c_i\rangle$ the Hamiltonian of $S'$ is  $H^{i}_{S'}$, while it is $H^{f}_{S'}$ if the control is in state $|c_i\rangle$. The Hamiltonian $H^{\perp}$, which has orthogonal support to $H^{i}_{S'}\otimes |c_i\rangle\langle c_i|$ and $H^{f}_{S'}\otimes |c_f\rangle\langle c_f|$, corresponds to possible intermediate stages. In the proofs of our fluctuation theorems $H^{\perp}$ plays no particular role. However, it can be used to simulate a path of Hamiltonians $H_{S'}(x)$, e.g., via  a  discretization (see Appendix \ref{discretizedwithout}).
   }
\end{figure}

\subsection{\label{MakeSense} The global Hamiltonian $H$ and the global evolution $V$}

In typical textbook quantum mechanics, the Hamiltonian defines the notion of energy and energy conservation, as well as being the generator of time-evolution. Here we do in some sense separate these two roles, since we let the time evolution be given by $V$ with the only restriction that it commutes with $H$, without demanding that $V = e^{-itH/\hbar}$. This separation is very convenient since it gives tractable models (compared to introducing an interaction term in the Hamiltonian and try to analyze the resulting evolution via Schr\"odinger's equation) and has successfully been employed in several previous studies \cite{Janzing00,Horodecki11,Brandao13,Brandao13b,Skrzypczyk13,Aberg13,Ng14,Lostaglio14a,Lostaglio14b,Lostaglio15c,Cwiklinski15,Narasimhachar15,
Korzekwa15,Masanes15,Wehner15,Gallego15,Woods15,Alhambra15}. 

It is maybe worth emphasizing that when we in this investigation refer to two systems as being `non-interacting', this only means that the energy observable is of the form $H = H_1\otimes\hat{1}_2 + \hat{1}_1\otimes H_2$. It does \emph{not} imply that the evolution is of a trivial product form $V_1\otimes V_2$, since (depending on the combination of the spectra of $H_1$ and $H_2$) there may exist non-trivial unitary operators $V$ that commute with $H$.

One way to understand the separation of roles between $H$ and $V$ is to imagine that the global evolution is generated by a Hamiltonian $H_{\textrm{evol}}$, i.e., that  $V = e^{-itH_{\textrm{evol}}/\hbar}$, where we let $H_{\textrm{evol}} = H + H',\quad [H,H'] = 0$. A possible justification would be if  $H'$ is `small', thus leaving $H$ as the dominant contribution to the energy. It should be emphasized that the derivations of our fluctuation relations do not require us to know how $V$ comes about, or what happens at intermediate times when the system evolves from the initial to the final state. We only need to know that $V$ commutes with $H$.

Although the above reasoning may serve as a possible justification, one may nevertheless wonder how to incorporate more `standard'  assumption that $V = e^{-itH/\hbar}$. This topic is discussed in Appendix \ref{SecApproximateFluct}. 

For a final observation concerning the structure of the energy conserving unitary operators, assume that the global non-interacting Hamiltonian $H$ (in the finite-dimensional case) is non-degenerate. Then its eigenvectors are all of the product form $|1_n\rangle|2_m\rangle$, for the two local eigenbases $\{|1_n\rangle\}_n$ and $\{|2_m\rangle\}_m$ of $H_1$ and $H_2$, respectively. Hence, all unitary operators that commute with $H$ can be written  $V = \sum_{mn}e^{i\theta_{mn}}|1_n\rangle\langle 1_n|\otimes |2_m\rangle\langle 2_m|$, for arbitrary real numbers $\theta_{mn}$. 
Although typically not product operators, these do not have the power to change the occupancies of the  product energy eigenstates. In particular, these unitaries cannot transfer energy between the subsystems, which is not particularly satisfying as a model of thermodynamic processes. 
However, if the global Hamiltonian has degeneracies due to matchings of transition energies in the local spectra,  then there exist energy conserving unitary operators that transfer energy. (For a simple example, see Appendix \ref{ExampleTwoQubits}). This matching is clearly a rather brittle assumption, and one could relax this idealization by allowing for transitions within a narrow energy shell. However, we shall not consider such generalizations in this investigation, but rather stick to perfect energy conservation. One may also note that the approximate approach, presented in Appendix \ref{SecApproximateFluct}, provides an alternative route to handle this issue. There we assume $V = e^{-itH/\hbar}$ and allow for interacting Hamiltonians, thus enforcing perfect energy conservation, as well as removing the need for matching of local spectra.

\subsection{\label{TheInitalStates} The initial states}

In the typical derivation of Crooks theorem one assumes that the system initially is at equilibrium with respect to the initial value of the control parameter. In other words, for $x = x_i$, the system should start in state $G_{\beta}\big(H_{S'}(x_i)\big)$. When one considers a  more explicit model that includes the degrees of freedom of the control system it becomes clear that the initial state of the system and control combined cannot be in a global equilibrium. 
For example, in our case the global equilibrium state on $S'C$ would be
\begin{equation}
\label{GlobalEquilibrium}
\begin{split}
G(H_{S'C}) =  & \frac{Z(H^{i}_{S'})}{Z(H_{S'C})}G(H^{i}_{S'})\otimes |c_i\rangle\langle c_i| \\
& +  \frac{Z(H^{f}_{S'})}{Z(H_{S'C})}G(H^{f}_{S'})\otimes |c_f\rangle\langle c_f|\\
& + \frac{Z(H^{\perp})}{Z(H_{S'C})}G(H^{\perp}),
\end{split}
\end{equation} 
where $Z(H_{S'C}) =  Z(H^{i}_{S'})  +  Z(H^{f}_{S'}) + Z(H^{\perp})$.
Hence, the global equilibrium is a weighted average over all the control states and the corresponding conditional equilibrium states in $S'$. We rather have to think of system $S'$ as being in a `conditional' equilibrium $G_{\beta}(H^{i}_{S'})\otimes |c_i\rangle\langle c_i|$.  The conditional equilibrium corresponds to a projection (and subsequent normalization) of the global equilibrium onto an eigenspace of $H_{S'C}$.

The initial state of the forward process is the conditional equilibrium state of $S'C$ and an arbitrary state $\sigma$ of the reservoir, i.e., 
$G_{\beta}(H^{i}_{S'})\otimes |c_i\rangle \langle c_i| \otimes \sigma$.
 In an analogous fashion, the reverse process should start in a conditional equilibrium with respect to the final Hamiltonian, thus corresponding to the global state  
$G_{\beta}(H^{f}_{S'})\otimes |c_f\rangle \langle c_f| \otimes \sigma$.

The fact that we here assume that  $E$ and $S'C$ initially are uncorrelated makes it possible to formulate our quantum fluctuation theorems in terms of quantum channels on the energy reservoir alone. In Appendix \ref{SecPrecorrelations} we discuss a particular case of pre-correlations.

Although the validity of the fluctuation theorems  \emph{per se} does not rely on how these conditional equilibrium states come about, or whether it would be difficult or easy to prepare them, it is nevertheless justified to ask how they are supposed to be obtained. In the typical narrative surrounding the classical Crooks theorem it appears to be taken for granted that the system eventually settles at the equilibrium  distribution $G_{\beta}\big(H_{S'}(x_i)\big)$ if $x_i$ is kept fixed. When turning the control mechanism explicit (both in the classical and quantum case) it is clear that this is not an entirely innocent statement, as it suggests that there is a separation of time-scales, where the equilibration of the system is much faster than the equilibration of the controlling degrees of freedom.  It is not difficult to imagine classical models where this assumption would make sense. Suppose for example that we would have a polymer with the ends attached to two (comparably) massive bodies immersed in a liquid (e.g.~in the spirit of the experimental setup in \cite{Liphardt02}). The equilibrium state of the polymer depends on the distance between the anchor points, and it seems intuitively reasonable that the polymer typically would equilibrate on very short time scales compared to the Brownian motion of the massive bodies. These notions could also be adapted to the quantum case, akin to what we do in Appendices \ref{controlparticle1}, \ref{SecControlparticleAgain}, and \ref{SecNumericalEvaluation}. One may even argue that this separation of time-scales should be a requirement for well-designed control mechanisms. It is also clear that we cannot expect this to hold in general, but that it implies conditions on the nature of the interactions between system, control, and heat bath, as well as on the initial states.

In relation to these questions it may be useful to note the similarities between the type of  conditional equilibrium that we consider here, and some of the settings in the literature on classical fluctuation relations for partial equilibrium conditions, or extended fluctuation relations \cite{Maragakis08,Junier09,Alemany12,Liphardt12,Roldan14,Gavrilov17}.
 One may ask similar questions  as above, concerning the consequences of including explicit control systems, also in the classical scenario. Although it indeed would be relevant to elucidate the general conditions for well-functioning control systems in both the classical and quantum case, these questions will not be covered in this investigation.

\subsection{Induced channels on the energy reservoir}
 
For the standard formulations of Crooks theorem, the change of the external control parameters  would typically push the system out of equilibrium at the expense of work. In our quantum treatment we wish to describe all aspects of how the state of the energy reservoir changes, which conveniently can be captured by the channels (trace preserving completely positive maps \cite{KrausBook}) induced on the reservoir.

More precisely, we wish to describe how the state of the energy reservoir evolves under the action of a global energy conserving unitary operation $V$ that additionally exhibits perfect control (\ref{PerfectControl}). 
We furthermore assume that $S'$ starts in the conditional equilibrium with respect to the initial control state $|c_{i}\rangle$ as described in the previous section. 
 The state of the reservoir after the evolution can thus be written 
\begin{equation}
\label{Fdef}
\mathcal{F}(\sigma) :=  \Tr_{S'C}(V [G_{\beta}(H^{i}_{S'})\otimes |c_i\rangle \langle c_i| \otimes \sigma]V^{\dagger}).
\end{equation}
Hence, $\mathcal{F}$ describes the change of state induced on the energy reservoir $E$
 due to the global dynamics $V$ for this particular class of initial states.

For this intermediate version we reverse the entire evolution on the global system. More precisely, we replace $V\cdot V^{\dagger}$ with $V^{\dagger}\cdot V$. For a $V$ generated by Hamiltonian evolution $V= e^{-it H_{\textrm{evol}}/\hbar}$, this corresponds to a replacement of $t$ with $-t$. The reverse process starts in the local equilibrium with respect to the final Hamiltonian $H^{f}_{S'}$, which results in the channel
\begin{equation}
\label{Rdef}
\mathcal{R}(\sigma) :=  \Tr_{S'C}(V^{\dagger} [G_{\beta}(H^{f}_{S'})\otimes |c_{f}\rangle \langle c_{f}| \otimes \sigma] V). 
\end{equation}
 Although this indeed guarantees that the evolution is reversed in a very concrete sense, one can argue that it does not quite correspond  to the spirit of Crooks relation, which only requires a reversal of the control parameters.  In Appendix \ref{SecAQuantumFlctnThrm} we will remove this idealization. The purpose of the following sections is to establish a relation (Proposition \ref{PreliminaryQuantumCrooks}) between the `forward' channel $\mathcal{F}$ and the `reverse' channel $\mathcal{R}$.

As a further remark one can compare the type of channels defined in (\ref{Fdef}) and (\ref{Rdef}) with thermal operations \cite{Janzing00,Janzing06,Horodecki11,Brandao13b,Gour,Brandao13,Renes14,Faist15,Lostaglio14b,Perry16,Scharlau16,Lostaglio16}.
Thermal operations are, as in (\ref{Fdef}), obtained when initially uncorrelated ancillary systems interact with the system of interest via energy conserving unitary operations (for non-interacting Hamiltonians). However, as opposed to (\ref{Fdef}), thermal operations require all the ancillary systems to initially be in their Gibbs states. In other words, the above channels would be thermal operations were it not for the control system, which initially is in a non-equilibrium state (as discussed in Appendix \ref{TheInitalStates}). 
Nevertheless, (\ref{Fdef}) is `almost' a  thermal operation in the sense that $S'C$ is in a conditional equilibrium state. 
An extension of the theory of thermal operations to these types of initial states could provide an alternative route to study fully quantum fluctuation relations, and could potentially yield insights on the violations of the standard bounds discussed in Section \ref{MainSecBounds}. However, we will not consider such generalizations in this investigation.

\subsection{\label{ConjugateCPMs} Conjugate CPMs}

The conjugate map $\phi^{*}$ of a completely positive map (CPM) $\phi$ can be defined via $\Tr\big(Y\phi(\sigma)\big) = \Tr\big(\phi^{*}(Y)\sigma\big)$, where $Y$ are arbitrary (bounded) Hermitian operators, and $\sigma$ arbitrary density operators. A convenient alternative characterization is via Kraus representations \cite{KrausBook} $\phi(\sigma) = \sum_{k}V_{k}\sigma V_{k}^{\dagger}$, where the conjugate map is given by $\phi^{*}(Y) = \sum_{k}V^{\dagger}_{k} Y V_{k}$. 

For the derivations it will be convenient to keep in mind the following observation. Suppose that a CPM $\phi$ is defined via a unitary $V:\mathcal{H}_a\otimes\mathcal{H}\rightarrow \mathcal{H}_a\otimes\mathcal{H}$ as
\begin{equation}
\label{conjremark1}
\phi(\sigma) := \Tr_{a}([Q_a\otimes \hat{1}]V[\eta_a\otimes \sigma] V^{\dagger}),
\end{equation}
where $Q_a$ (bounded) and $\eta_a$ (trace class) are positive operators on an ancillary Hilbert space $\mathcal{H}_a$.
It follows that the conjugate CPM $\phi^{*}$ can be written
\begin{equation}
\label{conjremark2}
\phi^{*}(Y) = \Tr_{a}([\eta_a \otimes\hat{1}]V^{\dagger}[Q_{a}\otimes Y]V).
\end{equation}
One should note that the definition of the conjugate $*$ via $\Tr(Y\phi(\sigma)) = \Tr(\phi^{*}(Y)\sigma)$ is not restricted to $\phi$ being a CPM. For example, if $\phi(\sigma): = A\sigma B$, for some operators $A,B$ (not necessarily Hermitian), then $\phi^{*}(Y) = BYA$.

\subsection{\label{TheMappingJ}The mapping $\mathcal{J}$}

For an operator $A$ we define the mapping
\begin{equation}
\mathcal{J}_{A}(Q) := e^{-A/2}Q{e^{ -A^{\dagger}/2}}.
\end{equation}
The reason for why we here choose the exponent to be $- A/2$, rather than say $A$, is only to make it more directly related to Gibbs states in the special case that $A := \beta H$ and $Q:= \hat{1}$, and thus $\mathcal{J}_{\beta H}(\hat{1}) = Z_{\beta}(H)G_{\beta}(H)$. The mapping $\mathcal{J}_{A}$ is a CPM, but is in general not trace preserving. 

The mapping $\mathcal{J}_{\beta H}$ does often occur together with its inverse $\mathcal{J}_{\beta H}^{-1}$, in such a way that $ \mathcal{J}_{\beta H}\circ\phi\circ\mathcal{J}_{\beta H}^{-1}$ for some CPM $\phi$ (see e.g.~Proposition \ref{PreliminaryQuantumCrooks}).  
This combination can in some sense be viewed as a quantum version of the term $e^{\beta w}$ in the classical Crooks relation in (\ref{StandardCrooks}). To see this, let us consider the special case that $H_E$ has a pure point spectrum, i.e., there exists an orthonormal basis of eigenvector $|n\rangle$ with corresponding eigenvalues $E_n$.
For mappings from diagonal elements to diagonal elements we would get $\langle m|\mathcal{J}_{\beta H}\big(\phi\big(\mathcal{J}_{\beta H}^{-1}(|n\rangle\langle n|)\big)\big)|m\rangle = e^{\beta(E_{n}-E_{m})} \langle m|\phi\big(|n\rangle\langle n|\big)|m\rangle$. The term $E_n-E_{m}$ is the decrease of energy in the reservoir, and by identifying this loss with the work performed, the analogy becomes evident. For the general transition between arbitrary matrix elements, the corresponding expression reads $\langle m|\mathcal{J}_{\beta H}\big(\phi\big(\mathcal{J}_{\beta H}^{-1}(|n\rangle\langle n'|)\big)\big)|m'\rangle = e^{\beta(E_{n}-E_{m})/2} e^{\beta(E_{n'}-E_{m'})/2} \langle m|\phi\big(|n\rangle\langle n'|\big)|m'\rangle$. The fact that the off-diagonal case is governed by two energy differences, rather than one, corresponds to the evolution of the coherences in the energy reservoir.

\subsection{Derivation of an intermediate fluctuation relation}

\begin{Lemma}
\label{Transition}
With Assumptions \ref{Def1} it is the case that 
$ V [e^{\alpha H^{i}_{S'}}\otimes |c_i\rangle\langle c_i|\otimes \hat{1}_E]   = [e^{\alpha H^f_{S'}}\otimes |c_f\rangle\langle c_f|\otimes e^{\alpha H_E}]V  [\hat{1}_{S'}\otimes \hat{1}_C\otimes e^{-\alpha H_E}]$ 
and $[e^{\alpha H^{f}_{S'}}\otimes |c_f\rangle\langle c_f|\otimes \hat{1}_E]V  =  [\hat{1}_{S'}\otimes \hat{1}_C\otimes  e^{-\alpha H_E}]V [e^{\alpha H^{i}_{S'}}\otimes |c_i\rangle\langle c_i|\otimes e^{\alpha H_E }]$,
for all $\alpha \in \mathbb{C}$.
\end{Lemma}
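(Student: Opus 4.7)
The strategy is to deduce both identities from a single master relation that couples the energy-conservation hypothesis $[V,H]=0$ to the perfect-control hypothesis (\ref{PerfectControl}).

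Since the global Hamiltonian $H=H_{S'C}\otimes\hat{1}_{E}+\hat{1}_{S'C}\otimes H_{E}$ is non-interacting across $S'C$ and $E$, the commutation $[V,H]=0$ promotes to $Ve^{\alpha H}=e^{\alpha H}V$ for every $\alpha\in\mathbb{C}$, with $e^{\alpha H}=e^{\alpha H_{S'C}}\otimes e^{\alpha H_{E}}$. The block structure of $H_{S'C}$, whose three summands $H^{i}_{S'}\otimes P_{i}$, $H^{f}_{S'}\otimes P_{f}$, $H^{\perp}$ live on the pairwise orthogonal subspaces of $C$ selected by $P_{i}$, $P_{f}$, $P_{C}^{\perp}$, ensures the simple identities $e^{\alpha H_{S'C}}[\hat{1}_{S'}\otimes P_{j}]=e^{\alpha H^{j}_{S'}}\otimes P_{j}$ for $j\in\{i,f\}$. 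Setting $\tilde{P}_{j}:=\hat{1}_{S'}\otimes P_{j}\otimes\hat{1}_{E}$, I would multiply $Ve^{\alpha H}=e^{\alpha H}V$ on the right by $\tilde{P}_{i}$: the left-hand side collapses to $V\bigl[e^{\alpha H^{i}_{S'}}\otimes P_{i}\otimes e^{\alpha H_{E}}\bigr]$, while on the right-hand side perfect control delivers $V\tilde{P}_{i}=\tilde{P}_{f}V$, after which the analogous projector identity for $P_{f}$ isolates the $H^{f}_{S'}$ block. This produces the master identity
\begin{equation*}
V\bigl[e^{\alpha H^{i}_{S'}}\otimes P_{i}\otimes e^{\alpha H_{E}}\bigr]=\bigl[e^{\alpha H^{f}_{S'}}\otimes P_{f}\otimes e^{\alpha H_{E}}\bigr]V,
\end{equation*}
from which both claims will follow.

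To obtain the first identity I would multiply the master relation on the right by $\hat{1}_{S'C}\otimes e^{-\alpha H_{E}}$; this cancels the $e^{\alpha H_{E}}$ factor and leaves a bare $\hat{1}_{C}$ where a $P_{i}$ should sit. The correct projector is reinstated by a second invocation of perfect control in the form $[e^{\alpha H^{f}_{S'}}\otimes P_{f}\otimes e^{\alpha H_{E}}]V=[e^{\alpha H^{f}_{S'}}\otimes P_{f}\otimes e^{\alpha H_{E}}]V\tilde{P}_{i}$, which holds because the leftmost $P_{f}$ already restricts the output of $V$ to the $|c_{f}\rangle$-subspace and, by unitarity, that output originates from the $|c_{i}\rangle$-subspace. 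The second identity is handled symmetrically: multiply the master relation on the left by $\hat{1}_{S'C}\otimes e^{-\alpha H_{E}}$ and use $V\tilde{P}_{i}=\tilde{P}_{f}V$ in the opposite direction to plant the missing $P_{f}$ adjacent to the new $e^{-\alpha H_{E}}$ factor.

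I do not anticipate any serious obstacle, since the argument is purely algebraic. The only place where care is needed is the bookkeeping of projectors: verifying that the $H^{\perp}$ block is annihilated at each step and that each inserted projector genuinely commutes past the adjacent $e^{\pm\alpha H_{E}}$ factor, which is immediate because those factors act on $E$ while the projectors act on $C$. No analytic subtleties enter for complex $\alpha$, because all exponentials appearing in the conclusion were already assumed well-defined by the hypotheses on partition functions in Assumptions \ref{Def1}.
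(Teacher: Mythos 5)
Your argument is correct and rests on the same ingredients as the paper's own proof: promoting $[V,H]=0$ to $Ve^{\alpha H}=e^{\alpha H}V$, collapsing $e^{\alpha H_{S'C}}$ on the control projectors via the orthogonal block structure of $H_{S'C}$ (which annihilates the $H^{\perp}$ block), and using perfect control to trade $\hat{1}_{S'}\otimes P_i\otimes \hat{1}_E$ for $\hat{1}_{S'}\otimes P_f\otimes \hat{1}_E$ across $V$. The paper manipulates the left-hand side of each identity directly by rewriting $e^{\alpha H^{i}_{S'}}\otimes P_i\otimes\hat{1}_E$ as $e^{\alpha H}e^{-\alpha \hat{1}_{S'C}\otimes H_E}$ times the projector, rather than first isolating your master identity, but this is only a difference of presentation.
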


\begin{proof}
We only prove the first equality, since the proof of the second is analogous.  
First note that $e^{\alpha H^{i}_{S'}}\otimes |c_i\rangle\langle c_i|\otimes \hat{1}_E = e^{\alpha H^{i}_{S'}\otimes |c_i\rangle\langle c_i|\otimes \hat{1}_E}[\hat{1}_{S'}\otimes |c_i\rangle\langle c_i|\otimes \hat{1}_E]$. 
Next we can use the fact that $H^{i}_{S'}\otimes |c_i\rangle\langle c_i|\otimes \hat{1}_E = H - H^{f}_{S'}\otimes |c_f\rangle\langle c_f| \otimes \hat{1}_E  - H^{\perp}\otimes \hat{1}_E-\hat{1}_{S'C}\otimes H_E$. 
Note that these summands commute with each other. Moreover, 
$H^{f}_{S'}\otimes |c_f\rangle\langle c_f| \otimes \hat{1}_E$ and $H^{\perp}\otimes \hat{1}_E$ have orthogonal supports compared to $\hat{1}_{S'}\otimes |c_i\rangle\langle c_i|\otimes \hat{1}_E$. By these observations it follows that we can write
\begin{equation*}
\begin{split}
& V [e^{\alpha H^{i}_{S'}}\otimes |c_i\rangle\langle c_i|\otimes \hat{1}_E] \\
& =  V e^{\alpha H}e^{-\alpha \hat{1}_{S'C}\otimes H_E}[\hat{1}_{S'}\otimes|c_i\rangle\langle c_i| \otimes \hat{1}_E] \\
& =  e^{\alpha H}V [\hat{1}_{S'}\otimes|c_i\rangle\langle c_i| \otimes \hat{1}_E] e^{-\alpha \hat{1}_{S'C}\otimes H_E},
\end{split}
\end{equation*}
where we in the second equality have used $[H,V] = 0$, and the fact that $\hat{1}_{S'}\otimes|c_i\rangle\langle c_i| \otimes \hat{1}_E$ commutes with $e^{-\alpha \hat{1}_{S'C}\otimes H_E}$, as well as orthogonality of various terms. 

Next we use the assumed property of perfect control in Eq.~(\ref{PerfectControl}), i.e.,  $V[\hat{1}_{S'}\otimes |c_i\rangle\langle c_i|\otimes \hat{1}_{E}] =  [\hat{1}_{S'}\otimes|c_f\rangle\langle c_f|\otimes \hat{1}_{E}]V$. When $e^{\alpha H}$ on the left hand side of $V$ `meets' $[\hat{1}_{S'}\otimes|c_f\rangle\langle c_f|\otimes \hat{1}_{E}]$, only the terms $e^{\alpha H^{f}_{S'}\otimes P_C^{f}\otimes \hat{1}_E  }e^{\alpha\hat{1}_{S'C}\otimes H_E}$ survive. This leads to the first equality.
 The proof of the second equality is analogous. 
\end{proof}

\begin{Proposition}[An intermediate quantum Crooks relation]
\label{PreliminaryQuantumCrooks}
With the definitions as in \ref{Def1}, the channels $\mathcal{F}$ and $\mathcal{R}$ defined in (\ref{Fdef})  and (\ref{Rdef}) are related as
\begin{equation}
\label{Fequiv}
Z(H^i_{S'}) \mathcal{F} = Z(H^f_{S'})\mathcal{J}_{\beta H_E}\mathcal{R}^{*}\mathcal{J}_{\beta H_E}^{-1}.
\end{equation}
With the separation of $S'$ into system $S$ and the heat bath $B$ as in equation (\ref{Def1HeatBath}) we thus get
\begin{equation}
\label{FequivS}
Z(H^i_{S}) \mathcal{F} = Z(H^f_{S})\mathcal{J}_{\beta H_E}\mathcal{R}^{*}\mathcal{J}_{\beta H_E}^{-1}.
\end{equation}
\end{Proposition}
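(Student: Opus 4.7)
The plan is to exploit the perfect control condition to shuttle the thermal weight from the initial Hamiltonian across $V$, turning it into a sandwich of $e^{-\beta H_E/2}$ factors together with the final thermal weight. The key technical trick is to split the Gibbs weight symmetrically as $e^{-\beta H^i_{S'}} = e^{-\beta H^i_{S'}/2}\,e^{-\beta H^i_{S'}/2}$ and $P_i = P_i^2$, so that Lemma~\ref{Transition} can be applied with $\alpha = -\beta/2$ on the left and its adjoint on the right. A one-sided application would only produce a single $e^{-\beta H_E}$ factor, whereas we need the symmetric $e^{-\beta H_E/2}$ factors demanded by $\mathcal{J}_{\beta H_E}$ on both sides of $\mathcal{R}^*$.

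Concretely, first I would write
\begin{equation*}
Z(H^i_{S'})\mathcal{F}(\sigma) = \Tr_{S'C}\!\boldsymbol{(}V[A\otimes P_i\otimes\hat{1}_E][\hat{1}_{S'C}\otimes\sigma][A\otimes P_i\otimes\hat{1}_E]V^{\dagger}\boldsymbol{)},
\end{equation*}
with $A = e^{-\beta H^{i}_{S'}/2}$. Applying Lemma~\ref{Transition} (with $\alpha = -\beta/2$) to the left factor $V[A\otimes P_i\otimes\hat{1}_E]$, and the adjoint of the same identity to the right factor $[A\otimes P_i\otimes\hat{1}_E]V^{\dagger}$, moves $e^{-\beta H^f_{S'}/2}\otimes P_f\otimes e^{-\beta H_E/2}$ to the outside and leaves $\hat{1}_{S'}\otimes P_i\otimes e^{\beta H_E/2}\sigma e^{\beta H_E/2}$ sandwiched between the new $V$ and $V^{\dagger}$.

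Next I would pull the two $e^{-\beta H_E/2}$ factors out of the partial trace over $S'C$ (they act only on $E$), and use cyclicity of the partial trace for operators supported on $S'C$ to merge the two copies of $e^{-\beta H^f_{S'}/2}\otimes P_f$ into a single $e^{-\beta H^f_{S'}}\otimes P_f$ (using $P_f^2 = P_f$). At this stage the expression inside the trace still carries the inner projector $P_i$. I would remove it by invoking perfect control (\ref{PerfectControl}): since $[\hat{1}_{S'}\otimes P_f\otimes\hat{1}_E]V = V[\hat{1}_{S'}\otimes P_i\otimes\hat{1}_E]$, one has $[e^{-\beta H^f_{S'}}\otimes P_f\otimes\hat{1}_E]V[\hat{1}_{S'C}\otimes X] = [e^{-\beta H^f_{S'}}\otimes P_f\otimes\hat{1}_E]V[\hat{1}_{S'}\otimes P_i\otimes X]$ for every operator $X$ on $E$, so the inner $P_i$ may be inserted or deleted freely.

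After these manipulations the expression reads $e^{-\beta H_E/2}\Tr_{S'C}\boldsymbol{(}[e^{-\beta H^f_{S'}}\otimes P_f\otimes\hat{1}_E]V[\hat{1}_{S'C}\otimes e^{\beta H_E/2}\sigma e^{\beta H_E/2}]V^{\dagger}\boldsymbol{)}e^{-\beta H_E/2}$, which by the characterization of the conjugate CPM recalled in Section~\ref{ConjugateCPMs} is precisely $Z(H^f_{S'})\mathcal{J}_{\beta H_E}\mathcal{R}^{*}\mathcal{J}_{\beta H_E}^{-1}(\sigma)$. The separation of $S'$ into $S$ and $B$ with non-interacting initial and final Hamiltonians, as in~(\ref{Def1HeatBath}), gives $Z(H^{i}_{S'}) = Z(H^i_S)Z(H_B)$ and $Z(H^{f}_{S'}) = Z(H^f_S)Z(H_B)$, so the common factor $Z(H_B)$ cancels and~(\ref{FequivS}) follows immediately from~(\ref{Fequiv}). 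The main potential pitfall is bookkeeping the tensor factors and the one-sided cyclicity of partial trace; the actual structural ingredients are just Lemma~\ref{Transition}, the projector property $P_f V = V P_i$, and the Kraus-style identity for $\mathcal{R}^{*}$.
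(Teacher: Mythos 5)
Your proof is correct and uses essentially the same ingredients as the paper's: the symmetric splitting of the Gibbs weight, Lemma~\ref{Transition} at $\alpha=-\beta/2$, perfect control to insert or delete the inner projector, and the conjugate-CPM identity from Section~\ref{ConjugateCPMs}. The only difference is direction — you start from $\mathcal{F}$ and work toward $\mathcal{R}^{*}$, whereas the paper runs the same chain of equalities from $\mathcal{R}^{*}$ toward $\mathcal{F}$ — which is an immaterial reversal.
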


\begin{proof}
By comparing the definition (\ref{Rdef}) of channel $\mathcal{R}$, with Eqs.~(\ref{conjremark1}) and (\ref{conjremark2}) in Appendix \ref{ConjugateCPMs} we can conclude that 
\begin{equation*}
\begin{split}
&\mathcal{R}^{*}(Y)=  \Tr_{S'C}\big( [G(H^{f}_{S'})\otimes |c_{f}\rangle \langle c_{f}| \otimes \hat{1}_E]\\
& \quad\quad\quad\quad\quad\quad \times V[\hat{1}_{S'}\otimes\hat{1}_C\otimes Y]V^{\dagger}\big)\\
& =  \frac{1}{Z(H^{f}_{S'})}\Tr_{S'C}\Big( [e^{-\beta H^{f}_{S'}/2}\otimes |c_{f}\rangle \langle c_{f}| \otimes \hat{1}_E]V\\
& \quad\quad\quad\quad\quad\quad \times [\hat{1}_{S'}\otimes\hat{1}_C\otimes Y]\\
& \quad\quad\quad\quad\quad\quad\times V^{\dagger}[e^{-\beta H^{f}_{S'}/2}\otimes |c_{f}\rangle \langle c_{f}| \otimes \hat{1}_E]\Big)\\
&[\textrm{By Lemma \ref{Transition}}] \\
& =  \frac{1}{Z(H^{f}_{S'})}e^{\beta H_E/2}\Tr_{S'C}\Big(V\\
&  \times [e^{-\beta H^{i}_{S'}}\otimes  |c_i\rangle\langle c_i|\otimes  e^{-\beta H_E/2 }Ye^{-\beta H_E/2 }]V^{\dagger}\Big)e^{\beta H_E/2}\\
& =  \frac{Z(H^{i}_{S'})}{Z(H^{f}_{S'})} \mathcal{J}^{-1}_{\beta H_E}\circ\mathcal{F}\circ \mathcal{J}_{\beta H_E}(Y).
\end{split}
\end{equation*}
This can be rewritten as (\ref{Fequiv}).

With the additional assumption in equation (\ref{Def1HeatBath}) we get $Z(H^{i}_{S'}) = Z(H^{i}_{S})Z(H_{B})$ and $Z(H^{f}_{S'}) = Z(H^{f}_{S})Z(H_{B})$. From this observation we obtain (\ref{FequivS}) from (\ref{Fequiv}).
\end{proof}

Equation (\ref{FequivS}) in Proposition \ref{PreliminaryQuantumCrooks} already has the flavor of a Crooks relation. However, as already has been mentioned, it relies on a too ambitious reversal where we invert the entire evolution. In Appendix  \ref{SecAQuantumFlctnThrm} we shall remove this idealization, and let both the forward and reverse evolution be governed by the same `direction of time', i.e., in both cases the global evolution is given by the map $Q\mapsto VQV^{\dagger}$.

\subsection{Control in fluctuation theorems versus control in thermal protocols}

In relation to the remarks concerning the swap of $V$ to $V^{\dagger}$ as being a `too ambitious' reversal, one may note that in other contexts, such as work extraction and information erasure, one often imagines to be able to choose freely among all energy-conserving global unitary operations over all the involved degrees of freedom, including additional equilibrium systems. One may thus wonder why such a detailed control is acceptable in those scenarios, but not for fluctuation theorems. One should keep in mind that work extraction and information erasure in some sense are engineering tasks, where the purpose is to construct an optimal  protocol or machinery. Moreover, the free equilibrium resources can be viewed as engineered ancillary systems that are thermalized by being put in contact with the heat bath, rather than themselves constituting the heat bath. This should be put in contrast with fluctuation theorems, where our task is not to design optimal procedures in engineered systems, but to make general statements about the nature of the given dynamics in arbitrary thermal systems. This may include systems provided by nature, about which we may have a very limited knowledge, and where our means of control are restricted to designated control systems (e.g.~external fields).

\subsection{\label{SecPetzRecovery}Remarks concerning Crooks operation time reversal and Petz recovery channel}

The notion of `operation time reversals' was introduced in \cite{Crooks08} as a quantum generalization of time-reversals of classical Markov chains. Given a channel $\Phi$ with fix-point density operator $\rho$, i.e., $\Phi(\rho) = \rho$,  the operation time reversal of $\Phi$ is defined by the mapping $\sigma \mapsto \sqrt{\rho}\Phi^{*}(\sqrt{\rho}^{-1}\sigma\sqrt{\rho}^{-1})\sqrt{\rho}$. 
Let us now compare this with the right hand side of (\ref{Fequiv}).
By construction $\mathcal{R}$ is a channel. However, one can confirm that it is the case that
\begin{equation}
\mathcal{R}(e^{-\beta H_E}) =  \frac{Z(H^{i}_{S'})}{Z(H^{f}_{S'})} e^{-\beta H_E}. 
\end{equation}
Hence,  $e^{-\beta H_E}$ is \emph{not} a fixpoint of $\mathcal{R}$, and thus the conditions for Crooks time reversal are not quite satisfied (unless $Z(H^{i}_{S'}) = Z(H^{f}_{S'})$, which would be the case for a cyclic process, i.e., if $H^f_{S'} = H^i_{S'}$). 

There exists a more general construction introduced in information theory, namely Petz recovery channel  \cite{Petz86,Petz88,Barnum02,Hayden04}. 
 Given a channel $\Phi$ and a reference state $\rho$ (that does not have to be a fixpoint of the channel $\Phi$) Petz recovery channel is defined as
\begin{equation}
\label{avva}
\widehat{\Phi}(\sigma)  = \sqrt{\rho}\Phi^{*}\big(\Phi(\rho)^{-1/2}\sigma \Phi(\rho)^{-1/2}\big)\sqrt{\rho}.
\end{equation}
Hence, Crooks time reversal emerges as a special case when $\rho$ is a fixpoint of $\Phi$. In this context one may also note the discussions on time-reversals for quantum channels in \cite{Ticozzi09}.

If we take $e^{-\beta H_E}$ as the reference operator, it is straightforward to confirm that the intermediate fluctuation relation (\ref{Fequiv}) can be rephrased as 
$\mathcal{F} = \widehat{\mathcal{R}}$,
i.e., that the forward channel is equal to the Petz-transformation (\ref{avva}) of the reverse channel. 

In  \cite{Wehner15} it was shown that  the work gain in work extraction can be bounded by how well the initial state can be reconstructed via Petz recovery channel. One may in particular note the similarity between our channel $\mathcal{R}$ and the recovery channel $\mathcal{R}_{\rho\rightarrow \sigma}$ defined in \cite{Wehner15}, although the definition of $\mathcal{R}$ contains the control system that is subject to perfect control. In view of these structural similarities it is tempting to speculate on  deeper relations between these results. Fluctuation relations can be viewed as statistical manifestations of the second law \cite{ReviewFluctThm}; an observation that makes the connection to work extraction more plausible. Moreover, there are investigations that hint or elaborate on connections. One example is the generalized Jarzynski relations for feedback control \cite{Sagawa08,Sagawa10,Toyabe10,Morikuni11}. Moreover, in \cite{Aberg13} a classical Crook's relation was used as a component in a proof about single-shot work extraction, and recent investigations \cite{Halpern15a,Salek15,Dahlsten15,Halpern15b} have focused on exploring links between single-shot concepts and fluctuation theorems. On a similar note one may wonder whether there exists a more operational characterization of fluctuation relations. Although intriguing questions, we will not consider them further in this investigation.

\section{\label{SecTimeReversal} Time-reversal and time-reversal symmetry}

Here we discuss the notion of time-reversals that we shall use for obtaining the quantum Cooks relation.
We begin with a quick reminder of the essence of the standard notion of time-reversals, which is closely related to complex conjugation. As mentioned in the main text, our time-reversals are primarily related to transposes rather than to complex conjugation (cf.~the discussions on time-reversals in \cite{Sanpera97,Bullock05}), and in section \ref{ComplVsTransp} we compare complex conjugation to transposes regarded as time-reversal operations. After these preludes, we do in section  \ref{ReverseUseful} turn to the actual definition of time-reversal that we employ in this investigation.

\subsection{A brief reminder of the standard notion of time-reversals}

Here we briefly recollect the notion of time-reversals in classical and quantum mechanics. More thorough discussions can be found, e.g.,  in chapter 26 of \cite{Wigner56} and  section 4.4 of \cite{Sakurai}. These references also include the case of electric and magnetic fields, as well as angular momenta, which we do not cover. For an overview of time-reversals in classical systems, see e.g.~\cite{Roberts92,Lamb98}.
 See also \cite{Chetrite08} for various notions of time-reversals in the context of classical fluctuation relations.

In classical mechanics, time-reversals are defined via the reversal of momenta, e.g., for a system of particles, their positions are left intact, but all the velocities are reversed. If the underlying Hamilton function is time-reversal symmetric, i.e., is invariant under $\overline{p}\mapsto -\overline{p}$, then this implies that  the reversed particles follow their reversed trajectories, and thus effectively behave as if time was running backwards. As an example one can consider a particle of mass $M$ in a potential $V$, with Hamilton function   $H(\overline{x},\overline{p}) = \overline{p}^2/(2M)+V(\overline{x})$. If $\big(\overline{x}(t),\vec{p}(t)\big)$ is a solution to the corresponding equation of motions, then $\big(\overline{x}(-t),-\vec{p}(-t)\big)$ is also a solution, thus representing the particle moving backwards with reversed momentum.

Quantum mechanics does not possess a very crisp notion of phase space, or phase-space trajectories, due to the canonical commutation relation for the position and momentum operators. One can nevertheless introduce a notion of time-reversals. Suppose that the system has the Hamilton operator $H = \hat{P}^2/(2M) + V(\hat{X})$, and that $\psi(\overline{x},t)$ is a solution to the corresponding Schr\"odinger equation $i\hbar\dot{\psi} = - \hbar^2 \nabla^2\psi/(2M) + V(\overline{x})\psi$. The function $\psi(\overline{x},-t)$ is generally not a solution, while $\psi(\overline{x},-t)^{*}$ is. In other words,  a complex-conjugated wave-function evolves `backwards', which suggests that time-reversals in quantum mechanics are related to complex conjugation. 
As further indications one may note that a plane wave $\psi(x) = e^{ipx}$ gets mapped to $\psi(x)^{*} = e^{-ipx}$, thus changing the sign of the momentum of the momentum eigenstates. Another example is the family of coherent states $\{|\alpha\rangle\}_{\alpha\in \mathbb{C}}$, with wave-functions $\langle x|\alpha\rangle = \exp[-\Imag(\alpha)^2-(x/\sigma-2\alpha)^2/4]/[(2\pi)^{1/4}\sqrt{\sigma}]$.  Coherent states can in some sense be regarded as representing fuzzy phase space points (with an as sharp simultaneous position and momentum as quantum mechanics allows for), where the real and imaginary part of $\alpha$ can be associated to the average position and momentum, respectively (see Appendix \ref{SecNumericalEvaluation}). Since $\langle x|\alpha\rangle^{*} =\langle x|\alpha^{*}\rangle$, the effect of the conjugation is to swap the sign of the imaginary part of $\alpha$, while leaving the real part intact, thus emulating the classical procedure of swapping momenta at fixed positions.

There is also a more abstract line of reasoning, arguing that transformations that leave the magnitude of inner products on the Hilbert space invariant either should be unitary or anti-unitary (see the Appendix to chapter 20 in \cite{Wigner56}, or chapter 27 in \cite{Gottfried66}), and that time-reversals fall in the category of anti-unitary operators (see chapter 26 of \cite{Wigner56}, or  section 4.4 in \cite{Sakurai}). Moreover, anti-unitary operators can be written as a complex conjugation composed with a unitary operation (chapter 27 in \cite{Gottfried66}). Hence, on the level of Hilbert spaces, time-reversals are closely related to the complex conjugate of the wave-function.

The above remarks have been focused on cases where there are no external parameters that break time-reversal symmetry. The typical example of such symmetry breaking is external magnetic fields. In such cases, the time-reversal operation would not only include a change of the state of the system, but also a change of the Hamiltonian (e.g.~swapping the directions of the external magnetic fields) which thus means that we have to intervene and change the nature of the dynamics of the system. This goes somewhat against the general spirit of the present investigation, where we employ time-reversal symmetry precisely in order to avoid having to make such interventions. (It would be more in spirit to explicitly include the systems and currents that generate the magnetic fields.)  It nevertheless seems reasonable that one could generalize the type of fluctuation relations that we consider here in order to incorporate external time-reversal breaking parameters. However, we leave this as an open question.

\subsection{\label{ComplVsTransp}Complex conjugation vs transpose}

As discussed in the previous section, the standard notion of time-reversals is in quantum mechanics typically formulated on the level of Hilbert spaces via anti-unitary operators, and can be expressed via complex conjugation of wave-functions. Here we shall make a slight shift of perspective, and consider the action of the complex conjugation on the level of density operators, and compare this to transposes. We will see that both these operations in some sense can be regarded as time-reversal operations.

The standard time reversal can be expressed in terms of the complex conjugation $\psi^{*}(x)$ of wave functions $\psi(x)$, or via  an orthonormal basis as $|\psi^{*}\rangle = \sum_{n}|n\rangle\langle n|\psi\rangle^{*}$.
On the level of operators this translates to $Q^{*} = \int\!\! \int \langle x|Q|x'\rangle^{*}|x\rangle \langle x'|dxdx'$, or  $Q^{*} = \sum_{nn'}\langle n|Q|n'\rangle^{*}|n\rangle\langle n'|$.
In comparison, the transpose, $\tau$, acts as $Q^{\tau} = \int\!\!\int |x\rangle \langle x'|Q|x\rangle \langle x'|dxdx'$, or  $Q^{\tau} = \sum_{nn'}|n\rangle \langle n'|Q|n\rangle \langle n'|$.

Both the complex conjugate and the transpose implement time-reversals, but in a slightly different manner.
Suppose that a Hermitian generator $H_{\textrm{evol}}$ for the time evolution  satisfies 
$H_{\textrm{evol}}^{*} = H_{\textrm{evol}}$, or equivalently $H_{\textrm{evol}}^{\tau} = H_{\textrm{evol}}$.
 (One should not confuse $H_{\textrm{evol}}$, discussed in Appendix \ref{MakeSense}, with the Hamiltonians $H$, $H_{S}$, $H_{E}$ etc.) The time-evolution operator consequently transforms as
\begin{equation}
\label{nklvndy}
\begin{split}
(e^{-itH_{\textrm{evol}}/\hbar})^{*} = & e^{itH_{\textrm{evol}}/\hbar},\\
 (e^{-it H_{\textrm{evol}}/\hbar})^{\tau} = & e^{-itH_{\textrm{evol}}/\hbar}.
\end{split}
\end{equation}
Hence, complex conjugation inverts the evolution operator, while the transpose leaves it intact. At first sight it may thus seem a bit odd that the transpose can implement any form of time-reversal. To understand this we should consider the manner in which these mappings act on products of operators, namely
\begin{equation}
\label{yfmbnf}
(AB)^{*} = A^{*}B^{*},\quad (AB)^{\tau} = B^{\tau}A^{\tau}.
\end{equation}
In other words, complex conjugation leaves the operator ordering intact, while the transpose reverses the ordering. In some sense, (\ref{nklvndy}) and (\ref{yfmbnf}) complement each other when it comes to the reversal of the evolution. To see this, assume that an initial state $\rho_{i}$ is evolved into the state $\rho_{f} = e^{-itH_{\textrm{evol}}/\hbar}\rho_i e^{itH_{\textrm{evol}}/\hbar}$. For the complex conjugate we get
\begin{equation*}
\begin{split}
\rho_{f}^{*} =  & (e^{-itH_{\textrm{evol}}/\hbar})^{*}\rho_i^{*}(e^{itH_{\textrm{evol}}/\hbar})^{*} \\
= & e^{itH_{\textrm{evol}}/\hbar}\rho_i^{*} e^{-itH_{\textrm{evol}}/\hbar},
\end{split}
\end{equation*}
and hence $\rho_i^{*} = e^{-itH_{\textrm{evol}}/\hbar}\rho_{f}^{*}e^{itH_{\textrm{evol}}/\hbar}$. Similarly, 
\begin{equation*}
\begin{split}
\rho_{f}^{\tau}  = &  (e^{itH_{\textrm{evol}}/\hbar})^{\tau} \rho_i^{\tau}(e^{-itH_{\textrm{evol}}/\hbar})^{\tau}\\
 = & e^{itH_{\textrm{evol}}/\hbar}\rho_i^{\tau} e^{-itH_{\textrm{evol}}/\hbar},
 \end{split}
 \end{equation*}
and consequently $\rho_i^{\tau} = e^{-itH_{\textrm{evol}}/\hbar}\rho_{f}^{\tau}e^{itH_{\textrm{evol}}/\hbar}$.
Hence, we do again obtain the effective reversal of the evolution. 
We can conclude that for conjugation the time-reversal is  due to the inversion of the time evolution operator, while for the transpose it is due to the inversion of the operator ordering.

\subsection{\label{ReverseUseful} What we require from time-reversals}
Instead of directly defining time-reversals in terms of transposes we here rather define it via a `wish-list' of properties.  By inspection one can see that transposes satisfy these conditions, although the latter allow for a slightly larger class of operations (see  Proposition \ref{TcharactFinite}).
One can also see that this definition immediately excludes the complex conjugation (due to the assumed linearity).
Hence, one should not take this list as the ultimate and most general definition of what a time-reversal possibly could be, but rather as convenient set of assumptions that is sufficient for our purposes and makes the book-keeping in the proofs simple. It may potentially be the case that a more general notion of time-reversals could extend the resulting family of quantum Crooks relations. Although an interesting question, it will not be pursued  further in this investigation.
\begin{Definition}
\label{PropertiesTimeReversal}
A linear map $\mathcal{T}$ is called a  time-reversal if
\begin{subequations}  
\begin{equation}
\label{Prop3} \mathcal{T}(AB) = \mathcal{T}(B)\mathcal{T}(A),
\end{equation}
\begin{equation}
\label{Prop4}
\mathcal{T}(A^{\dagger}) = \mathcal{T}(A)^{\dagger},
\end{equation}
\begin{equation}
\label{Prop2} \Tr[\mathcal{T}(\sigma)] = \Tr(\sigma),
\end{equation}
\begin{equation}
\label{Prop6} \mathcal{T}^2 = I.
\end{equation}
\end{subequations}
\end{Definition}
It is certainly justified to ask in what sense a map $\mathcal{T}$ with the above properties deserves to be called a `time-reversal'. Much analogous to the discussions of complex conjugates and transposes in the previous section, let us now assume that $\rho_f := V\rho_i V^{\dagger}$, for some unitary operator $V$ and initial state $\rho_i$. We get
\begin{equation}
\label{ndlflknlknbd}
\begin{split}
\mathcal{T}(\rho_f) = & \mathcal{T}(V\rho_i V^{\dagger})= \mathcal{T}(V)^{\dagger}\mathcal{T}(\rho_i) \mathcal{T}(V),
\end{split}
\end{equation}
where we in the second equality make use of the reordering-property (\ref{Prop3}) and of property (\ref{Prop4}). Our first observation is that $\mathcal{T}(V)$ is a unitary operator (see Lemma \ref{PreservesIdentity} in Appendix \ref{TimeRevFinite}). Moreover, $\mathcal{T}(\rho_i)$ and $\mathcal{T}(\rho_f)$ are density operators, due to preservation of trace (\ref{Prop2}) and preservation of positivity (by Lemma \ref{ReorderHermImplyPos}), although $\mathcal{T}$ is generally not completely positive. 
By the unitarity of $\mathcal{T}(V)$ we can rewrite (\ref{ndlflknlknbd}) as $\mathcal{T}(\rho_i) =  \mathcal{T}(V)\mathcal{T}(\rho_f)\mathcal{T}(V)^{\dagger}$, which thus describes the unitary evolution of a well defined quantum state. If it additionally would be the case that the time-reversal leaves the evolution operator $V$ invariant, $\mathcal{T}(V) = V$, then we get $\mathcal{T}(\rho_i) = V\mathcal{T}(\rho_f)V^{\dagger}$. Hence, the time-reversed final state $\mathcal{T}(\rho_f)$ evolves to the time-reversed initial state $\mathcal{T}(\rho_i)$ under the forward evolution, if the time-reversal leaves the evolution operator invariant. By comparison with the discussion in the previous section, we can conclude that $\mathcal{T}$, much in analogy with the transpose, obtain its  capacity to `time-reverse' from the reordering property (\ref{Prop3}), as opposed to complex conjugation that obtains this power due to its capability to invert the time-evolution operator. As we will see in Appendix \ref{TimeRevFinite}, the time-reversals $\mathcal{T}$ are indeed very closely related to transposes.

As a bit of a technical remark, in the infinite-dimensional case one may additionally require that  $\mathcal{T}$ maps bounded operators to bounded operators, and trace class operators to trace class operators.
If one restricts to bounded $A,B$ in (\ref{Prop3}) it follows that $\mathcal{T}(A)$, $\mathcal{T}(B)$, and $\mathcal{T}(AB)$ are bounded. 
By demanding that $A$ in (\ref{Prop4})  is bounded we make sure that the Hilbert adjoint $A^{\dagger}$ is well defined and bounded (see Theorem 3.9-2 in \cite{Kreyszig}).
By the requirement that $\mathcal{T}$ maps bounded operators to bounded operators  we know that $\mathcal{T}(A)$ is bounded, and thus  $\mathcal{T}(A)^{\dagger}$ is also well defined.
If one restricts $\sigma$ to be trace class in (\ref{Prop2}) it follows that $\Tr(\sigma)$ is well defined, and if $\mathcal{T}$ maps trace class operators to trace class operators, $\Tr[\mathcal{T}(\sigma)]$ is also well defined. Although this is a reasonable collection of assumptions, one should keep in mind that we here tend to apply these maps also to unbounded operators.

\subsection{The $\ominus$-transformation}

For a CPM $\phi$ we define $\phi^{\ominus}$ as
\begin{equation}
\label{Defominus}
\phi^{\ominus} := \mathcal{T}\phi^{*}\mathcal{T},
\end{equation}
where $\mathcal{T}$ is a given  time-reversal, and where $\phi^{*}$ is the conjugation discussed in Appendix \ref{ConjugateCPMs}. It is a straightforward application of the properties of the time-reversal $\mathcal{T}$ to show the following alternative definition
\begin{equation}
\label{Altominusdef}
\phi^{\ominus} = (\mathcal{T}\phi\mathcal{T})^{*}.
\end{equation}
It is also straightforward to confirm the following lemma.
\begin{Lemma}
If $\phi$ is a CPM with Kraus decomposition $\phi(\sigma) = \sum_{k}V_{k}\sigma V_{k}^{\dagger}$, and $\mathcal{T}$ is a time-reversal, then
\begin{equation}
\phi^{\ominus}(\sigma) =  \sum_{k}\mathcal{T}(V_{k})\sigma \mathcal{T}(V_k)^{\dagger}.
\end{equation}
In other words,  if $\{V_{k}\}_{k}$ is a Kraus representation of $\phi$, then $\{\mathcal{T}(V_{k})\}_{k}$ is a Kraus representation of $\phi^{\ominus}$. Hence, if $\phi$ is a CPM, then $\phi^{\ominus}$ is a CPM.
If $\phi$ is a channel (trace preserving CPM), then $\phi^{\ominus}$ is not necessarily a channel.
However, if $\phi$ is a unital channel ($\phi(\hat{1}) = \hat{1}$), then $\phi^{\ominus}$ is  a channel, and moreover a unital channel.
\end{Lemma}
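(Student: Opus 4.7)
The plan is to simply unwind the definition $\phi^{\ominus} = \mathcal{T}\phi^{*}\mathcal{T}$ using the algebraic properties listed in Definition \ref{PropertiesTimeReversal}, and then read off the consequences for (co)unitality.

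First I would recall that from $\phi(\sigma) = \sum_{k} V_{k} \sigma V_{k}^{\dagger}$ the conjugate reads $\phi^{*}(Y) = \sum_{k} V_{k}^{\dagger} Y V_{k}$, as noted in section \ref{ConjugateCPMs}. Applying the definition of $\ominus$ and using linearity of $\mathcal{T}$ yields $\phi^{\ominus}(\sigma) = \sum_{k}\mathcal{T}\bigl(V_{k}^{\dagger}\,\mathcal{T}(\sigma)\,V_{k}\bigr)$. Each summand is then handled by the order-reversing product rule (\ref{Prop3}), giving $\mathcal{T}(V_{k})\,\mathcal{T}\bigl(\mathcal{T}(\sigma)\bigr)\,\mathcal{T}(V_{k}^{\dagger})$; then the involution property $\mathcal{T}^{2} = I$ from (\ref{Prop6}) kills the inner $\mathcal{T}$'s, and the $\dagger$-compatibility (\ref{Prop4}) turns $\mathcal{T}(V_{k}^{\dagger})$ into $\mathcal{T}(V_{k})^{\dagger}$. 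This assembles exactly into $\phi^{\ominus}(\sigma) = \sum_{k} \mathcal{T}(V_{k})\,\sigma\,\mathcal{T}(V_{k})^{\dagger}$, which is manifestly a Kraus form and therefore shows that $\phi^{\ominus}$ is completely positive.

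Next I need the auxiliary fact $\mathcal{T}(\hat{1}) = \hat{1}$, which is the only step that is not completely mechanical. I would derive it cleanly from (\ref{Prop3}) and (\ref{Prop6}) alone: from $\mathcal{T}(A) = \mathcal{T}(\hat{1}\cdot A) = \mathcal{T}(A)\,\mathcal{T}(\hat{1})$, together with the fact that $\mathcal{T}$ is a bijection (because $\mathcal{T}^{2}=I$), we conclude that $B = B\,\mathcal{T}(\hat{1})$ for every operator $B$, hence $\mathcal{T}(\hat{1}) = \hat{1}$. (The trace-preservation property (\ref{Prop2}) is not needed here.)

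With this in hand, the (non-)preservation statements fall out by the same computational pattern. Using the product rule on the Kraus operators of $\phi^{\ominus}$ gives $\phi^{\ominus}(\hat{1}) = \sum_{k}\mathcal{T}(V_{k})\mathcal{T}(V_{k}^{\dagger}) = \mathcal{T}\bigl(\sum_{k} V_{k}^{\dagger} V_{k}\bigr)$ and, dually, $\sum_{k}\mathcal{T}(V_{k})^{\dagger}\mathcal{T}(V_{k}) = \mathcal{T}\bigl(\sum_{k} V_{k} V_{k}^{\dagger}\bigr) = \mathcal{T}(\phi(\hat{1}))$. Thus $\phi^{\ominus}$ is trace preserving iff $\mathcal{T}(\phi(\hat{1})) = \hat{1}$, i.e.\ (applying $\mathcal{T}$ and using $\mathcal{T}(\hat{1}) = \hat{1}$) iff $\phi(\hat{1}) = \hat{1}$; and $\phi^{\ominus}$ is unital iff $\sum_{k} V_{k}^{\dagger}V_{k} = \hat{1}$, i.e.\ iff $\phi$ is trace preserving. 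Consequently, trace preservation of $\phi$ alone does not force $\phi^{\ominus}$ to be trace preserving, whereas if $\phi$ is a unital channel then both conditions on $\phi$ are met, so $\phi^{\ominus}$ is both trace preserving and unital. The main (and very mild) obstacle is just spotting the $\mathcal{T}(\hat{1})=\hat{1}$ identity; everything else is a direct application of the four axioms (\ref{Prop3})--(\ref{Prop6}).
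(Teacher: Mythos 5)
Your proof is correct and is exactly the ``straightforward to confirm'' verification the paper omits: unwinding $\phi^{\ominus}=\mathcal{T}\phi^{*}\mathcal{T}$ with properties (\ref{Prop3}), (\ref{Prop4}), (\ref{Prop6}) to get the Kraus form, and then reducing the trace-preservation/unitality claims to $\sum_k V_k^{\dagger}V_k$ and $\phi(\hat 1)$ via $\mathcal{T}(\hat 1)=\hat 1$. Note that your auxiliary identity $\mathcal{T}(\hat 1)=\hat 1$ is the paper's Lemma \ref{PreservesIdentity}, proved there from the same two axioms by an equivalent argument.
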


\subsection{\label{TimeRevFinite} Characterization of $\mathcal{T}$ in finite dimensions}

The purpose of this section is to make more precise what kind of mappings that the list of properties in Definition \ref{PropertiesTimeReversal} specifies, how they relate to transposes, as well as deriving some further properties that will be useful for the subsequent derivations. Throughout this section we assume that the underlying Hilbert space is finite-dimensional, although some of the results would be straightforward to extend to the infinite-dimensional case.

For a finite-dimensional Hilbert space $\mathcal{H}$, we do in the following let $\mathcal{L}(\mathcal{H})$ denote the set of linear operators on $\mathcal{H}$. Our first general observation is that if  $\mathcal{T}_1$ and $\mathcal{T}_2$ are time-reversals on two different finite-dimensional Hilbert spaces,  then $\mathcal{T}_1\otimes\mathcal{T}_2$ is also a time-reversal. It turns out that each single property in Definition \ref{PropertiesTimeReversal} is separately preserved under the tensor product. The proof can be obtained via decompositions $Q = \sum_{mn}Q_1^{(m)}\otimes Q_2^{(n)}$, where $Q_1^{(m)}$ and $Q_2^{(n)}$ are operators on $\mathcal{H}_1$ and $\mathcal{H}_2$, respectively.

\begin{Lemma}
\label{PreservesIdentity}
If $\mathcal{T}$ is a linear map that satisfies conditions (\ref{Prop3}) and (\ref{Prop6}), then $\mathcal{T}(\hat{1}) = \hat{1}$.\\
Moreover if $\mathcal{T}$ is a linear map that satisfies conditions (\ref{Prop3}), (\ref{Prop4}), and  (\ref{Prop6}), then $\mathcal{T}$ maps unitary operators to unitary operators.
\end{Lemma}
\begin{proof}
By applying $\mathcal{T}$ to the trivial identity $\mathcal{T}(\hat{1}) = \mathcal{T}(\hat{1})\hat{1}$ and use property  (\ref{Prop6})
it follows that 
$\hat{1} = \mathcal{T}\big(\mathcal{T}(\hat{1})\big) =   \mathcal{T}\big(\mathcal{T}(\hat{1})\hat{1}\big) 
=  \mathcal{T}(\hat{1})\mathcal{T}(\mathcal{T}(\hat{1}))
=  \mathcal{T}(\hat{1})\hat{1}= \mathcal{T}(\hat{1})$,
where the third equality follows by (\ref{Prop3}).

If we now furthermore assume that $\mathcal{T}$ satisfies  (\ref{Prop4}) and that $V$ is a unitary operator, then $\mathcal{T}(V)\mathcal{T}(V)^{\dagger} = \mathcal{T}(V)\mathcal{T}(V^{\dagger}) = \mathcal{T}(V^{\dagger}V) =  \mathcal{T}(\hat{1}) =\hat{1}$, and analogously $\mathcal{T}(V)^{\dagger}\mathcal{T}(V) = \hat{1}$. Hence, $\mathcal{T}(V)$ is unitary.
\end{proof}

\begin{Lemma}
\label{ReorderHermImplyPos}
Let  $\mathcal{T}$ be linear.  
If $\mathcal{T}$ satisfies (\ref{Prop3})  and (\ref{Prop4}), then $\mathcal{T}$ is positive, i.e., $Q\geq 0$  $ \Rightarrow$  $\mathcal{T}(Q)\geq 0$.
\end{Lemma}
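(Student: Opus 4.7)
The plan is to exploit the standard fact that every positive semi-definite operator on a finite-dimensional Hilbert space can be written in the form $A^{\dagger}A$ for some operator $A$ (for instance, $A = \sqrt{Q}$ works). Once $Q$ is written this way, the two hypotheses (\ref{Prop3}) and (\ref{Prop4}) together force $\mathcal{T}(Q)$ to have the same shape, which is manifestly positive.

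Concretely, I would proceed as follows. Assume $Q \geq 0$ on a finite-dimensional $\mathcal{H}$, and pick any $A \in \mathcal{L}(\mathcal{H})$ with $Q = A^{\dagger}A$. Apply $\mathcal{T}$ and use (\ref{Prop3}) to obtain
\begin{equation*}
\mathcal{T}(Q) = \mathcal{T}(A^{\dagger}A) = \mathcal{T}(A)\,\mathcal{T}(A^{\dagger}).
\end{equation*}
Then use (\ref{Prop4}) to rewrite $\mathcal{T}(A^{\dagger}) = \mathcal{T}(A)^{\dagger}$, so
\begin{equation*}
\mathcal{T}(Q) = \mathcal{T}(A)\,\mathcal{T}(A)^{\dagger} \geq 0.
\end{equation*}
This establishes positivity. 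Note the characteristic feature of transpose-like time-reversals: the order reversal in (\ref{Prop3}) conspires with the Hermitian-adjoint compatibility in (\ref{Prop4}) to produce $BB^{\dagger}$ from $A^{\dagger}A$, keeping the result positive.

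There is essentially no obstacle here; the main point is simply to recognize that one does \emph{not} need property (\ref{Prop6}) or (\ref{Prop2}) for positivity — only the multiplicative-reversal and the $\dagger$-compatibility. The finite dimensionality is used only to guarantee the existence of the square root $\sqrt{Q}$; in infinite dimensions the same argument works for bounded positive operators via the bounded functional calculus, so the statement extends without real effort.
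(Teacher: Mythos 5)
Your proof is correct and follows essentially the same route as the paper: factor the positive operator as a product of an operator and its adjoint, use (\ref{Prop3}) to reverse the order, and use (\ref{Prop4}) to recognize the result as $B B^{\dagger} \geq 0$ (the paper writes $Q = AA^{\dagger}$ and lands on $\mathcal{T}(A)^{\dagger}\mathcal{T}(A)$, but this is the same argument). Your observation that only (\ref{Prop3}) and (\ref{Prop4}) are needed matches the lemma's hypotheses exactly.
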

\begin{proof}
If $Q\geq 0$ then there exists $A$ such that $Q = AA^{\dagger}$. Hence
$\mathcal{T}(Q) =  \mathcal{T}(AA^{\dagger})
=  \mathcal{T}(A^{\dagger})\mathcal{T}(A)
=  \mathcal{T}(A)^{\dagger}\mathcal{T}(A)$,
where the second equality follows by  (\ref{Prop3}), and the third by  (\ref{Prop4}). Hence, $\mathcal{T}(Q)\geq 0$. 
\end{proof}

\begin{Lemma}
\label{TpreservesPurity}
Let $\mathcal{T}$ be a linear map.
\begin{itemize}
\item If $\mathcal{T}$ satisfies (\ref{Prop3}) and  (\ref{Prop4}), then $\mathcal{T}$ maps projectors to projectors. Furthermore, pairwise orthogonal projectors are mapped to pairwise orthogonal projectors.
\item If $\mathcal{T}$ satisfies (\ref{Prop3}),  (\ref{Prop4}), and (\ref{Prop2}), then $\mathcal{T}$ preserves the dimension of the projected subspaces. In particular,  $\mathcal{T}$ preserves purity, i.e., if $|\psi\rangle\in\mathcal{H}$ is normalized, then there exists a normalized $|\chi_{\psi}\rangle\in\mathcal{H}$  such that $\mathcal{T}(|\psi\rangle\langle\psi|) = |\chi_{\psi}\rangle\langle\chi_{\psi}|$.
\item If $\mathcal{T}$ satisfies (\ref{Prop3}),  (\ref{Prop4}), (\ref{Prop2}), and if the underlying Hilbert space is finite-dimensional, then $\mathcal{T}(\hat{1}) = \hat{1}$.
\end{itemize}
\end{Lemma}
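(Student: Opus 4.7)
The plan is to take the three bullets one at a time, using (\ref{Prop3}) and (\ref{Prop4}) to transport the algebraic characterizations of ``projector'' and ``orthogonality'' across $\mathcal{T}$, and then feed the trace condition (\ref{Prop2}) into rank counting. I do not anticipate a genuine obstacle; the subtlety is merely keeping track of the order-reversal built into (\ref{Prop3}).

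For the first bullet, given a projector $P = P^{\dagger} = P^{2}$, I would first apply (\ref{Prop4}) to obtain $\mathcal{T}(P)^{\dagger} = \mathcal{T}(P^{\dagger}) = \mathcal{T}(P)$, so $\mathcal{T}(P)$ is Hermitian, and then apply (\ref{Prop3}) with $A=B=P$ to obtain $\mathcal{T}(P)\mathcal{T}(P) = \mathcal{T}(PP) = \mathcal{T}(P)$, so $\mathcal{T}(P)$ is idempotent. Hence $\mathcal{T}(P)$ is a projector. For pairs, if $PP' = P'P = 0$, then the anti-multiplicativity in (\ref{Prop3}) combined with linearity yields $\mathcal{T}(P)\mathcal{T}(P') = \mathcal{T}(P'P) = \mathcal{T}(0) = 0$ and symmetrically $\mathcal{T}(P')\mathcal{T}(P) = 0$, so the images are orthogonal projectors.

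For the second bullet, I would use that for any finite-rank projector $P$ the trace equals the rank. By (\ref{Prop2}), $\Tr \mathcal{T}(P) = \Tr P$, while by the first bullet $\mathcal{T}(P)$ is itself a projector, so its trace is again its rank. Consequently $\mathcal{T}$ preserves the dimension of the projected subspace. Applied to $P = |\psi\rangle\langle\psi|$ this says that $\mathcal{T}(|\psi\rangle\langle\psi|)$ is a rank-one Hermitian idempotent, which must therefore be of the form $|\chi_{\psi}\rangle\langle\chi_{\psi}|$ for some normalized $|\chi_{\psi}\rangle\in\mathcal{H}$.

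For the third bullet, in finite dimension $d$ the identity $\hat{1}$ is itself a projector of rank $d$, so by the first two bullets $\mathcal{T}(\hat{1})$ is a projector of rank $d$ on a $d$-dimensional space, and the only such operator is $\hat{1}$ itself. Note that this route does \emph{not} invoke (\ref{Prop6}) and is therefore genuinely different from the argument in Lemma~\ref{PreservesIdentity}; the place where finite-dimensionality is really used is exactly the ``rank equals trace'' step, which is why the hypothesis appears in this bullet and not in the previous ones.
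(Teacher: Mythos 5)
Your proof is correct and follows essentially the same route as the paper's: transporting the algebraic characterization $P=P^{\dagger}=P^{2}$ through (\ref{Prop3}) and (\ref{Prop4}), using $\Tr$-preservation to equate ranks, and identifying $\mathcal{T}(\hat{1})$ as the unique full-rank projector in finite dimension. Your closing remark that (\ref{Prop6}) is never needed here is consistent with the paper's own argument, which likewise does not invoke it.
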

For the third item it is necessary to restrict to finite dimensions. As an example, let $\{|n\rangle\}_{n\in \mathbb{N}}$ be a complete orthonormal basis, and define $\mathcal{T}(|n\rangle\langle n'|) := |2n'\rangle\langle 2n|$. This satisfies  (\ref{Prop3}),  (\ref{Prop4}) and (\ref{Prop2}), but $\mathcal{T}(\hat{1}) \neq \hat{1}$.
\begin{proof}
A linear operator $P$ is a projector if and only if $P^{2} = P$ and $P^{\dagger} =P$ (and $P$ is bounded in the infinite-dimensional case).
Assuming that $P$ is a projector it follow by properties (\ref{Prop3}) and (\ref{Prop4}) that $\mathcal{T}(P)$ is  also a projector. (If $\mathcal{T}$ preserves boundedness, then the boundedness of $\mathcal{T}(P)$ is guaranteed.) 
Two projectors are orthogonal if and only if $P_1P_2 =0$. Thus by property (\ref{Prop3}) it follows that $\mathcal{T}(P_2)\mathcal{T}(P_1) = \mathcal{T}(P_1P_2) = 0$. 
The dimension of the subspace onto which a projector  $P$ projects is given by $\Tr(P)$. By assumption (\ref{Prop2}) it follows that  $\Tr(\mathcal{T}(P)) = \Tr(P)$. Hence, the dimension is preserved.
In the case that the Hilbert space is finite-dimensional, then $\Tr\mathcal{T}(\hat{1}) = \Tr\hat{1}$ is the dimension of the Hilbert space. Hence $\mathcal{T}(\hat{1})$ is a projector with the dimension of the Hilbert space, and thus $\mathcal{T}(\hat{1}) = \hat{1}$.
\end{proof}

In the following we denote the standard operator norm by $\Vert Q\Vert := \sup_{\psi:\vert\psi\Vert = 1}\Vert Q|\psi\rangle\Vert$, and the trace norm by $\Vert Q\Vert_1 := \Tr\sqrt{QQ^{\dagger}} = \Tr\sqrt{Q^{\dagger}Q}$ (where the last equality in the finite-dimensional case follows by the singular value decomposition of $Q$).
\begin{Lemma}
\label{Tpreservesbound}
Let $\mathcal{T}$ be a time-reversal as in Definition \ref{PropertiesTimeReversal}, then 
$\Vert \mathcal{T}(Q)\Vert =  \Vert Q\Vert$, $\Vert \mathcal{T}(Q)\Vert_1 =  \Vert Q\Vert_1$.
\end{Lemma}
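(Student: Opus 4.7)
The plan is to reduce both norm equalities to the statement that $\mathcal{T}$ preserves the spectrum with multiplicities on positive operators, which follows from the earlier Lemma \ref{TpreservesPurity} together with linearity. Once that is in hand, the general (non-positive) case is obtained via the standard identities $\Vert Q\Vert^2 = \Vert QQ^{\dagger}\Vert$ and $\Vert Q\Vert_1 = \Tr\sqrt{Q^{\dagger}Q}$, combined with the fact that the properties (\ref{Prop3}) and (\ref{Prop4}) force $\mathcal{T}$ to intertwine $QQ^{\dagger}$ with $\mathcal{T}(Q)^{\dagger}\mathcal{T}(Q)$.

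First I would take a positive operator $P$ with its (finite-dimensional) spectral decomposition $P = \sum_{k}\lambda_{k}P_{k}$, where the $P_{k}$ are pairwise orthogonal projectors. Linearity of $\mathcal{T}$ gives $\mathcal{T}(P) = \sum_{k}\lambda_{k}\mathcal{T}(P_{k})$. By Lemma \ref{TpreservesPurity}, each $\mathcal{T}(P_{k})$ is again a projector, these projectors are pairwise orthogonal, and $\dim\mathcal{T}(P_{k}) = \dim P_{k}$. Hence the expression above is a valid spectral decomposition of $\mathcal{T}(P)$ with the same eigenvalues and the same multiplicities as $P$. In particular, $\Vert\mathcal{T}(P)\Vert = \max_{k}\lambda_{k} = \Vert P\Vert$, and $\Tr\sqrt{\mathcal{T}(P)} = \sum_{k}\sqrt{\lambda_{k}}\dim P_{k} = \Tr\sqrt{P}$, which in particular yields $\Vert\mathcal{T}(P)\Vert_{1} = \Vert P\Vert_{1}$ when applied to $P$ itself.

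Second, for an arbitrary $Q$ I would use properties (\ref{Prop3}) and (\ref{Prop4}) to rewrite
\begin{equation*}
\mathcal{T}(Q)^{\dagger}\mathcal{T}(Q) = \mathcal{T}(Q^{\dagger})\mathcal{T}(Q) = \mathcal{T}(QQ^{\dagger}),
\end{equation*}
where the first equality is (\ref{Prop4}) and the second is (\ref{Prop3}). Since $QQ^{\dagger}$ is positive, the first step applies, giving $\Vert\mathcal{T}(Q)\Vert^{2} = \Vert\mathcal{T}(Q)^{\dagger}\mathcal{T}(Q)\Vert = \Vert\mathcal{T}(QQ^{\dagger})\Vert = \Vert QQ^{\dagger}\Vert = \Vert Q\Vert^{2}$, and analogously $\Vert\mathcal{T}(Q)\Vert_{1} = \Tr\sqrt{\mathcal{T}(QQ^{\dagger})} = \Tr\sqrt{QQ^{\dagger}} = \Vert Q\Vert_{1}$, where in the last step one uses the well-known identity $\Tr\sqrt{QQ^{\dagger}} = \Tr\sqrt{Q^{\dagger}Q}$.

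The only genuinely non-trivial step is the first one, namely that $\mathcal{T}$ preserves the spectrum with multiplicities on positive operators; everything else is routine rearrangement. In finite dimensions this step is essentially immediate once Lemma \ref{TpreservesPurity} is invoked, so I do not anticipate any real obstacle. (The only thing to watch is that the argument really does rely on finite-dimensionality, both in the spectral decomposition and in the application of Lemma \ref{TpreservesPurity} to conclude $\dim\mathcal{T}(P_{k}) = \dim P_{k}$; this is consistent with the standing assumption of Section \ref{TimeRevFinite}.)
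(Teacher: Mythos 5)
Your proposal is correct, but it takes a genuinely different route from the paper. You first show that $\mathcal{T}$ preserves the spectrum of a positive operator with multiplicities, by pushing the spectral decomposition $P = \sum_k\lambda_kP_k$ through $\mathcal{T}$ and invoking Lemma \ref{TpreservesPurity} to keep the eigenprojectors orthogonal and of the same rank; you then reduce the general case to this via the identity $\mathcal{T}(Q)^{\dagger}\mathcal{T}(Q) = \mathcal{T}(QQ^{\dagger})$, the C$^*$-identity $\Vert Q\Vert^2 = \Vert QQ^{\dagger}\Vert$, and $\Vert Q\Vert_1 = \Tr\sqrt{QQ^{\dagger}}$. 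The paper instead proves the operator-norm equality without any spectral decomposition: it establishes the pointwise identity $\Vert\mathcal{T}(Q)|\psi\rangle\Vert = \Vert Q^{\dagger}|\chi_{\psi}\rangle\Vert$ (with $|\chi_{\psi}\rangle\langle\chi_{\psi}| = \mathcal{T}(|\psi\rangle\langle\psi|)$), deduces $\Vert\mathcal{T}(Q)\Vert\leq\Vert Q\Vert$, and gets the reverse inequality by substituting $\mathcal{T}(Q)$ for $Q$ and using $\mathcal{T}^2 = I$; for the trace norm it shows $\sqrt{\mathcal{T}(Q^{\dagger}Q)} = \mathcal{T}(\sqrt{Q^{\dagger}Q})$ by uniqueness of the positive square root and then applies trace preservation. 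Your argument is more unified and actually yields more --- $\mathcal{T}$ preserves all singular values of $Q$, hence every unitarily invariant norm --- but it leans harder on finite-dimensionality (finite spectral sums and rank counting), whereas the paper's two-sided-inequality trick and square-root-uniqueness argument transfer more readily to bounded operators on infinite-dimensional spaces. One small point to tighten: to call $\sum_k\lambda_k\mathcal{T}(P_k)$ a spectral decomposition of $\mathcal{T}(P)$ with the \emph{same} multiplicities you should note that $\sum_k\mathcal{T}(P_k) = \mathcal{T}(\hat{1}) = \hat{1}$ (third item of Lemma \ref{TpreservesPurity}), so no spectral weight is lost; with that remark included the proof is complete.
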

\begin{proof}
First we note that 
$\Vert \mathcal{T}(Q)|\psi\rangle\Vert^{2} 
=  \Tr\big(|\psi\rangle\langle\psi|\mathcal{T}(QQ^{\dagger})\big) 
=  \Tr\big(QQ^{\dagger}|\chi_{\psi}\rangle\langle\chi_{\psi}|\big)
=  \Vert Q^{\dagger}|\chi_{\psi}\rangle\Vert^{2}$, where $|\chi_{\psi}\rangle$ is such that $|\chi_{\psi}\rangle\langle\chi_{\psi}| = \mathcal{T}(|\psi\rangle\langle\psi|)$ as in Lemma \ref{TpreservesPurity}. Consequently, $\Vert \mathcal{T}(Q)\Vert 
\leq  \Vert Q^{\dagger}\Vert$. By $ \Vert Q^{\dagger}\Vert= \Vert Q\Vert$ (see, e.g., Theorem 3.9-2 in \cite{Kreyszig}) it  thus follows that  $\Vert \mathcal{T}(Q)\Vert \leq  \Vert Q\Vert$.
By substituting $Q$ with $\mathcal{T}(Q)$ in the above reasoning, and using $\mathcal{T}^2 = I$ one obtains
$\Vert Q\Vert  \leq  \Vert \mathcal{T}(Q)\Vert$.
Hence, $\Vert \mathcal{T}(Q)\Vert =  \Vert Q\Vert$.

Next we make the observation that
$\Vert \mathcal{T}(Q)\Vert_1 =\Tr\sqrt{\mathcal{T}(Q)\mathcal{T}(Q)^{\dagger}} = \Tr\sqrt{\mathcal{T}(Q^{\dagger}Q)}$. By Lemma \ref{ReorderHermImplyPos} we know that $\mathcal{T}$ maps positive operators to positive operators. Hence, $\mathcal{T}(Q^{\dagger}Q)$ is a positive operator, and thus $\sqrt{\mathcal{T}(Q^{\dagger}Q)}$ is well defined and positive (see e.g.~section 9.4 in \cite{Kreyszig}).
By Lemma  \ref{ReorderHermImplyPos} we  know that  $\mathcal{T}(\sqrt{Q^{\dagger}Q})\geq 0$. Moreover, $\mathcal{T}(\sqrt{Q^{\dagger}Q})\mathcal{T}(\sqrt{Q^{\dagger}Q})
=  \mathcal{T}(Q^{\dagger}Q)$. By the reasoning above we thus know that both $\sqrt{\mathcal{T}(Q^{\dagger}Q)}$ and 
  $\mathcal{T}(\sqrt{Q^{\dagger}Q})$ are positive square roots of $\mathcal{T}(Q^{\dagger}Q)$.
However, the positive square root of a positive operator is unique (Theorem 9.4-2 in \cite{Kreyszig}), and thus  
$\sqrt{\mathcal{T}(Q^{\dagger}Q)} =  \mathcal{T}(\sqrt{Q^{\dagger}Q})$. Consequently $\Vert \mathcal{T}(Q)\Vert_1 
 =  \Tr\sqrt{\mathcal{T}(Q^{\dagger}Q)}
=  \Tr \mathcal{T}(\sqrt{Q^{\dagger}Q})
=  \Vert Q\Vert_1$.
\end{proof}

\begin{Lemma}
\label{TOnHermitean}
On a finite-dimensional complex Hilbert space, let $H$ be Hermitian with an orthogonal family of eigenprojectors $\{P_m\}_m$ and corresponding eigenvalues $h_m$, such that $h_m\neq h_{m'}$ whenever $m\neq m'$, and $H = \sum_{m}h_mP_m$. If $\mathcal{T}$ is  a time reversal such that $\mathcal{T}(H) = H$, then $\mathcal{T}(P_m) = P_m$.  Hence, $\mathcal{T}$ preserves the eigenspaces.
\end{Lemma}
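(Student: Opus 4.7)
The plan is to show that $\mathcal{T}$ fixes every polynomial in $H$, and then observe that in finite dimensions with distinct eigenvalues each spectral projector $P_m$ is itself a polynomial in $H$.

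First I would establish $\mathcal{T}(\hat{1}) = \hat{1}$, which is already available from Lemma \ref{PreservesIdentity} (and in finite dimensions also from the third bullet of Lemma \ref{TpreservesPurity}). Combined with linearity this gives $\mathcal{T}(c\hat{1}) = c\hat{1}$ for every scalar $c$. Next I would prove by induction on $n$ that $\mathcal{T}(H^n) = H^n$: the base case $n=0$ is just $\mathcal{T}(\hat{1})=\hat{1}$ and the case $n=1$ is the hypothesis $\mathcal{T}(H)=H$. For the inductive step, using property (\ref{Prop3}),
\begin{equation*}
\mathcal{T}(H^{n}) = \mathcal{T}(H\cdot H^{n-1}) = \mathcal{T}(H^{n-1})\mathcal{T}(H) = H^{n-1}H = H^{n}.
\end{equation*}
The order-reversal built into (\ref{Prop3}) is harmless here because $H$ commutes with any power of itself. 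By linearity it then follows that $\mathcal{T}(p(H)) = p(H)$ for every polynomial $p$.

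To conclude I would invoke the Lagrange interpolation formula. Since the eigenvalues $\{h_m\}$ are pairwise distinct and finite in number, the projector onto the $m$-th eigenspace admits the polynomial representation
\begin{equation*}
P_m = \prod_{k \neq m} \frac{H - h_k \hat{1}}{h_m - h_k},
\end{equation*}
which is manifestly a polynomial $p_m(H)$. Applying the previous step gives $\mathcal{T}(P_m) = \mathcal{T}(p_m(H)) = p_m(H) = P_m$, which is the claim.

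I do not anticipate a real obstacle: the only point to check carefully is that property (\ref{Prop3}) is compatible with the inductive step, which works precisely because $H$ and $H^{n-1}$ commute, so the reversal of ordering makes no difference. The non-degeneracy hypothesis on the spectrum and the finiteness of the Hilbert space enter only through the polynomial representability of the spectral projectors; if some eigenvalues coincided one would get the same statement for the sum of the corresponding eigenprojectors rather than for each $P_m$ individually.
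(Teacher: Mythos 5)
Your proof is correct, but it takes a genuinely different route from the paper's. The paper first invokes Lemma \ref{TpreservesPurity} to conclude that each $\mathcal{T}(P_m)$ is a projector onto a subspace of the same dimension as the range of $P_m$, and then computes $H\mathcal{T}(P_m) = \mathcal{T}(H)\mathcal{T}(P_m) = \mathcal{T}(P_m H) = h_m \mathcal{T}(P_m)$, so that $\mathcal{T}(P_m)$ is a projector inside the $h_m$-eigenspace of the correct rank and hence equals $P_m$. Your argument instead observes that $\mathcal{T}$ fixes all polynomials in $H$ (the order reversal in (\ref{Prop3}) being irrelevant for commuting operators) and that each $P_m$ is the Lagrange interpolation polynomial $\prod_{k\neq m}(H-h_k\hat{1})/(h_m-h_k)$ evaluated at $H$. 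Your route is leaner on hypotheses: it uses only linearity, the anti-multiplicativity (\ref{Prop3}), and $\mathcal{T}(\hat{1})=\hat{1}$ (via Lemma \ref{PreservesIdentity}), and in particular does not need the adjoint property (\ref{Prop4}) or trace preservation (\ref{Prop2}), so it would apply verbatim to any unital linear (anti-)homomorphism. The paper's route buys a slightly more structural picture — it exhibits $\mathcal{T}(P_m)$ as an eigenprojector directly and leans on the rank-preservation machinery that is reused elsewhere in Appendix \ref{SecTimeReversal}. Your closing remark about the degenerate case (that one then only fixes the sum of eigenprojectors sharing an eigenvalue) is also accurate, and mirrors the role the non-degeneracy hypothesis plays in the paper's dimension-counting step.
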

\begin{proof}
By Lemma \ref{TpreservesPurity} we know that each $\mathcal{T}(P_m)$ is a projector, and that it projects onto a subspace of the same dimension as $P_m$. Next one can confirm that $H\mathcal{T}(P_m) = \mathcal{T}(H)\mathcal{T}(P_m) = \mathcal{T}(P_mH) = h_m\mathcal{T}(P_m)$. Hence, $\mathcal{T}(P_m)$ must be an eigenprojector corresponding to eigenvalue $h_m$. Since $\mathcal{T}(P_m)$ and $P_m$ projects on spaces of the same dimension, we must have $\mathcal{T}(P_m) = P_m$.
\end{proof}

Given an orthonormal basis $\{|k\rangle\}_{k=1}^{K}$ of a finite-dimensional Hilbert space $\mathcal{H}$, we define the transpose with respect to this basis as
\begin{equation}
\label{DefTranspose}
Q^{\tau} := \sum_{kk'}|k'\rangle\langle k|Q|k'\rangle\langle k|.
\end{equation}
Since the transpose depends on the choice of basis, an obvious question is what happens when we make a change of basis. The following lemma, which we state without proof, specifies how one can express the new transpose in terms of the old.
\begin{Lemma}
\label{dkjgdg}
On a finite-dimensional complex Hilbert space $\mathcal{H}$ let the transpose $\tau_{\mathrm{old}}$ be defined with respect to an orthonormal basis  $\{|\mathrm{old}_k\rangle\}_{k}$. Let the transpose  $\tau_{\mathrm{new}}$ be defined with respect to the orthonormal basis $\{|\mathrm{new}_k\rangle\}_{k}$, where $|\mathrm{new}_k\rangle := W|\mathrm{old}_k\rangle$ for some unitary operator $W$ on $\mathcal{H}$. Then $W^{\tau_{\textrm{new}}} =  WW^{\tau_{\mathrm{old}}}W^{\dagger}$, and the new transpose $\tau_{\mathrm{new}}$ can be expressed in terms of the old basis as 
\begin{equation}
Q^{\tau_{\mathrm{new}}} = WW^{\tau_{\mathrm{old}}}Q^{\tau_{\mathrm{old}}}(WW^{\tau_{\mathrm{old}}})^{\dagger}.
\end{equation}
Similarly, the old transpose $\tau_{\textrm{old}}$ can be expressed in terms of the new basis as
\begin{equation}
Q^{\tau_{\textrm{old}}} = (W^{\tau_{\textrm{new}}}W)^{\dagger} Q^{\tau_{\textrm{new}}}W^{\tau_{\textrm{new}}}W.
\end{equation}
\end{Lemma}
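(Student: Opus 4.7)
My plan is to derive all three identities from a single direct computation of $Q^{t_{\mathrm{new}}}$ reduced back to the old basis. Starting from the defining formula (\ref{DefTranspose}), I would substitute $|\mathrm{new}_k\rangle = W|\mathrm{old}_k\rangle$ to get
\begin{equation*}
Q^{t_{\mathrm{new}}} = \sum_{kk'} W|\mathrm{old}_{k'}\rangle \langle\mathrm{old}_k|W^{\dagger}QW|\mathrm{old}_{k'}\rangle\langle\mathrm{old}_k|W^{\dagger}.
\end{equation*}
Pulling $W$ out on the left and $W^{\dagger}$ on the right, the remaining sum is precisely the old transpose of the operator $W^{\dagger}QW$, giving the compact relation $Q^{t_{\mathrm{new}}} = W(W^{\dagger}QW)^{t_{\mathrm{old}}}W^{\dagger}$.

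Next I would invoke two elementary algebraic properties of any basis-transpose, which both follow in one line from (\ref{DefTranspose}): the product rule $(AB)^{t_{\mathrm{old}}} = B^{t_{\mathrm{old}}}A^{t_{\mathrm{old}}}$ and the adjoint rule $(A^{\dagger})^{t_{\mathrm{old}}} = (A^{t_{\mathrm{old}}})^{\dagger}$. Applied to $(W^{\dagger}QW)^{t_{\mathrm{old}}}$ they yield $W^{t_{\mathrm{old}}} Q^{t_{\mathrm{old}}} (W^{t_{\mathrm{old}}})^{\dagger}$, and substituting back gives the second claimed identity
\begin{equation*}
Q^{t_{\mathrm{new}}} = W W^{t_{\mathrm{old}}}\, Q^{t_{\mathrm{old}}}\, (WW^{t_{\mathrm{old}}})^{\dagger}.
\end{equation*}

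For the first identity I would specialize to $Q=W$ and use the fact that the transpose of a unitary is again unitary (a direct consequence of the product and adjoint rules applied to $W^{\dagger}W = WW^{\dagger} = \hat{1}$), so that $W^{t_{\mathrm{old}}}(W^{t_{\mathrm{old}}})^{\dagger} = \hat{1}$. The middle two factors then collapse and I obtain $W^{t_{\mathrm{new}}} = WW^{t_{\mathrm{old}}}W^{\dagger}$. For the third identity, the second identity is a unitary conjugation (since $WW^{t_{\mathrm{old}}}$ is a product of two unitaries) and hence invertible by the adjoint; the first identity gives $W^{t_{\mathrm{new}}}W = WW^{t_{\mathrm{old}}}$, allowing me to rewrite the inversion in the form $Q^{t_{\mathrm{old}}} = (W^{t_{\mathrm{new}}}W)^{\dagger} Q^{t_{\mathrm{new}}} W^{t_{\mathrm{new}}}W$ as stated.

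I do not anticipate a significant obstacle: the proof is essentially bookkeeping with the product and adjoint rules for the transpose and the observation that transposes of unitaries are unitary. The only mildly delicate point is being consistent about whether $(\cdot)^{t_{\mathrm{old}}}$ or $(\cdot)^{t_{\mathrm{new}}}$ is being applied to intermediate expressions, which is why I would keep the entire computation inside the old basis until the final step.
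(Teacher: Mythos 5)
Your proof is correct: the reduction $Q^{t_{\mathrm{new}}} = W(W^{\dagger}QW)^{t_{\mathrm{old}}}W^{\dagger}$ follows directly from the definition (\ref{DefTranspose}), the product and adjoint rules for a basis-transpose are elementary, and the specialization $Q=W$ together with unitarity of $W^{t_{\mathrm{old}}}$ and the inversion step all check out. The paper states this lemma explicitly without proof, remarking that it is straightforward to verify, and your direct computation is precisely the verification intended; there is nothing in the paper to contrast it with.
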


The following Lemma is a special case of Autonne-Takagi's decomposition, see e.g.~Corollary 4.4.4 in \cite{HornJohnson}. 
\begin{Lemma}[Special case of Autonne-Takagi's decomposition]
\label{TakagiUnitarySymmetric}
Let $\mathcal{H}$ be a finite-dimensional complex Hilbert space. Let $U$ be a unitary operator on $\mathcal{H}$. 
Then $U^{\tau} = U$ (with respect to a given orthonormal basis of $\mathcal{H}$) if and only if there exists a unitary operator $W$ on $\mathcal{H}$ such that $
U = WW^{\tau}$.
\end{Lemma}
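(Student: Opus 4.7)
The plan is to establish the two directions of the biconditional separately. The \emph{if} direction is immediate from the two algebraic properties of the transpose used throughout the paper, namely $(AB)^t = B^t A^t$ and $(A^t)^t = A$: assuming $U = WW^t$ with $W$ unitary, one computes $U^t = (WW^t)^t = (W^t)^t W^t = W W^t = U$, so no further ingredients are needed.

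For the \emph{only if} direction, the plan is to exploit the spectral theorem for unitaries together with the uniqueness of spectral projectors. Since $U$ is unitary on a finite-dimensional complex Hilbert space, one has $U = \sum_j \lambda_j P_j$ with distinct eigenvalues $\lambda_j$ on the unit circle and pairwise-orthogonal orthogonal projectors $P_j$ satisfying $\sum_j P_j = \hat{1}$. Using the explicit definition (\ref{DefTranspose}) of the transpose, a direct matrix-element calculation shows that whenever $P$ is an orthogonal projector so is $P^t$: Hermiticity of $P$ translates to Hermiticity of $P^t$, and $(P^t)^2 = (P^2)^t = P^t$ follows from $(AB)^t = B^t A^t$. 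Moreover, $P_{j'} P_j = 0$ gives $P_j^t P_{j'}^t = (P_{j'}P_j)^t = 0$ for $j \neq j'$, and $\sum_j P_j^t = \hat{1}^t = \hat{1}$. Hence $U^t = \sum_j \lambda_j P_j^t$ is also a spectral decomposition of $U$ with the same eigenvalues $\lambda_j$, and by uniqueness of the spectral projectors (each $P_j$ is intrinsically the projector onto the $\lambda_j$-eigenspace of $U$) we conclude $P_j^t = P_j$ for every $j$.

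With the symmetry of the spectral projectors in hand, constructing $W$ is straightforward. For each $j$, select a unit-modulus square root $\mu_j$ with $\mu_j^2 = \lambda_j$ (any branch works), and set $W := \sum_j \mu_j P_j$. Orthogonality of the $P_j$ together with $|\mu_j|=1$ makes $W$ unitary; the identity $P_j^t = P_j$ gives $W^t = \sum_j \mu_j P_j^t = W$; and finally $WW^t = W^2 = \sum_j \mu_j^2 P_j = \sum_j \lambda_j P_j = U$, closing the argument. The only genuinely non-trivial step is the transfer of symmetry from $U$ to its spectral projectors, which rests on the uniqueness of the spectral decomposition; everything else is routine bookkeeping with the algebraic properties of the transpose.
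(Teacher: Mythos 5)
Your argument is correct in both directions. The paper itself offers no proof of this lemma: it simply states it as a special case of the Autonne--Takagi decomposition and cites Corollary 4.4.4 of Horn and Johnson, where the general statement is that any complex symmetric matrix $A$ admits a factorization $A = W\Sigma W^{t}$ with $W$ unitary and $\Sigma$ diagonal and nonnegative (for unitary $A$ all singular values equal one, whence $\Sigma = \hat{1}$ and $A = WW^{t}$). Your route is genuinely different and more self-contained: rather than invoking the full Autonne--Takagi machinery, you use only the spectral theorem for unitaries, transfer the symmetry $U^{t}=U$ to the eigenprojectors via uniqueness of the spectral resolution (the verification that each $P_j^{t}$ is again an orthogonal projector, that the family is pairwise orthogonal and resolves the identity, and hence that $P_j^{t}=P_j$, is all sound), and then take $W$ to be a unit-modulus square root of $U$ within its own commutant, so that $W^{t}=W$ and $WW^{t}=W^{2}=U$ follow immediately. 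What this buys is an elementary, fully explicit proof that avoids an external reference; what it gives up is generality, since the spectral-square-root trick works precisely because $U$ is normal and its singular values are trivial, and would not yield the decomposition for a general complex symmetric matrix. As a minor stylistic point, your $W$ is automatically symmetric ($W^{t}=W$), which is stronger than what the lemma demands, but this costs nothing and slightly streamlines the final computation.
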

By combining  Lemma \ref{TakagiUnitarySymmetric} with Lemma \ref{dkjgdg} we can conclude that transformations of transposes are characterized by complex symmetric unitary operators. 
\begin{Proposition}
\label{TcharactFinite}
Let $\mathcal{H}$ be a finite-dimensional complex Hilbert space, and let $\mathcal{B}:=\{|k\rangle\}_{k=1}^{K}$ be an orthonormal basis of $\mathcal{H}$.  Let $\tau$ denote the transpose with respect to the basis $\mathcal{B}$. Let  $\mathcal{T}$ be a linear map on $\mathcal{L}(\mathcal{H})$.
\begin{enumerate}
\item $\mathcal{T}$ satisfies (\ref{Prop3}),  (\ref{Prop4}),  and (\ref{Prop2})  if and only if there exists a unitary operator $U$ on $\mathcal{H}$ such that 
\begin{equation}
\label{lknfnfd}
\mathcal{T}(|k\rangle\langle k'|) = U|k'\rangle\langle k|U^{\dagger},\quad \forall k,k',
\end{equation}
or equivalently
\begin{equation}
\label{ReprTranspose}
\mathcal{T}(Q) = UQ^{\tau}U^{\dagger},\quad \forall Q\in\mathcal{L}(\mathcal{H}).
\end{equation}
Moreover, $U$ is uniquely determined by $\mathcal{T}$ and $\mathcal{B}$ up to a global phase factor.
\item If $\mathcal{T}$ satisfies (\ref{Prop3}),  (\ref{Prop4}),  and (\ref{Prop2}), then the following are equivalent:
\begin{itemize}
\item $\mathcal{T}$ satisfies (\ref{Prop6}).
\item The unitary operator $U$ in (\ref{lknfnfd}) satisfies $U^{\tau} = \pm U $, i.e., $U$ is complex symmetric or complex skew-symmetric. (The choice of global phase factor in $U$ does not affect the property of being symmetric or skew-symmetric.)
\end{itemize}
\item  If $\mathcal{T}$ satisfies (\ref{Prop3}),  (\ref{Prop4}),  and (\ref{Prop2}), then the following are equivalent:
\begin{itemize}
\item There exists an orthonormal basis $\{|\xi_k\rangle\}_{k}$ of $\mathcal{H}$ such that 
\begin{equation}
\label{fghjjjjj}
\mathcal{T}(|\xi_k\rangle\langle\xi_{k'}|) = |\xi_{k'}\rangle\langle\xi_{k}|,\quad \forall k,k'.
\end{equation}
\item The unitary operator $U$ in (\ref{lknfnfd}) satisfies $U^{\tau} = U$, i.e., $U$ is complex symmetric.
\end{itemize}
\end{enumerate}
\end{Proposition}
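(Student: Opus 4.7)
The plan is to prove the three parts in sequence, leveraging the ground-work Lemmas \ref{PreservesIdentity}--\ref{TOnHermitean} together with Autonne--Takagi (Lemma \ref{TakagiUnitarySymmetric}) and the change-of-basis formula for transposes (Lemma \ref{dkjgdg}). The easy direction of Part 1 is a direct verification that $Q\mapsto UQ^{t}U^{\dagger}$ satisfies (\ref{Prop3}), (\ref{Prop4}), (\ref{Prop2}), using $(AB)^{t}=B^{t}A^{t}$, $(A^{\dagger})^{t}=(A^{t})^{\dagger}$ and the trace-invariance of the transpose. The uniqueness of $U$ up to a global phase follows from Schur's lemma: if $UQ^{t}U^{\dagger}=VQ^{t}V^{\dagger}$ for every $Q$, then $V^{\dagger}U$ commutes with all operators in $\mathcal{L}(\mathcal{H})$.

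For the nontrivial direction of Part 1, Lemmas \ref{PreservesIdentity} and \ref{TpreservesPurity} imply that the orthonormal basis $\{|k\rangle\}$ gets mapped by $\mathcal{T}$ to rank-one orthogonal projectors $\mathcal{T}(|k\rangle\langle k|)=|\phi_{k}\rangle\langle\phi_{k}|$ summing to $\hat 1$; so $\{|\phi_{k}\rangle\}$ is an orthonormal basis. Applying the anti-homomorphism property (\ref{Prop3}) to the products $|k\rangle\langle k'|\cdot|k'\rangle\langle k|$ and $|k'\rangle\langle k|\cdot|k\rangle\langle k'|$ together with (\ref{Prop4}) forces $\mathcal{T}(|k\rangle\langle k'|)=\lambda_{kk'}|\phi_{k'}\rangle\langle\phi_{k}|$ with $|\lambda_{kk'}|=1$ and $\lambda_{k'k}=\overline{\lambda_{kk'}}$. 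Applying (\ref{Prop3}) once more to the triple product $|k\rangle\langle k'|\cdot|k'\rangle\langle k'''|$ yields the cocycle relation $\lambda_{kk'''}=\lambda_{k'k'''}\lambda_{kk'}$, which is solved by $\lambda_{kk'}=\mu_{k}\overline{\mu_{k'}}$ for some unimodular phases $\mu_{k}$. Setting $U:=\sum_{k}\overline{\mu_{k}}|\phi_{k}\rangle\langle k|$ (which is unitary because $\{|\phi_{k}\rangle\}$ is orthonormal) gives $U|k'\rangle\langle k|U^{\dagger}=\overline{\mu_{k'}}\mu_{k}|\phi_{k'}\rangle\langle\phi_{k}|$, matching $\mathcal{T}$ on every matrix unit and hence, by linearity, on all of $\mathcal{L}(\mathcal{H})$.

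For Part 2, I would compute $\mathcal{T}^{2}(Q)=U(UQ^{t}U^{\dagger})^{t}U^{\dagger}=U\overline{U}\,Q\,U^{t}U^{\dagger}$, using $(U^{\dagger})^{t}=\overline{U}$ and $(Q^{t})^{t}=Q$. The requirement $\mathcal{T}^{2}=I$ evaluated at $Q=\hat 1$ gives $U\overline{U}\cdot U^{t}U^{\dagger}=\hat 1$, and then $U\overline{U}\,Q\,(U\overline{U})^{-1}=Q$ for all $Q$ forces $U\overline{U}=c\hat 1$ for a scalar $c$. Taking adjoints yields $|c|=1$, and rewriting $U\overline{U}=c\hat 1$ as $U^{t}=cU$ and transposing once more gives $c^{2}=1$, so $c=\pm1$. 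Invariance under $U\mapsto\alpha U$ is immediate. Conversely, if $U^{t}=\pm U$ the same computation shows $\mathcal{T}^{2}=I$.

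For Part 3, the forward direction applies Lemma \ref{dkjgdg} with $|\xi_{k}\rangle=W|k\rangle$: the condition (\ref{fghjjjjj}) means $\mathcal{T}$ equals the transpose in the $|\xi_{k}\rangle$-basis, which according to the lemma coincides with $Q\mapsto WW^{t}Q^{t}(WW^{t})^{\dagger}$, identifying $U=WW^{t}$ (up to phase) and hence $U^{t}=U$. The converse uses Autonne--Takagi (Lemma \ref{TakagiUnitarySymmetric}) to factor a complex symmetric unitary as $U=WW^{t}$; defining $|\xi_{k}\rangle:=W|k\rangle$ and computing $\mathcal{T}(|\xi_{k}\rangle\langle\xi_{k'}|)=UW|k\rangle\langle k'|W^{\dagger})^{t}U^{\dagger}$ using the identity $W^{t}\overline{W}=\hat 1$ (a direct reshuffling of $W^{\dagger}W=\hat 1$) collapses the expression to $W|k'\rangle\langle k|W^{\dagger}=|\xi_{k'}\rangle\langle\xi_{k}|$. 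The main obstacle I anticipate is the bookkeeping for the cocycle in Part 1 and the careful handling of transpose/conjugate interactions throughout Parts 2 and 3, both of which are routine once organized but easy to bungle.
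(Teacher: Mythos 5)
Your proposal is correct and follows essentially the same route as the paper's proof: the image of the matrix units under $\mathcal{T}$ is pinned down via Lemmas \ref{PreservesIdentity} and \ref{TpreservesPurity}, the residual phases are resolved by the same multiplicativity/cocycle argument (your $\lambda_{kk'}=\mu_k\overline{\mu_{k'}}$ is the paper's $z_{k'k''}=z_{1k'}z_{1k''}^{*}$), Part 2 reduces to $U^{t}=e^{i\theta}U$ with $e^{2i\theta}=1$ by the same $\mathcal{T}^2$ computation, and Part 3 uses Lemma \ref{dkjgdg} and Autonne--Takagi exactly as the paper does. The only differences are cosmetic (writing $U\overline{U}$ instead of $(U^{t}U^{\dagger})^{\dagger}$, invoking Schur's lemma explicitly for uniqueness).
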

As a further remark one may note that (\ref{ReprTranspose}) directly implies that $\mathcal{T}$ is a positive but not completely positive map, since it is a composition of a unitary operation and a transpose, and the transpose is not completely positive \cite{Peres96,Horodecki96}.
\begin{proof}[Proof of Proposition \ref{TcharactFinite}]
We start by proving that properties
(\ref{Prop3}),  (\ref{Prop4}), (\ref{Prop2}) implies equation (\ref{lknfnfd}).  
From Lemma \ref{TpreservesPurity} we know that $\{\mathcal{T}(|k\rangle\langle k|)\}_k$ is a set of pairwise orthogonal projectors onto one-dimensional subspaces that span the whole space. This means that there exists a unitary operator $\tilde{U}$ such that 
$\mathcal{T}(|k\rangle\langle k|) = \tilde{U}|k\rangle\langle k|\tilde{U}^{\dagger}$.
Moreover, by (\ref{Prop3}) it follows that $\mathcal{T}(|k\rangle\langle k'|) =  \mathcal{T}(|k\rangle\langle k||k\rangle\langle k'||k'\rangle\langle k'|) 
=  z_{kk'}\tilde{U}|k'\rangle\langle k|\tilde{U}^{\dagger}$, where $z_{kk'}: = \langle k'|\tilde{U}^{\dagger}\mathcal{T}(|k\rangle\langle k'|)\tilde{U}|k\rangle$. By  (\ref{Prop4}) it follows that $z_{kk'}^{*} = z_{k'k}$.
 By using $\mathcal{T}(|k\rangle\langle k|) = \tilde{U}|k\rangle\langle k|\tilde{U}^{\dagger}$ and  (\ref{Prop3})  it follows that 
$z_{k'k}z_{kk''} =  \langle k''|\tilde{U}^{\dagger}\mathcal{T}(|k\rangle\langle k''|)\mathcal{T}(|k\rangle\langle k|)\mathcal{T}(|k'\rangle\langle k|)\tilde{U}|k'\rangle=  z_{k'k''}$.
 One can realize that these two last conditions together imply that $z_{k'k''} = z_{1k'}^{*}z_{1k''}$.
Moreover, $|z_{1k}|^2 = z_{kk} = 1$. Hence, there exist real numbers $\theta_k$ such that 
$z_{k'k''} = e^{i(\theta_{k'}-\theta_{k''})}$. By putting $U := \tilde{U}\sum_{l}e^{-i\theta_l}|l\rangle\langle l|$, we find that $\mathcal{T}(|k\rangle\langle k'|) =   z_{kk'}\tilde{U}|k'\rangle\langle k|\tilde{U}^{\dagger} = U|k'\rangle\langle k|U^{\dagger}$, and thus
(\ref{lknfnfd}) holds. For the opposite implication, assume that there exists a unitary $U$ such that (\ref{ReprTranspose}) holds. 
It is straightforward to confirm that each of the properties (\ref{Prop3}),  (\ref{Prop4}), (\ref{Prop2}) is satisfied. 

For uniqueness, suppose that that there exist two unitary operators $U_1, U_2$ that both satisfy  (\ref{ReprTranspose}).
Consequently, $U_1Q^{\tau}U_1^{\dagger} = U_2Q^{\tau}U_2^{\dagger}$ for all $Q\in\mathcal{L}(\mathcal{H})$, from which it follows that $U_1 = e^{i\chi}U_2$ for some $\chi\in \mathbb{R}$

Next, we turn to the second item of the proposition. Assume that (\ref{Prop3}),  (\ref{Prop4}),  and (\ref{Prop2}) are satisfied.  We know that there exists a unitary operator $U$ such that equation  (\ref{ReprTranspose}) holds. If we use this observation twice we find
\begin{equation}
\label{jdbkjvdkds}
\begin{split}
\mathcal{T}(\mathcal{T}(Q)) = & \mathcal{T}(UQ^{\tau}U^{\dagger})=  U(UQ^{\tau}U^{\dagger})^{\tau}U^{\dagger}\\
 = & (U^{\tau}U^{\dagger})^{\dagger} QU^{\tau}U^{\dagger}.
\end{split}
\end{equation}
This implies that $\mathcal{T}^2 = I$ if and only if $U^{\tau} = e^{i\theta}U$ for some real number $\theta$.
By the definition of the transpose it follows that $\langle k'|U|k\rangle = \langle k|U^{\tau}|k'\rangle = e^{i\theta}  \langle k|U|k'\rangle$ for all $k,k'$. 
If  this equality is iterated we obtain $\langle k'|U|k\rangle  = e^{i\theta}  \langle k|U|k'\rangle =  e^{2i\theta}  \langle k'|U|k\rangle$, and thus $(1- e^{2i\theta})  \langle k'|U|k\rangle = 0$, for all $k,k'$. 
Hence, either $1- e^{2i\theta} = 0$, or $\langle k'|U|k\rangle = 0$ for all  $k,k'$. However, the latter is not possible since $U$ is unitary. We can conclude that $e^{2i\theta} = 1$, and thus $e^{i\theta} = \pm 1$. This combined with $U^{\tau} = e^{i\theta}U$ yields $U^{\tau} = \pm U$.

Finally we turn to the third item of the proposition.  Let $\tau'$ denote the transpose with respect to $\{|\xi_k\rangle\}_{k}$. Then  (\ref{fghjjjjj}) is the same as saying that $\mathcal{T}(Q) = Q^{\tau'}$. By Lemma \ref{dkjgdg} we know that we can express the  transpose $\tau'$ in terms of the transpose $\tau$ (with respect to the basis $\{|k\rangle\}_{k}$) as $Q^{\tau'} = WW^{\tau}Q^{\tau}(WW^{\tau})^{\dagger}$, for a unitary operator $W$ such that $|\xi_{k}\rangle = W|k\rangle$. In terms of the original basis $\{|k\rangle\}_{k}$ we know that $\mathcal{T}(Q) = UQ^{\tau}U^{\dagger}$. Hence,  $UQ^{\tau}U^{\dagger} = WW^{\tau}Q^{\tau}(WW^{\tau})^{\dagger}$, and thus $U = WW^{\tau}e^{i\chi}$ for some $\chi\in\mathbb{R}$. Hence, we can conclude that the unitary operator $U$ satisfies $U^{\tau} = U$ and thus is complex symmetric.

To derive the opposite implication, assume that there exists a unitary operator $U$ such that $U^{\tau} = U$ and  (\ref{lknfnfd}) holds.  By Lemma \ref{TakagiUnitarySymmetric} we know that there exists a unitary operator $W$ on $\mathcal{H}$ such that $U = WW^{\tau}$.
Define $|\xi_{l}\rangle := W|l\rangle$ for all $l$.
Since $W$ is unitary it follows that $\{|\xi_l\rangle\}_{l}$ is an orthonormal basis of $\mathcal{H}$. One can verify that  (\ref{fghjjjjj}) holds. 
\end{proof}

As a corollary of Proposition \ref{TcharactFinite} it follows that if $\mathcal{T}$ leaves the elements of an orthonormal basis intact, then $\mathcal{T}$ can be written as a transpose in this basis followed by phase shifts.
\begin{Corollary}
\label{CorrBasisInvariant}
Let $\{|n\rangle\}_n$ be an orthonormal basis and assume that the time-reversal $\mathcal{T}$ is such that
$\mathcal{T}(|n\rangle\langle n|) = |n\rangle\langle n|$.
Then there exists real numbers $\theta_n$ such that 
$\mathcal{T}(|n\rangle\langle n'|) = e^{i(\theta_{n'}-\theta_{n})}|n'\rangle\langle n|$,
and thus with $|\xi_n\rangle := e^{i\theta_n/2}|n\rangle$ it is the case that $\mathcal{T}(|\xi_n\rangle\langle \xi_{n'}|) = |\xi_{n'}\rangle\langle \xi_{n}|$.
\end{Corollary}

\section{\label{SecAQuantumFlctnThrm}A quantum fluctuation theorem}

\subsection{\label{SecResetting}Re-setting the stage}

Here we construct a new set of assumptions that includes time reversal symmetry (see Fig.~\ref{FigSupplStructureTime}). 
\begin{Assumptions}
\label{Def2}
Let $\mathcal{H}_{S'}$, $\mathcal{H}_C$, and $\mathcal{H}_{E}$ be complex Hilbert spaces.
Let $|c_{i+}\rangle,|c_{i-}\rangle,|c_{f+}\rangle,|c_{f-}\rangle\in\mathcal{H}_C$ be normalized and such that the linear span $\mathcal{H}_{C}^{i} := \Sp\{|c_{i+}\rangle,|c_{i-}\rangle\}$ is orthogonal to $\mathcal{H}_{C}^{f} := \Sp\{|c_{f+}\rangle,|c_{f-}\rangle\}$. Let $P^{i}_{C}$ and $P^{f}_{C}$ denote the projectors onto these two subspaces, and define $P^{\perp}_{C} := \hat{1}_{C}-P^{i}_C-P^{f}_C$.
\begin{itemize}
\item  $H^{i}_{S'}$ and $H^{f}_{S'}$ are Hermitian operators on $\mathcal{H}_{S'}$ such that $Z(H^i_{S'})$ and $Z(H^f_{S'})$ are finite. Let $H_E$ be a Hermitian operator on $\mathcal{H}_{E}$. Let $H^{\perp}$ be a Hermitian operator on $\mathcal{H}_{S'}\otimes\mathcal{H}_C$ such that $[\hat{1}_{S'}\otimes P_{C}^{\perp}]H^{\perp}[\hat{1}_{S'}\otimes P_{C}^{\perp}] = H^{\perp}$, and define
\begin{equation*}
H_{S'C} :=  H^{i}_{S'}\otimes P^{i}_C+ H^{f}_{S'}\otimes P^f_C + H^{\perp}.
\end{equation*}
and $H :=  H_{S'C}\otimes\hat{1}_E + \hat{1}_{S'C}\otimes H_E$.
\item  $\mathcal{T}_{S'C}$ and $\mathcal{T}_{E}$ are time-reversals, and $\mathcal{T} := \mathcal{T}_{S'C}\otimes \mathcal{T}_{E}$. We assume
\begin{equation}
\mathcal{T}_{S'C}(H_{S'C}) = H_{S'C},\quad \mathcal{T}_{E}(H_E) = H_E,
\end{equation}
and
\begin{equation}
\label{AssumpIdc}
\begin{split}
\mathcal{T}_{S'C}(\hat{1}_{S'}\otimes |c_{i+}\rangle\langle c_{i+}|) =  & \hat{1}_{S'}\otimes |c_{i-}\rangle\langle c_{i-}|,\\
 \mathcal{T}_{S'C}(\hat{1}_{S'}\otimes |c_{f+}\rangle\langle c_{f+}|) =  &\hat{1}_{S'}\otimes |c_{f-}\rangle\langle c_{f-}|.
\end{split}
\end{equation} 
\item $V$ is a unitary operator on $\mathcal{H}_{S'}\otimes\mathcal{H}_C\otimes\mathcal{H}_E$ such that 
$[V,H] = 0$, $\mathcal{T}(V) = V$,
and
\begin{equation}
\label{PerfectControlTimeRev}
\begin{split}
& V[\hat{1}_{S'}\otimes |c_{i+}\rangle\langle c_{i+}|\otimes \hat{1}_{E}] \\
& =  [\hat{1}_{S'}\otimes |c_{f+}\rangle\langle c_{f+}|\otimes \hat{1}_{E}]V.
\end{split}
\end{equation}
\end{itemize}
\end{Assumptions}
In Appendices \ref{minimalwith} and \ref{discretizedwith} it is demonstrated that there exist setups that satisfy all  conditions in Assumptions \ref{Def2}.

At first sight it may seem a bit counterintuitive that the time-reversal should leave the unitary evolution $V$  invariant, i.e., that $\mathcal{T}(V) = V$, rather than inverting the evolution, $V\mapsto V^{\dagger}$, which in essence is how the standard complex-conjugation-based time-reversal works. However, as discussed in Appendix \ref{ReverseUseful}, the capacity of  $\mathcal{T}$ to time-reverse stems from the reordering property $\mathcal{T}(AB) = \mathcal{T}(B)\mathcal{T}(A)$, rather than from the inversion of the evolution operator.

The pair of control states $|c_{i+}\rangle$ and $|c_{i-}\rangle$ (as well as $|c_{f+}\rangle$ and $|c_{f-}\rangle$) can be thought of as abstractions of the idea of a wave-packets with fairly well defined momenta, where the time-reversal changes the direction of motion. However, we are not tied to any such specific scenario, and can apply the formalism whenever Assumptions \ref{Def2} is valid for some choice of time-reversal $\mathcal{T}$.

Assumptions \ref{Def2} are constructed in such a way that the unitary $V$, the initial state $|c_i\rangle := |c_{i+}\rangle$, and the final state $|c_{f}\rangle := |c_{f+}\rangle$ satisfy Assumptions \ref{Def1}. This translation only amounts to redefining the projector $P_{C}^{\perp}$ and the Hamiltonian $H^{\perp}$. More precisely, starting with Assumptions \ref{Def2} we can define the projectors $P_{ci+}: = P^{i}_C- |c_{i+}\rangle\langle c_{i+}|$ and $P_{cf+}: = P^{f}_C- |c_{f+}\rangle\langle c_{f+}|$. With $P^{\perp}_C$ being the projector in Assumptions \ref{Def2}, we can define the new projector in Assumptions \ref{Def1} as $\overline{P}^{\perp}_C := P^{\perp}_C + P_{ci+} + P_{cf+}$. Similarly, given the Hamiltonian $H^{\perp}$ in Assumptions \ref{Def2} we can define the new $\overline{H}^{\perp}:= H^{i}_{S'}\otimes P_{ci+}+ H^{f}_{S'}\otimes P_{cf+} + H^{\perp}$ in Assumptions \ref{Def1}. Hence, the restriction to Assumptions \ref{Def1} only requires us to reshuffle the Hamiltonians.
In an analogous manner, $V$, $|c_{i}\rangle := |c_{f-}\rangle$, and $|c_f\rangle := |c_{i-}\rangle$ also form a valid triple in  Assumptions \ref{Def1}. Note that in this case $|c_{f-}\rangle$ is the initial state of the effectively reversed evolution.

One should note that one can consider several variations on Assumptions \ref{Def2}. For example, one could imagine an alternative to (\ref{AssumpIdc}) where we instead assume a product time-reversal $\mathcal{T}_{S'C} = \mathcal{T}_{S'}\otimes \mathcal{T}_C$, and demand  $\mathcal{T}_{C}(|c_{i+}\rangle\langle c_{i+}|) =  |c_{i-}\rangle\langle c_{i-}|$, $\mathcal{T}_{C}(|c_{f+}\rangle\langle c_{f+}|) =  |c_{f-}\rangle\langle c_{f-}|$. However, the assumption in (\ref{AssumpIdc}) is more general, and provides a rather useful flexibility.

One should keep in mind that although $\Sp\{|c_{i+}\rangle,|c_{i-}\rangle\}$ is orthogonal to $\Sp\{|c_{f+}\rangle,|c_{f-}\rangle\}$ we do not necessarily assume that $|c_{i+}\rangle$ is orthogonal to  $|c_{i-}\rangle$ (In principle, Assumptions \ref{Def2} even allows for the possibility that $|c_{i+}\rangle$ is parallel to $|c_{i-}\rangle$.) The reason is that a generic state is not orthogonal to its time-reversal. (The same remark applies to the standard notion of time-reversal via complex conjugation.) To see this, suppose that $\mathcal{T}$ can be implemented as the  transpose with respect to some orthonormal basis $\{|\xi_n\rangle\}_n$. For $|c_{i+}\rangle =  \sum_n c_n |\xi_n\rangle$ we would thus have $|c_{i-}\rangle =  e^{i\theta}\sum_n c_n^{*} |\xi_n\rangle$, for some arbitrary real number $\theta$. Hence, $\langle c_{i-}|c_{i+}\rangle = e^{-i\theta}\sum_n c_n^2$, which would be zero only for a particular sub-manifold of states. Even though the time-reversal  \emph{per se}  does not force $|c_{i+}\rangle$ and  $|c_{i-}\rangle$ to be orthogonal, one may still wonder if the conditions in  Assumptions \ref{Def2} would `conspire' to enforce this. However, this is not the case, as is shown by an explicit example in Appendix \ref{nonorthogonalminimalwith}. Nevertheless, if we wish to incorporate certain additional features, like sequential paths of orthogonal control spaces, as we do in Appendix \ref{discretizedwith}, then we do need orthogonality between the control states and their time-reversals (see further discussions in Appendix \ref{nonorthogonalminimalwith}).

Due to the time-reversal symmetry, a perfect transition from $|c_{i+}\rangle$ to $|c_{f+}\rangle$ implies a perfect transition of $|c_{f-}\rangle$ to $|c_{i-}\rangle$. More precisely, 
by combining (\ref{PerfectControlTimeRev}) with the properties $\mathcal{T}(AB) =\mathcal{T}(B)\mathcal{T}(A)$, $\mathcal{T}_{E}(\hat{1}_E) = \hat{1}_E$, as well as the assumptions $\mathcal{T}(V)= V$ and (\ref{AssumpIdc}), one obtains
\begin{equation}
\label{PerfectControlTimeRev2}
\begin{split}
& V[\hat{1}_{S'}\otimes |c_{f-}\rangle\langle c_{f-}|\otimes \hat{1}_{E}]\\
 & =   [\hat{1}_{S'}\otimes |c_{i-}\rangle\langle c_{i-}|\otimes \hat{1}_{E}] V.
\end{split}
\end{equation}

As the reader may have noticed, a considerable part of Assumptions \ref{Def2} deals with the control system, which is due to the rather strong idealization that perfect control entails. In Appendix \ref{SecConditional} we will abandon this idealization, and as a bonus we also obtain a leaner set of assumptions (cf.~Assumptions \ref{AssumpCondTimeRev}).  

\begin{figure}[t]
 \includegraphics[width= 6cm]{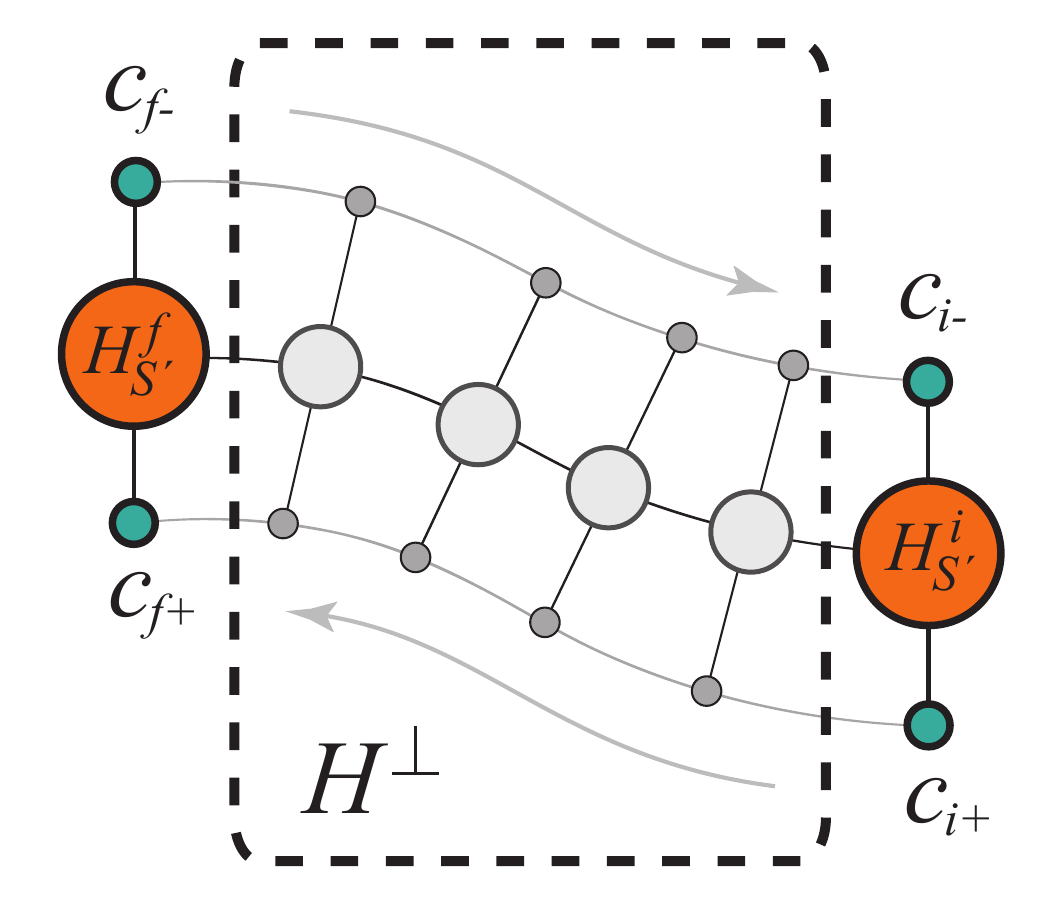} 
   \caption{\label{FigSupplStructureTime}  {\bf Structure of the Hamiltonian.}
The Hamiltonian $S'C$ is of the form $H_{S'C}  = H^{i}_{S'}\otimes P_C^i+ H^{f}_{S'}\otimes P_C^f + H^{\perp}$. Hence, whether the state of the control is in subspace $\mathcal{H}^i_C$ onto which $P_C^i$ projects, or in  $\mathcal{H}^f_C$ onto which $ P_C^f $ projects, determines the Hamiltonian of $S'$.
The Hamiltonian $H^{\perp}$ corresponds to any possible intermediate stages.
The initial control space  $\mathcal{H}^i_C$ is spanned by two states $|c_{i+}\rangle$ and $|c_{i-}\rangle$, which are the time-reversals of each other.
Analogously, the final control space $\mathcal{H}^f_C$ is spanned by $|c_{f+}\rangle$ and $|c_{f-}\rangle$. The global evolution $V$ is such that it brings control state $|c_{i+}\rangle$ into  $|c_{f+}\rangle$, while it brings $|c_{f-}\rangle$ into  $|c_{i-}\rangle$, thus implementing both the forward and the reverse process.  
   }
\end{figure}

\begin{Lemma}
\label{TimeRevOnControlledGibbs}
With Assumptions \ref{Def2} it is the case that 
\begin{equation}
\label{AssumpGc}
\begin{split}
\mathcal{T}_{S'C}\big( G(H^i_{S'})\otimes |c_{i\pm}\rangle\langle c_{i\pm}|\big) =  & G(H^i_{S'})\otimes |c_{i\mp}\rangle\langle c_{i\mp}|,\\
 \mathcal{T}_{S'C}\big( G(H^f_{S'})\otimes |c_{f\pm}\rangle\langle c_{f\pm}|\big) =  &G(H^f_{S'})\otimes |c_{f\mp}\rangle\langle c_{f\mp}|.
\end{split}
\end{equation}
\end{Lemma}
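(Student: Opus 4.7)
The plan is to reduce the statement to an operator-level identity about $\mathcal{T}_{S'C}$ acting on $H^i_{S'}\otimes|c_{i\pm}\rangle\langle c_{i\pm}|$ (and the analogous $f$-version), and then to lift that identity from $H^i_{S'}$ to $e^{-\beta H^i_{S'}}$ by a Taylor-series argument that uses only the order-reversing multiplicativity of $\mathcal{T}_{S'C}$.

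First I would observe that the three pieces of $H_{S'C}=H^i_{S'}\otimes P_C^i+H^f_{S'}\otimes P_C^f+H^\perp$ act on mutually orthogonal subspaces of $\mathcal{H}_C$, so right-multiplication by $\hat{1}_{S'}\otimes|c_{i+}\rangle\langle c_{i+}|$ kills the $f$ and $\perp$ pieces and leaves $H^i_{S'}\otimes|c_{i+}\rangle\langle c_{i+}|$. That is,
\begin{equation*}
H^i_{S'}\otimes|c_{i+}\rangle\langle c_{i+}| \;=\; H_{S'C}\bigl(\hat{1}_{S'}\otimes|c_{i+}\rangle\langle c_{i+}|\bigr).
\end{equation*}
Applying $\mathcal{T}_{S'C}$, using $\mathcal{T}(AB)=\mathcal{T}(B)\mathcal{T}(A)$ together with $\mathcal{T}_{S'C}(H_{S'C})=H_{S'C}$ and the swap rule (\ref{AssumpIdc}), gives $(\hat{1}_{S'}\otimes|c_{i-}\rangle\langle c_{i-}|)H_{S'C}=H^i_{S'}\otimes|c_{i-}\rangle\langle c_{i-}|$. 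So, denoting $K_{\pm}:=H^i_{S'}\otimes|c_{i\pm}\rangle\langle c_{i\pm}|$, I obtain $\mathcal{T}_{S'C}(K_+)=K_-$, and by $\mathcal{T}_{S'C}^2=I$ also $\mathcal{T}_{S'C}(K_-)=K_+$.

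Next I would promote this to the exponential. The key algebraic fact is that, because $|c_{i+}\rangle\langle c_{i+}|$ is a rank-one projector, $K_+^k=(H^i_{S'})^k\otimes|c_{i+}\rangle\langle c_{i+}|$ for all $k\ge 1$. Substituting into the Taylor expansion of $e^{-\beta x}$ gives
\begin{equation*}
e^{-\beta H^i_{S'}}\otimes|c_{i+}\rangle\langle c_{i+}|
\;=\;\hat{1}_{S'}\otimes|c_{i+}\rangle\langle c_{i+}|\;+\;\sum_{k=1}^{\infty}\frac{(-\beta)^k}{k!}K_+^k.
\end{equation*}
Since $\mathcal{T}$ is linear (and continuous, automatic in finite dimensions) and $\mathcal{T}(A^k)=\mathcal{T}(A)^k$ follows by iteration of the order-reversing rule on a single operator, $\mathcal{T}_{S'C}$ maps this series termwise to the same series with $+$ replaced by $-$, yielding $e^{-\beta H^i_{S'}}\otimes|c_{i-}\rangle\langle c_{i-}|$. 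Dividing by the scalar $Z(H^i_{S'})$ delivers the first equality of (\ref{AssumpGc}); the $\pm\to\mp$ version comes for free by applying $\mathcal{T}_{S'C}^2=I$. The second equality is proved identically, with $H^f_{S'}$, $P_C^f$ and the $f\pm$ states in place of their $i$-counterparts.

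I do not anticipate a real obstacle here: the proof is essentially a bookkeeping exercise once one has the identity $\mathcal{T}_{S'C}(K_+)=K_-$. The only subtle point is the passage from a polynomial identity to an exponential one via the power series, and this is trivial in finite dimensions; in infinite dimensions it requires knowing that $\mathcal{T}_{S'C}$ is norm-continuous on trace-class operators, which is consistent with the technical remarks made after Definition \ref{PropertiesTimeReversal}.
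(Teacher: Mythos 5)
Your proof is correct and follows essentially the same route as the paper's: both exploit the orthogonal block structure of $H_{S'C}$ to reduce the claim to a product of (a function of) $H_{S'C}$ with the control projector, and then combine the order-reversing multiplicativity of $\mathcal{T}_{S'C}$ with its invariance on $H_{S'C}$ and the swap rule (\ref{AssumpIdc}). The only difference is that you exponentiate the projected block via a Taylor series, whereas the paper projects the already-exponentiated $e^{-\beta H_{S'C}}$ after asserting $\mathcal{T}_{S'C}(e^{\alpha H_{S'C}})=e^{\alpha H_{S'C}}$ --- an assertion whose natural justification is the very same power-series argument, so the convergence caveat you raise for unbounded $H^{i}_{S'}$ applies equally to the paper's version and is not a defect specific to your write-up.
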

One may wonder why we do not directly assume (\ref{AssumpGc}) in Assumptions \ref{Def2} rather than (\ref{AssumpIdc}). The reason is partially that the latter choice defines the action of the time reversal on the control states in a cleaner manner, but also because it aligns with the more general set of assumptions that we will use in Appendix  \ref{SecCondWithTimeRev}.

\begin{proof}
We first note that $\mathcal{T}_{S'C}(H_{S'C}) = H_{S'C}$ and $\mathcal{T}_{S'C}(\hat{1}_{S'C}) = \hat{1}_{S'C}$ implies
$\mathcal{T}_{S'C}(e^{\alpha H_{S'C}}) =  e^{\alpha H_{S'C}}$  (e.g.~via a Taylor expansion)
for all $\alpha \in \mathbb{C}$. Moreover, due to the orthogonal supports of $H^{i}_{S'}\otimes P^{i}_C$, $H^{f}_{S'}\otimes P^f_C$, and  $H^{\perp}$  we can conclude that 
\begin{equation}
\label{dfksvn}
\begin{split}
e^{-\beta H^i_{S'}}\otimes |c_{i\pm}\rangle\langle c_{i\pm}| = & e^{-\beta H_{S'C}}[\hat{1}_{S'}\otimes |c_{i\pm}\rangle\langle c_{i\pm}|] \\
= & [\hat{1}_{S'}\otimes |c_{i\pm}\rangle\langle c_{i\pm}|] e^{-\beta H_{S'C}}.
\end{split}
\end{equation}
We exemplify the rest of the derivation with the transformation of $G_{\beta}(H^i_{S'})\otimes |c_{i+}\rangle\langle c_{i+}|$.
\begin{equation*}
\begin{split}
& \mathcal{T}_{S'C}\big( G(H^i_{S'})\otimes |c_{i+}\rangle\langle c_{i+}|\big)\\
& [\textrm{By Eq.~(\ref{dfksvn})}]\\
= & \frac{1}{Z(H^i_{S'})}\mathcal{T}_{S'C}(e^{-\beta H_{S'C}}[\hat{1}_{S'}\otimes |c_{i+}\rangle\langle c_{i+}|])  \\
& [\textrm{By property (\ref{Prop3}) and $\mathcal{T}_{S'C}(e^{\alpha H_{S'C}}) =  e^{\alpha H_{S'C}}$}]\\
= & \frac{1}{Z(H^i_{S'})} \mathcal{T}_{S'C}(\hat{1}_{S'}\otimes |c_{i+}\rangle\langle c_{i+}|)e^{-\beta H_{S'C}} \\
= & \frac{1}{Z(H^i_{S'})}[\hat{1}_{S'}\otimes |c_{i-}\rangle\langle c_{i-}|]e^{-\beta H_{S'C}} \\
& [\textrm{By Eq.~(\ref{dfksvn})}]\\
=  & G(H^i_{S'})\otimes |c_{i-}\rangle\langle c_{i-}|.
\end{split}
\end{equation*}
The other identities can be derived in an analogous manner.
\end{proof}

\subsection{The induced channels}

Given the initial state $|c_{i+}\rangle$ and final state $|c_{f+}\rangle$ we define the channels
\begin{equation}
\label{DefFRplus}
\begin{split}
\mathcal{F}_{+}(\sigma) :=  \Tr_{S'C}(V [G_{\beta}(H^{i}_{S'})\otimes |c_{i+}\rangle \langle c_{i+}| \otimes \sigma] V^{\dagger}),\\
\mathcal{R}_{+}(\sigma) :=  \Tr_{S'C}(V^{\dagger} [G_{\beta}(H^{f}_{S'})\otimes |c_{f+}\rangle \langle c_{f+}| \otimes \sigma] V).
\end{split}
\end{equation}
For the initial state $|c_{f-}\rangle$ and the final state $|c_{i-}\rangle$ we similarly define the channels
\begin{equation}
\label{DefFRminus}
\begin{split}
\mathcal{F}_{-}(\sigma) :=  \Tr_{S'C}(V [G_{\beta}(H^{f}_{S'})\otimes |c_{f-}\rangle \langle c_{f-}| \otimes \sigma] V^{\dagger}),\\
\mathcal{R}_{-}(\sigma) :=  \Tr_{S'C}(V^{\dagger} [G_{\beta}(H^{i}_{S'})\otimes |c_{i-}\rangle \langle c_{i-}| \otimes \sigma] V).
\end{split}
\end{equation}

\subsection{Deriving a quantum Crooks relation}

\begin{Lemma}
\label{ChannelReversal}
Given Assumptions \ref{Def2}, then the channels $\mathcal{R}_{+}$, $\mathcal{F}_{-}$, as defined in Eqs.~(\ref{DefFRplus})  and (\ref{DefFRminus}) are related as
\begin{equation}
\label{fnmbsf}
\mathcal{T}_E\mathcal{R}_{+} = \mathcal{F}_{-}\mathcal{T}_E
\end{equation}
and thus 
\begin{equation}
\label{RconjFominus}
\mathcal{R}_{+}^{*} = \mathcal{F}_{-}^{\ominus}.
\end{equation}
\end{Lemma}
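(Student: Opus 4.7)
The plan is to establish the intertwining identity (\ref{fnmbsf}) by direct manipulation of the defining expression for $\mathcal{R}_{+}$, and then to deduce (\ref{RconjFominus}) as a short algebraic consequence using the alternative form (\ref{Altominusdef}) of the $\ominus$-transformation.

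For (\ref{fnmbsf}), I would apply $\mathcal{T}_E$ to $\mathcal{R}_{+}(\sigma)$ and push it inside the partial trace via the auxiliary identity $\mathcal{T}_E(\Tr_{S'C}(Y)) = \Tr_{S'C}(\mathcal{T}(Y))$ with $\mathcal{T} = \mathcal{T}_{S'C}\otimes\mathcal{T}_E$. This identity is immediate on product operators $A\otimes B$ from $\Tr\mathcal{T}_{S'C}(A) = \Tr(A)$ (property (\ref{Prop2})) and extends to arbitrary $Y$ by linearity. Once $\mathcal{T}$ sits inside the trace, I invoke the order-reversing and adjoint-preserving properties (\ref{Prop3}), (\ref{Prop4}) together with the time-reversal symmetry $\mathcal{T}(V) = V$ (whence $\mathcal{T}(V^{\dagger}) = V^{\dagger}$) to collapse $\mathcal{T}(V^{\dagger}XV) = V\mathcal{T}(X)V^{\dagger}$. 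Finally, the product structure of $\mathcal{T}$ combined with Lemma \ref{TimeRevOnControlledGibbs} converts the sandwiched operator $\mathcal{T}(G(H^{f}_{S'})\otimes |c_{f+}\rangle\langle c_{f+}|\otimes \sigma)$ into $G(H^{f}_{S'})\otimes |c_{f-}\rangle\langle c_{f-}|\otimes \mathcal{T}_E(\sigma)$, and the resulting expression is, by definition, $\mathcal{F}_{-}(\mathcal{T}_E(\sigma))$.

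For (\ref{RconjFominus}), I would compose (\ref{fnmbsf}) with $\mathcal{T}_E$ from the left and invoke $\mathcal{T}_E^{2} = I$ (property (\ref{Prop6})) to rewrite it as $\mathcal{R}_{+} = \mathcal{T}_E\mathcal{F}_{-}\mathcal{T}_E$. Taking the conjugate of both sides and invoking the alternative form of the $\ominus$-transformation, $\mathcal{F}_{-}^{\ominus} = (\mathcal{T}_E\mathcal{F}_{-}\mathcal{T}_E)^{*}$ from (\ref{Altominusdef}), immediately gives $\mathcal{R}_{+}^{*} = \mathcal{F}_{-}^{\ominus}$.

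The main technical concern is really just verifying the commutation of $\mathcal{T}_E$ with the partial trace $\Tr_{S'C}$, which is essentially trace-preservation repackaged and requires no dimensional hypothesis; beyond that, the proof is bookkeeping in which each ingredient of Assumptions \ref{Def2} — the product structure $\mathcal{T} = \mathcal{T}_{S'C}\otimes\mathcal{T}_E$, the symmetry $\mathcal{T}(V) = V$, and the pairing of control states in (\ref{AssumpIdc}) translated via Lemma \ref{TimeRevOnControlledGibbs} — is used essentially once. Notably, the perfect-control condition (\ref{PerfectControlTimeRev}) and its consequence (\ref{PerfectControlTimeRev2}) are not invoked here; they will only re-enter when (\ref{fnmbsf}) is combined with the intermediate relation of Proposition \ref{PreliminaryQuantumCrooks} to yield the full quantum Crooks relation (\ref{QuantumCrooks}).
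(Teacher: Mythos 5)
Your proposal is correct and follows essentially the same route as the paper's proof: push $\mathcal{T}_E$ inside the partial trace via $\Tr_{S'C}\boldsymbol{(}[\mathcal{T}_{S'C}\otimes\mathcal{T}_E](\rho)\boldsymbol{)} = \mathcal{T}_E\boldsymbol{(}\Tr_{S'C}(\rho)\boldsymbol{)}$, use properties (\ref{Prop3}), (\ref{Prop4}) and $\mathcal{T}(V)=V$ to flip the conjugation, apply Lemma \ref{TimeRevOnControlledGibbs} to swap the control states, and then obtain (\ref{RconjFominus}) from $\mathcal{T}_E^2 = I$ and the alternative form (\ref{Altominusdef}). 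Your added justification of the trace identity and your observation that perfect control is not needed here are both accurate.
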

Hence, under the assumption of time-reversal symmetry we can in effect simulate the reversed time evolution (i.e., the replacement of $V$ with $V^{\dagger}$) via the `forward' evolution $V$.  
By applying the property $\mathcal{T}_E^2 = I$ to (\ref{fnmbsf}) one can also show $\mathcal{R}_{+}\mathcal{T}_E = \mathcal{T}_E\mathcal{F}_{-}$ and $\mathcal{R}_{+} = \mathcal{T}_E\mathcal{F}_{-}\mathcal{T}_E$.

\begin{proof}[Proof of Lemma \ref{ChannelReversal}]
We use the definition of $\mathcal{R}_{+}$ in Eq.~(\ref{DefFRplus}) and the general relation $\Tr_{2}\big([\mathcal{T}_1\otimes\mathcal{T}_2](\rho)\big) = \mathcal{T}_1\big(\Tr_2(\rho)\big)$ to obtain
\begin{equation}
\begin{split}
& \mathcal{T}_E\circ\mathcal{R}_{+}(\sigma)\\
  = & \Tr_{S'C}\Big( \mathcal{T}\big(V^{\dagger} [G(H^{f}_{S'})\otimes |c_{f+}\rangle \langle c_{f+}| \otimes \sigma] V\big)\Big)\\
  & [\textrm{By  (\ref{Prop3}), (\ref{Prop4}), and $\mathcal{T}(V) = V$}]\\
     = & \Tr_{S'C}\Big(   V\mathcal{T}\big(G(H^{f}_{S'})\otimes |c_{f+}\rangle \langle c_{f+}| \otimes \sigma\big)V^{\dagger} \Big)\\
           &[\textrm{By Lemma \ref{TimeRevOnControlledGibbs}}] \\
       = & \Tr_{S'C}\big(   V[G(H^{f}_{S'})\otimes |c_{f-}\rangle \langle c_{f-}| \otimes \mathcal{T}_{E}(\sigma)]V^{\dagger} \big)\\  
       = &  \mathcal{F}_{-}\circ\mathcal{T}_{E}(\sigma).
\end{split}
\end{equation}
By multiplying  (\ref{fnmbsf}) from the left with $\mathcal{T}$ and using the relation $\mathcal{T}^2 = I$ and the 
  the alternative definition of $\ominus$ in (\ref{Altominusdef}) we obtain (\ref{RconjFominus}).
\end{proof}

\begin{Proposition}[Quantum Crooks relation]
\label{PropQuantumCrooks} With Assumptions \ref{Def2}, the channels $\mathcal{F}_{\pm}$ as defined in Eqs.~(\ref{DefFRplus}) and (\ref{DefFRminus}) satisfy
\begin{equation}
\label{FplusequivFminus}
Z(H^{i}_{S'})\mathcal{F}_{+} = Z(H^{f}_{S'})\mathcal{J}_{\beta H_E}\mathcal{F}_{-}^{\ominus}\mathcal{J}_{\beta H_E}^{-1}.
\end{equation}
With the separation of $S'$ into system $S$ and the heat bath $B$ as in equation (\ref{Def1HeatBath}) we thus get
\begin{equation}
\label{FplusequivFminusS}
Z(H^{i}_{S})\mathcal{F}_{+} = Z(H^{f}_{S})\mathcal{J}_{\beta H_E}\mathcal{F}_{-}^{\ominus}\mathcal{J}_{\beta H_E}^{-1}.
\end{equation}
\end{Proposition}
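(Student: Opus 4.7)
The plan is to reduce the claim to two results already in hand: the intermediate Crooks relation (Proposition 1) and the channel reversal identity $\mathcal{R}_{+}^{*} = \mathcal{F}_{-}^{\ominus}$ (Lemma 2). In spirit, the proof is a one-line composition: Proposition 1 converts $\mathcal{F}_{+}$ into an expression involving $\mathcal{R}_{+}^{*}$ sandwiched between $\mathcal{J}_{\beta H_E}$ and $\mathcal{J}_{\beta H_E}^{-1}$, while Lemma 2 swaps $\mathcal{R}_{+}^{*}$ for $\mathcal{F}_{-}^{\ominus}$, which is precisely what appears on the right-hand side of the target identity.

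First, I would verify that the ``plus'' sector of Assumptions \ref{Def2} forms a valid instance of Assumptions \ref{Def1}. Concretely, choose $|c_i\rangle := |c_{i+}\rangle$ and $|c_f\rangle := |c_{f+}\rangle$, and absorb the leftover control subspace (spanned by $|c_{i-}\rangle$ and $|c_{f-}\rangle$) together with its Hamiltonian contributions into a redefined $\overline{P}_{C}^{\perp}$ and $\overline{H}^{\perp}$, exactly as described in the discussion following Assumptions \ref{Def2}. Since $|c_{i\pm}\rangle \in \mathcal{H}_C^i$ and $|c_{f\pm}\rangle \in \mathcal{H}_C^f$ with $\mathcal{H}_C^i \perp \mathcal{H}_C^f$, the required orthogonality of $|c_i\rangle$ and $|c_f\rangle$ holds automatically, and the controlled-Hamiltonian structure is preserved. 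The energy-conservation $[V,H]=0$ and the perfect-control condition (\ref{PerfectControlTimeRev}) are imported verbatim as the requirements of Assumptions \ref{Def1}.

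Applying Proposition \ref{PreliminaryQuantumCrooks} to this instance yields
\begin{equation*}
Z(H^i_{S'})\mathcal{F}_{+} = Z(H^f_{S'})\mathcal{J}_{\beta H_E}\mathcal{R}_{+}^{*}\mathcal{J}_{\beta H_E}^{-1},
\end{equation*}
with $\mathcal{F}_{+}$ and $\mathcal{R}_{+}$ as defined in (\ref{DefFRplus}). Substituting $\mathcal{R}_{+}^{*} = \mathcal{F}_{-}^{\ominus}$ from Lemma \ref{ChannelReversal} gives the desired identity (\ref{FplusequivFminus}). The system--bath version (\ref{FplusequivFminusS}) then follows by the additional assumption (\ref{Def1HeatBath}), under which $Z(H^{i/f}_{S'}) = Z(H^{i/f}_S) Z(H_B)$, so the common factor $Z(H_B)$ cancels.

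The only point requiring genuine care is the bookkeeping in the reduction from Assumptions \ref{Def2} to Assumptions \ref{Def1}: one must confirm that the ``unused'' control states $|c_{i-}\rangle$, $|c_{f-}\rangle$, together with the block $H^{\perp}$, fit consistently into the auxiliary $\overline{H}^{\perp}$ and that the perfect-control condition still isolates $|c_{i+}\rangle \to |c_{f+}\rangle$. This is a straightforward check given the orthogonality of the subspaces $\mathcal{H}_C^i$ and $\mathcal{H}_C^f$, so I expect no real obstacle; the substantive content sits entirely in Proposition \ref{PreliminaryQuantumCrooks} and Lemma \ref{ChannelReversal}, both already established.
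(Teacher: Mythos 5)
Your proposal is correct and follows essentially the same route as the paper: reduce the ``plus'' sector of Assumptions \ref{Def2} to an instance of Assumptions \ref{Def1}, apply Proposition \ref{PreliminaryQuantumCrooks} to the pair $(\mathcal{F}_{+},\mathcal{R}_{+})$, substitute $\mathcal{R}_{+}^{*}=\mathcal{F}_{-}^{\ominus}$ from Lemma \ref{ChannelReversal}, and cancel $Z(H_B)$ for the system--bath version. No gaps.
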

\begin{proof}
The triple $V$, $|c_i\rangle := |c_{i+}\rangle$, $|c_f\rangle :=|c_{f+}\rangle$ from Assumptions \ref{Def2} satisfies Assumptions \ref{Def1}. It follows that we can apply  
Proposition \ref{PreliminaryQuantumCrooks} on the pair of channels $\mathcal{F}_{+}$ and $\mathcal{R}_{+}$ and thus obtain 
$Z(H^i_{S'}) \mathcal{F}_{+} = Z(H^f_{S'})\mathcal{J}_{\beta H_E}\mathcal{R}_{+}^{*}\mathcal{J}_{\beta H_E}^{-1}$.
Next, we use Eq.~(\ref{RconjFominus}) to obtain Eq.~(\ref{FplusequivFminus}).

With the additional assumption in equation (\ref{Def1HeatBath}) we get $Z(H^{i}_{S'}) = Z(H^{i}_{S})Z(H_{B})$ and $Z(H^{f}_{S'}) = Z(H^{f}_{S})Z(H_{B})$. From this it follows that (\ref{FplusequivFminus}) yields  (\ref{FplusequivFminusS}).
\end{proof}

\subsection{\label{SecConditionsOnE} Unbounded $H_E$}

The requirement of perfect control, i.e., that $V$ satisfies (\ref{PerfectControlTimeRev}) puts rather stringent conditions on the properties of $H_E$. To see this, let us assume  that $H_E$ has a pure point spectrum corresponding to the orthonormal eigenvectors $\{|n\rangle\}_{n}$ with respect to the energy eigenvalues $E_n$. Let us furthermore assume that $\mathcal{H}_{S'}$ is finite-dimensional. Here we shall see that for generic choices of initial and final Hamiltonians $H^{i}_{S'}$ and $H^{f}_{S'}$, the perfect control implies that the spectrum of $H_E$ must be unbounded from both above and below.

Due to the assumption of energy conservation, the energy reservoir has to compensate for any change in energy in the transition from the initial to the final state. Let $h^{i}_{n}$ be the eigenvalues of $H^{i}_{S'}$ and similarly $h^{f}_{m}$ the eigenvalues of $H^{f}_{S'}$. 
Suppose that $h^{f}_{m}\neq h^{i}_{n}$ for all $m,n$. This means that every possible transition either must cost or yield energy, which has to be drawn from or deposited in the reservoir $E$.  
Imagine now that $S'$ initially is in an eigenstate $|h^i_n\rangle$. Suppose that at the end of the process there is a non-zero probability for finding $S'$ in the state $|h^{f}_{m}\rangle$ with $h^{f}_{m}> h^{i}_{m}$. For this to happen, the reservoir  has to donate the energy $q := h^{i}_{m} - h^{f}_{n}$.  
Suppose that the spectrum of $H_E$ would be bounded  from below, i.e.,  $E_{\textrm{lower}} = \inf_{n}E_{n} > -\infty$.
This means that there exists some state $|k\rangle$ of the reservoir such that all transitions downwards in energy (if any available) would be smaller than $q$. In other words, if the energy reservoir would start in state $|k\rangle$, then it cannot donate the energy $q$, and the transition cannot occur. For a reservoir with a spectrum bounded from below, the only way to avoid this would be if all transitions in $S'$ always would go downwards in energy.  Generic choices of  $H^{i}_{S'}$ to $H^{f}_{S'}$ would involve both increases and  decreases in energy, and thus the spectrum of $H_E$ must be unbounded from both above and below. The key point behind the unboundedness is the demand that the control system always should succeed in its task irrespective of the state of the system and the energy reservoir. It would be reasonable with a control system fails in some cases, e.g., if the energy in the reservoir is too low (i.e., too close to the ground state). In Appendix  \ref{SecConditional} we introduce conditional fluctuation relations that allows for failing control systems.  (For an explicit example, see Appendix \ref{AbolishPerfectControl}.)

\section{\label{SecDiagonalAndOffdiagonal}Diagonal and off-diagonal Crooks relations}

\subsection{\label{DecouplingDiagonals} Decoupling of diagonals}

We here demonstrate the useful fact that the dynamics under the induced channels $\mathcal{F}_{\pm}$ and $\mathcal{R}_{\pm}$ decouples along different diagonals or modes of coherence \cite{Lostaglio14b}. We first show that the channels $\mathcal{F}_{\pm}$ and $\mathcal{R}_{\pm}$ commute with the commutator with respect to $H_E$.
\begin{Lemma}
\label{ComWithCom}
With Assumptions \ref{Def2}, the channels $\mathcal{F}_{\pm}$ and $\mathcal{R}_{\pm}$ as defined in Eqs.~(\ref{DefFRplus}) and (\ref{DefFRminus}), satisfy 
\begin{equation}
\begin{split}
[H_E,\mathcal{F}_{\pm}(\sigma)] = & \mathcal{F}_{\pm}([H_E,\sigma]),\\
 [H_E,\mathcal{R}_{\pm}(\sigma)] = & \mathcal{R}_{\pm}([H_E,\sigma]).
\end{split}
\end{equation}
\end{Lemma}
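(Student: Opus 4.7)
The plan is to prove all four cases ($\mathcal{F}_\pm$ and $\mathcal{R}_\pm$) in a uniform way, since each induced map has the form $\eta\mapsto \Tr_{S'C}(U[\rho_0\otimes\eta]U^{\dagger})$ for some $U\in\{V,V^{\dagger}\}$ that commutes with $H$, and some conditional-Gibbs initial state $\rho_0$ of the form $G(H^{\bullet}_{S'})\otimes|c_{\bullet\pm}\rangle\langle c_{\bullet\pm}|$ on $S'C$. The whole argument rests on two ingredients that I would isolate first: (i) each such $\rho_0$ commutes with $H_{S'C}$, which follows from the block structure $H_{S'C} = H^{i}_{S'}\otimes P^{i}_C + H^{f}_{S'}\otimes P^{f}_C + H^{\perp}$ together with the orthogonality of the three supports (since $|c_{i\pm}\rangle$ lies in the range of $P^{i}_C$, we have $H_{S'C}\,\rho_0 = [H^{i}_{S'}G(H^{i}_{S'})]\otimes|c_{i\pm}\rangle\langle c_{i\pm}| = \rho_0\, H_{S'C}$, and analogously for the final block); and (ii) the partial trace identity $\Tr_{S'C}\bigl([A\otimes\hat 1_E,\,X]\bigr)=0$ for any operator $A$ on $S'C$ and $X$ on $S'CE$, which is just cyclicity of the trace on the traced-out factor.

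With these in hand I would compute, for $\mathcal{F}_{+}$ say,
\begin{equation*}
[H_E,\mathcal{F}_{+}(\sigma)] = \Tr_{S'C}\bigl([\hat 1_{S'C}\otimes H_E,\, V(\rho_0\otimes\sigma)V^{\dagger}]\bigr).
\end{equation*}
Using $\hat 1_{S'C}\otimes H_E = H - H_{S'C}\otimes \hat 1_E$ splits this into two commutators. The term involving $H_{S'C}\otimes \hat 1_E$ vanishes under $\Tr_{S'C}$ by ingredient (ii). For the term involving $H$, use $[V,H]=0$ to pull $H$ through:
\begin{equation*}
[H,V(\rho_0\otimes\sigma)V^{\dagger}] = V\,[H,\rho_0\otimes\sigma]\,V^{\dagger},
\end{equation*}
and expand $[H,\rho_0\otimes\sigma] = [H_{S'C},\rho_0]\otimes\sigma + \rho_0\otimes[H_E,\sigma]$. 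By ingredient (i) the first summand is zero, leaving $V(\rho_0\otimes[H_E,\sigma])V^{\dagger}$, whose partial trace over $S'C$ is exactly $\mathcal{F}_{+}([H_E,\sigma])$.

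The same calculation applies verbatim to $\mathcal{F}_{-}$ (swap $H^{i}_{S'}\leftrightarrow H^{f}_{S'}$ and $|c_{i+}\rangle\leftrightarrow|c_{f-}\rangle$) and to $\mathcal{R}_{\pm}$ (replace $V$ by $V^{\dagger}$, which still commutes with $H$); nothing in the argument privileges one case. There is no real obstacle here: the only subtlety is recognising that the non-trivial content of the lemma comes from property (i), i.e.\ that the initial control-and-system state is chosen precisely to be diagonal with respect to the control-splitting of $H_{S'C}$, so that energy conservation of $V$ translates cleanly into a decoupling of the $H_E$-commutator action on the reservoir.
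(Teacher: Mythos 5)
Your proof is correct, and it follows the same basic mechanism as the paper's (cyclicity of the partial trace over $S'C$, energy conservation $[H,V]=0$, and the fact that the conditional Gibbs state commutes with $H_{S'C}$), but with one genuine difference in the decomposition. The paper first invokes perfect control, Eq.~(\ref{PerfectControlTimeRev}), to argue that $V\rho_iV^{\dagger}$ is supported on the final control subspace, and then writes $\hat{1}_{S'}\otimes|c_{f+}\rangle\langle c_{f+}|\otimes H_E$ as $H$ minus the single block $H^{f}_{S'}\otimes|c_{f+}\rangle\langle c_{f+}|\otimes\hat{1}_E$, killing the latter under $\Tr_{S'C}$. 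You instead subtract the \emph{full} $H_{S'C}\otimes\hat{1}_E$, which is equally annihilated by the partial trace, so perfect control never enters your argument. This buys a small but real gain: it makes explicit that the decoupling of the $H_E$-commutator relies only on energy conservation and on $[H_{S'C},\rho_0]=0$, not on the control mechanism working perfectly --- which is precisely the structure that fails in the conditional setting (cf.\ Lemma~\ref{nkjvvkv}, where the extra terms are exactly the $[H_{\widetilde{S}},Q]$ contributions your ingredient (i) sets to zero). Your uniform treatment of all four maps is also cleaner than the paper's single worked case. One minor point worth stating explicitly if you write this up: the identity $[H_E,\Tr_{S'C}(X)]=\Tr_{S'C}([\hat{1}_{S'C}\otimes H_E,X])$ that opens your computation, while elementary, is doing real work and deserves a line of justification alongside your ingredient (ii).
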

\begin{proof}
Here we only show the relation $[H_E,\mathcal{F}_{+}(\sigma)] = \mathcal{F}_{+}([H_E,\sigma])$.
By the definition of the Hamiltonian $H$ in Assumptions \ref{Def2} it follows that $\hat{1}_{S'}\otimes |c_{f+}\rangle \langle c_{f+}|\otimes H_E = (H- H^f_{S'}\otimes |c_{f+}\rangle \langle c_{f+}|\otimes \hat{1}_E)[\hat{1}_{S'}\otimes |c_{f+}\rangle \langle c_{f+}|\otimes \hat{1}_E]$. By combining this observation with the perfect control (\ref{PerfectControlTimeRev}) one can show that 
\begin{equation}
\label{utoioitoiu}
\begin{split}
& [H_E,\mathcal{F}_{+}(\sigma)] \\
= & \Tr_{S'C}\Big( \Big[H, V [G_{\beta}(H^{i}_{S'})\otimes |c_{i+}\rangle \langle c_{i+}| \otimes \sigma] V^{\dagger}\Big]\Big)\\
& -\Tr_{S'C}\Big( \Big[H^f_{S'}\otimes |c_{f+}\rangle \langle c_{f+}|\otimes \hat{1}_E, \\
&\quad\quad \quad \quad V [G_{\beta}(H^{i}_{S'})\otimes |c_{i+}\rangle \langle c_{i+}| \otimes \sigma] V^{\dagger}\Big]\Big),
\end{split}
\end{equation}
where the last term becomes zero due to the cyclic property of the partial trace $\Tr_{S'C}$ with respect to $H^f_{S'}\otimes |c_{f+}\rangle \langle c_{f+}|\otimes \hat{1}_E$. By the definition of the global Hamiltonian $H$ in Assumptions \ref{Def2}
\begin{equation*}
\begin{split}
& [H, G_{\beta}(H^{i}_{S'})\otimes |c_{i+}\rangle \langle c_{i+}| \otimes \sigma] \\
& =  [H^{i}_{S'},G_{\beta}(H^{i}_{S'})]\otimes |c_{i+}\rangle \langle c_{i+}| \otimes \sigma \\
&\quad  + G_{\beta}(H^{i}_{S'})\otimes |c_{i+}\rangle \langle c_{i+}| \otimes [H_E,\sigma]\\
& =  G_{\beta}(H^{i}_{S'})\otimes |c_{i+}\rangle \langle c_{i+}| \otimes [H_E,\sigma].
\end{split}
\end{equation*}
By combining this with  $[H,V] = 0$ in (\ref{utoioitoiu})
the lemma follows.
\end{proof}

\begin{Corollary}
\label{MappingOffDiagonal} 
Suppose that $H_E$ has a complete orthonormal eigenbasis $\{|n\rangle\}_{n}$ with corresponding eigenvalues $E_n$. Then
\begin{equation}
\label{kjsdfbvab}
\begin{split}
& \langle m|\mathcal{F}_{\pm}(|n\rangle\langle n'|)|m'\rangle = 0\\
& \quad \textrm{if}\quad E_{m}-E_{n}\neq E_{m'}-E_{n'}.
\end{split}
\end{equation}
 The analogous statement holds for $\mathcal{R}_{\pm}$.
\end{Corollary}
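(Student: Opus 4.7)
The plan is to extract the corollary directly from Lemma \ref{ComWithCom} by evaluating the commutation identity $[H_E,\mathcal{F}_{\pm}(\sigma)] = \mathcal{F}_{\pm}([H_E,\sigma])$ on the rank-one operator $\sigma = |n\rangle\langle n'|$ and then sandwiching between $\langle m|\cdot|m'\rangle$. Since $|n\rangle,|n'\rangle$ are eigenvectors of $H_E$ with eigenvalues $E_n,E_{n'}$, one immediately gets $[H_E,|n\rangle\langle n'|] = (E_n - E_{n'})|n\rangle\langle n'|$. By linearity, the right-hand side of the commutation identity reduces to $(E_n - E_{n'})\,\mathcal{F}_{\pm}(|n\rangle\langle n'|)$.

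For the left-hand side I would use the same eigenvalue equation on the outer states $|m\rangle,|m'\rangle$. Taking the $(m,m')$-matrix element of $[H_E,\mathcal{F}_{\pm}(|n\rangle\langle n'|)]$ yields $(E_m - E_{m'})\langle m|\mathcal{F}_{\pm}(|n\rangle\langle n'|)|m'\rangle$. Equating the two sides gives
\begin{equation*}
(E_m - E_{m'} - E_n + E_{n'})\,\langle m|\mathcal{F}_{\pm}(|n\rangle\langle n'|)|m'\rangle = 0,
\end{equation*}
so whenever $E_m - E_n \neq E_{m'} - E_{n'}$ the prefactor is nonzero and the matrix element must vanish. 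This is exactly the claim in (\ref{kjsdfbvab}). The identical argument applied to the $\mathcal{R}_{\pm}$-part of Lemma \ref{ComWithCom} delivers the statement for the reverse-process channels.

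There is essentially no obstacle here; the only thing to be careful about is the algebraic rearrangement $E_m - E_{m'} = E_n - E_{n'} \iff E_m - E_n = E_{m'} - E_{n'}$, which is the form in which the condition is phrased in the corollary. Note also that no further spectral hypothesis is required beyond the assumed existence of the energy eigenbasis $\{|n\rangle\}_n$; in particular degeneracies of $H_E$ are harmless, since the argument is purely pointwise in the basis labels.
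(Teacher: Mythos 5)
Your argument is correct and is essentially identical to the paper's own proof: both take the matrix element $\langle m|\cdot|m'\rangle$ of the commutation identity from Lemma \ref{ComWithCom} applied to $\sigma=|n\rangle\langle n'|$ and read off the factor $(E_m-E_{m'}-E_n+E_{n'})$. Your closing remark that degeneracies are harmless is also consistent with the corollary as stated, which does not assume non-degeneracy.
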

\begin{proof}
By Lemma \ref{ComWithCom} 
\begin{equation*}
\begin{split}
& (E_{m}-E_{m'})\langle m|\mathcal{F}_{\pm}(|n\rangle\langle n'|)|m'\rangle \\
 & =  \langle m|[H,\mathcal{F}_{\pm}(|n\rangle\langle n'|)]|m'\rangle\\
 & =  \langle m|\mathcal{F}_{\pm}([H,|n\rangle\langle n'|])|m'\rangle\\
 & =  (E_{n}-E_{n'})\langle m|\mathcal{F}_{\pm}(|n\rangle\langle n'|)|m'\rangle.
\end{split}
\end{equation*}
Thus 
$(E_{m}-E_{m'}-E_{n}+E_{n'})\langle m|\mathcal{F}_{\pm}(|n\rangle\langle n'|)|m'\rangle = 0$.
\end{proof}

If $H_E$ is non-degenerate (i.e., $E_{n}=E_{n'}$ if and only if $n = n'$) then it follows by Corollary \ref{MappingOffDiagonal} that 
$\langle m|\mathcal{F}_{\pm}(Q)|m\rangle  =  \sum_{n} \langle m|\mathcal{F}_{\pm}\big(|n\rangle\langle n|Q|n\rangle\langle n|\big)|m\rangle$.

Hence, $\mathcal{F}_{\pm}$ cannot `create' off-diagonal elements with respect to the energy eigenbasis. Moreover, if we are only interested in the diagonal elements of the output, we only need to consider the diagonal elements of the input. Another way to put this is that the statistics of an energy measurement on the output is unaffected by an additional energy measurement on the input.

\subsection{\label{DiagonalCase} Diagonal Crooks relations}

Let us assume that $H_E$ has a pure non-degenerate point spectrum with eigenenergies $E_n$ corresponding to the complete orthonormal eigenbasis $\{|n\rangle\}_{n}$. We furthermore assume that $\mathcal{T}_E(|n\rangle\langle n|) = |n\rangle\langle n|$. (Due to Corollary \ref{CorrBasisInvariant} this is only a very minor generalization compared to assuming that $\mathcal{T}_E$ is the transpose with respect to $\{|n\rangle\}_{n}$.)
 
Imagine now that we represent the density operator of the energy reservoir as a matrix with respect to the basis  $\{|n\rangle\}_{n}$.
Since  $\mathcal{F}_{\pm}$ are channels it follows that
the numbers
\begin{equation} 
\label{Pplusminusdef}
\begin{split}
p_{\pm}(m|n) :=  \langle m|\mathcal{F}_{\pm}(|n\rangle\langle n|)|m\rangle,
\end{split}
\end{equation}
can be interpreted as conditional probability distributions.

\begin{Proposition}
\label{PropdiagonalCrooks}
With Assumptions \ref{Def2}, assume that $H_E$ has a complete orthonormal eigenbasis $\{|n\rangle\}_n$ with corresponding eigenvalues $E_n$, and let $\mathcal{T}_E$ be such that $\mathcal{T}_E(|n\rangle\langle n|) = |n\rangle\langle n|$. Then the conditional distributions $p_{+}(m|n)$ and $p_{-}(n|m)$ defined in (\ref{Pplusminusdef}) satisfy
 \begin{equation}
\label{diagonalCrooksRel}
 Z(H^i_{S})  p_{+}(m|n) = Z(H^f_{S}) e^{\beta (E_n-E_m)}p_{-}(n|m).
\end{equation}
\end{Proposition}
One may note that (\ref{diagonalCrooksRel}) holds for any pair of eigenvectors $|n\rangle, |m\rangle$ that are invariant under the time-reversal, irrespective of whether $H_E$ possesses a complete eigenbasis or not. (Similar remarks also apply to Corollary \ref{MappingOffDiagonal}  and Proposition \ref{Propoffdiag}.) 
\begin{proof}[Proof of Proposition \ref{PropdiagonalCrooks}]
If we apply both sides of equation (\ref{FplusequivFminusS}) on the operator $|n\rangle\langle n|$, and operate on both sides of the resulting equality with $\langle m|\cdot |m\rangle$ we obtain
\begin{equation*}
\begin{split}
& Z(H^{i}_{S})\langle m|\mathcal{F}_{+}(|n\rangle\langle n|)|m\rangle\\
& = Z(H^{f}_{S})e^{\beta (E_n-E_m)}\langle m|\mathcal{F}_{-}^{\ominus}(|n\rangle\langle n|)|m\rangle.
\end{split}
\end{equation*}
With the invariance of $|n\rangle\langle n|$ under the time-reversal, we find $\langle m|\mathcal{F}_{-}^{\ominus}(|n\rangle\langle n|)|m\rangle = \langle n|\mathcal{F}_{-}(|m\rangle\langle m|)|n\rangle$. With the identifications in (\ref{Pplusminusdef}) the proposition follows. 
\end{proof}

In Appendix \ref{SecRegainingClassicalCrooks} we shall use the additional assumption of energy translation invariance on the energy reservoir to show how (\ref{diagonalCrooksRel}) leads to the standard classical Crooks relation.

\subsection{\label{Secoffdiagonal}Off-diagonal Crooks relations}

Like in Appendix \ref{DiagonalCase} we here assume a discrete non-degenerate spectrum of $H_E$, with corresponding orthonormal eigenbasis $\{|n\rangle\}_{n}$ and energy eigenvalues $E_n$. We also assume that $\mathcal{T}_E$ is the transpose with respect to this basis, and thus $\mathcal{T}_E(|n\rangle\langle n'|) = |n'\rangle\langle n|$. 

As discussed in Appendix \ref{DecouplingDiagonals}, the channel $\mathcal{F}_{+}$ can only induce transitions between $|n\rangle\langle n'|$ and $|m\rangle\langle m'|$ if $E_{n} -E_{n'} = E_{m}-E_{m'}$. For each $\delta$ we can thus define a corresponding set of operators $\{|n\rangle\langle n'|\}_{n,n':E_{n}-E_{n'} = \delta}$. (This set would be empty for many values of $\delta$.) For each such $\delta$, we will construct a Crooks relation, analogous to what we did for the diagonal case.

As the generalization of $p_{+}(m|n)$ and $p_{-}(n|m)$ we define 
\begin{equation}
\label{DefQpm} 
\begin{split}
q_{\pm}^{\delta}(m|n) := & \langle m|\mathcal{F}_{\pm}(|n\rangle\langle n'|)|m'\rangle,
 \end{split}
\end{equation}
where  $q^{0}_{\pm} = p_{\pm}$.
The reason for why it is enough to write `$q_{\pm}^{\delta}(m|n)$' rather than `$q_{\pm}^{\delta}(mm'|nn')$' is that $m'$ and $n'$ are uniquely determined by $\delta$, $m$, and $n$, due to the assumption that $H_E$ is non-degenerate.

The set of numbers $q^{\delta}_{\pm}(m|n)$ represent the channels $\mathcal{F}_{\pm}$ in the sense that
\begin{equation}
\label{nnabna}
\mathcal{F}_{\pm}(\rho) = \sum_{\delta}  \sum_{m,n}\sum_{\substack{n'm':E_n-E_{n'} =\delta\\
 E_{m}-E_{m'}= \delta}} q^{\delta}_{\pm}(m|n)  |m\rangle\langle n|\rho|n'\rangle\langle m'|.
\end{equation}

\begin{Proposition}
\label{Propoffdiag}
With Assumptions \ref{Def2}, assume that $H_E$ has a complete orthonormal eigenbasis $\{|n\rangle\}_n$ with corresponding non-degenerate eigenvalues $E_n$. Let $\mathcal{T}_E$ be the transpose with respect to  $\{|n\rangle\}_n$. Then $q_{\pm}^{\delta}(m|n)$ defined in (\ref{DefQpm}) satisfy
\begin{equation}
\label{grkgsslkjg}
Z(H^{i}_{S})q^{\delta}_{+}(m|n) =  Z(H^{f}_{S}) e^{\beta(E_{n}-E_{m})}  q_{-}^{\delta}(n|m).
\end{equation}
\end{Proposition}
\begin{proof}
Let $n,n'$ and $m,m'$ be such that $E_{n}-E_{n'} = E_{m}-E_{m'} = \delta$. 
If we apply both sides of equation (\ref{FplusequivFminusS}) on the operator $|n\rangle\langle n'|$, and operate on both sides of the resulting equality with $\langle m|\cdot |m'\rangle$ we obtain
\begin{equation*}
\begin{split}
& Z(H^{i}_{S})\langle m|\mathcal{F}_{+}(|n\rangle\langle n'|)|m'\rangle \\
& =   Z(H^{f}_{S}) e^{\beta(E_{n}-E_{m})}\langle m|\mathcal{F}_{-}^{\ominus}(|n\rangle\langle n'|)|m'\rangle,
\end{split}
\end{equation*}
 where we have made use of $E_{n'} = E_{n} - \delta$ and $E_{m'} = E_{m} - \delta$.
 With the assumption that $\mathcal{T}_E$ is the transpose with respect to $\{|n\rangle\}_{n}$, together with the identifications in equation (\ref{DefQpm}), we obtain the proposition. 
\end{proof}

As mentioned in the main text we need to use off-diagonal initial states as well as off-diagonal measurement operators in order to determine the numbers $q_{\pm}^{\delta}(m|n)$ in a `prepare and measure'-experiment. There are many possible arrangements, but let us here construct a setup that determines these numbers via  interference. Let $n,n'$ and $m,m'$ with $n\neq n'$ and $m\neq m'$ be such that $\delta = E_n-E_{n'} = E_{m}-E_{m'}$, with non-degenerate $E_n$. Define the POVM element $A := (|m\rangle+|m'\rangle)(\langle m|+\langle m'|)/2$ and the family of initial states $|\psi_{\theta}\rangle := (|n\rangle + e^{i\theta}|n'\rangle)/\sqrt{2}$. Then the probability of measuring $A$ on the evolved state is 
\begin{equation*}
\begin{split}
& \Tr\big(A\mathcal{F}_{\pm}(|\psi_{\theta}\rangle\langle\psi_{\theta}|)\big)\\
= & \frac{1}{4}\big(p_{\pm}(m|n)+p_{\pm}(m'|n) +p_{\pm}(m|n') +p_{\pm}(m'|n')\big)\\
& + \frac{1}{2}|q^{\delta}_{\pm}(m|n)|\cos\Big(\arg\big(q_{\pm}^{\delta}(m|n)\big)-\theta\Big),
\end{split}
\end{equation*}
where we have made use of the decoupling, and the fact that $\langle m'|\mathcal{F}_{\pm}(|n'\rangle\langle n|)|m\rangle = \langle m|\mathcal{F}_{\pm}(|n\rangle\langle n'|)|m'\rangle^{*}$ by  virtue of the complete positivity of $\mathcal{F}_{\pm}$. Hence, the magnitude and phase of $q^{\delta}_{\pm}(m|n)$ can be determined via the amplitude and phase-shift of the interference pattern with respect to the phase $\theta$.

\section{\label{SecQuantumJarzynski} Jarzynski equalities}

Jarzynski's equality \cite{Jarzynski97} can be formulated as 
$\langle e^{-\beta W}\rangle = Z(H^f)/Z(H^i)$.  
This is often written in the more elegant form  $\langle e^{-\beta (W-\Delta F)}\rangle = 1$, where $\Delta F = F(H^f)-F(H^i)$, with $F(H) = -kT\ln Z(H)$, is the (equilibrium) free-energy difference between the initial and final state. 
Here we obtain the following family of quantum Jarzynski equalities
\begin{Proposition}
\label{CollectionJarzynski}
With Assumptions \ref{Def2}, the channels $\mathcal{F}_{+}$ and $\mathcal{R}_{+}$ as defined in Eq.~(\ref{DefFRplus}) satisfy 
\begin{equation}
\label{eq1jarzynski}
\begin{split}
&\Tr\Big[e^{\beta H_E}\mathcal{F}_{+}\Big(e^{(-\beta  + r + z)H_E/2 } \rho e^{(-\beta+ r -z) H_E/2}\Big) \Big]\\
& = \frac{Z(H^{f}_{S'})}{ Z(H^{i}_{S'})}\Tr[e^{r H_E/2}\mathcal{R}_{+}(\hat{1}) e^{rH_E/2 } \rho],
\end{split}
\end{equation}
for $r\in\mathbb{R}$ and $z\in \mathbb{C}$.

Hence, if $\mathcal{R}_{+}(\hat{1}_{E}) = \hat{1}_{E}$, then
\begin{equation}
\label{eq2jarzynski}
\begin{split}
& \Tr\Big[e^{\beta H_E}\mathcal{F}_{+}\Big(e^{(-\beta  + r + z)H_E/2 } \rho e^{(-\beta+ r -z) H_E/2}\Big) \Big]\\
& = \frac{Z(H^{f}_{S'})}{ Z(H^{i}_{S'})}\Tr(e^{rH_E}\rho).
\end{split}
\end{equation}
The condition $\mathcal{R}_{+}(\hat{1}_{E}) = \hat{1}_{E}$ is equivalent to $\mathcal{F}_{-}(\hat{1}_{E}) = \hat{1}_{E}$, with $\mathcal{F}_{-}$ as defined in Eq.~(\ref{DefFRminus}).
\end{Proposition}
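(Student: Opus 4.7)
The plan is to reduce everything to the intermediate quantum Crooks relation of Proposition \ref{PreliminaryQuantumCrooks}, applied to the forward/reverse pair $(\mathcal{F}_+,\mathcal{R}_+)$, namely
\begin{equation*}
Z(H^i_{S'})\mathcal{F}_{+} = Z(H^f_{S'})\,\mathcal{J}_{\beta H_E}\mathcal{R}_{+}^{*}\mathcal{J}_{\beta H_E}^{-1}.
\end{equation*}
The quantity we want to evaluate on the left-hand side of (\ref{eq1jarzynski}) is $\Tr\bigl[e^{\beta H_E}\mathcal{F}_{+}(\sigma)\bigr]$ with $\sigma = e^{(-\beta+r+z)H_E/2}\rho\,e^{(-\beta+r-z)H_E/2}$. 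The crucial observation is that $\sigma = \mathcal{J}_{\beta H_E}(X)$ for $X := e^{(r+z)H_E/2}\rho\, e^{(r-z)H_E/2}$, so the outer $\mathcal{J}_{\beta H_E}^{-1}$ on the right-hand side of the intermediate relation simply peels off and returns $X$.

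First I would substitute the intermediate relation into $\Tr[e^{\beta H_E}\mathcal{F}_+(\sigma)]$, then use the elementary identity $\Tr[e^{\beta H_E}\mathcal{J}_{\beta H_E}(Y)] = \Tr(Y)$ to eliminate the outer $\mathcal{J}_{\beta H_E}$ together with the $e^{\beta H_E}$ factor. This leaves $\Tr[\mathcal{R}_+^*(X)]$. By the defining adjoint property of the conjugate CPM, $\Tr[\mathcal{R}_+^*(X)] = \Tr[\mathcal{R}_+(\hat 1)\,X]$, which, after re-expanding $X$ and cycling the trace, gives
\begin{equation*}
\frac{Z(H^f_{S'})}{Z(H^i_{S'})}\,\Tr\bigl[e^{(r-z)H_E/2}\mathcal{R}_+(\hat 1)\,e^{(r+z)H_E/2}\rho\bigr].
\end{equation*}

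The one step that deserves care is getting rid of the $z$-dependence on the right-hand side. Here the main obstacle is to recognise why the apparently asymmetric placement of $e^{(r\pm z)H_E/2}$ collapses to the symmetric form $e^{rH_E/2}\mathcal{R}_+(\hat 1)\,e^{rH_E/2}$. This is where Lemma \ref{ComWithCom} is decisive: applied with $\sigma=\hat 1$ it gives $[H_E,\mathcal{R}_+(\hat 1)] = \mathcal{R}_+([H_E,\hat 1]) = 0$, so $\mathcal{R}_+(\hat 1)$ commutes with every $e^{\alpha H_E}$. The factors $e^{\mp zH_E/2}$ then combine and cancel, yielding exactly (\ref{eq1jarzynski}).

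The unital case (\ref{eq2jarzynski}) follows by setting $\mathcal{R}_+(\hat 1)=\hat 1$ and collapsing $e^{rH_E/2}e^{rH_E/2} = e^{rH_E}$. Finally, the equivalence $\mathcal{R}_+(\hat 1)=\hat 1 \Leftrightarrow \mathcal{F}_-(\hat 1)=\hat 1$ follows directly from the intertwining $\mathcal{T}_E\mathcal{R}_+ = \mathcal{F}_-\mathcal{T}_E$ of Lemma \ref{ChannelReversal} together with $\mathcal{T}_E(\hat 1)=\hat 1$ (Lemma \ref{PreservesIdentity}): applying $\mathcal{T}_E$ to one unitality statement yields the other.
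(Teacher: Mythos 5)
Your proposal is correct and follows essentially the same route as the paper: the paper applies the quantum Crooks relation to the operator $e^{(-\beta+r+z)H_E/2}\rho\,e^{(-\beta+r-z)H_E/2}$, converts $\mathcal{F}_{-}^{\ominus}$ to $\mathcal{R}_{+}^{*}$ via Lemma \ref{ChannelReversal}, uses the adjoint/Kraus characterization of the conjugate to arrive at $\Tr[\mathcal{R}_{+}(\hat{1})\,e^{(r+z)H_E/2}\rho\,e^{(r-z)H_E/2}]$, and eliminates the $z$-dependence by exactly the same appeal to Lemma \ref{ComWithCom} (that $\mathcal{R}_{+}(\hat{1})$ commutes with $e^{\alpha H_E}$); your starting from the intermediate relation with $\mathcal{R}_{+}^{*}$ rather than the final relation with $\mathcal{F}_{-}^{\ominus}$ is an immaterial reordering, and your closing argument for $\mathcal{R}_{+}(\hat 1)=\hat 1\Leftrightarrow\mathcal{F}_{-}(\hat 1)=\hat 1$ is identical to the paper's.
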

The relations (\ref{fsjaflkslk}) and (\ref{eq2jarzynski1}) in the main text are obtained as special cases of (\ref{eq2jarzynski}).

Strictly speaking, in the infinite-dimensional case the traces in the above expressions are not necessarily well defined and finite for all operators $\rho$. However, we here proceed under the assumption that $\rho$ are chosen such that the traces are well defined.
\begin{proof}
By Assumptions \ref{Def2} it follows that Proposition \ref{PropQuantumCrooks} is applicable, and thus the channels $\mathcal{F}_{\pm}$ as defined in Eqs.~(\ref{DefFRplus}) and (\ref{DefFRminus}) satisfy equation (\ref{FplusequivFminus}). By applying (\ref{FplusequivFminus}) to the operator $e^{(-\beta  + r + z)H_E/2 } \rho e^{(-\beta+ r -z) H_E/2}$, multiplying both sides of the resulting equality with $e^{\beta H_E}$, and take the trace, and divide by $ Z(H^{i}_{S'})$, one obtains
\begin{equation*}
\begin{split}
&\Tr\Big[e^{\beta H_E}\mathcal{F}_{+}\big(e^{(-\beta  + r + z)H_E/2 } \rho e^{(-\beta+ r -z) H_E/2}\big)\Big] \\
& = \frac{Z(H^{f}_{S'})}{ Z(H^{i}_{S'})}\Tr\Big[\mathcal{F}_{-}^{\ominus}\Big(e^{(r + z)H_E/2 } \rho e^{(r -z) H_E/2}\Big)\Big]\\
& [\textrm{By Eq.~(\ref{RconjFominus}) and the definition of `$*$'}]\\
& = \frac{Z(H^{f}_{S'})}{ Z(H^{i}_{S'})}\Tr\Big[\mathcal{R}_{+}(\hat{1}) e^{(r + z)H_E/2 } \rho e^{(r -z) H_E/2}\Big].
\end{split}
\end{equation*}
With the definition of the commutator $\mathcal{C}_{H_E}(\sigma) := [H_E,\sigma]$, we can write $e^{-z \mathcal{C}_{H_E}/2}\big(\mathcal{R}_{+}(\hat{1})\big) = e^{-zH_E/2}\mathcal{R}_{+}(\hat{1})e^{zH_E/2}$. Combined with the fact from Lemma \ref{ComWithCom}, that $\mathcal{C}_{H_E}$ and $\mathcal{R}_{+}$ commute, we thus get $e^{-zH_E/2}\mathcal{R}_{+}(\hat{1})e^{zH_E/2} = \mathcal{R}_{+}(\hat{1})$. This proves (\ref{eq1jarzynski}). 

That $\mathcal{F}_{-}(\hat{1}) = \hat{1}$ if and only in $\mathcal{R}_{+}(\hat{1}) = \hat{1}$, follows from Lemma \ref{ChannelReversal} together with the properties $\mathcal{T}_E(\hat{1}) = \hat{1}$ (Lemma \ref{PreservesIdentity}) and $\mathcal{T}_E^2 = I$.
\end{proof}

One may get the impression that the members in the family of equalities in Proposition \ref{CollectionJarzynski} are independent.
However, at least in the finite-dimensional case one can transform them into each other, and in this sense they should maybe rather be regarded as the same equality in different  guises. To see this, start with assuming that  (\ref{eq1jarzynski}) is true for all operators $\rho$ on a finite-dimensional Hilbert space. We wish to show that this implies  that (\ref{eq1jarzynski}) is also true for $r$ and $z$  substituted with arbitrary $r'$ and $z'$. Let $r = r' + \Delta r$ and $z = z' + \Delta z$ in (\ref{eq1jarzynski}) and define $\rho':= e^{\Delta rH_E/2}\rho e^{\Delta rH_E/2}$. 
This yields the equality 
\begin{equation*}
\begin{split}
&\Tr\Big[e^{\beta H_E}\mathcal{F}_{+}\Big(e^{(-\beta  + r' + z'+\Delta z)H_E/2 } \rho'  e^{(-\beta+ r'-z'-\Delta z) H_E/2}\Big) \Big]\\
& = \frac{Z(H^{f}_{S'})}{ Z(H^{i}_{S'})}\Tr[e^{r' H_E/2}\mathcal{R}_{+}(\hat{1}) e^{r'H_E/2 } \rho' ].
\end{split}
\end{equation*}
We can now use the fact (Lemma \ref{ComWithCom}) that the commutator $\mathcal{C}_{H_E}(\sigma) := [H_E,\sigma]$ commutes with $\mathcal{F}_{+}$, to show that  $\mathcal{F}_{+}\big(e^{(-\beta  + r' + z'+\Delta z)H_E/2 } \rho' e^{(-\beta+ r'  -z'-\Delta z) H_E/2}\big) =  e^{\Delta z H_E/2}\mathcal{F}_{+}\big(e^{(-\beta  + r' + z')H_E/2 } \rho' e^{(-\beta+ r'  -z') H_E/2}\big)e^{-\Delta z H_E/2}$. From this is follows that (\ref{eq1jarzynski}) remains valid with $r,z,\rho$ substituted by $r',z',\rho'$. In the finite-dimensional case it is also clear that the mapping $\rho\mapsto \rho' = e^{\Delta rH_E/2}\rho e^{\Delta rH_E/2}$ for Hermitian $H_E$ and real $\Delta r$  is a bijection on the space of linear operators. Hence,  (\ref{eq1jarzynski}) with $r',z'$ holds for all operators $\rho$.

\section{\label{SecBoundsOnWork} Bounds on the work cost }

Here we investigate quantum analogues to some classical bounds on the work cost of processes. We find that one indeed can obtain such bounds.

\subsection{\label{SecBoundAverage}Bound on the average energy loss in the reservoir}

For processes that start in equilibrium one would expect that the work cost should be bounded from below by the equilibrium free energy difference of the final and initial Hamiltonian of the system $\langle W\rangle \geq  F(H^{f}) -F(H^{i})$. Here we derive a similar expression in our setting, under the assumption that  $\mathcal{R}_{+}$ is unital, and that $H_E$ has a pure non-degenerate point spectrum, i.e., that there exists a complete orthonormal basis $\{|n\rangle\}_{n}$ of eigenvectors to $H_E$ corresponding to distinct eigenvalues $E_n$. (The latter assumption may not necessarily be essential.)

\begin{Proposition}
\label{BoundOnAverageWork}
With Assumptions \ref{Def2}, assume that $H_E$ has a complete orthonormal eigenbasis with corresponding non-degenerate eigenvalues.
 Assume that the initial state is $G(H_{S'}^{i})\otimes \sigma$, with $\sigma$ being a density operator on $\mathcal{H}_E$. Then  
\begin{equation}
\label{svdfhjbvjsv}
\begin{split}
 & \Tr(H_E\sigma) - \Tr\big(H_E \mathcal{F}_{+}(\sigma)\big) \\
 & \geq F(H^{f}_{S'})  -F(H^{i}_{S'}) - \frac{1}{\beta}\ln \Tr(\sigma\mathcal{R}_{+}(\hat{1}_E)). 
\end{split}
\end{equation}
Hence, if $\mathcal{R}_{+}(\hat{1}) = \hat{1}$, then 
\begin{equation}
\label{RegainedStandard}
\begin{split}
   \Tr(H_E\sigma) -\Tr\big(H_E \mathcal{F}_{+}(\sigma)\big) \geq F(H^{f}_{S'})  -F(H^{i}_{S'}).
\end{split}
\end{equation}
\end{Proposition}
Hence, if we identify the loss of average energy  in the energy reservoir, $\Tr(H_E\sigma) -\Tr\big(H_E \mathcal{F}_{+}(\sigma)\big)$, with $\langle W\rangle$, equation (\ref{RegainedStandard}) thus gives the standard bound.
\begin{proof}
Let $E_n$ and $|n\rangle$ be eigenvalues and corresponding orthonormal eigenvectors to $H_E$ such that $\{|n\rangle\}_{n}$ is a complete orthonormal basis to $\mathcal{H}_E$. 
Define  
\begin{equation}
p_{n,m}  := \langle n|\sigma |n\rangle \langle m|\mathcal{F}_{+}(|n\rangle\langle n|)|m\rangle.
\end{equation}
By the fact that $\mathcal{F}_{+}$ is a channel it follows that $\{p_{n,m}\}_{n,m}$ is a probability distribution.  Corollary \ref{MappingOffDiagonal} and the  non-degeneracy of  $H_E$ yield
\begin{equation*}
\begin{split}
& \Tr[e^{\beta H_E}\mathcal{F}_{+}(e^{-\beta H_E/2} \sigma e^{-\beta H_E/2})]  \\
        & =  \sum_{m}\sum_{n} \langle n|\sigma |n\rangle e^{\beta (E_m-E_{n})}\langle m|\mathcal{F}_{+}(|n\rangle\langle n|)|m\rangle\\
       & \geq   \exp\Big[\beta \sum_{m}\sum_{n} (E_m-E_{n})\langle n|\sigma |n\rangle \langle m|\mathcal{F}_{+}(|n\rangle\langle n|)|m\rangle \Big] \\ 
             & [\textrm{By Corollary \ref{MappingOffDiagonal}, and non-degenerate $H_E$} ]\\
= &  \exp\Big[ \beta \Tr(H_E\mathcal{F}_{+}(\sigma))  -  \beta \Tr(\mathcal{F}_{+}(H_E\sigma)) \Big] \\    
    &[\textrm{$\mathcal{F}_{+}$ is trace preserving}] \\
= &  \exp\Big[ \beta \Tr(H_E\mathcal{F}_{+}(\sigma))  -  \beta \Tr(H_E\sigma) \Big],                       
 \end{split}
\end{equation*}
where the inequality follows by the convexity of the exponential function.
By combining this inequality with the quantum Jarzynski equality  (\ref{eq1jarzynski}) in Proposition \ref{CollectionJarzynski} for $r = 0$ and $z =0$, one obtains
\begin{equation*}
 \frac{Z(H^f_{S'})}{Z(H^i_{S'})}\Tr[\sigma\mathcal{R}_{+}(\hat{1})] \geq e^{\beta \Tr(H_E\mathcal{F}_{+}(\sigma))  -  \beta \Tr(H_E\sigma)}.
\end{equation*}
Since the logarithm is monotonically increasing, and thus preserves the inequality, we thus obtain (\ref{svdfhjbvjsv}), where we use $F(H) = -kT\ln Z(H)$.
\end{proof}

\subsection{\label{SecViolationStndrdWorkB}An example}

One should keep in mind that the inequality (\ref{svdfhjbvjsv}) does not \emph{per se} imply that the standard work bound $\langle W\rangle \geq  F(H^{f}) -F(H^{i})$ necessarily is violated when $\mathcal{R}_{+}$ is not unital; it only allows for the possibility. However, here we construct an explicit example where one indeed gets a violation.

We begin with  a general remark to put this, maybe not entirely transparent, example in perspective.
The general goal is to find a case where the joint unitary evolution on $S'E$ is such that the loss of energy in the energy reservoir is too small compared to the standard bound. More precisely, we wish to find a global unitary $V$ such that $D(V,\sigma)$, defined below, violates the standard bound. In Appendix \ref{SecEnergyIndependence} we will show that a specific class of energy translation invariant models yields unital  $\mathcal{R}_{+}$, and thus recovers the standard bound. That model is based on a special Hamiltonian on $E$, as well as a specific class of Hamiltonians on $S'$, such that every possible transition in $S'$ always can be compensated by a corresponding transition in $E$. This results in a particularly simple structure of isomorphic eigenspaces, each enumerated by an integer $j$. Since the global unitary $V$ by assumption is energy conserving, it means that it block-diagonalizes into a collection of smaller unitaries $\{V_j\}_j$ on these eigenspaces. For the energy translation invariant model in Appendix \ref{SecEnergyIndependence}, all these unitary operators are (in certain sense) equal. Here we will use the very same structure of Hamiltonians and eigenspaces, but we let all $V_j$ vary independently, thus increasing the number of free parameters in the minimization of $D(V,\sigma)$ from finite to infinite. From this perspective it is maybe not entirely surprising that this opens up for a violation of the standard bound.

With the setting as in Assumptions \ref{Def2} we define the average energy loss $D(V,\sigma)$ in the energy reservoir 
\begin{equation}
\begin{split}
& D(V,\sigma) := \Tr(H_E\sigma) \\
& \quad \quad\quad \quad- \Tr\big([\hat{1}_{S'}\otimes\hat{1}_{C}\otimes H_E]V\rho_i V^{\dagger}\big),\\
&\quad \quad\quad \rho_i:=G(H_{S'}^i)\otimes |c_{i+}\rangle\langle c_{i+}|\otimes\sigma.
\end{split}
\end{equation}
By energy conservation, the assumption of prefect control, and the general relation  $H = F(H)-kT\ln G(H)$, one can show that 
\begin{equation}
\label{toeeoreoiu}
\begin{split}
& D(V,\sigma) = F(H_{S'}^f)-F(H_{S'}^i)  -\frac{1}{\beta}S\big(G(H_{S'}^i)\big)\\
& -\frac{1}{\beta}\Tr\big([\ln G(H_{S'}^f)\otimes \hat{1}_C\otimes\hat{1}_E]V\rho_i V^{\dagger}\big).
\end{split}
\end{equation}
The strategy will be to construct a model with a Hamiltonian and a particular class of energy conserving unitary operators $V$ that is simple enough that we can determine the corresponding minimum of $D(V,\sigma)$.

Let us assume that $H_E = s\sum_{j\in\mathbb{Z}}j|j\rangle\langle j|$ where $s>0$, for an orthonormal basis $\{|j\rangle\}_{j\in\mathbb{Z}}$. (We will use the very same Hamiltonian in Appendix \ref{SecEnergyIndependence}.)
Moreover, we assume that $\mathcal{H}_{S'}$ is finite-dimensional and that  the eigenvalues of $H^{i}_{S'}$ and $H^{f}_{S'}$ are integer multiples of $s$, i.e., they have the eigenvalues $\{sz^{i}_n\}_{n=1}^{N}$,  $\{sz^{f}_n\}_{n=1}^{N}$, for some $z^{i}_n,z^{f}_n\in\mathbb{Z}$, with corresponding eigenvectors $|\chi^{i}_n\rangle$, $|\chi^{f}_n\rangle$ (again the same as in Appendix \ref{SecEnergyIndependence}). To make the derivations simpler we also assume that these eigenvalues are non-degenerate.

One can realize that for each single $j$ all the vectors 
\begin{equation*}
\begin{split}
|i^{+}_{j,n}\rangle := & |\chi^i_n\rangle |c_{i+}\rangle| j-z^i_n\rangle,\\
 |i^{-}_{j,n}\rangle :=  & |\chi^i_n\rangle |c_{i-}\rangle| j-z^i_n\rangle,\\ 
|f^{+}_{j,m}\rangle := &  |\chi^f_m\rangle|c_{f+}\rangle|j-z^f_m\rangle,\\
 |f^{-}_{j,m}\rangle := & |\chi^f_m\rangle|c_{f-}\rangle|j-z^f_m\rangle,
\end{split}
\end{equation*}
for $m,n = 1,\ldots, N$, correspond to the same global energy $sj$. Hence, any unitary transformation that only transforms within these collections is energy conserving. However, we also have to satisfy perfect control. Hence, the unitary should transfer $|c_{i+}\rangle$ to $|c_{f+}\rangle$ (as well as $|c_{f-}\rangle$ to $|c_{i-}\rangle$).
With the above remarks in mind,
we define the following class of unitary operators on $\mathcal{H}_{S'CE}$
\begin{equation}
\label{kjdsjlkfdlk}
\begin{split}
V :=& \sum_{j\in\mathbb{Z}}\sum_{n,m=1}^{N}U^{(j):f+,i+}_{m,n}|f^{+}_{j,m}\rangle\langle i^{+}_{j,n}|\\
& + \sum_{j\in\mathbb{Z}}\sum_{n,m=1}^{N}U^{(j):i-,f-}_{n,m}|i^{-}_{j,n}\rangle\langle f^{-}_{j,m}|\\
&+ \sum_{j\in\mathbb{Z}}\sum_{n,m=1}^{N}U^{(j):i+,f+}_{n,m}|i^{+}_{j,n}\rangle\langle f^{+}_{j,m}|\\
& + \sum_{j\in\mathbb{Z}}\sum_{n,m=1}^{N}U^{(j):f-,i-}_{m,n}|f^{-}_{j,m}\rangle\langle i^{-}_{j,n}|,
\end{split}
\end{equation}
where the matrices $[U^{(j):f+,i+}_{m,n}]_{m,n}$, $[U^{(j):i-,f-}_{n,m}]_{n,m}$, $[U^{(j):i+,f+}_{n,m}]_{n,m}$, and  $[U^{(j):f-,i-}_{m,n}]_{m,n}$ are unitary for each fixed $j$. (If we additionally assume that these matrices are independent of $j$, then we obtain the class of energy translation invariant unitaries that is considered in Appendix \ref{SecEnergyIndependence}.) 
By construction (\ref{kjdsjlkfdlk}) is energy conserving, and the first line corresponds to transitions from the control state $|c_{i+}\rangle$ to $|c_{f+}\rangle$, thus implementing the desired perfect control.  The second line in (\ref{kjdsjlkfdlk}) analogously describes evolution from $|c_{f-}\rangle$ to $|c_{i-}\rangle$. The two last lines in (\ref{kjdsjlkfdlk}) serve no active role in our protocol, but are there in order to guarantee unitarity, energy conservation, and time-reversal symmetry of the global evolution $V$.

Next we define time-reversals on $S'C$ and $E$. First define  
$Y :=  |c_{i+}\rangle\langle c_{i-}| + |c_{i-}\rangle\langle c_{i+}|
   + |c_{f+}\rangle\langle c_{f-}| + |c_{f-}\rangle\langle c_{f+}|$, which so to speak swaps the control states between the positive and negative `tracks'.
We use this in turn to define $\mathcal{T}_{S'C}(Q) := [\hat{1}_{S'}\otimes Y]Q^{\tau}[\hat{1}_{S'}\otimes Y^{\dagger}]$,
where $\tau$ denotes the transpose with respect to the orthonormal basis
$\mathcal{B}:=  \{|\chi^{i}_{n}\rangle|c_{i+}\rangle\}_{n}\cup\{|\chi^{i}_{n}\rangle|c_{i-}\rangle\}_{n} \cup\{|\chi^{f}_{n}\rangle|c_{f+}\rangle\}_{n}\cup\{|\chi^{f}_{n}\rangle|c_{f-}\rangle\}_{n}$. If this would have been a finite-dimensional case, we could have used Proposition \ref{TcharactFinite} to conclude that $\mathcal{T}_{S'C}$ is a time-reversal. However, it is straightforward to directly check the properties in Definition \ref{PropertiesTimeReversal}. 

Let $\mathcal{T}_E$ be the transpose with respect to the basis $\{|j\rangle\}_{j\in\mathbb{Z}}$ of $\mathcal{H}_E$, and define $\mathcal{T} :=\mathcal{T}_{S'C}\otimes\mathcal{T}_E$. With this definition one can confirm that $\mathcal{T}(V) = V$
if and only if  $U^{(j):i-,f-}_{n,m} = U^{(j):f+,i+}_{m,n}$ and $U^{(j):f-,i-}_{m,n} = U^{(j):i+,f+}_{n,m}$.
Next, note that 
\begin{equation}
\label{yaax}
\begin{split}
& \Tr\big([\ln G(H_{S'}^f)\otimes \hat{1}_C\otimes\hat{1}_E]V\rho_i V^{\dagger}\big)\\
& =   \sum_{j\in\mathbb{Z}}\sum_{n,m=1}^{N}|U^{(j):f+,i+}_{m,n}|^2\ln G_m(H_{S'}^f) \\
& \quad\quad\quad\quad\times G_n(H_{S'}^i)\langle j-z^i_n|\sigma| j-z^i_{n} \rangle.
\end{split}
\end{equation}
 In order to minimize (\ref{toeeoreoiu}) over the time-reversal symmetric operators $V$ in our designated family (\ref{kjdsjlkfdlk}) it is sufficient to minimize over the collection of unitary matrices $U^{(j):f+,i+} := [U^{(j):f+,i+}_{m,n}]_{m,n}$. (Since $\mathcal{T}(V) = V$ if and only if $U^{(j):f+,i+}_{m,n} = U^{(j):i-,f-}_{n,m}$ and $U^{(j):i+,f+}_{n,m} = U^{(j):f-,i-}_{m,n}$, there are no further restrictions.)
Next, insert (\ref{yaax}) into (\ref{toeeoreoiu}) and minimize, which yields
\begin{equation}
\label{eqrewrq}
\begin{split}
& \min_{\{V\}_j}D(V,\sigma) =  F(H_{S'}^f)-F(H_{S'}^i) -\frac{1}{\beta}S\big(G(H_{S'}^i)\big)\\
&\quad\quad -\frac{1}{\beta} \sum_{j}\sum_{m}\lambda^{\downarrow}_{m}\Big(r_j\big(G(H^i_{S'})\big)\Big)\ln\lambda^{\downarrow}_{m}\big( G(H^f_{S'})\big), 
\end{split}
\end{equation}
where 
\begin{equation}
\label{rjdef}
r^{(j)}(\rho) := \sum_{n}\langle j-z^{i}_n|\sigma|j-z^{i}_n\rangle |\chi^{i}_{n}\rangle\langle\chi^{i}_{n}|\rho |\chi^{i}_{n}\rangle\langle\chi^{i}_{n}|,
\end{equation}
and where $\lambda^{\downarrow}_{n}(Q)$ denotes the $n$:th eigenvalue of $Q$, ordered non-increasingly. 
The minimum in (\ref{eqrewrq}) can be obtained by noting that $[|U^{(j):f+,i+}_{m,n}|^2]_{m,n}$ is a doubly stochastic matrix for each $j$. Hence, according to Birkhoff's theorem \cite{MatrixAnalysisBhatia} it can be regarded as a convex combination of permutation matrices. Since every permutation matrix results from a unitary matrix, we know that the maximum of (\ref{yaax}) is given by a permutation. (Alternatively, one can define  $U^{(j)} := \sum_{m,n}U^{(j):f+,i+}_{m,n}|\chi^{f}_m\rangle\langle \chi^{i}_n|$ and 
observe that $\sum_{n,m=1}^{N}|U^{(j):f+,i+}_{m,n}|^2\ln G_m(H_{S'}^f) G_n(H_{S'}^i)\langle j-z^i_n|\sigma| j-z^i_{n} \rangle = \Tr\big[ {U^{(j)}}^{\dagger}\ln G(H^f_{S'}) U^{(j)}r_j\big(G(H^i_{S'})\big)\big]$.
By the general relation $\max_{U}\Tr(U^{\dagger}QUR) = \sum_{m}\lambda^{\downarrow}_m(Q)\lambda^{\downarrow}_m(R)$, see e.g.~Theorem 4.3.53 in \cite{HornJohnson}, it follows that (\ref{eqrewrq}) holds.)
 
Assume that the energy reservoir starts in the specific energy eigenstate $\sigma:=|0\rangle\langle 0|$.
The definition of $r_{j}$ in (\ref{rjdef}), together with the assumed non-degeneracy of $H^i_{S'}$ and thus of $z^{i}_n$, leads to 
\begin{equation*}
\begin{split}
& \sum_{j}\sum_{m}\lambda^{\downarrow}_{m}\Big(r_j\big(G(H^i_{S'})\big)\Big)\ln\lambda^{\downarrow}_{m}\big(G(H^f_{S'})\big)\\
& =  \sum_{n}\sum_{m}\lambda^{\downarrow}_{m}\big(G_n(H^i_{S'})|\chi^i_{n}\rangle\langle\chi^{i}_n|\big)\ln\lambda^{\downarrow}_{m}\big(G(H^f_{S'})\big)\\
& =  \sum_{n}G_n(H^i_{S'})\ln\lambda^{\downarrow}_{1}\big(G(H^f_{S'})\big)\\
& =  \ln \lambda^{\downarrow}_1\big(G(H^{f}_{S'})\big).
\end{split}
\end{equation*}
By combining this observation with (\ref{eqrewrq}) we get
\begin{equation}
\begin{split}
&\min_{\{U^{(j)}\}_j}D(V,|0\rangle\langle 0|)=  F(H_{S'}^f)-F(H_{S'}^i)\\
& \quad\quad\quad  -\frac{1}{\beta}S(G(H_{S'}^i)) -\frac{1}{\beta} \ln \lambda^{\downarrow}_1(G(H^{f}_{S'})).
\end{split}
\end{equation}
In other words, we get a violation of the standard bound whenever
$S\big(G(H^{i}_{S'})\big) + \ln\lambda^{\downarrow}_{1}\big(G(H^{f}_{S'})\big)>0$.
For an explicit example where this is the case, let 
$H^{i}_{S'} = H^{f}_{S'} = s\sum_{k=0}^{K}k|\chi_k\rangle\langle\chi_k|$.
In this particular case we find
\begin{equation*}
\begin{split}
&  S\big(G(H^{i}_{S'})\big) + \ln\lambda^{\downarrow}_{1}\big(G(H^{f}_{S'})\big)\\
& =  \frac{s\beta\sum_{k=0}^{K}ke^{-s\beta k}}{\sum_{k' =0}^{K}e^{-s\beta k'}}\\
& =  \frac{s\beta e^{-s\beta}}{1-e^{-s\beta}} -\frac{s\beta(K+1)e^{-s\beta(K+1)}}{1-e^{-s\beta(K+1)}}. 
\end{split}
\end{equation*}
In the limit of large $K$ this approaches $s\beta e^{-s\beta}/(1-e^{-s\beta})$, which is strictly larger than zero since $s\beta>0$. Thus, for sufficiently large $K$ it follows that $\min D(V,\sigma) <  F(H_{S'}^f)-F(H_{S'}^i)$. Hence, with the identification between $D(V,\sigma)$ and $\langle W\rangle$ we do get a violation of the standard bound.

\subsection{\label{KelvinPlanckBounds}Bound for a closed cycle}

Here we consider the counterpart to the classical bound  $\langle W_{+}\rangle + \langle W_{-}\rangle \geq 0$, where $W_{+}$ and $W_{-}$ are the work costs of the forward and reverse process, respectively.
As in Appendix \ref{SecBoundAverage}, we take  $\Tr(H_E\sigma) - \Tr\big(H_E\mathcal{F}_{+}(\sigma)\big)$ as the counterpart of $\langle W_{+}\rangle$. Assuming that we use the very same energy reservoir also for the reversed process, we let $\langle W_{-}\rangle$ correspond to $\Tr\big(H_E\mathcal{F}_{+}(\sigma)\big) - \Tr\Big(H_E\mathcal{F}_{-}\big(\mathcal{F}_{+}(\sigma)\big)\Big)$, thus assuming that the energy reservoir is initially in state $\mathcal{F}_{+}(\sigma)$ in the second application. 
The inequality (\ref{svdfhjbvjsv}) in Proposition \ref{BoundOnAverageWork} is applied for the forward process, and the analogous inequality is applied for the reversed process, which yields
\begin{equation*}
\begin{split}
\langle W_{+}\rangle + \langle W_{-}\rangle \geq & -\frac{1}{\beta}\ln\Tr\big(\sigma \mathcal{R}_{+}(\hat{1})\big)\\
&  -\frac{1}{\beta}\ln\Tr\big(\mathcal{F}_{+}(\sigma) \mathcal{R}_{-}(\hat{1})\big). 
\end{split}
\end{equation*}
Hence, in this case we regain the standard result if both channels $\mathcal{R}_{+}$ and $\mathcal{R}_{-}$ are unital.

\subsection{\label{SecViolationBound} Bound on `second law violations'}

In the classical case, Crooks theorem and Jarzynski's equality put constraints on the distribution of the work cost. In particular, one can obtain a bound on the probability that the work value in a single run would violate the classical macroscopic bound $W \geq \Delta F$, where $\Delta F  := F(H^f_{S'}) -F(H^i_{S'})$ (for an initial equilibrium distribution). More precisely, regarding the work $W$ as a random variable, we can ask for the probability that the work $W$ is smaller than $\Delta F-\zeta$. In \cite{Jarzynski99} (alternatively, see section 7 of \cite{ReviewJarzynski}) it is shown that $P[W < \Delta F-\zeta]\leq e^{-\beta \zeta}$. In other words, the probability of a such an event is exponentially suppressed in the size of the violation $\zeta$.

Here we obtain an analogous bound in the quantum setting, but we again find that regaining the standard expression requires unitality of  $\mathcal{R}_{+}$.
For this discussion we assume that $H_E$ has a complete orthonormal basis $\{|n\rangle\}_{n}$ of eigenvectors.

We let $P_{\leq E_0}$ denote the projector onto the energy eigenstates of $H_E$ that has \emph{at most} energy $E_0$. In other words, $P_{\leq E_0}\sigma P_{\leq E_0} = \sigma$ implies that the probability to find an energy larger than $E_0$ in $\sigma$ is zero. We similarly let $P_{\geq \zeta-\Delta F+ E_0}$ denote the projector onto the energy eigenstates of $H_E$ that have \emph{at least} the energy $\zeta-\Delta F+ E_0$.

Assuming that $P_{\leq E_0}\sigma P_{\leq E_0} = \sigma$ we can thus interpret 
$\Tr\big(P_{\geq \zeta-\Delta F+ E_0}\mathcal{F}_{+}(\sigma)\big)$ as the probability that we would observe an energy gain in the reservoir that is at least $\zeta-\Delta F$. 
Equivalently, this would be the probability  that we would observe that the work done on the system is at most $\Delta F-\zeta$.
In this sense we regard $\Tr\big(P_{\geq \zeta-\Delta F+ E_0}\mathcal{F}_{+}(\sigma)\big)$ as an analogue of $P[W < \Delta F-\zeta]$, for the class of initial states $\sigma$ such that $P_{\leq E_0}\sigma P_{\leq E_0} = \sigma$.

We first observe the following operator inequality
 $\mathcal{J}_{\beta H_E}(P_{\geq \zeta-\Delta F+ E_0}) \leq  e^{-\beta (\zeta-\Delta F+ E_0)}\hat{1}$.
Since $\mathcal{R}_{+}$ is a completely positive map, it means that it preserves operator inequalities, and thus 
\begin{equation}
\label{adfnklg}
\begin{split}
& \mathcal{R}_{+}\big(\mathcal{J}_{\beta H_E}(P_{\geq \zeta-\Delta F+ E_0}) \big)
\leq  e^{-\beta (\zeta-\Delta F+ E_0)}\mathcal{R}_{+}(\hat{1}).
\end{split}
\end{equation}
By applying the fluctuation relation (\ref{FplusequivFminus}) in Proposition \ref{PropQuantumCrooks} onto $\sigma$ and take the expectation value of the projector $P_{\geq \zeta-\Delta F+ E_0}$ we get
\begin{equation}
\label{dfklbmnlf}
\begin{split}
& \Tr\big(P_{\geq \zeta-\Delta F+ E_0} \mathcal{F}_{+}(\sigma)\big) \\
& =   \frac{Z(H^{f}_{S'})}{Z(H^{i}_{S'})}\Tr\Big( \mathcal{R}_{+}\big(\mathcal{J}_{\beta H_E}(P_{\geq \zeta-\Delta F+ E_0})\big)\mathcal{J}_{\beta H_E}^{-1}(\sigma)\Big),
\end{split}
\end{equation}
where we use Lemma \ref{ChannelReversal}, and the definition of the channel conjugate $\mathcal{R}_{+}^{*}$.
We next note that  $\mathcal{J}_{\beta H_E}^{-1}(\sigma)$ is a positive semidefinite operator, and hence the operator inequality in (\ref{adfnklg}) implies that
\begin{equation}
\label{ydfbm}
\begin{split}
& \Tr\Big( \mathcal{R}_{+}\big(\mathcal{J}_{\beta H_E}(P_{\geq \zeta-\Delta F+ E_0})\big)\mathcal{J}_{\beta H_E}^{-1}(\sigma)\Big)\\
& \leq  e^{-\beta (\zeta-\Delta F+ E_0)}\Tr(e^{\beta H_E}\sigma)\Tr\big(\mathcal{R}_{+}(\hat{1})\tilde{\sigma}\big),
\end{split}
\end{equation}
where $\tilde{\sigma} := \mathcal{J}_{\beta H_E}^{-1}(\sigma)/\Tr(e^{\beta H_E}\sigma)$ is a density operator. 
If we assume $P_{\leq E_0}\sigma P_{\leq E_0} = \sigma$, then it follows that 
$\Tr(e^{\beta H_E}\sigma)\leq e^{\beta E_0}$. By combining this observation with 
(\ref{dfklbmnlf}), (\ref{ydfbm}) and $e^{\beta \Delta F} = Z(H^i_{S'})/Z(H^f_{S'})$, it follows that 
$\Tr\big(P_{\geq \zeta-\Delta F+ E_0} \mathcal{F}_{+}(\sigma)\big) \leq \Tr\big(\mathcal{R}_{+}(\hat{1})\tilde{\sigma}\big) e^{-\beta \zeta}$.
Hence, even without further conditions on the channel $\mathcal{R}_{+}$, there is an exponential suppression of the energy gain in the reservoir. If $\mathcal{R}_{+}$ is unital, $\mathcal{R}_{+}(\hat{1}) = \hat{1}$, we obtain the counterpart  $\Tr\big(P_{\geq \zeta-\Delta F+ E_0} \mathcal{F}_{+}(\sigma)\big) \leq e^{-\beta \zeta}$ to the classical bound.

\section{\label{SecEnergyIndependence} Energy translation invariance}

In this investigation, we have allowed for the possibility that the processes depend non-trivially on the amount of energy in the energy reservoir. Here we consider a further restriction that implements the idea that the experiment does not depend on the energy level. This model has previously been used in  \cite{Aberg13} to analyze coherence and work extraction. Here we only describe the most essential aspects of this model. For a more detailed description, see \cite{Aberg13}. 
First of all, imagine the Hamiltonian $H_E$ of the energy reservoir as a doubly infinite ladder of energy levels
\begin{equation}
\label{EnergyLadder}
H_{E} = s\sum_{j\in \mathbb{Z}} j|j\rangle\langle j|
\end{equation}
with energy spacing $s>0$. (See also the continuum version in \cite{Skrzypczyk14}.) As one can see, this Hamiltonian has a bottomless spectrum (which echoes the discussions in Appendix \ref{SecConditionsOnE}).  Although this is not the most physically satisfying assumption, one can view it as an idealization of  a `battery' that has a much higher energy content than the characteristic scale of energy costs in the experiment.
We furthermore assume that the Hamiltonian $H_{\widetilde{S}}$ of system $\widetilde{S}$ (which includes all systems that are not $E$, i.e., in our case, system $S$, the heat bath $B$, and the control $C$) is such that all its eigenvalues (we assume a finite-dimensional Hilbert space $\mathcal{H}_{\widetilde{S}}$ with dimension $N$) are integer multiples of the energy spacing $s$. (Due to this assumption it becomes easy to construct non-trivial energy conserving unitary operations.) In other words, we assume that $H_{\widetilde{S}}$ has an eigenbasis $\{|\psi_n\rangle\}_{n=1}^{N}$ with corresponding eigenvalues $sz_n$ where $z_n\in \mathbb{Z}$ for each $n$. Note that we allow $H_{\widetilde{S}}$ to be degenerate, in which case  $\{|\psi_{n}\rangle\}_{n}$ is an eigenbasis of our choice.

In this section we not only demand that the global unitary operations are energy conserving, $[H,V] = 0$, but also that they are energy translation invariant. 
To define what we mean by this, we  introduce the energy translation operator 
$\Delta = \sum_{j}|j+1\rangle\langle j|$
 on the energy reservoir. We say that a unitary operator $V$ on $\mathcal{H}_{\widetilde{S}}\otimes\mathcal{H}_E$ is energy translation invariant if $[\hat{1}_{\widetilde{S}}\otimes \Delta^{a},V] = 0$ for all $a\in\mathbb{Z}$. It turns out \cite{Aberg13} that that all energy conserving and energy translation invariant unitary operators in this model can be written in the following way
\begin{equation}
\label{DefVU}
V(U) = \sum_{n,n'=1}^{N}|\psi_{n}\rangle\langle\psi_{n}|U|\psi_{n'}\rangle\langle\psi_{n'}|\otimes \Delta^{z_{n'} -z_{n}},
 \end{equation}
 where $U$ is an arbitrary unitary operator on $\mathcal{H}_{\widetilde{S}}$. If there are degeneracies in the Hamiltonian $H_{\widetilde{S}}$ then $V(U)$ is independent of the choice of energy eigenbasis $\{|\psi_{n}\rangle\}_{n}$. In particular, if $\{P_m\}_{m}$ is a collection of eigenprojectors of $H_{\widetilde{S}}$, then one can alternatively write
\begin{equation}
\label{DefVUalt}
V(U) = \sum_{m,m'}P_mUP_{m'}\otimes \Delta^{z_{m'} -z_{m}}.
 \end{equation}
A useful property of $V(U)$ is that it preserves products $V(U_2U_1) = V(U_2)V(U_1)$. (In contrast to the time-reversals $\mathcal{T}$, there is no swap of the ordering.)

We also need to incorporate time-reversals (an aspect not included in \cite{Aberg13}). 
\begin{Lemma}
\label{kldsdslk}
Let $\mathcal{T}_E$ be defined as the transpose with respect to the orthonormal basis  $\{|j\rangle\}_{j\in \mathbb{Z}}$. Let $\mathcal{H}_{\widetilde{S}}$ be finite-dimensional and let $\mathcal{T}_{\widetilde{S}}$ be such that $\mathcal{T}_{\widetilde{S}}(H_{\widetilde{S}}) = H_{\widetilde{S}}$. Then $\mathcal{T} :=\mathcal{T}_{\widetilde{S}}\otimes\mathcal{T}_E$ satisfies
\begin{equation}
\mathcal{T}\big(V(U)\big) = V\big(\mathcal{T}_{\widetilde{S}}(U)\big),
\end{equation}
for all operators $U$ on $\mathcal{H}_{\widetilde{S}}$.
\end{Lemma}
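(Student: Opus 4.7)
My plan is to establish the identity by expanding the left hand side with the alternative representation (\ref{DefVUalt}), pushing $\mathcal{T}$ through the sum using linearity, and then handling the $\widetilde{S}$ and $E$ tensor factors separately.

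For the energy reservoir factor, I first compute $\mathcal{T}_E(\Delta^k)$. Since $\Delta^k = \sum_{j}|j+k\rangle\langle j|$ has matrix elements $\langle j|\Delta^k|j'\rangle=\delta_{j,j'+k}$, applying the transpose with respect to $\{|j\rangle\}_{j\in\mathbb{Z}}$ simply swaps the bra and ket labels and gives $\mathcal{T}_E(\Delta^k)=\sum_{j'}|j'\rangle\langle j'+k|=\Delta^{-k}$. (Equivalently, $\Delta^t=\Delta^{-1}$ combined with $(\Delta^k)^t=(\Delta^t)^k$ since the transpose reverses products.) So under $\mathcal{T}_E$ every factor $\Delta^{z_{m'}-z_m}$ gets its exponent flipped.

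For the $\widetilde{S}$ factor, using property (\ref{Prop3}) twice gives $\mathcal{T}_{\widetilde{S}}(P_mUP_{m'})=\mathcal{T}_{\widetilde{S}}(P_{m'})\mathcal{T}_{\widetilde{S}}(U)\mathcal{T}_{\widetilde{S}}(P_m)$. The hypothesis $\mathcal{T}_{\widetilde{S}}(H_{\widetilde{S}})=H_{\widetilde{S}}$ together with Lemma \ref{TOnHermitean} then yields $\mathcal{T}_{\widetilde{S}}(P_m)=P_m$ for each eigenprojector of $H_{\widetilde{S}}$. (This is the one place where I rely on $\mathcal{H}_{\widetilde{S}}$ being finite-dimensional, in order to apply Lemma \ref{TOnHermitean}; the assumption that the Hamiltonian has a purely discrete spectrum suffices.) Combining both factors I obtain
\begin{equation*}
\mathcal{T}\bigl(V(U)\bigr)=\sum_{m,m'}P_{m'}\mathcal{T}_{\widetilde{S}}(U)P_m\otimes\Delta^{z_m-z_{m'}}.
\end{equation*}

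The final step is a cosmetic index swap $m\leftrightarrow m'$, which turns the right hand side into $\sum_{m,m'}P_m\mathcal{T}_{\widetilde{S}}(U)P_{m'}\otimes\Delta^{z_{m'}-z_m}=V\bigl(\mathcal{T}_{\widetilde{S}}(U)\bigr)$, as desired. The only subtlety worth flagging is the interplay between the order-reversal from $\mathcal{T}_{\widetilde{S}}$ and the sign-flip from $\mathcal{T}_E$: both effectively exchange the roles of $m$ and $m'$, so they combine coherently rather than producing a stray inverse. No obstacle of substance arises beyond bookkeeping; the working starting point is simply to pick the projector form (\ref{DefVUalt}) rather than the eigenvector form (\ref{DefVU}), since the latter would require invoking $\mathcal{T}_{\widetilde{S}}(|\psi_n\rangle\langle\psi_n|)=|\psi_n\rangle\langle\psi_n|$, which need not hold in the presence of degeneracies.
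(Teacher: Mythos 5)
Your proof is correct and follows exactly the route the paper indicates for this lemma: expand $V(U)$ in the eigenprojector form (\ref{DefVUalt}), use Lemma \ref{TOnHermitean} to get $\mathcal{T}_{\widetilde{S}}(P_m)=P_m$, note $\mathcal{T}_E(\Delta^k)=\Delta^{-k}$, and let the order-reversal of $\mathcal{T}_{\widetilde{S}}$ and the sign-flip of $\mathcal{T}_E$ cancel under the relabeling $m\leftrightarrow m'$. Your closing remark about why the projector form is needed rather than the eigenvector form (\ref{DefVU}) in the degenerate case is a correct and worthwhile observation.
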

The proof is a direct application of the properties of time-reversals combined with Lemma \ref{TOnHermitean} and (\ref{DefVUalt}). 
 
The following lemma shows that the induced channels $\mathcal{R}_{\pm}$ and $\mathcal{F}_{\pm}$ are unital in this model. Hence, they automatically satisfy the condition  that emerged in the considerations on Jarzynski relations and work bounds in Appendices \ref{SecQuantumJarzynski} and \ref{SecBoundsOnWork}, respectively. (The unitality of this type of induced channels was previously observed in section II.C in the Supplementary Material of \cite{Aberg13}.)
\begin{Lemma}
\label{FpmRpmUnital}
For the channels $\mathcal{F}_{\pm}$ and $\mathcal{R}_{\pm}$ defined in equations (\ref{DefFRplus}) and (\ref{DefFRminus}), with $V := V(U)$ as in (\ref{DefVU}),  it is the case that 
$\mathcal{F}_{\pm}(\hat{1}_E) = \hat{1}_E$ and $\mathcal{R}_{\pm}(\hat{1}_E) = \hat{1}_E$.
\end{Lemma}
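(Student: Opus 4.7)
The plan is to verify the claim by a direct computation using the explicit form of $V(U)$ in (\ref{DefVU}). I will handle $\mathcal{F}_+$ in detail, and then observe that the other three cases reduce to the same computation.

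First, I would record the elementary identity $V(U)^{\dagger} = V(U^{\dagger})$, which follows by taking the Hilbert adjoint in (\ref{DefVU}), using $(\Delta^{k})^{\dagger} = \Delta^{-k}$, and relabeling the summation indices $n\leftrightarrow n'$. This already implies that it suffices to treat $\mathcal{F}_+$ and $\mathcal{F}_-$, since $\mathcal{R}_{\pm}$ are obtained from $\mathcal{F}_{\pm}$ (with exchanged initial/final data) by replacing $U$ by $U^{\dagger}$, and the argument below is manifestly invariant under this substitution.

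Next, writing $\rho_{\widetilde S} := G(H^i_{S'})\otimes |c_{i+}\rangle\langle c_{i+}|$ (a density operator on $\widetilde S=S'C$), I would insert (\ref{DefVU}) into the definition of $\mathcal{F}_+$ and compute
\begin{equation*}
\langle \psi_k|V(U)[\rho_{\widetilde S}\otimes \hat 1_E]V(U)^{\dagger}|\psi_k\rangle
=\sum_{n',m'}\langle\psi_k|U|\psi_{n'}\rangle\langle\psi_{m'}|U^{\dagger}|\psi_k\rangle\,\langle\psi_{n'}|\rho_{\widetilde S}|\psi_{m'}\rangle\,\Delta^{z_{n'}-z_{m'}},
\end{equation*}
where the two energy-translation factors $\Delta^{z_{n'}-z_k}$ and $\Delta^{z_k-z_{m'}}$ coming from $V(U)$ and $V(U)^{\dagger}$ combine into $\Delta^{z_{n'}-z_{m'}}$, with the dependence on $k$ cancelling. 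Then summing over $k$ to obtain the partial trace, $\sum_k\langle\psi_{m'}|U^{\dagger}|\psi_k\rangle\langle\psi_k|U|\psi_{n'}\rangle=\langle\psi_{m'}|U^{\dagger}U|\psi_{n'}\rangle=\delta_{n',m'}$, which collapses the remaining $\Delta$ to $\Delta^0=\hat 1_E$ and leaves
\begin{equation*}
\mathcal{F}_+(\hat 1_E)=\Big(\sum_{n'}\langle\psi_{n'}|\rho_{\widetilde S}|\psi_{n'}\rangle\Big)\hat 1_E=\Tr(\rho_{\widetilde S})\,\hat 1_E=\hat 1_E.
\end{equation*}

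The remaining cases require no new work: $\mathcal{F}_-$ is identical with $\rho_{\widetilde S}$ replaced by $G(H^f_{S'})\otimes|c_{f-}\rangle\langle c_{f-}|$, again a density operator, so the same calculation yields $\hat 1_E$; and $\mathcal{R}_{\pm}$ are handled by the identity $V(U)^{\dagger}=V(U^{\dagger})$ noted above, so that the computation is formally unchanged after $U\mapsto U^{\dagger}$. There is no real obstacle here: the only thing that needs care is the bookkeeping of the exponents of $\Delta$, and the conceptual point is that the unitarity of $U$ together with the completeness of $\{|\psi_k\rangle\}$ collapses the two translations into the identity, which is precisely the mechanism by which energy translation invariance forces the induced channels to be unital.
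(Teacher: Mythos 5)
Your computation is correct and is exactly the argument the paper intends: it states only that the proof "is obtained by inserting the definition of $V(U)$ into the definitions of $\mathcal{F}_{\pm}$ and $\mathcal{R}_{\pm}$ and applying these to the identity operator," which is precisely what you carry out, with the observation $V(U)^{\dagger}=V(U^{\dagger})$ correctly disposing of the $\mathcal{R}_{\pm}$ cases.
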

The proof is obtained by inserting the definition (\ref{DefVU}) of $V(U)$ into the definitions of $\mathcal{F}_{\pm}$ and $\mathcal{R}_{\pm}$ in equations (\ref{DefFRplus}) and (\ref{DefFRminus}), and apply these to the identity operator.
\begin{Lemma}
\label{ChannelsEnergyTranslInv}
With $\mathcal{F}_{\pm}$ and $\mathcal{R}_{\pm}$ as defined in equations (\ref{DefFRplus}) and (\ref{DefFRminus}), with $V := V(U)$ as in (\ref{DefVU}), it is the case that 
\begin{equation}
\label{ChannelTransl}
\begin{split}
& \Delta^{j}\mathcal{F}_{\pm}(\sigma){\Delta^{\dagger}}^{k} =\mathcal{F}_{\pm}\big( \Delta^{j}\sigma{\Delta^{\dagger}}^{k}\big),\\
& \Delta^{j}\mathcal{R}_{\pm}(\sigma){\Delta^{\dagger}}^{k} =\mathcal{R}_{\pm}\big( \Delta^{j}\sigma{\Delta^{\dagger}}^{k}\big).
\end{split}
\end{equation}
\end{Lemma}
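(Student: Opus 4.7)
The plan is to prove this by directly exploiting the energy translation invariance of $V(U)$, which by definition satisfies $[\hat{1}_{\widetilde{S}}\otimes \Delta^{a}, V(U)] = 0$ for all $a \in \mathbb{Z}$. Since the doubly infinite energy ladder makes $\Delta$ unitary (so $\Delta^{\dagger} = \Delta^{-1}$), the commutation holds for every integer power, positive or negative. Taking adjoints, $V(U)^{\dagger}$ inherits the same commutation property, so both forward and reverse channels can be handled by the same argument.

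First I would insert $\Delta^{j}\sigma \Delta^{\dagger k}$ into the definition (\ref{DefFRplus}) of $\mathcal{F}_{+}$. Using the fact that $\hat{1}_{\widetilde{S}}\otimes \Delta^{j}$ and $\hat{1}_{\widetilde{S}}\otimes \Delta^{\dagger k}$ act as the identity on $\widetilde{S} = S'C$, I can regroup
\begin{equation*}
G(H^{i}_{S'})\otimes|c_{i+}\rangle\langle c_{i+}|\otimes \Delta^{j}\sigma\Delta^{\dagger k} = [\hat{1}_{\widetilde{S}}\otimes \Delta^{j}]\,[G(H^{i}_{S'})\otimes|c_{i+}\rangle\langle c_{i+}|\otimes \sigma]\,[\hat{1}_{\widetilde{S}}\otimes \Delta^{\dagger k}].
\end{equation*}
Next, by energy translation invariance, $V(U)$ commutes with $\hat{1}_{\widetilde{S}}\otimes \Delta^{j}$ and $V(U)^{\dagger}$ commutes with $\hat{1}_{\widetilde{S}}\otimes \Delta^{\dagger k}$, so I can slide these operators out to the leftmost and rightmost positions of the expression inside the partial trace. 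Since they act only on $E$ and as identity on $\widetilde{S}$, the standard identity $\Tr_{\widetilde{S}}\bigl([\hat{1}_{\widetilde{S}}\otimes A]\,X\,[\hat{1}_{\widetilde{S}}\otimes B]\bigr) = A\,\Tr_{\widetilde{S}}(X)\,B$ factors them out of the partial trace, yielding $\Delta^{j}\mathcal{F}_{+}(\sigma)\Delta^{\dagger k}$.

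The same argument applied verbatim to $\mathcal{F}_{-}$ works because the only change in (\ref{DefFRminus}) is swapping the initial control state and Hamiltonian on $\widetilde{S}$, neither of which affects the commutation with $\hat{1}_{\widetilde{S}}\otimes \Delta^{a}$. For $\mathcal{R}_{\pm}$, one uses $V(U)^{\dagger}$ in place of $V(U)$; taking the adjoint of $[\hat{1}_{\widetilde{S}}\otimes \Delta^{a},V(U)]=0$ gives $[\hat{1}_{\widetilde{S}}\otimes \Delta^{-a}, V(U)^{\dagger}]=0$, so the same sliding step applies.

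There is no real obstacle here beyond the bookkeeping: the substantive fact is the energy translation invariance of $V(U)$, which is built into the structure (\ref{DefVU}) and already established in the surrounding text. The lemma is essentially an immediate consequence of this invariance together with the factorization of identity operators through the partial trace.
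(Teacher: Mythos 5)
Your proof is correct and follows essentially the same route as the paper: both rest on the commutation $[\hat{1}_{\widetilde{S}}\otimes\Delta^{a},V(U)]=0$, sliding $\Delta^{j}$ and ${\Delta^{\dagger}}^{k}$ through $V(U)$ and $V(U)^{\dagger}$ and factoring them out of the partial trace because they act trivially on $\widetilde{S}$. The paper's proof is the same computation written in the opposite direction, starting from $\Delta^{j}\mathcal{F}_{+}(\sigma){\Delta^{\dagger}}^{k}$ and arriving at $\mathcal{F}_{+}(\Delta^{j}\sigma{\Delta^{\dagger}}^{k})$.
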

\begin{proof}
Here we only show the equality for $\mathcal{F}_{+}$. The others are obtained analogously. The proof is based on the fact that $V(U)$ commutes with $\hat{1}_{\widetilde{S}}\otimes  \Delta^{j}$. With the  notation $\eta_{S'C} := G(H^{i}_{S'})\otimes |c_{i+}\rangle \langle c_{i+}|$ we can write
\begin{equation*}
\begin{split}
& \Delta^{j}\mathcal{F}_{+}(\sigma){\Delta^{\dagger}}^{k} =  \Tr_{S'C}\Big([\hat{1}_{\widetilde{S}}\otimes  \Delta^{j}] V(U)\\
& \quad\quad\quad\times[\eta_{S'C}\otimes \sigma] V^{\dagger}(U)[\hat{1}_{\widetilde{S}}\otimes{\Delta^{\dagger}}^{k}]\Big)\\
& =  \Tr_{S'C}\big(V(U)[\eta_{S'C} \otimes  \Delta^{j} \sigma {\Delta^{\dagger}}^{k}] V^{\dagger}(U)\big)\\
& =  \mathcal{F}_{+}(\Delta^{j}\sigma{\Delta^{\dagger}}^{k}).
\end{split}
\end{equation*}
\end{proof}

We can regard   $\langle m|\mathcal{F}_{\pm}(|n\rangle\langle n'|)|m'\rangle$ as the matrix elements in a matrix representation of the linear maps $\mathcal{F}_{\pm}$.
It turns out that the translation invariance in Lemma \ref{ChannelsEnergyTranslInv} in conjunction with the decoupling between the coherence modes described  in Appendix \ref{DecouplingDiagonals} reduces the number of independent parameters in this representation. More precisely, 
\begin{equation}
\label{Reduction}
\langle m|\mathcal{F}_{\pm}(|n\rangle\langle n'|)|m'\rangle = \left\{\begin{matrix}
p_{\pm}(m-n|0), &  n-m = n'-m',\\
0, & n-m \neq n'-m',
\end{matrix}\right.
\end{equation}
where $p_{\pm}(m|n) = \langle m|\mathcal{F}_{\pm}(|n\rangle\langle n|)|m\rangle$ are the diagonal transition probabilities as defined in (\ref{Pplusminusdef}). The mapping $\mathcal{F}_{\pm}$ is thus determined by the probabilities by which $|0\rangle\langle 0|$ is mapped to the other eigenstates of $H_E$. This can equivalently be expressed as
\begin{equation}
\label{rkjgrkgsal}
\mathcal{F}_{\pm}(\rho) =  \sum_{n,n',k\in\mathbb{Z}} p_{\pm}(k|0)\langle n|\rho|n'\rangle |n+k\rangle\langle n'+k|.
\end{equation}
 The expression in (\ref{rkjgrkgsal}) can be compared with the more general case in (\ref{nnabna}). 
In Appendix \ref{Secoffdiagonal} it was demonstrated that the off-diagonal modes of coherence satisfy Crooks relations that are structurally identical to the one along the diagonal. Equations (\ref{Reduction}), or equivalently (\ref{rkjgrkgsal}), implies a stronger statement for the special case of the energy translation invariant model. Namely, that the dynamical map along each the off-diagonal modes is identical to the one along the main diagonal. (This does not imply that the elements of the density matrix along the different diagonals are the same.)

\subsection{\label{SecRegainingClassicalCrooks}Regaining the standard Crooks and Jarzynski relations}

The standard Crooks relation can be written 
\begin{equation}
\label{StandardForm}
Z(H^i_S)P_{+}(w) =  e^{\beta w}Z(H^f_S)P_{-}(-w),
\end{equation}
where $P_{\pm}(w) := P(W_{\pm}=w)$, and where $W_{+}$ and $W_{-}$ are the work costs of the forward and reverse processes regarded as random variables. 
In the following we shall see how one can regain (\ref{StandardForm}) from the diagonal Crooks relation (\ref{diagonalCrooksRel}) in Appendix \ref{DiagonalCase} by additionally assuming energy translation invariance in the energy-ladder model.

If we identify the loss of energy in the reservoir with the work done, then a transition from energy level $n$ to $m$ in the reservoir corresponds to the work $w = E_{n}-E_{m}$ (cf.~\cite{Campisi11a}). The probability $P_{\pm}(w)$ is obtained by summing up the probabilities of all the transitions that generate the work cost $w$.
More precisely,
\begin{equation}
\label{bvdfsmbn}
P_{\pm}(w) := \sum_{n,m: E_{n}-E_{m} = w}p_{\pm}(m|n)\langle n|\sigma|n\rangle,
\end{equation}
where $p_{\pm}(m|n)$ are the conditional probability distributions, defined in (\ref{Pplusminusdef}), that describe the transitions among the diagonal elements. 

The energy-ladder model yields
$E_{n} = sn$. Hence, $n-m = w/s$.
 By (\ref{Reduction}) it follows that
\begin{equation}
\label{PplusTranslInvSpec}
p_{\pm}(m|n) = p_{\pm}(m-n |0) = p_{\pm}(0|n-m).
\end{equation}
A direct consequence is that the probability distribution $P_{\pm}(w)$, defined in (\ref{bvdfsmbn}), becomes independent of the initial  $\sigma$,
\begin{equation}
\label{skgdjfsg}
P_{\pm}(w) =   p_{\pm}(-w/s|0) = p_{\pm}(0|w/s).
\end{equation}
These observations can be used to regain  the classical Crooks relation (\ref{StandardForm}),
\begin{equation}
\label{odsodsod}
\begin{split}
Z(H^i_{S})  P_{+}(w) = & Z(H^i_{S})  p_{+}(0|w/s)\\
= & Z(H^f_{S}) e^{\beta w}p_{-}(w/s|0)\\
= &  Z(H^f_{S})e^{\beta w} P_{-}(-w),
\end{split}
\end{equation}
where the second inequality is due to the diagonal Crooks relation in (\ref{diagonalCrooksRel}).
Here  $\sum_w$ means that we sum over the set of possible energy changes, which for this particular model is  $s\mathbb{Z}$. 

Since we have re-derived the classical Crooks relation,  we do also  more or less automatically obtain the classical Jarzynski equality 
$\langle e^{-\beta W}\rangle = Z(H^f)/Z(H^i)$ \cite{Jarzynski97}. This can be done via the `standard' derivation $\sum_{w}e^{-\beta w}Z(H^i)P_{+}(w) = \sum_{w}Z(H^f)P_{-}(-w) = Z(H^f)$, where the first equality is due to the Crooks relation (\ref{odsodsod}). One can alternatively use the fact that $\mathcal{R}_{+}(\hat{1}_E) = \hat{1}_E$ (by Lemma \ref{FpmRpmUnital}) and use the quantum Jarzynski equality  (\ref{eq2jarzynski}) to derive the classical Jarzynski relation by again making use of the decoupling of the diagonals, and the energy translation invariance.

\subsection{$\mathcal{F}_{-}\mapsto \mathcal{F}^{\ominus}_{-}$ as a generalization of $P_{-}(w)\mapsto P_{-}(-w)$}

In the main text it is claimed that the mapping $\mathcal{F}_{-}\mapsto \mathcal{F}^{\ominus}_{-}$ can be regarded as a generalization of the map $P_{-}(w)\mapsto P_{-}(-w)$.
This generalization becomes  evident for the energy translation invariant model.
Let $\mathcal{T}_E$ be the transpose with respect to $\{|n\rangle\}_n$. 
By the definition of  $\ominus$ it follows that
\begin{equation}
\label{adslkjalkd}
\begin{split}
\langle m|\mathcal{F}^{\ominus}_{\pm}(|n\rangle\langle n'|)|m'\rangle  = & \langle n| \mathcal{F}_{\pm}(|m\rangle\langle m'|)|n'\rangle  \\
= & \langle -m|\mathcal{F}_{\pm}(|-n\rangle\langle -n'|)|-m'\rangle,
\end{split}
\end{equation}
where the last equality follows from the translation invariance in Lemma \ref{ChannelsEnergyTranslInv}. Hence, the action of $\ominus$ can in some sense be identified with the change of signs.

Let us now identify the work $w = s(n-m)$ and the offset $\delta = s(n-n')$. By Corollary \ref{MappingOffDiagonal}  it follows that $\langle m|\mathcal{F}_{\pm}(|n\rangle\langle n'|)|m'\rangle  = \langle 0|\mathcal{F}_{\pm}(|w/s\rangle\langle w/s-\delta/s|)|-\delta/s\rangle$ for the non-zero elements. By (\ref{adslkjalkd}) these non-zero coefficients satisfy 
\begin{equation*}
\begin{split}
& \langle 0|\mathcal{F}^{\ominus}_{\pm}(|w/s\rangle\langle w/s-\delta/s|)|-\delta/s\rangle \\
& = \langle 0|\mathcal{F}_{\pm}(|-w/s\rangle\langle -w/s+\delta/s|)|+\delta/s\rangle.
\end{split}
\end{equation*}
 Hence, the work parameter $w$ and the offset $\delta$ both change sign due to the $\ominus$ operation. For the diagonal, $\delta =0$, we can thus conclude that $P_{\pm}[\mathcal{F}_{\pm}^{\ominus}](w) = \langle 0|\mathcal{F}_{\pm}^{\ominus}(|w/s\rangle\langle w/s|)|0\rangle = \langle 0|\mathcal{F}_{\pm}(|-w/s\rangle\langle -w/s|)|0\rangle = P_{\pm}[\mathcal{F}_{\pm}](-w)$. Hence for the diagonal elements, the mapping $\ominus$ implements the transformation  $P_{\pm}(w)\mapsto P_{\pm}(-w)$.

\subsection{Examples}

The main purpose of these examples is to show that there exist setups of Hamiltonians, states, and unitary operators that satisfy Assumptions \ref{Def1} and Assumptions \ref{Def2}. We also take the opportunity to construct discretizations of paths of Hamiltonian within these models.  The reason for why these demonstrations  have been postponed until this section is that the energy-translation invariant systems have  properties that make them convenient for constructing explicit examples. These examples are only sketched, and the details of the straightforward but in some cases somewhat long-winding confirmations are left to the reader.

\subsubsection{\label{minimalwithout} A minimal example without time-reversal}

Here we demonstrate a `minimal' setup that satisfies the conditions in Assumptions \ref{Def1}. 
Let $H^i_{S'}$ and $H^{f}_{S'}$ be Hamiltonians on $\mathcal{H}_{S'}$ for which the eigenvalues are multiples of $s$, i.e.,  $s z^i_{n}$ and $sz^{f}_n$ for $z^i_{n},z^{f}_n\in \mathbb{Z}$, and let $H_E:=s\sum_{j}j|j\rangle\langle j|$ be the energy-ladder. Let $|c_i\rangle, |c_f\rangle\in\mathcal{H}_C$ be two orthonormal states, and let 
\begin{equation*}
\begin{split}
H_{S'C} := & H^{i}_{S'}\otimes|c_i\rangle\langle c_i| + H^{f}_{S'}\otimes|c_f\rangle\langle c_f|,\\ 
H := &  H_{S'C}\otimes \hat{1}_E + \hat{1}_{SC}\otimes H_E,
\end{split}
\end{equation*}
and $H^{\perp} := 0$. Let $U_{S'}$ be an arbitrary unitary operator on $\mathcal{H}_{S'}$ and define
$U := U_{S'}\otimes|c_f\rangle\langle c_i| + \overline{U}_{S'}\otimes|c_i\rangle\langle c_f|$. (The unitary operator $\overline{U}_{S'}$ plays no direct role in the protocol, but is there to make $U$ unitary.)
As one can see, $U[\hat{1}_{S'}\otimes |c_i\rangle\langle c_i|] = U_{S'}\otimes |c_f\rangle\langle c_i| = [\hat{1}_{S'}\otimes |c_f\rangle\langle c_f|]U$. 
This $U$ would in general not be energy conserving. However, the unitary operator $V(U)$ is by construction energy conserving on $S'CE$, i.e., $[H,V(U)] = 0$. 

The operators $\hat{1}_{S'}\otimes |c_i\rangle\langle c_i|$ and  $\hat{1}_{S'}\otimes |c_f\rangle\langle c_f|$ are block-diagonal with respect to the energy eigenspaces of $H_{S'C}$. Due to this one can confirm that  
\begin{equation*}
\begin{split}
V(\hat{1}_{S'}\otimes |c_i\rangle\langle c_i|) =  &\hat{1}_{S'}\otimes |c_i\rangle\langle c_i|\otimes\hat{1}_E,\\
V(\hat{1}_{S'}\otimes |c_f\rangle\langle c_f|) =  &\hat{1}_{S'}\otimes |c_f\rangle\langle c_f|\otimes\hat{1}_E.
\end{split}
\end{equation*}
By combining this observation with the general property $V(A)V(B) = V(AB)$ and the perfect control of $U$,  it follows that 
$V(U)[\hat{1}_{S'}\otimes |c_i\rangle\langle c_i|\otimes\hat{1}_E]
=  [\hat{1}_{S'}\otimes |c_i\rangle\langle c_i|\otimes\hat{1}_E]V(U)$.
Hence, the fact that $U$ satisfies perfect control implies that $V :=V(U)$ also satisfies the condition for perfect control, and we can conclude that this setup satisfies all conditions of Assumptions \ref{Def1}.

\subsubsection{\label{discretizedwithout} Discretized paths of Hamiltonians without time-reversal}

The intermediate Crooks relation only requires us to consider the end-points of the dynamics. It may nevertheless be useful to see how one can construct a discretized model of a parametric family of Hamiltonians that satisfies Assumptions \ref{Def1}.

Given a family of Hamiltonians $H(x)$ for $x\in [0,1]$ with $H(0) = H^{i}_{S'}$ and $H(1) = H^{f}_{S'}$ we discretize the path into $L+1$ steps, such that we get a sequence of Hamiltonians $H_l:=H(l/L)$ for $l=0,\ldots, L$.
Given the energy spacing $s$ in the energy ladder, we find approximate Hamiltonians $\widetilde{H}_{l}$ that have the eigenvalues $s z^{(l)}_{n}$ for $z^{(l)}_{n}\in \mathbb{Z}$ with corresponding orthonormal eigenvectors $\{|\chi^{(l)}_{n}\rangle\}_{n}$. (For  more details on the transition from  $H_l$ to  $\widetilde{H}_{l}$, see Section VII.C.1 in the Supplementary Material of \cite{Aberg13}.) We let $\{|c_{l}\rangle\}_{l=0}^{L}$ be a set of orthonormal elements spanning the Hilbert space $\mathcal{H}_{C}$ of the control system $C$, and define
\begin{equation}
\begin{split}
H_{S'C} := & \sum_{l=0}^{L}\widetilde{H}_l\otimes|c_l\rangle\langle c_l|,\\
H := & H_{S'C}\otimes \hat{1}_E + \hat{1}_{SC}\otimes H_E,
\end{split}
\end{equation}
for the energy ladder $H_E$. To compare with Assumptions \ref{Def1} we have
\begin{equation}
\begin{split}
& |c_{i}\rangle := |c_{0}\rangle,\quad |c_{f}\rangle := |c_{L}\rangle,\\
& H^{i}_{S'} := \widetilde{H}_{0},\quad H^{f}_{S'} := \widetilde{H}_{L},\\
& H^{\perp} := \sum_{l=1}^{L-1}\widetilde{H}_l \otimes |c_l\rangle\langle c_l|.
\end{split}
\end{equation}
In the following we shall define a unitary operator $U$ on $S'C$ that generates one single step along the discretization. The propagation along the path is obtained by iterating $U$ such that the entire evolution along the $L$-step discretization is generated by $U^{L}$.
Let $U_1,\ldots, U_L$ be arbitrary unitary operators on $\mathcal{H}_{S'}$ and define
$U := \sum_{l=0}^{L-1}U_{l}\otimes |c_{l+1}\rangle\langle c_l| + U_{L}\otimes |c_{0}\rangle\langle c_{L}|$. 
One can confirm that $U$ is unitary.
 The unitary operator $V(U)$ is energy conserving, and 
analogous to Appendix \ref{minimalwithout} one can confirm that 
$V(\hat{1}_{S'}\otimes |c_{l}\rangle\langle c_l|) = \hat{1}_{S'}\otimes |c_{l}\rangle\langle c_l|\otimes \hat{1}_E$, as well as $V(U)^{L}[\hat{1}_{S'}\otimes|c_i\rangle\langle c_i|\otimes\hat{1}_E]
=  [\hat{1}_{S'}\otimes|c_f\rangle\langle c_f|\otimes\hat{1}_E]V(U)^{L}$.
Hence, $V := V(U)^L$ satisfies the conditions in Assumptions \ref{Def1}.

\subsubsection{\label{minimalwith} A minimal example with time-reversal}

Here we consider a setup that satisfies the conditions in Assumptions \ref{Def2}.
With $H^{i}_{S},H^{f}_{S'}$ as in Appendix \ref{minimalwithout}, and $H_E$ the energy-ladder, 
 let $\{|c_{i+}\rangle, |c_{i-}\rangle, |c_{f+}\rangle, |c_{f-}\rangle\}$ be orthonormal elements spanning the Hilbert space $\mathcal{H}_{C}$ of the control system $C$. Let
\begin{equation}
\begin{split}
H_{S'C} := &  H^{i}_{S'} \otimes P^{i}_{C} + H^{f}_{S'} \otimes P^{f}_{C},\\
 P_{C}^{i} := &  |c_{i+}\rangle\langle c_{i+}| + |c_{i-}\rangle\langle c_{i-}|,\\
 P_{C}^{f} := &  |c_{f+}\rangle\langle c_{f+}| + |c_{f-}\rangle\langle c_{f-}|,\\
H := & H_{S'C}\otimes \hat{1}_E + \hat{1}_{SC}\otimes H_E,
\end{split}
\end{equation}
and $H^{\perp}: = 0$.
We next turn to the time-reversals.
On $\mathcal{H}_C$ we define 
\begin{equation}
\begin{split}
Y := & |c_{i+}\rangle\langle c_{i-}| + |c_{i-}\rangle\langle c_{i+}|\\
 &  + |c_{f+}\rangle\langle c_{f-}| + |c_{f-}\rangle\langle c_{f+}|.
\end{split}
\end{equation}
As one can see, $Y$ is a unitary operator on $\mathcal{H}_C$. Define the basis 
\begin{equation}
\begin{split}
\mathcal{B}:= & \{|\chi^{i}_{n}\rangle|c_{i+}\rangle\}_{n}\cup\{|\chi^{i}_{n}\rangle|c_{i-}\rangle\}_{n}\\
& \cup\{|\chi^{f}_{n}\rangle|c_{f+}\rangle\}_{n}\cup\{|\chi^{f}_{n}\rangle|c_{f-}\rangle\}_{n},
\end{split}
\end{equation}
where $\{|\chi^{i}_{n}\rangle\}_{n}$ is an orthonormal eigenbasis of $H^{i}_{S'}$ and $\{|\chi^{f}_{n}\rangle\}_{n}$ is an orthonormal eigenbasis of $H^{f}_{S'}$.
Define 
$\mathcal{T}_{S'C}(Q) := [\hat{1}_{S'}\otimes Y]Q^{\tau}[\hat{1}_{S'}\otimes Y^{\dagger}]$,
where $\tau$ denotes the transpose with respect to the basis $\mathcal{B}$. Note that $\hat{1}_{S'}\otimes Y$ is complex symmetric with respect to the basis $\mathcal{B}$ (and the space $\mathcal{H}_{S'C}$ on which it operates is finite-dimensional). 
Hence, according to Proposition \ref{TcharactFinite} it follows that $\mathcal{T}_{S'C}$ is a time-reversal (and is moreover the transpose with respect to some basis).
One can verify that $\mathcal{T}_{S'C}(H_{S'C}) =  H_{S'C}$,
and furthermore
\begin{equation}
\begin{split}
\mathcal{T}_{S'C}(\hat{1}_{S'}\otimes|c_{i+}\rangle\langle c_{i+}|)  = & \hat{1}_{S'}\otimes|c_{i-}\rangle\langle c_{i-}|,\\
\mathcal{T}_{S'C}(\hat{1}_{S'}\otimes|c_{f+}\rangle\langle c_{f+}|)  = &\hat{1}_{S'}\otimes|c_{f-}\rangle\langle c_{f-}|.
\end{split}
\end{equation}
Moreover, we define the time-reversal $\mathcal{T}_E$ on the energy reservoir as the transpose with respect to the basis $\{|j\rangle\}_{j\in\mathbb{Z}}$, and thus it is the case that $\mathcal{T}_E(H_E) = H_E$. We define the global time-reversal as $\mathcal{T} :=\mathcal{T}_{S'C}\otimes \mathcal{T}_E$. Let $U^{+}$ and $\overline{U}^{+}$ be arbitrary unitary operators on $\mathcal{H}_{S'}$, and define
\begin{equation}
\label{reoioireoire}
\begin{split}
U := & U^{+}\otimes |c_{f+}\rangle\langle c_{i+}| + \overline{U}^{+}\otimes|c_{i+}\rangle\langle c_{f+}|\\
& + U^{-}\otimes |c_{i-}\rangle\langle c_{f-}| + \overline{U}^{-}\otimes|c_{f-}\rangle\langle c_{i-}|,
\end{split}
\end{equation}
where $U^{-} :=\sum_{nn'}|\chi^{i}_n\rangle\langle \chi^{f}_{n'}|U^{+}|\chi^{i}_n\rangle\langle \chi^{f}_{n'}|$ and $\overline{U}^{-} :=\sum_{nn'}|\chi^{f}_n\rangle\langle \chi^{i}_{n'}|\overline{U}^{+}|\chi^{f}_n\rangle\langle \chi^{i}_{n'}|$. 
One can confirm that $U^{-}$, $\overline{U}^{-}$, and $U$ are unitary, and thus $V(U)$ is an energy conserving unitary operator.  Moreover, one can confirm that $\mathcal{T}_{S'C}(U) = U$. 
Since  $\mathcal{T}_{S'C}(H_{S'C}) =  H_{S'C}$ (and since $\mathcal{H}_{S'C}$ is finite-dimensional) we know by Lemma \ref{kldsdslk}  that $\mathcal{T}(V(U)) = V(\mathcal{T}_{S'C}(U)) = V(U)$. 
Hence, the dynamics is time-reversal symmetric. With a reasoning analogous to Appendix \ref{minimalwithout}, one can also show that $V(U)[\hat{1}_{S'}\otimes |c_{i+}\rangle\langle c_{i+}|\otimes\hat{1}_{E}] = [\hat{1}_{S'}\otimes |c_{f+}\rangle\langle c_{f+}|\otimes\hat{1}_{E}]V(U)$. Hence, all the conditions  of Assumptions \ref{Def2} are satisfied.

\subsubsection{\label{discretizedwith} Discretized paths of Hamiltonians with time-reversal}

Here we modify the setup of Appendix \ref{discretizedwithout} such that it incorporates time-reversal, and 
 satisfies the conditions in Assumptions \ref{Def2}.

We let $\{|c_{l}^{\pm}\rangle\}_{l=0}^{L}$ be a set of orthonormal elements spanning the Hilbert space $\mathcal{H}_{C}$ of the control system $C$. For the family of Hermitian operators  $(\widetilde{H}_{l})_{l=0}^{L}$, let
\begin{equation}
\begin{split}
H_{S'C} := & \sum_{l=0}^{L} \widetilde{H}_l \otimes P^{l}_{C},\\
 P_{C}^{l} := &  |c^{+}_{l}\rangle\langle c^{+}_{l}| + |c^{-}_{l}\rangle\langle c^{-}_{l}|,\\
H := & H_{S'C}\otimes \hat{1}_E + \hat{1}_{SC}\otimes H_E.
\end{split}
\end{equation}
To compare with Assumptions \ref{Def2} we have
\begin{equation*}
\begin{split}
& |c_{i\pm}\rangle := |c^{\pm}_{0}\rangle,\quad |c_{f\pm}\rangle := |c^{\pm}_{L}\rangle,\\
& H^{i}_{S'} := \widetilde{H}_{0},\quad H^{f}_{S'} := \widetilde{H}_{L},\\
& P^{i}_C := P_{C}^{0},\quad P^{f}_C := P_{C}^{L},\\ 
& H^{\perp} := \sum_{l=1}^{L-1}\widetilde{H}_l \otimes P^{l}_{C}.
\end{split}
\end{equation*}
We next turn to the time-reversals.
On $\mathcal{H}_C$ we define 
$Y := \sum_{l=0}^{L}(|c^{+}_{l}\rangle\langle c^{-}_{l}| + |c^{-}_{l}\rangle\langle c^{+}_{l}|)$.
One can confirm that $Y$ is a unitary operator. 
Note that $\mathcal{B}:=\{|\chi^{(l)}_{n}\rangle|c^{+}_l\rangle\}_{l,n}\cup \{|\chi^{(l)}_{n}\rangle|c^{-}_l\rangle\}_{l,n}$
 is an orthonormal basis of $\mathcal{H}_{S'C}$. We define the time-reversal 
$\mathcal{T}_{S'C}(Q) := [\hat{1}_{S'}\otimes Y]Q^{\tau}[\hat{1}_{S'}\otimes Y^{\dagger}]$,
where $\tau$ denotes the transpose with respect to the basis $\mathcal{B}$.
One can show that $\mathcal{T}_{S'C}(H_{S'C}) =  H_{S'C}$ and 
\begin{equation*}
\begin{split}
\mathcal{T}_{S'C}(\hat{1}_{S'}\otimes|c^{+}_s\rangle\langle c^{+}_s|)  =  \hat{1}_{S'}\otimes|c^{-}_s\rangle\langle c^{-}_s|,\quad s = 0,\ldots, L,
\end{split}
\end{equation*}
which thus in particular include the end-point control states $|c_{i\pm}\rangle$ and $|c_{f\pm}\rangle$.
The time-reversal $\mathcal{T}_E$ is defined as the transpose with respect to the basis $\{|j\rangle\}_{j\in\mathbb{Z}}$ of the energy-ladder.
Let $U_{0}^{+},\ldots, U_{L}^{+}$ be arbitrary unitary operators on $\mathcal{H}_{S'}$, and define
\begin{equation*}
\begin{split}
U^{-}_{L} := & \sum_{nn'}|\chi^{(L)}_{n'}\rangle\langle \chi^{(0)}_{n}|U^{+}_{L} |\chi^{(L)}_{n'}\rangle\langle \chi^{(0)}_{n}|, \\
U^{-}_{s} := & \sum_{nn'}|\chi^{(s)}_{n'}\rangle\langle \chi^{(s+1)}_{n}|U^{+}_{s}|\chi^{(s)}_{n'}\rangle\langle \chi^{(s+1)}_{n}|,\,\, s = 0,\ldots,L-1,
\end{split}
\end{equation*}
as well as the unitary operator 
\begin{equation*}
\begin{split}
U := & U^{+}_{L}\otimes |c^{+}_{0}\rangle\langle c^{+}_{L}| + \sum_{s=0}^{L-1}  U^{+}_{s}\otimes |c^{+}_{s+1}\rangle\langle c^{+}_{s}| \\
& +  U^{-}_{L}\otimes|c^{-}_{L}\rangle\langle c^{-}_{0}| +\sum_{s=0}^{L-1} U^{-}_{s}\otimes |c^{-}_{s}\rangle\langle c^{-}_{s+1}|. 
\end{split}
\end{equation*}
One can confirm that $U$ is invariant with respect to $\mathcal{T}_{S'C}$.
By an analogous reasoning as in Appendix \ref{minimalwith} one can show that the energy conserving unitary operator $V(U)$ satisfies $\mathcal{T}(V(U)) = V(U)$. By the general properties of time-reversals it thus follows that the operator $V:=V(U)^{L}$ is an energy conserving, time-reversal symmetric, and unitary operator. The proof that $V$ also satisfies perfect control $V[\hat{1}_{S'}\otimes|c_{i+}\rangle\langle c_{i+}|\otimes\hat{1}_E] = [\hat{1}_{S'}\otimes|c_{f+}\rangle\langle c_{f+}|\otimes\hat{1}_E]V$ can be done analogously as to Appendix \ref{discretizedwithout}, and thus we can conclude that  $V:=V(U)^{L}$ satisfies all the conditions of Assumptions \ref{Def2}.

\subsubsection{\label{nonorthogonalminimalwith} A minimal example with time-reversal and non-orthogonal control states}
In Appendix \ref{SecResetting} it was pointed out that a state and its time-reversal do not necessarily have to be orthogonal to each other. It was also claimed that it is possible to find a setup that satisfies Assumptions \ref{Def2} and in addition is such that the pair of control states $|c_{i+}\rangle$ and $|c_{i-}\rangle$ are not orthogonal to each other (and analogously for $|c_{f+}\rangle$ and $|c_{f-}\rangle$). Here we demonstrate this claim, and we also discuss how additional assumptions may enforce orthogonality.

Let $H^{i}_{S'}$ and $H^{f}_{S'}$ be as in Appendix \ref{minimalwithout} with eigenvalues $sz^i_n$ and eigenvectors $|\chi^i_n\rangle$, as well as eigenvalues $sz^f_n$ and eigenvectors $|\chi^f_n\rangle$, respectively. We let $H_E$ be the energy ladder, with eigenvalues $sj$ and eigenstates $|j\rangle$. We let the control space $\mathcal{H}_C$ be four-dimensional, with an orthonormal basis $\{|i\rangle|0\rangle, |i\rangle|1\rangle, |f\rangle|0\rangle, |f\rangle|1\rangle\}$.

Define $\mathcal{T}_{S'C}$ as the transpose with respect to the basis $\{|\chi^i_n\rangle|i\rangle|0\rangle, |\chi^i_n\rangle|i\rangle|1\rangle\}_n\cup \{|\chi^f_{n'}\rangle|f\rangle|0\rangle, |\chi^f_{n'}\rangle|f\rangle|1\rangle\}_{n'}$, and let $\mathcal{T}_E$ be the transpose with respect to the orthonormal basis $\{|j\rangle\}_{j\in\mathbb{Z}}$.
For $\alpha,\beta,\gamma,\delta\in \mathbb{C}$, such that  $|\alpha|^2 + |\beta|^2 = 1$ and $|\gamma|^2+ |\delta|^2 = 1$, let 
\begin{equation*}
\begin{split}
|c_{i+}\rangle := |i\rangle(\alpha|0\rangle +\beta|1\rangle),\quad |c_{i-}\rangle := |i\rangle(\alpha^*|0\rangle +\beta^*|1\rangle),\\
|c_{f+}\rangle := |f\rangle(\gamma|0\rangle +\delta|1\rangle),\quad |c_{f-}\rangle := |f\rangle(\gamma^{*}|0\rangle +\delta^{*}|1\rangle).
\end{split}
\end{equation*}
The vectors $|c_{i+}\rangle$ and $|c_{i-}\rangle$ are typically not orthogonal for generic choices of $\alpha,\beta$, and analogously for the pair $|c_{f+}\rangle$, $|c_{f-}\rangle$. It is convenient to also define 
\begin{equation*}
\begin{split}
|\overline{c}_{i+}\rangle := |i\rangle(\beta^{*}|0\rangle -\alpha^{*}|1\rangle),\quad |\overline{c}_{i-}\rangle := |i\rangle(\beta|0\rangle -\alpha|1\rangle),\\
|\overline{c}_{f+}\rangle := |f\rangle(\delta^{*}|0\rangle -\gamma^{*}|1\rangle),\quad |\overline{c}_{f-}\rangle := |f\rangle(\delta|0\rangle -\gamma|1\rangle).
\end{split}
\end{equation*}
One can note that $\{|c_{i+}\rangle ,|\overline{c}_{i+}\rangle\}$ and $\{|c_{i-}\rangle ,|\overline{c}_{i-}\rangle\}$ both form orthonormal bases of the subspace $\Sp\{|i\rangle|0\rangle,|i\rangle|1\rangle\}$. Similarly, each of $\{|c_{f+}\rangle ,|\overline{c}_{f+}\rangle\}$ and $\{|c_{f-}\rangle ,|\overline{c}_{f-}\rangle\}$ forms an orthonormal basis of $\Sp\{|f\rangle|0\rangle, |f\rangle|1\rangle\}$.
With 
\begin{equation*}
\begin{split}
H_{S'C} := & H^i_{S'}\otimes P^i_C + H^{f}_{S'}\otimes P^f_C,\\
P^i_C := & |i\rangle\langle i|\otimes (|0\rangle\langle 0|+|1\rangle\langle 1|),\\
P^f_C := & |f\rangle\langle f|\otimes (|0\rangle\langle 0|+|1\rangle\langle 1|),
\end{split}
\end{equation*}
one can confirm that $\mathcal{T}_{S'C}(H_{S'C}) = H_{S'C}$, and also that 
\begin{equation*}
\begin{split}
\mathcal{T}_{S'C}(\hat{1}_{S'}\otimes |c_{i+}\rangle\langle c_{i+}|) = & \hat{1}_{S'}\otimes |c_{i-}\rangle\langle c_{i-}|,\\ \mathcal{T}_{S'C}(\hat{1}_{S'}\otimes |c_{f+}\rangle\langle c_{f+}|) = & \hat{1}_{S'}\otimes |c_{f-}\rangle\langle c_{f-}|,
\end{split}
\end{equation*}
and thus these control states are time-reversals of each other. 
Let $U^{+}$ and $\overline{U}^{+}$ be unitary operators on $\mathcal{H}_{S'}$ and define
\begin{equation*}
\begin{split}
U^{-} := \sum_{n,n'}|\chi^{i}_n\rangle\langle\chi^f_{n'}|U^{+}|\chi^i_{n}\rangle\langle \chi^f_{n'}|,\\
\overline{U}^{-} := \sum_{n,n'}|\chi^{i}_n\rangle\langle\chi^f_{n'}|\overline{U}^{+}|\chi^i_{n}\rangle\langle \chi^f_{n'}|.
\end{split}
\end{equation*}
Since $U^{+}$ and $\overline{U}^{+}$ are unitary, it follows that $U^{-}$ and $\overline{U}^{-}$ are also unitary. 
By the unitarity of  $U^{+}$, $\overline{U}^{+}$, $U^{-}$, $\overline{U}^{-}$ together with the fact that $\{|c_{i+}\rangle, |\overline{c}_{i+}, |c_{f-}\rangle,|\overline{c}_{f-}\rangle\}$ as well as $\{|c_{f+}\rangle,|\overline{c}_{f+}\rangle, |c_{i-}\rangle,|\overline{c}_{i-}\rangle\}$ are orthonormal bases of $\mathcal{H}_{C}$, it follows that the following operator is unitary
\begin{equation*}
\begin{split}
U := & U^{+}\otimes |c_{f+}\rangle\langle c_{i+}| + \overline{U}^{+}\otimes |\overline{c}_{f+}\rangle\langle\overline{c}_{i+}|\\
& +U^{-}\otimes |c_{i-}\rangle\langle c_{f-}| + \overline{U}^{-}\otimes |\overline{c}_{i-}\rangle\langle\overline{c}_{f-}|.
\end{split}
\end{equation*}
One can also confirm that $\mathcal{T}_{S'C}(U) = U$ and  $U[\hat{1}_{S'}\otimes |c_{i+}\rangle\langle c_{i+}|] = [\hat{1}_{S'}\otimes |c_{f+}\rangle\langle c_{f+}|]U$. 
Although $U$ thus is time-reversal symmetric and satisfies perfect control with respect to the designated control states, it is typically not energy conserving. However, since  $\mathcal{T}_{S'C}(H_{S'C}) =  H_{S'C}$ (and since $\mathcal{H}_{S'C}$ is finite-dimensional) we know by Lemma \ref{kldsdslk}  that $\mathcal{T}(V(U)) = V(\mathcal{T}_{S'C}(U)) = V(U)$. 
Moreover, by construction, $[V(U),H_{S'C}\otimes\hat{1}_E+ \hat{1}_{S'C}\otimes H_E] = 0$, and thus $V(U)$ is both time-reversal symmetric and energy conserving.

Since $\hat{1}_{S'}\otimes|c_{i+}\rangle\langle c_{i+}|$ and $\hat{1}_{S'}\otimes|c_{f+}\rangle\langle c_{f+}|$
are block diagonal with respect to the eigenspaces of $H_{S'C}$, it follows that $V(\hat{1}_{S'}\otimes|c_{i+}\rangle\langle c_{i+}|) 
=  \hat{1}_{S'}\otimes|c_{i+}\rangle\langle c_{i+}|\otimes \hat{1}_E$ and $V(\hat{1}_{S'}\otimes|c_{f+}\rangle\langle c_{f+}|) 
=  \hat{1}_{S'}\otimes|c_{f+}\rangle\langle c_{f+}|\otimes \hat{1}_E$. 
With a reasoning analogous to Appendix \ref{minimalwithout}, one can also show that $V(U)[\hat{1}_{S'}\otimes |c_{i+}\rangle\langle c_{i+}|\otimes\hat{1}_{E}] = [\hat{1}_{S'}\otimes |c_{f+}\rangle\langle c_{f+}|\otimes\hat{1}_{E}]V(U)$. Hence, all the conditions  of Assumptions \ref{Def2} are satisfied. Moreover, this is achieved with control states where $|c_{i+}\rangle$ is not necessarily orthogonal to  $|c_{i-}\rangle$, and where $|c_{f+}\rangle$ is not necessarily orthogonal to  $|c_{f-}\rangle$.
This includes the special case that $|c_{i+}\rangle$ and $|c_{i-}\rangle$ are parallel, which happens if the phase factors of $\alpha$ and $\beta$ are identical. 
One may wonder how this parallelity fits with the idea that the control states represent the forward and reverse propagation of control parameters. First of all, if $|c_{i-}\rangle$ is parallel to $|c_{i+}\rangle$, and 
 $|c_{f-}\rangle$ parallel to $|c_{f+}\rangle$, then perfect control implies that repeated applications of $V(U)$ swap the control back and forth between $|c_{i+}\rangle\langle c_{i+}|$  and $|c_{f+}\rangle\langle c_{f+}|$. Moreover, one should observe that the above example, and indeed Assumptions  \ref{Def2}, only concerns the mapping between the initial and final control states, and does not include any requirements concerning what happens at any potential intermediate states of the process. Let us now additionally assume (much as in Appendix \ref{discretizedwith}) a sequence of control states $\{|c^{\pm}_l\rangle\}_l$.  We  do moreover demand perfect control for all states in the forward path, as well as perfect control for all states in the reverse path, i.e., 
\begin{equation*}
\begin{split}
V(U)[\hat{1}_{S'E}\otimes |c^{+}_l\rangle\langle c^{+}_l|] = [\hat{1}_{S'E}\otimes |c^{+}_{l+1}\rangle\langle c^{+}_{l+1}|]V(U),\\
V(U)[\hat{1}_{S'E}\otimes |c^{-}_l\rangle\langle c^{-}_l|] = [\hat{1}_{SE}\otimes |c^{-}_{l-1}\rangle\langle c^{-}_{l-1}|]V(U).
\end{split}
\end{equation*}
By using the unitarity of the global evolution $V(U)$ it follows that 
\begin{equation*}
\begin{split}
& [\hat{1}_{S'E}\otimes |c^{-}_l\rangle\langle c^{-}_l|][\hat{1}_{S'E}\otimes |c^{+}_l\rangle\langle c^{+}_l|] \\
&= V^{\dagger}(U)[\hat{1}_{SE}\otimes |c^{-}_{l-1}\rangle\langle c^{-}_{l-1}|][\hat{1}_{S'E}\otimes |c^{+}_{l+1}\rangle\langle c^{+}_{l+1}|]V(U).
\end{split}
\end{equation*}
Consequently, if we would demand that $\Sp\{|c^{+}_{l+1}\rangle,|c^{-}_{l+1}\rangle\}$ should be orthogonal to $\Sp\{|c^{+}_{l-1}\rangle,|c^{-}_{l-1}\rangle\}$, then it follows that $|c^{+}_{l}\rangle$ must be  orthogonal to $|c^{-}_{l}\rangle$. Hence, perfect control of the reverse and forward paths conspires with the assumed orthogonality of the control spaces to enforce orthogonality between the forward and reverse control states.

\section{\label{SecConditional} Conditional fluctuation relations}

Here we consider a generalized type of fluctuation relation that naturally includes non-equilibrium states. 
This extension may at first sight seem rather radical. However, our quantum fluctuation relation in Proposition \ref{PropQuantumCrooks} strictly speaking already requires initial non-equilibrium states, due to the control system, as discussed in Appendix \ref{TheInitalStates}.

As we have seen in Appendix \ref{SecConditionsOnE}, the assumption of perfect control is a rather strong condition, and may require an energy reservoir spectrum that is unbounded from below (as well as above).  The conditional fluctuation relations allow us to abolish the perfect control (see Appendix \ref{AbolishPerfectControl}). Not only can we avoid unbounded spectra, but we can also base the conditional fluctuation relations on finite-dimensional Hilbert spaces (see Appendix \ref{ExampleTwoQubits} for an explicit example).

\subsection{\label{SecTheGibbsMap}The Gibbs map and the partition map}

For a given operator $A$ we define the Gibbs map $\mathcal{G}_{A}$ and the partition map $\mathcal{Z}_{A}$ by 
\begin{equation*}
\mathcal{G}_{A}(Q) :=  \frac{1}{\mathcal{Z}_{A}(Q)}\mathcal{J}_{A}(Q),\quad \mathcal{Z}_{A}(Q) :=  \Tr \mathcal{J}_{A}(Q).
\end{equation*}
 By construction, $\mathcal{G}_{A}(Q)$ is a density operator whenever $Q$ is a positive operator (modulo the existence of $\mathcal{Z}_{A}(Q)$). In the special case that $A = \beta H$ for $\beta \geq 0$, then $\mathcal{Z}_{\beta H}(\hat{1}) = Z_{\beta}(H)$ and $\mathcal{G}_{\beta H}(\hat{1}) = G_{\beta}(H)$.  

An immediate question is what class of density operators that can be reached by the Gibbs map. 
If $H$ is a bounded Hermitian operator, then $e^{\pm \beta H}$ is also bounded, and thus $\Vert \mathcal{J}_{-\beta H}(\rho) \Vert < +\infty$, where $\Vert Q\Vert := \sup_{\Vert \psi\Vert}\Vert Q|\psi\rangle\Vert$ denotes the standard operator norm. For an arbitrary density operator $\rho$ (which by virtue of being trace class also is bounded, see e.g.~\cite{Davies76}) let $Q := \mathcal{J}_{-\beta H}(\rho)/\Vert \mathcal{J}_{-\beta H}(\rho) \Vert$.  By construction $0\leq Q \leq 1$, and one can confirm that that $\mathcal{G}_{\beta H}(Q) = \rho$. Hence, for bounded Hermitian operators $H$, one can reach all density operators via the Gibbs map (and thus in particular if the Hilbert space is finite-dimensional). 
 The issue becomes more complicated if $H$ is unbounded, since we have to take into account the domain of definition of $H$ and of $e^{\pm \beta H}$. More generally it may be the case that the Gibbs map does not generate the entire set of density operators. Although we do use the Gibbs map for unbounded $H$, we will nevertheless not consider this question further in this investigation.

One can further note that $\mathcal{G}_{\beta H}$ is a many-to-one map from the set of POVM elements, although in a relatively mild sense. If $Q$ is a POVM element, then $rQ$ for $0< r <1$ is also a valid POVM element. The Gibbs map $\mathcal{G}_{\beta H}$ maps both $Q$ and $rQ$ to the same density operator.

\subsection{Without time-reversal}

As mentioned earlier, the fact that we drop the assumption of perfect control implies a simpler structure. The first simplification is that we here only need to consider two subsystems: the energy reservoir $E$ and the rest $\widetilde{S}$. For the general theory there is no need for any further partitioning into  subsystems, but in order to relate to the results in previous sections we would let  $\widetilde{S} = S'C = SBC$.

\begin{Assumptions}
\label{DefNonEq}
Let $\mathcal{H}_{\widetilde{S}}$ and $\mathcal{H}_E$ be complex Hilbert spaces
\begin{itemize}
\item Let  $H_{\widetilde{S}}$ and $H_E$ be Hermitian operators on $\mathcal{H}_{\widetilde{S}}$ and $\mathcal{H}_{E}$, respectively, and let 
\begin{equation*}
H := H_{\widetilde{S}}\otimes\hat{1}_E  + \hat{1}_{\widetilde{S}}\otimes H_{E}.
\end{equation*}
\item Let $V$ be a unitary operator on $\mathcal{H}_{\widetilde{S}}\otimes\mathcal{H}_{E}$ such that $[H,V] = 0$.
\item Let $Q_{\widetilde{S}}^{i}$ and  $Q_{\widetilde{S}}^{f}$ be operators on $\mathcal{H}_{\widetilde{S}}$ such that $0\leq Q_{\widetilde{S}}^{i} \leq \hat{1}_{\widetilde{S}}$ and $0\leq Q_{\widetilde{S}}^{f} \leq \hat{1}_{\widetilde{S}}$.
\end{itemize}
\end{Assumptions}
The operators  $Q_{\widetilde{S}}^{i}$ and $Q_{\widetilde{S}}^{f}$ play dual roles in this analysis. 
First, they correspond to control measurements. For example, we can form the two POVMs $\{Q_{\widetilde{S}}^{i},\hat{1}_{\widetilde{S}}-Q_{\widetilde{S}}^{i}\}$ and  $\{Q_{\widetilde{S}}^{f},\hat{1}_{\widetilde{S}}-Q_{\widetilde{S}}^{f}\}$. (Nothing in this formalism forces us to necessarily use binary POVMs. See the discussion at the end of Appendix \ref{SecCondWithTimeRev}.) In these POVMs,  $Q_{\widetilde{S}}^{i}$ and $Q_{\widetilde{S}}^{f}$ are the `successful' outcomes, and the CPMs $\tilde{\mathcal{F}}$ and $\tilde{\mathcal{R}}$, defined in (\ref{deftildeFR}) below, generate the corresponding (non-normalized) post-measurement states of the reservoir conditioned on these successful outcomes.

The second role of $Q_{\widetilde{S}}^{i}$ and $Q_{\widetilde{S}}^{f}$ is that the they parametrize initial states via the Gibbs map $\mathcal{G}_{\beta H_{\widetilde{S}}}$. These roles are swapped within the pair, such that for the reverse process, $Q_{\widetilde{S}}^{f}$ gives the initial state, and $Q_{\widetilde{S}}^{i}$ the measurement.

We define the completely positive maps 
\begin{equation}
\label{deftildeFR}
\begin{split}
\tilde{\mathcal{F}}(\sigma) = & \Tr_{\widetilde{S}}([Q_{\widetilde{S}}^{f}\otimes\hat{1}_E]V[\mathcal{G}_{\beta H_{\widetilde{S}}}(Q_{\widetilde{S}}^{i})\otimes \sigma]V^{\dagger}),\\
\tilde{\mathcal{R}}(\sigma) =  & \Tr_{\widetilde{S}}([Q_{\widetilde{S}}^{i}\otimes\hat{1}_E]V^{\dagger}[\mathcal{G}_{\beta H_{\widetilde{S}}}(Q_{\widetilde{S}}^{f})\otimes \sigma]V).
\end{split}
\end{equation}
The following is the counterpart of Lemma \ref{Transition}, and is left without proof.
\begin{Lemma}
\label{GenTransition}
With $V$, $H_{\widetilde{S}}$, and $H_E$ as in Assumptions \ref{DefNonEq}, it is the case that for every  $\alpha \in \mathbb{C}$
\begin{equation}
\label{jhvhvbnhj}
\begin{split}
& V[e^{\alpha H_{\widetilde{S}}}\otimes \hat{1}_E] = [e^{\alpha H_{\widetilde{S}}}\otimes e^{\alpha H_E}]V[\hat{1}_{\widetilde{S}}\otimes e^{-\alpha H_{E}}],\\
& [e^{\alpha H_{\widetilde{S}}}\otimes \hat{1}_E]V = [\hat{1}_{\widetilde{S}}\otimes e^{-\alpha H_E}]V[e^{\alpha H_{\widetilde{S}}}\otimes e^{\alpha H_E}].
\end{split}
\end{equation}
\end{Lemma}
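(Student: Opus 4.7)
The plan is to exploit the fact that the global Hamiltonian is non-interacting, $H = H_{\widetilde{S}}\otimes\hat{1}_E + \hat{1}_{\widetilde{S}}\otimes H_E$, so its two summands commute on the algebraic level. Consequently the exponential factorizes as $e^{\alpha H} = (e^{\alpha H_{\widetilde{S}}}\otimes \hat{1}_E)(\hat{1}_{\widetilde{S}}\otimes e^{\alpha H_E}) = e^{\alpha H_{\widetilde{S}}}\otimes e^{\alpha H_E}$, and in particular one has the two rewritings
\begin{equation*}
e^{\alpha H_{\widetilde{S}}}\otimes \hat{1}_E = e^{\alpha H}(\hat{1}_{\widetilde{S}}\otimes e^{-\alpha H_E}) = (\hat{1}_{\widetilde{S}}\otimes e^{-\alpha H_E})e^{\alpha H}.
\end{equation*}
Combined with the energy-conservation hypothesis $[V,H] = 0$, which immediately gives $[V, e^{\alpha H}] = 0$, both identities in the lemma will follow by straightforward algebraic manipulation.

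For the first identity I would insert the rewriting on the left of $V$, namely $V[e^{\alpha H_{\widetilde{S}}}\otimes \hat{1}_E] = V e^{\alpha H}(\hat{1}_{\widetilde{S}}\otimes e^{-\alpha H_E})$, then slide $V$ through $e^{\alpha H}$ using $[V,e^{\alpha H}]=0$, and finally re-factor $e^{\alpha H}$ as $e^{\alpha H_{\widetilde{S}}}\otimes e^{\alpha H_E}$. This produces the right-hand side directly. For the second identity the symmetric strategy applies: I would use the other rewriting of $e^{\alpha H_{\widetilde{S}}}\otimes\hat{1}_E$, namely $(\hat{1}_{\widetilde{S}}\otimes e^{-\alpha H_E})e^{\alpha H}$, multiply by $V$ on the right, commute $V$ through $e^{\alpha H}$ once more, and factor $e^{\alpha H}$ on the right. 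No appeal to time-reversals, perfect control, or the structure of $Q^{i}_{\widetilde{S}},Q^{f}_{\widetilde{S}}$ is needed; the lemma is a purely algebraic consequence of the two structural assumptions on $H$ and $V$.

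The only genuine obstacle is the operator-theoretic one in infinite dimensions: $H_{\widetilde{S}}$ and $H_E$ may be unbounded, so the exponentials $e^{\alpha H_{\widetilde{S}}}$, $e^{\alpha H_E}$ and $e^{\alpha H}$ for general $\alpha\in\mathbb{C}$ need not be bounded, and the factorization as well as the commutations between $V$ and $e^{\alpha H}$ have to be interpreted on an appropriate dense invariant domain. In line with the treatment used elsewhere in the paper (and since the lemma is applied only to sandwich constructions where the exponentials are combined with $Q_{\widetilde{S}}^{i,f}$ through the Gibbs map), I would handle this by assuming the relevant domain conditions implicitly, or by specializing to the bounded or point-spectrum case when needed. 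The algebraic core of the argument remains the same in all cases.
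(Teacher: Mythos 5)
Your proof is correct and follows essentially the same route as the paper: the paper leaves Lemma \ref{GenTransition} without proof, noting only that it is the counterpart of Lemma \ref{Transition}, whose proof uses exactly your strategy of writing the local exponential as $e^{\alpha H}$ times a compensating factor on $E$, commuting $V$ through $e^{\alpha H}$ via $[V,H]=0$, and re-factorizing — here simplified because there is no control structure to project onto. Your remark about domain issues for unbounded Hamiltonians is a fair caveat that the paper itself glosses over.
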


\begin{Proposition}
\label{NonEqConditional}
With Assumptions \ref{DefNonEq}, the CPMs $\tilde{\mathcal{F}}$ and $\tilde{\mathcal{R}}$ as defined in (\ref{deftildeFR}) satisfy
\begin{equation}
\label{adskjvhb}
\begin{split}
 \mathcal{Z}_{\beta H_{\widetilde{S}}}(Q_{\widetilde{S}}^{i})\tilde{\mathcal{F}} = \mathcal{Z}_{\beta H_{\widetilde{S}}}(Q_{\widetilde{S}}^{f})\mathcal{J}_{\beta H_E}\tilde{\mathcal{R}}^{*}\mathcal{J}_{\beta H_E}^{-1}.
\end{split}
\end{equation}
\end{Proposition}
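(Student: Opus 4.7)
The plan is to expand the right-hand side of (\ref{adskjvhb}), push the factors $e^{\pm\beta H_E/2}$ inside the partial trace over $\widetilde{S}$, and then use the energy conservation $[H,V]=0$ together with the cyclic property of $\Tr_{\widetilde{S}}$ to bring it into the form of the left-hand side.

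First I would compute the conjugate $\tilde{\mathcal{R}}^{*}$. Applying the general rule from Sec.~\ref{ConjugateCPMs} (Eqs.~(\ref{conjremark1})--(\ref{conjremark2})) to the definition (\ref{deftildeFR}) of $\tilde{\mathcal{R}}$ yields
\begin{equation*}
\tilde{\mathcal{R}}^{*}(Y)=\Tr_{\widetilde{S}}\!\Bigl([\mathcal{G}_{\beta H_{\widetilde{S}}}(Q_{\widetilde{S}}^{f})\otimes\hat{1}_E]V[Q_{\widetilde{S}}^{i}\otimes Y]V^{\dagger}\Bigr).
\end{equation*}
Multiplying by $\mathcal{Z}_{\beta H_{\widetilde{S}}}(Q_{\widetilde{S}}^{f})$ replaces $\mathcal{G}_{\beta H_{\widetilde{S}}}(Q_{\widetilde{S}}^{f})$ with $\mathcal{J}_{\beta H_{\widetilde{S}}}(Q_{\widetilde{S}}^{f})$, and substituting $Y=\mathcal{J}_{\beta H_E}^{-1}(\sigma)=e^{\beta H_E/2}\sigma e^{\beta H_E/2}$ gives
\begin{equation*}
\mathcal{Z}_{\beta H_{\widetilde{S}}}(Q_{\widetilde{S}}^{f})\mathcal{J}_{\beta H_E}\tilde{\mathcal{R}}^{*}\mathcal{J}_{\beta H_E}^{-1}(\sigma)=\Tr_{\widetilde{S}}\!\Bigl([\mathcal{J}_{\beta H_{\widetilde{S}}}(Q_{\widetilde{S}}^{f})\otimes e^{-\beta H_E/2}]V[Q_{\widetilde{S}}^{i}\otimes e^{\beta H_E/2}\sigma e^{\beta H_E/2}]V^{\dagger}[\hat{1}_{\widetilde{S}}\otimes e^{-\beta H_E/2}]\Bigr),
\end{equation*}
where the outer $e^{-\beta H_E/2}$'s have been absorbed under $\Tr_{\widetilde{S}}$ since they commute with it.

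Next I would move the Hamiltonian exponentials through $V$ and $V^{\dagger}$. Because $H=H_{\widetilde{S}}\otimes\hat{1}_E+\hat{1}_{\widetilde{S}}\otimes H_E$ is non-interacting and $[V,H]=0$, the operator $e^{-\beta H_{\widetilde{S}}/2}\otimes e^{-\beta H_E/2}=e^{-\beta H/2}$ commutes with $V$. Factoring $\hat{1}_{\widetilde{S}}\otimes e^{-\beta H_E/2}=[e^{\beta H_{\widetilde{S}}/2}\otimes\hat{1}_E]\cdot[e^{-\beta H_{\widetilde{S}}/2}\otimes e^{-\beta H_E/2}]$ on the right, and analogously on the left, lets me slide $[e^{-\beta H_{\widetilde{S}}/2}\otimes e^{-\beta H_E/2}]$ inward past $V$ on both sides. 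The two copies meeting in the middle collapse the factor $e^{\beta H_E/2}\sigma e^{\beta H_E/2}$ back to $\sigma$ and convert $Q_{\widetilde{S}}^{i}$ into $\mathcal{J}_{\beta H_{\widetilde{S}}}(Q_{\widetilde{S}}^{i})$, yielding
\begin{equation*}
\Tr_{\widetilde{S}}\!\Bigl([e^{-\beta H_{\widetilde{S}}/2}Q_{\widetilde{S}}^{f}\otimes\hat{1}_E]V[\mathcal{J}_{\beta H_{\widetilde{S}}}(Q_{\widetilde{S}}^{i})\otimes\sigma]V^{\dagger}[e^{\beta H_{\widetilde{S}}/2}\otimes\hat{1}_E]\Bigr).
\end{equation*}

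Finally, the cyclic property of $\Tr_{\widetilde{S}}$ with respect to operators of the form $A\otimes\hat{1}_E$ lets me move $e^{\beta H_{\widetilde{S}}/2}\otimes\hat{1}_E$ from the right all the way to the left, where it cancels against $e^{-\beta H_{\widetilde{S}}/2}$, leaving exactly $\Tr_{\widetilde{S}}\!\bigl([Q_{\widetilde{S}}^{f}\otimes\hat{1}_E]V[\mathcal{J}_{\beta H_{\widetilde{S}}}(Q_{\widetilde{S}}^{i})\otimes\sigma]V^{\dagger}\bigr)=\mathcal{Z}_{\beta H_{\widetilde{S}}}(Q_{\widetilde{S}}^{i})\tilde{\mathcal{F}}(\sigma)$. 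Since $\sigma$ is arbitrary, this establishes (\ref{adskjvhb}). The main bookkeeping obstacle is ensuring that the exponentials on $\widetilde{S}$ and on $E$ track together through $V$, which is exactly what energy conservation guarantees; everything else is algebraic manipulation identical in spirit to the proof of Proposition~\ref{PreliminaryQuantumCrooks}, but without needing the perfect-control condition.
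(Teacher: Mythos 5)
Your proposal is correct and follows essentially the same route as the paper: both compute $\tilde{\mathcal{R}}^{*}$ via the conjugate-CPM formula of Sec.~\ref{ConjugateCPMs}, then use energy conservation for the non-interacting $H$ to slide $e^{-\beta H/2}$ through $V$ and $V^{\dagger}$ (your inline commutation argument is exactly the content of Lemma~\ref{GenTransition}), and finish with the cyclicity of $\Tr_{\widetilde{S}}$. The only difference is cosmetic — you reduce the right-hand side of (\ref{adskjvhb}) to the left, whereas the paper rewrites $\tilde{\mathcal{R}}^{*}$ as $\frac{\mathcal{Z}_{\beta H_{\widetilde{S}}}(Q_{\widetilde{S}}^{i})}{\mathcal{Z}_{\beta H_{\widetilde{S}}}(Q_{\widetilde{S}}^{f})}\mathcal{J}_{\beta H_E}^{-1}\tilde{\mathcal{F}}\mathcal{J}_{\beta H_E}$ and then rearranges.
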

\begin{proof}
By comparing the definition of the CPM $\tilde{\mathcal{R}}$  with Eqs.~(\ref{conjremark1}) and (\ref{conjremark2}) in Appendix \ref{ConjugateCPMs}, we can conclude that
\begin{equation}
\begin{split}
& \tilde{\mathcal{R}}^{*}(Y)\\
& =    \Tr_{\widetilde{S}}([\mathcal{G}_{\beta H_{\widetilde{S}}}(Q_{\widetilde{S}}^{f})\otimes \hat{1}_E]V[Q_{\widetilde{S}}^{i}\otimes Y]V^{\dagger})\\
& =  \frac{1}{\mathcal{Z}_{\beta H_{\widetilde{S}}}(Q_{\widetilde{S}}^{f})}\Tr_{\widetilde{S}}\Big([Q_{\widetilde{S}}^{f}\otimes \hat{1}_E]\\
&\quad [e^{-\beta H_{\widetilde{S}}/2}\otimes\hat{1}_E]V [Q_{\widetilde{S}}^{i}\otimes Y] V^{\dagger} [e^{-\beta H_{\widetilde{S}}/2}\otimes\hat{1}_E]\Big)\\
& [\textrm{Lemma \ref{GenTransition}}]\\
& = \frac{1}{\mathcal{Z}_{\beta H_{\widetilde{S}}}(Q_{\widetilde{S}}^{f})}e^{\beta H_E/2}\Tr_{\widetilde{S}}\Big([Q_{\widetilde{S}}^{f}\otimes \hat{1}_E]V\\
& \quad \quad [e^{-\beta H_{\widetilde{S}}/2} Q_{\widetilde{S}}^{i}e^{-\beta H_{\widetilde{S}}/2}\otimes \mathcal{J}_{\beta H_E}(Y) ] V^{\dagger}\Big)e^{\beta H_{E}/2}\\
& = \frac{\mathcal{Z}_{\beta H_{\widetilde{S}}}(Q_{\widetilde{S}}^{i})}{\mathcal{Z}_{\beta H_{\widetilde{S}}}(Q_{\widetilde{S}}^{f})} \mathcal{J}_{\beta H_E}^{-1}\circ \tilde{\mathcal{F}}\circ \mathcal{J}_{\beta H_E}(Y).
\end{split}
\end{equation}
Hence $\mathcal{Z}_{\beta H_{\widetilde{S}}}(Q_{\widetilde{S}}^{f})\tilde{\mathcal{R}}^{*} =  \mathcal{Z}_{\beta H_{\widetilde{S}}}(Q_{\widetilde{S}}^{i}) \mathcal{J}_{\beta H_E}^{-1}\circ \tilde{\mathcal{F}}\circ \mathcal{J}_{\beta H_E}$.
By multiplying from the left with $ \mathcal{J}_{\beta H_E}$ and from the right with $ \mathcal{J}_{\beta H_E}^{-1}$ we obtain Eq.~(\ref{adskjvhb}).
\end{proof}

\subsection{\label{SecCondWithTimeRev}With time-reversal}

\begin{Assumptions}
\label{AssumpCondTimeRev}
Let $\mathcal{H}_{\widetilde{S}}$ and $\mathcal{H}_E$ be complex Hilbert spaces.
 Let $\mathcal{T}_{\widetilde{S}}$ and $\mathcal{T}_{E}$ be time-reversals on $\widetilde{S}$ and $E$, respectively, and let $\mathcal{T} := \mathcal{T}_{\widetilde{S}}\otimes \mathcal{T}_{E}$.
\begin{itemize}
\item Let  $H_{\widetilde{S}}$ and $H_E$ be Hermitian operators on $\mathcal{H}_{\widetilde{S}}$ and $\mathcal{H}_{E}$, respectively, and let 
\begin{equation}
H := H_{\widetilde{S}}\otimes\hat{1}_E  + \hat{1}_{\widetilde{S}}\otimes H_{E}.
\end{equation}
\item Let $V$ be a unitary operator on $\mathcal{H}_{\widetilde{S}}\otimes\mathcal{H}_{E}$ such that $[H,V] = 0$.
\item Let $Q_{\widetilde{S}}^{i+}$ and  $Q_{\widetilde{S}}^{f+}$ be  operators on $\mathcal{H}_{\widetilde{S}}$ such that $0\leq Q_{\widetilde{S}}^{i+} \leq \hat{1}_{\widetilde{S}}$ and $0\leq Q_{\widetilde{S}}^{f+} \leq \hat{1}_{\widetilde{S}}$.
\item Let  $\mathcal{T}_{E}(H_E) = H_E$, $\mathcal{T}_{\widetilde{S}}(H_{\widetilde{S}}) = H_{\widetilde{S}}$, and $\mathcal{T}(V) = V$. Define $Q_{\widetilde{S}}^{i-} := \mathcal{T}_{\widetilde{S}}(Q_{\widetilde{S}}^{i+})$ and $Q_{\widetilde{S}}^{f-} := \mathcal{T}_{\widetilde{S}}(Q_{\widetilde{S}}^{f+})$.
\end{itemize}
\end{Assumptions}
These assumptions are constructed such that the triple $V,Q_{\widetilde{S}}^{i}:=Q_{\widetilde{S}}^{i+},Q_{\widetilde{S}}^{f}:=Q_{\widetilde{S}}^{f+}$ satisfies Assumptions \ref{DefNonEq}. Simultaneously, the triple $V,Q_{\widetilde{S}}^{i}:= Q_{\widetilde{S}}^{f-},Q_{\widetilde{S}}^{f} := Q_{\widetilde{S}}^{i-}$ also satisfies Assumptions \ref{DefNonEq}. 

By Lemma \ref{PreservesIdentity} and Lemma \ref{ReorderHermImplyPos} it follows that  $0\leq Q_{\widetilde{S}}^{i+} \leq \hat{1}_{\widetilde{S}}$ implies $0\leq Q_{\widetilde{S}}^{i-} \leq \hat{1}_{\widetilde{S}}$ and analogously for $Q_{\widetilde{S}}^{f-}$. 

\begin{Lemma}
\label{LemmaSimpleRelations}
Let $\mathcal{T}$ be a time-reversal and $A$ an operator such that $\mathcal{T}(A) = A$ and $A^{\dagger} = A$. Then
\begin{equation}
\label{SimpleRelations}
\begin{split}
& \mathcal{J}_{A}\mathcal{T} =  \mathcal{T}\mathcal{J}_{A},\\
& \mathcal{Z}_{A}\big(\mathcal{T}(Q)\big) =  \mathcal{Z}_{A}(Q),\\
& \mathcal{G}_{A}\big(\mathcal{T}(Q)\big) =  \mathcal{T}\big(\mathcal{G}_{A}(Q)\big),\\
& \Tr\big(Q\mathcal{J}_{A}(R)\big) =  \Tr\big(\mathcal{J}_{A}(Q)R\big).
\end{split}
\end{equation}
\end{Lemma}

Define the CPMs 
\begin{equation}
\label{deftildeFRpm}
\begin{split}
\tilde{\mathcal{F}}_{+}(\sigma) = & \Tr_{\widetilde{S}}([Q_{\widetilde{S}}^{f+}\otimes \hat{1}_E]V[\mathcal{G}_{\beta H_{\widetilde{S}}}(Q_{\widetilde{S}}^{i+})\otimes \sigma]V^{\dagger}),\\
\tilde{\mathcal{R}}_{+}(\sigma) = & \Tr_{\widetilde{S}}([Q_{\widetilde{S}}^{i+}\otimes \hat{1}_E]V^{\dagger}[\mathcal{G}_{\beta  H_{\widetilde{S}}}(Q_{\widetilde{S}}^{f+})\otimes \sigma]V),\\
\tilde{\mathcal{F}}_{-}(\sigma) = & \Tr_{\widetilde{S}}([Q_{\widetilde{S}}^{i-}\otimes \hat{1}_E]V[\mathcal{G}_{\beta H_{\widetilde{S}}}(Q_{\widetilde{S}}^{f-})\otimes \sigma]V^{\dagger}),\\
\tilde{\mathcal{R}}_{-}(\sigma) = & \Tr_{\widetilde{S}}([Q_{\widetilde{S}}^{f-}\otimes \hat{1}_E]V^{\dagger}[\mathcal{G}_{\beta H_{\widetilde{S}}}(Q_{\widetilde{S}}^{i-})\otimes \sigma]V).
\end{split} 
\end{equation}
The CPMs $\tilde{\mathcal{F}}_{\pm}$ and $\tilde{\mathcal{R}}_{\pm}$ describe the \emph{unnormalized} mapping from the input to the output, conditioned on the successful control measurement. The corresponding success probabilities are given by the traces. (In Appendix \ref{ClassCondJarzynskiRel} we shall consider these success probabilities in the special case of energy translation invariance.) Analogous to Lemma \ref{ChannelReversal} one can prove the following.
\begin{Lemma}
\label{ReversalRelation}
With Assumptions \ref{AssumpCondTimeRev}, the CPMs  $\tilde{\mathcal{R}}_{+}$ and $\tilde{\mathcal{F}}_{-}$ as defined in Eq.~(\ref{deftildeFRpm}) are related as
\begin{equation}
\mathcal{T}_E\tilde{\mathcal{R}}_{+} = \tilde{\mathcal{F}}_{-}\mathcal{T}_E
\end{equation}
and thus
\begin{equation}
\tilde{\mathcal{R}}^{*}_{+} = \tilde{\mathcal{F}}^{\ominus}_{-}.
\end{equation}
\end{Lemma}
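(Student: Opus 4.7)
The plan is to mirror closely the proof of Lemma \ref{ChannelReversal}, substituting the projectors associated with perfect control by the general measurement operators $Q_{\widetilde{S}}^{i+}, Q_{\widetilde{S}}^{f+}$. First I would start from the definition of $\tilde{\mathcal{R}}_{+}(\sigma)$ in (\ref{deftildeFRpm}) and apply $\mathcal{T}_E$ to both sides. I would push $\mathcal{T}_E$ inside the partial trace via the elementary identity $\mathcal{T}_E \Tr_{\widetilde{S}}(X) = \Tr_{\widetilde{S}}\bigl([\mathcal{T}_{\widetilde{S}}\otimes\mathcal{T}_E](X)\bigr)$, after which I can apply the global time reversal $\mathcal{T}$ to the four-factor product inside the trace.

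Next I would unwind that product using Definition \ref{PropertiesTimeReversal}: the order-reversing property $\mathcal{T}(AB) = \mathcal{T}(B)\mathcal{T}(A)$ turns the ordering $Q^{i+}_{\widetilde{S}}\otimes\hat{1}_E,\, V^{\dagger},\, \mathcal{G}_{\beta H_{\widetilde{S}}}(Q^{f+}_{\widetilde{S}})\otimes \sigma,\, V$ into the reverse ordering; time-reversal symmetry $\mathcal{T}(V) = V$ together with compatibility with Hermitian conjugation gives $\mathcal{T}(V^{\dagger}) = V^{\dagger}$, so the $V$'s reappear in the correct places. For the two remaining factors I use that $\mathcal{T}$ factorizes as $\mathcal{T}_{\widetilde{S}}\otimes\mathcal{T}_E$, that $\mathcal{T}_E(\hat{1}_E) = \hat{1}_E$ (Lemma \ref{PreservesIdentity}), and the identity $\mathcal{G}_{\beta H_{\widetilde{S}}}(\mathcal{T}_{\widetilde{S}}(Q)) = \mathcal{T}_{\widetilde{S}}(\mathcal{G}_{\beta H_{\widetilde{S}}}(Q))$ from Eq.~(\ref{SimpleRelations}), which applies because $\mathcal{T}_{\widetilde{S}}(H_{\widetilde{S}}) = H_{\widetilde{S}}$. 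By the defining relations $Q^{i-}_{\widetilde{S}} = \mathcal{T}_{\widetilde{S}}(Q^{i+}_{\widetilde{S}})$ and $Q^{f-}_{\widetilde{S}} = \mathcal{T}_{\widetilde{S}}(Q^{f+}_{\widetilde{S}})$, the four-factor product transforms into $V\, [\mathcal{G}_{\beta H_{\widetilde{S}}}(Q^{f-}_{\widetilde{S}})\otimes \mathcal{T}_E(\sigma)]\, V^{\dagger}\, [Q^{i-}_{\widetilde{S}}\otimes \hat{1}_E]$.

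The last step is to bring the measurement factor back to the front by using the cyclicity of the partial trace with respect to operators of the form $Q\otimes\hat{1}_E$, i.e. $\Tr_{\widetilde{S}}(A[Q\otimes\hat{1}_E]) = \Tr_{\widetilde{S}}([Q\otimes\hat{1}_E]A)$, which follows from a direct matrix-block computation since the $E$-factor is the identity. The resulting expression matches the definition of $\tilde{\mathcal{F}}_{-}\mathcal{T}_E(\sigma)$ in (\ref{deftildeFRpm}), establishing $\mathcal{T}_E\tilde{\mathcal{R}}_{+} = \tilde{\mathcal{F}}_{-}\mathcal{T}_E$.

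For the second assertion I would rearrange the first one using $\mathcal{T}_E^2 = I$ to obtain $\tilde{\mathcal{R}}_{+} = \mathcal{T}_E\tilde{\mathcal{F}}_{-}\mathcal{T}_E$, and then take conjugates and invoke the alternative definition of $\ominus$ in Eq.~(\ref{Altominusdef}), $\phi^{\ominus} = (\mathcal{T}\phi\mathcal{T})^{*}$, to conclude $\tilde{\mathcal{R}}_{+}^{*} = \tilde{\mathcal{F}}_{-}^{\ominus}$. I do not expect any substantive obstacle here — the proof is essentially a bookkeeping exercise in the algebra of $\mathcal{T}$ combined with the Gibbs-map commutation relation; the only mild point to get right is the partial-trace cyclicity with respect to the $Q\otimes\hat{1}_E$ factor, which is where the structural assumption on the non-interacting Hamiltonian $H = H_{\widetilde{S}}\otimes\hat{1}_E + \hat{1}_{\widetilde{S}}\otimes H_E$ enters the derivation implicitly (through having the identity on $E$).
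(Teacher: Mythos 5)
Your proposal is correct and follows essentially the same route as the paper: push $\mathcal{T}_E$ inside the partial trace as $\mathcal{T}=\mathcal{T}_{\widetilde{S}}\otimes\mathcal{T}_E$, unwind the product using order reversal, $\mathcal{T}(V)=V$, and the Gibbs-map commutation from (\ref{SimpleRelations}), then use cyclicity of $\Tr_{\widetilde{S}}$ against $Q\otimes\hat{1}_E$, exactly as in the proof of Lemma \ref{ChannelReversal} to which the paper defers. (Only a cosmetic remark: the cyclicity step needs nothing about the Hamiltonian, just that the measurement operator acts trivially on $E$; the non-interacting form of $H$ enters elsewhere, in Lemma \ref{GenTransition}.)
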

The proof is obtained by combining the definition of $\tilde{\mathcal{R}}_{+}$ in Eq.~(\ref{deftildeFRpm})  with the general fact that $ \mathcal{T}_1\big(\Tr_2(\rho)\big) = \Tr_{2}\big([\mathcal{T}_1\otimes\mathcal{T}_2](\rho)\big)$, the time-reversal symmetry, 
and Lemma \ref{LemmaSimpleRelations}.

\begin{Proposition}[Conditional quantum fluctuation relation]
\label{PropConditionalFluctuationThm}
With Assumptions \ref{AssumpCondTimeRev}, the CPMs $\tilde{\mathcal{F}}_{+}$ and $\tilde{\mathcal{F}}_{-}$, as defined in Eq.~(\ref{deftildeFRpm}), are related as
\begin{equation}
\label{ForwardRel}
\mathcal{Z}_{\beta H_{\widetilde{S}}}(Q_{\widetilde{S}}^{i})\tilde{\mathcal{F}}_{+} = \mathcal{Z}_{\beta H_{\widetilde{S}}}(Q_{\widetilde{S}}^{f})\mathcal{J}_{\beta H_E}\tilde{\mathcal{F}}_{-}^{\ominus}\mathcal{J}_{\beta H_E}^{-1}.
\end{equation}
Here we use the notation $\mathcal{Z}_{\beta H_{\widetilde{S}}}(Q_{\widetilde{S}}^{i}):=  \mathcal{Z}_{\beta H_{\widetilde{S}}}(Q_{\widetilde{S}}^{i+})=  \mathcal{Z}_{\beta H_{\widetilde{S}}}(Q_{\widetilde{S}}^{i-})$ and $\mathcal{Z}_{\beta H_{\widetilde{S}}}(Q_{\widetilde{S}}^{f}):=  \mathcal{Z}_{\beta H_{\widetilde{S}}}(Q_{\widetilde{S}}^{f+})=  \mathcal{Z}_{\beta H_{\widetilde{S}}}(Q_{\widetilde{S}}^{f-})$.
\end{Proposition}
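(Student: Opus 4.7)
The plan is to derive Proposition \ref{PropConditionalFluctuationThm} by composing the two ingredients that have been developed in the preceding subsections: the non-time-reversal conditional relation (Proposition \ref{NonEqConditional}) and the reversal identification (Lemma \ref{ReversalRelation}). The idea is that the time-reversal symmetry built into Assumptions \ref{AssumpCondTimeRev} allows us to replace the ``reversed dynamics'' object $\tilde{\mathcal{R}}_{+}^{*}$ appearing in the intermediate relation by a quantity $\tilde{\mathcal{F}}_{-}^{\ominus}$ that is defined entirely in terms of the forward evolution $V$ (i.e.\ without any $V^{\dagger}$), just as in the passage from Proposition \ref{PreliminaryQuantumCrooks} to Proposition \ref{PropQuantumCrooks}.

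First I would observe that Assumptions \ref{AssumpCondTimeRev} immediately imply that the triple $(V,\, Q^{i}_{\widetilde{S}} := Q^{i+}_{\widetilde{S}},\, Q^{f}_{\widetilde{S}} := Q^{f+}_{\widetilde{S}})$ satisfies Assumptions \ref{DefNonEq}; indeed the Hamiltonian is non-interacting, $[H,V]=0$, and $0\le Q^{i+}_{\widetilde{S}},Q^{f+}_{\widetilde{S}}\le \hat{1}_{\widetilde{S}}$. Applying Proposition \ref{NonEqConditional} to this triple therefore yields
\begin{equation}
\label{ProofStep1}
\mathcal{Z}_{\beta H_{\widetilde{S}}}(Q_{\widetilde{S}}^{i+})\tilde{\mathcal{F}}_{+} = \mathcal{Z}_{\beta H_{\widetilde{S}}}(Q_{\widetilde{S}}^{f+})\mathcal{J}_{\beta H_E}\tilde{\mathcal{R}}_{+}^{*}\mathcal{J}_{\beta H_E}^{-1},
\end{equation}
where the CPMs on the two sides are precisely the ones defined in (\ref{deftildeFRpm}).

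Next I would invoke Lemma \ref{ReversalRelation}, which under Assumptions \ref{AssumpCondTimeRev} gives $\tilde{\mathcal{R}}^{*}_{+} = \tilde{\mathcal{F}}^{\ominus}_{-}$. Substituting this identity into (\ref{ProofStep1}) immediately produces the right-hand side of (\ref{ForwardRel}). The last small thing to check is the asserted notational coincidence $\mathcal{Z}_{\beta H_{\widetilde{S}}}(Q_{\widetilde{S}}^{i+}) = \mathcal{Z}_{\beta H_{\widetilde{S}}}(Q_{\widetilde{S}}^{i-})$ and the corresponding equality for the $f$-operators; but this is exactly the second line of (\ref{SimpleRelations}) applied with $A := \beta H_{\widetilde{S}}$, since $\mathcal{T}_{\widetilde{S}}(H_{\widetilde{S}}) = H_{\widetilde{S}}$ and $Q_{\widetilde{S}}^{i-} = \mathcal{T}_{\widetilde{S}}(Q_{\widetilde{S}}^{i+})$ by definition.

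There is not really a ``hard step'' here: Proposition \ref{NonEqConditional} has already done the analytic work of shuffling $e^{-\beta H_{\widetilde{S}}/2}$ factors through $V$ using Lemma \ref{GenTransition}, and Lemma \ref{ReversalRelation} has already done the work of turning $V^{\dagger}$ into $V$ via the combined invariances $\mathcal{T}(V)=V$, $\mathcal{T}_{\widetilde{S}}(H_{\widetilde{S}})=H_{\widetilde{S}}$, $\mathcal{T}_E(H_E)=H_E$. The closest thing to a subtlety is making sure that the Gibbs-map factor in $\tilde{\mathcal{F}}_{-}$ and $\tilde{\mathcal{R}}_{+}$ transforms correctly under $\mathcal{T}_{\widetilde{S}}$, which uses the third line of (\ref{SimpleRelations}): $\mathcal{G}_{\beta H_{\widetilde{S}}}(Q^{f-}_{\widetilde{S}}) = \mathcal{T}_{\widetilde{S}}\bigl(\mathcal{G}_{\beta H_{\widetilde{S}}}(Q^{f+}_{\widetilde{S}})\bigr)$, so that the partial-trace manipulation $\Tr_{\widetilde{S}}\circ(\mathcal{T}_{\widetilde{S}}\otimes\mathcal{T}_E) = \mathcal{T}_E\circ\Tr_{\widetilde{S}}$ behind Lemma \ref{ReversalRelation} goes through unchanged. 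Everything then assembles into (\ref{ForwardRel}).
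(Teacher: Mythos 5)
Your proposal is correct and follows essentially the same route as the paper's own proof: apply Proposition \ref{NonEqConditional} to the triple $(V,Q_{\widetilde{S}}^{i+},Q_{\widetilde{S}}^{f+})$ and then substitute $\tilde{\mathcal{R}}_{+}^{*}=\tilde{\mathcal{F}}_{-}^{\ominus}$ via Lemma \ref{ReversalRelation}. Your additional check that $\mathcal{Z}_{\beta H_{\widetilde{S}}}(Q_{\widetilde{S}}^{i+})=\mathcal{Z}_{\beta H_{\widetilde{S}}}(Q_{\widetilde{S}}^{i-})$ via (\ref{SimpleRelations}) is a correct justification of the notational convention stated in the proposition.
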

\begin{proof}
The triple $V$, $Q_{\widetilde{S}}^{i+}$, and $Q_{\widetilde{S}}^{f+}$ from Assumptions \ref{AssumpCondTimeRev} satisfies Assumptions  \ref{DefNonEq} with $Q_{\widetilde{S}}^{i}:= Q_{\widetilde{S}}^{i+}$, $Q_{\widetilde{S}}^{f} := Q_{\widetilde{S}}^{f+}$. Hence, Proposition \ref{NonEqConditional} is applicable and yields 
$\mathcal{Z}_{\beta H_{\widetilde{S}}}(Q_{\widetilde{S}}^{i})\tilde{\mathcal{F}}_{+} = \mathcal{Z}_{\beta H_{\widetilde{S}}}(Q_{\widetilde{S}}^{f})\mathcal{J}_{\beta H_E}\tilde{\mathcal{R}}_{+}^{*}\mathcal{J}_{\beta H_E}^{-1}$. 
The application of Lemma \ref{ReversalRelation} to the above equation results in equation (\ref{ForwardRel}).

Due to Lemma \ref{LemmaSimpleRelations} we know that $\mathcal{Z}_{\beta H_{\widetilde{S}}}(Q_{\widetilde{S}}^{i-})=  \mathcal{Z}_{\beta H_{\widetilde{S}}}\big(\mathcal{T}_{\widetilde{S}}(Q_{\widetilde{S}}^{i+})\big) = \mathcal{Z}_{\beta H_{\widetilde{S}}}(Q_{\widetilde{S}}^{i+})$.
\end{proof}

Note that Proposition \ref{PropConditionalFluctuationThm} only makes a statement concerning pairs of measurement operators. It does not make any assumptions on the POVMs that these measurement operators may be members of. For example, instead of basing the induced CPMs  in (\ref{deftildeFRpm}) on the pair $(Q_{\widetilde{S}}^{i+},Q_{\widetilde{S}}^{f+})$ we could equally well obtain fluctuation relations for each of the pairs 
$(Q_{\widetilde{S}}^{i+},\hat{1}-Q_{\widetilde{S}}^{f+})$, $(\hat{1}-Q_{\widetilde{S}}^{i+},Q_{\widetilde{S}}^{f+})$, and $(\hat{1}-Q_{\widetilde{S}}^{i+},\hat{1}-Q_{\widetilde{S}}^{f+})$.
There is also no need to assume that the POVMs are binary.
For two POVMs $\{Q_{\widetilde{S},k}^{i+}\}_{k}$ and $\{Q_{\widetilde{S},l}^{f+}\}_{l}$ one can, for each possible combination of POVM elements $(Q_{\widetilde{S},k}^{i+},Q_{\widetilde{S},l}^{f+})$, construct the corresponding  CPMs $\tilde{\mathcal{F}}_{\pm}^{(k,l)}$ as in (\ref{deftildeFRpm}), where each of these pairs satisfies the conditional fluctuation relation (\ref{ForwardRel}).

\subsection{Generally no decoupling of diagonals}

In Appendix \ref{DecouplingDiagonals} we showed that the channels $\mathcal{F}_{\pm}$ induced on the reservoir are such that the  dynamics decouples along the modes of coherence. One may thus wonder whether something similar is true for the conditional CPMs $\tilde{\mathcal{F}}_{\pm}$. In Appendix \ref{DecouplingDiagonals} the starting point was Lemma \ref{ComWithCom}, which shows that $\mathcal{F}_{\pm}$ and $\mathcal{R}_{\pm}$ commute with the commutator with respect to $H_E$. The following Lemma shows that this is generally not true for the conditional CPMs $\tilde{\mathcal{F}}_{\pm}$ and $\tilde{\mathcal{R}}_{\pm}$, and thus we cannot expect to have a separation of the dynamics of the different diagonals of the density matrix. (For an explicit example of such `mixing' of diagonal and off-diagonal elements, see  Appendix \ref{ExampleTwoQubits}.)
\begin{Lemma}
\label{nkjvvkv}
With Assumptions \ref{AssumpCondTimeRev}, the CPM $\tilde{\mathcal{F}}_{+}$  defined in Eq.~(\ref{deftildeFRpm}) satisfies the following relation
\begin{equation}
\begin{split}
&  [H_E,\tilde{\mathcal{F}}_{+}(\sigma)]  =  \tilde{\mathcal{F}}_{+}([H_E,\sigma])\\
& + \Tr_{\widetilde{S}}\Big((Q_{\widetilde{S}}^{f+}\otimes \hat{1}_E)V \big(\mathcal{G}_{\beta H_{\widetilde{S}}}([H_{\widetilde{S}},Q_{\widetilde{S}}^{i+}])\otimes \sigma\big)V^{\dagger}\Big)\\
&+ \Tr_{\widetilde{S}}\Big(\big([H_{\widetilde{S}},Q_{\widetilde{S}}^{f+}]\otimes \hat{1}_E\big)V\big(\mathcal{G}_{\beta H_{\widetilde{S}}}(Q_{\widetilde{S}}^{i+})\otimes \sigma\big)V^{\dagger}\Big).
\end{split}
\end{equation}
Analogous statements hold for $\tilde{\mathcal{F}}_{-}$ and $\tilde{\mathcal{R}}_{\pm}$.
\end{Lemma}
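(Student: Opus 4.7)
My plan is to compute $[H_E,\tilde{\mathcal{F}}_+(\sigma)]$ directly from the definition \eqref{deftildeFRpm} by pushing the commutator with $H_E$ inside the partial trace, using the identity $[H_E,\Tr_{\widetilde{S}}(X)]=\Tr_{\widetilde{S}}([\hat{1}_{\widetilde{S}}\otimes H_E,X])$, and then exploiting energy conservation $[H,V]=0$. The first step after this is to write $\hat{1}_{\widetilde{S}}\otimes H_E=H-H_{\widetilde{S}}\otimes\hat{1}_E$, so that the commutator splits into a piece containing $H$ (which slides freely past $V$ and $V^{\dagger}$) and a correction involving $H_{\widetilde{S}}\otimes\hat{1}_E$.

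Next, I will expand the $[H,\cdot]$ piece using the Leibniz rule. Acting on $(Q_{\widetilde{S}}^{f+}\otimes\hat{1}_E)V\rho_0 V^{\dagger}$ with $\rho_0:=\mathcal{G}_{\beta H_{\widetilde{S}}}(Q_{\widetilde{S}}^{i+})\otimes\sigma$, the $H$ hits either $Q_{\widetilde{S}}^{f+}\otimes\hat{1}_E$ (yielding $[H_{\widetilde{S}},Q_{\widetilde{S}}^{f+}]\otimes\hat{1}_E$) or $\rho_0$. On $\rho_0$, the non-interacting structure of $H$ separates into $[H_{\widetilde{S}},\mathcal{G}_{\beta H_{\widetilde{S}}}(Q_{\widetilde{S}}^{i+})]\otimes\sigma+\mathcal{G}_{\beta H_{\widetilde{S}}}(Q_{\widetilde{S}}^{i+})\otimes[H_E,\sigma]$. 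The last summand is exactly $\tilde{\mathcal{F}}_+([H_E,\sigma])$, which is the leading term we want. For the first summand I use the fact that $H_{\widetilde{S}}$ commutes with $e^{-\beta H_{\widetilde{S}}/2}$, so $[H_{\widetilde{S}},\mathcal{J}_{\beta H_{\widetilde{S}}}(Q_{\widetilde{S}}^{i+})]=\mathcal{J}_{\beta H_{\widetilde{S}}}([H_{\widetilde{S}},Q_{\widetilde{S}}^{i+}])$, and then absorb the normalization to identify this with $\mathcal{G}_{\beta H_{\widetilde{S}}}([H_{\widetilde{S}},Q_{\widetilde{S}}^{i+}])$ in the notation used in the statement.

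The remaining piece is $-\Tr_{\widetilde{S}}\bigl((Q_{\widetilde{S}}^{f+}\otimes\hat{1}_E)[H_{\widetilde{S}}\otimes\hat{1}_E,V\rho_0 V^{\dagger}]\bigr)$. To handle this I will use the partial-trace identity $\Tr_{\widetilde{S}}\bigl((M_1\otimes\hat{1}_E)B(M_2\otimes\hat{1}_E)\bigr)=\Tr_{\widetilde{S}}\bigl((M_2M_1\otimes\hat{1}_E)B\bigr)$, valid for any operators $M_1,M_2$ on $\mathcal{H}_{\widetilde{S}}$ and any $B$ on $\mathcal{H}_{\widetilde{S}}\otimes\mathcal{H}_E$ (it follows from full-trace cyclicity by dualizing against an arbitrary operator on $E$). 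Expanding the commutator and applying this twice, the two terms combine into $\Tr_{\widetilde{S}}\bigl(([Q_{\widetilde{S}}^{f+},H_{\widetilde{S}}]\otimes\hat{1}_E)V\rho_0 V^{\dagger}\bigr)$, which, after flipping the sign of the commutator, precisely cancels the $[H_{\widetilde{S}},Q_{\widetilde{S}}^{f+}]\otimes\hat{1}_E$ term that arose earlier on the $V$-side and instead produces the second correction term in the stated identity on the outer side (i.e.\ applied to $V\rho_0 V^{\dagger}$ before the $Q_{\widetilde{S}}^{f+}$ insertion).

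The only nontrivial step is the partial-trace cyclicity lemma used in the last paragraph; the rest is bookkeeping. The analogous identities for $\tilde{\mathcal{F}}_-$ and $\tilde{\mathcal{R}}_{\pm}$ follow by the same argument after exchanging the roles of $Q_{\widetilde{S}}^{i\pm}$ and $Q_{\widetilde{S}}^{f\pm}$ (and, for $\tilde{\mathcal{R}}_{\pm}$, replacing $V$ by $V^{\dagger}$, which still commutes with $H$).
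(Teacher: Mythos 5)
Your strategy is the right one and is, in essence, the ``straightforward verification analogous to Lemma \ref{ComWithCom}'' that the paper itself invokes: push $[H_E,\cdot]$ inside the partial trace via $[H_E,\Tr_{\widetilde{S}}(X)]=\Tr_{\widetilde{S}}([\hat{1}_{\widetilde{S}}\otimes H_E,X])$, split $\hat{1}_{\widetilde{S}}\otimes H_E=H-H_{\widetilde{S}}\otimes\hat{1}_E$, use $[H,V]=0$ together with the non-interacting form of $H$ acting on $\rho_0$, and note that $[H_{\widetilde{S}},\mathcal{J}_{\beta H_{\widetilde{S}}}(Q_{\widetilde{S}}^{i+})]=\mathcal{J}_{\beta H_{\widetilde{S}}}([H_{\widetilde{S}},Q_{\widetilde{S}}^{i+}])$. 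Your reading of the notation $\mathcal{G}_{\beta H_{\widetilde{S}}}([H_{\widetilde{S}},Q_{\widetilde{S}}^{i+}])$ as $\mathcal{J}_{\beta H_{\widetilde{S}}}([H_{\widetilde{S}},Q_{\widetilde{S}}^{i+}])/\mathcal{Z}_{\beta H_{\widetilde{S}}}(Q_{\widetilde{S}}^{i+})$ is also the right (indeed the only tenable) one, since $\mathcal{Z}_{\beta H_{\widetilde{S}}}$ applied to a commutator with $H_{\widetilde{S}}$ vanishes.

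Your last paragraph, however, mixes two incompatible bookkeepings and, read literally, double-counts the second correction term. Write $A:=Q_{\widetilde{S}}^{f+}\otimes\hat{1}_E$ and $B:=V\rho_0V^{\dagger}$. If, as in your second paragraph, you expand $[H,AB]=[H,A]B+A[H,B]$ and keep the term $([H_{\widetilde{S}},Q_{\widetilde{S}}^{f+}]\otimes\hat{1}_E)B$ --- which already \emph{is} the second correction term of the lemma --- then the leftover is $-\Tr_{\widetilde{S}}([H_{\widetilde{S}}\otimes\hat{1}_E,AB])$, the commutator with all of $AB$, and this vanishes identically by partial-trace cyclicity; there is nothing to cancel and nothing further to produce. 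What you instead declare to be the remaining piece, $-\Tr_{\widetilde{S}}(A[H_{\widetilde{S}}\otimes\hat{1}_E,B])$, evaluates via your own cyclicity identity to $+\Tr_{\widetilde{S}}(([H_{\widetilde{S}},Q_{\widetilde{S}}^{f+}]\otimes\hat{1}_E)B)$ (note the order: it comes out as $[H_{\widetilde{S}},Q_{\widetilde{S}}^{f+}]$, not $[Q_{\widetilde{S}}^{f+},H_{\widetilde{S}}]$), so adding it to the term already harvested in your second paragraph would give that contribution twice; the claimed ``cancellation followed by production'' does not occur. The consistent version of your route is to first observe $[\hat{1}_{\widetilde{S}}\otimes H_E,A]=0$, so that only $A[\hat{1}_{\widetilde{S}}\otimes H_E,B]$ survives; then the $[H,\cdot]$ piece contributes only $AV[H,\rho_0]V^{\dagger}$ (no $[H,A]B$ term), and $-\Tr_{\widetilde{S}}(A[H_{\widetilde{S}}\otimes\hat{1}_E,B])$ supplies the second correction term. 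Either clean route gives the lemma; fix the accounting so that it follows exactly one of them. The extension to $\tilde{\mathcal{F}}_{-}$ and $\tilde{\mathcal{R}}_{\pm}$ is then as you describe.
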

The proof is obtained via $H = H_{\widetilde{S}}\otimes\hat{1}_E + \hat{1}_{\widetilde{S}}\otimes H_E$ and the energy conservation $[H,V] = 0$.

\subsection{\label{NeverthelessDiagonal} Nevertheless  diagonal and off-diagonal conditional fluctuation relations}

Although there is no decoupling, there do strictly speaking still exist counterparts to  (\ref{diagonalCrooksRel}) and (\ref{grkgsslkjg}). To see this, suppose that $H_E$ is non-degenerate, with a complete orthonormal eigenbasis $\{|n\rangle\}_n$. Define the general transition matrix $\tilde{q}_{\pm}(mm'|nn') := \langle m|\tilde{\mathcal{F}}_{\pm}(|n\rangle\langle n'|)|m'\rangle$ for arbitrary $m,n,m',n'$. Assuming that $\mathcal{T}_E$ is the transpose with respect to $\{|n\rangle\}_n$, we can rewrite the conditional fluctuation relation (\ref{ForwardRel}) in its `matrix form'
\begin{equation}
\label{ForwardRelMatrixForm}
\begin{split}
& \mathcal{Z}_{\beta H_{\widetilde{S}}}(Q_{\widetilde{S}}^{i})\tilde{q}_{+}(mm'|nn')\\
& = e^{\beta (E_n+E_{n'}-E_{m}-E_{m'})/2} \mathcal{Z}_{\beta H_{\widetilde{S}}}(Q_{\widetilde{S}}^{f})\tilde{q}_{-}(nn'|mm').
\end{split}
\end{equation}
Nothing  prevents us from defining $\tilde{p}_{\pm}(m|n) :=  \tilde{q}_{\pm}(mm|nn)= \langle m|\tilde{\mathcal{F}}_{\pm}(|n\rangle\langle n|)|m\rangle$, and 
$\tilde{q}_{\pm}^{\delta}(m|n) := \tilde{q}_{\pm}(mm'|nn')$ for $E_{n} -E_{n'} = E_{m}-E_{m'}=\delta$, and as special cases of (\ref{ForwardRelMatrixForm}) write  
 \begin{equation}
\label{diagonalCrooksReltilde}
 \mathcal{Z}_{\beta H_{\widetilde{S}}}(Q_{\widetilde{S}}^{i})\tilde{p}_{+}(m|n) = e^{\beta (E_n-E_m)}\mathcal{Z}_{\beta H_{\widetilde{S}}}(Q_{\widetilde{S}}^{f}) \tilde{p}_{-}(n|m),
\end{equation}
\begin{equation}
\label{offdiagonalCrooksReltilde}
 \mathcal{Z}_{\beta H_{\widetilde{S}}}\!(Q_{\widetilde{S}}^{i})\tilde{q}_{+}^{\delta}(m|n)  =  e^{\beta (E_n-E_{m})}\mathcal{Z}_{\beta H_{\widetilde{S}}}\!(Q_{\widetilde{S}}^{f})\tilde{q}_{-}^{\delta}(n|m).
\end{equation}
We can view (\ref{diagonalCrooksReltilde}) and (\ref{offdiagonalCrooksReltilde}) as the conditional counterparts to (\ref{diagonalCrooksRel}) and (\ref{grkgsslkjg}). However, while  (\ref{diagonalCrooksRel}) and (\ref{grkgsslkjg}) describe the dynamics within the decoupled diagonals, we cannot interpret  (\ref{diagonalCrooksReltilde}) and (\ref{offdiagonalCrooksReltilde}) in the same manner, due to the lack of decoupling.
More generally, the objects  $\tilde{p}_{\pm}(m|n)$ and $\tilde{q}_{+}^{\delta}(m|n)$ must be interpreted with more caution than their counterparts $p_{\pm}(m|n)$ and $q_{+}^{\delta}(m|n)$. 
For example, the $\tilde{p}_{\pm}(m|n)$ is the probability to detect the energy reservoir in eigenstate $m$, given that the reservoir was prepared in state $|n\rangle\langle n|$, and that the control measurement is successful.
We can interpret  $p_{\pm}(m|n)$ in a similar manner (without the control measurement). However,  due to the decoupling we can also interpret $p_{\pm}(m|n)$ as describing the evolution of the diagonal elements of the density matrix, \emph{irrespective} of the initial state.   More precisely, we know that the probability  $\langle m|\mathcal{F}_{\pm}(\sigma)|m\rangle$ to detect a specific final energy eigenstate $m$ only depends on the diagonal elements $\langle n|\sigma|n\rangle$. However,  $\langle m|\tilde{\mathcal{F}}_{\pm}(\sigma)|m\rangle$  not only depends on the diagonal elements of the input, but on the entire state $\sigma$. 
In other words, we can no longer claim that an initial energy measurement would not perturb a final energy measurement.

\subsection{\label{SecDiagonalMeasuremnts} The special case  $[H_{\widetilde{S}},Q_{\widetilde{S}}^{i\pm}] =0$ and $[H_{\widetilde{S}},Q_{\widetilde{S}}^{f\pm}] = 0$}

In the case that the measurement operators $Q_{\widetilde{S}}^{i\pm}$ and $Q_{\widetilde{S}}^{f\pm}$  commute with $H_{\widetilde{S}}$ we can regain several of the properties of the unconditional fluctuation relations. By a direct application of Lemma \ref{nkjvvkv} we get $[H_E,\tilde{\mathcal{F}}_{\pm}(\sigma)] =  \tilde{\mathcal{F}}_{\pm}([H_E,\sigma])$.
Analogously to how we obtained Corollary \ref{MappingOffDiagonal} from Lemma \ref{ComWithCom}, and with analogous assumptions, we also regain the decoupling of the different modes of coherence, i.e.,  
$\langle m'|\tilde{\mathcal{F}}_{\pm}(|n'\rangle\langle n|)|m\rangle = 0$, if $E_{m'}-E_{n'}  \neq E_{m}-E_{n}$. In this case (\ref{diagonalCrooksReltilde}) and (\ref{offdiagonalCrooksReltilde})  thus gain a status analogous to   (\ref{diagonalCrooksRel}) and (\ref{grkgsslkjg}), in the sense that they describe the dynamics within each decoupled diagonal.

\subsubsection{\label{SecClassicalConditional}A classical conditional Crooks relation}

By additionally assuming the energy translation invariant model as in Appendix \ref{SecEnergyIndependence} one regains the energy translation invariance of the induced CPMs  $\Delta^{j}\tilde{\mathcal{F}}_{\pm}(\sigma){\Delta^{\dagger}}^{k} =\tilde{\mathcal{F}}_{\pm}\big( \Delta^{j}\sigma{\Delta^{\dagger}}^{k}\big)$.
Analogous to Appendix \ref{SecRegainingClassicalCrooks} we can also define
\begin{equation}
\label{tildebvdfsmbn}
\tilde{P}_{\pm}(w) := \sum_{j,j': j-j' = w/s}\tilde{p}_{\pm}(j'|j)\langle j|\sigma|j\rangle,
\end{equation}  
where $\tilde{P}_{\pm}(w)$ can be interpreted as the probability that the energy reservoir looses the energy $w$ and that the control measurement is successful. The decoupling again guarantees the stability of the detection probabilities under repeated energy measurements.  We obtain the classical conditional Crooks relation
\begin{equation}
\label{ClassicalConditional1}
\mathcal{Z}_{\beta H_{\widetilde{S}}}(Q_{\widetilde{S}}^{i})\tilde{P}_{+}(w)  =   e^{\beta w}\mathcal{Z}_{\beta H_{\widetilde{S}}}(Q_{\widetilde{S}}^{f})\tilde{P}_{-}(-w)
\end{equation}
in a manner very similar to what we did in Appendix \ref{SecRegainingClassicalCrooks}.

\subsubsection{\label{ClassCondJarzynskiRel}A classical conditional Jarzynski relation}

The control measurements  generally do not succeed with unit probability. However, for the energy translation symmetric case, with diagonal measurement operators, the success probabilities $\Tr\mathcal{F}_{\pm}(\sigma)$ and $\Tr\mathcal{R}_{\pm}(\sigma)$ become independent of the state $\sigma$ of the energy reservoir. To see this, we first define the following transition probabilities that do not involve the energy reservoir.
\begin{equation}
\begin{split}
f_{+} :=  & \Tr\big(Q_{\widetilde{S}}^{f+}U\mathcal{G}_{\beta H_{\widetilde{S}}}(Q_{\widetilde{S}}^{i+})U^{\dagger}\big),\\
r_{+} : = & \Tr\big(Q_{\widetilde{S}}^{i+}U^{\dagger}\mathcal{G}_{\beta H_{\widetilde{S}}}(Q_{\widetilde{S}}^{f+})U\big), \\
f_{-} := & \Tr\big(Q_{\widetilde{S}}^{i-}U\mathcal{G}_{\beta H_{\widetilde{S}}}(Q_{\widetilde{S}}^{f-})U^{\dagger}\big),\\
 r_{-}: = &  \Tr\big(Q_{\widetilde{S}}^{f-}U^{\dagger}\mathcal{G}_{\beta H_{\widetilde{S}}}(Q_{\widetilde{S}}^{i-})U\big).
\end{split}
\end{equation}
In words $f_{+}$ is the probability that we would obtain the `successful' outcome when we measure the POVM $\{Q_{\widetilde{S}}^{f+},\hat{1}_{\widetilde{S}}-Q_{\widetilde{S}}^{f+}\}$ if the initial state $\mathcal{G}_{\beta H_{\widetilde{S}}}(Q_{\widetilde{S}}^{i+})$ is evolved under $U$.

One should keep in mind that since $Q_{\widetilde{S}}^{f\pm}$ and $Q_{\widetilde{S}}^{i\pm}$ commute with $H_{\widetilde{S}}$, it follows that the above expressions do not involve coherences with respect to the energy eigenbases. For example, 
if $H_{\widetilde{S}}$ is non-degenerate with eigenstates $|\psi_n\rangle$ then $f_{+} 
=   \sum_{nn'} \langle\psi_n|Q_{\widetilde{S}}^{f+}|\psi_{n}\rangle|\langle \psi_{n}| U |\psi_{n'}\rangle|^2 \langle \psi_{n'}| \mathcal{G}_{\beta H_{\widetilde{S}}}(Q_{\widetilde{S}}^{i+})|\psi_{n'}\rangle
$.

One can confirm that the CPMs  $\tilde{\mathcal{F}}_{\pm}$ and $\tilde{\mathcal{R}}_{\pm}$ in (\ref{deftildeFRpm}) with $V := V(U)$ as in (\ref{DefVU}), and $[Q_{\widetilde{S}}^{i\pm},H_{\widetilde{S}}] = 0$,  $[Q_{\widetilde{S}}^{f\pm},H_{\widetilde{S}}] = 0$,  satisfy the following relations
\begin{equation}
\label{nkjllv}
\Tr\tilde{\mathcal{F}}_{\pm}(\sigma) = f_{\pm}\Tr(\sigma),\quad \Tr\tilde{\mathcal{R}}_{\pm}(\sigma) = r_{\pm}\Tr(\sigma),
\end{equation}
\begin{equation}
\label{ndfknfd}
\tilde{\mathcal{F}}_{\pm}(\hat{1}_E) =   f_{\pm}\hat{1}_E,\quad\tilde{\mathcal{R}}_{\pm}(\hat{1}_E) =   r_{\pm}\hat{1}_E. 
\end{equation}
Hence, the success probabilities of the control measurements are independent of the state of the energy reservoir.
It is instructive to write the expression for $\mathcal{F}_{+}$ in (\ref{nkjllv}) in full
\begin{equation*}
\begin{split}
&  \Tr\big([Q_{\widetilde{S}}^{f+}\otimes \hat{1}_E]V(U)[\mathcal{G}_{\beta H_{\widetilde{S}}}(Q_{\widetilde{S}}^{i+})\otimes \sigma]V^{\dagger}(U)\big)\\
& =  \Tr\big(Q_{\widetilde{S}}^{f+}U\mathcal{G}_{\beta H_{\widetilde{S}}}(Q_{\widetilde{S}}^{i+})U^{\dagger}\big),
\end{split}
\end{equation*}
for $\Tr\sigma =1$.

Hence, in terms of the success probability, the experiment involving the energy reservoir behaves as if it was a simpler experiment not including the reservoir, where the unitary $V(U)$ is replaced by $U$ (but one should keep in mind that this relies on the assumptions that $Q_{\widetilde{S}}^{f+}$ and $Q_{\widetilde{S}}^{i+}$ commute with $H_{\widetilde{S}}$, and thus block-diagonalize with respect to the energy eigenspaces of $H_{\widetilde{S}}$).

By the classical conditional Crooks relation (\ref{ClassicalConditional1}) it follows  that $\mathcal{Z}_{\beta H_{\widetilde{S}}}(Q_{\widetilde{S}}^{i})\sum_{w} e^{-\beta w}\tilde{P}_{+}(w) = \mathcal{Z}_{\beta H_{\widetilde{S}}}(Q_{\widetilde{S}}^{f})\sum_{w}\tilde{P}_{-}(-w)$. 
As opposed to the unconditional case, $\sum_{w}\tilde{P}_{-}(-w)$ is generally not equal to $1$, but equates to the success probability of the reverse process, i.e., $\sum_{w}\tilde{P}_{-}(-w) =  \sum_{w}\tilde{p}_{-}(w/s|0)  = \Tr\tilde{\mathcal{F}}_{-}(|0\rangle\langle 0|) = f_{-}$. Thus
\begin{equation}
\label{nlkdafrehe}
\sum_{w} e^{-\beta w}\tilde{P}_{+}(w)
 = f_{-}\frac{\mathcal{Z}_{\beta H_{\widetilde{S}}}(Q_{\widetilde{S}}^{f})}{\mathcal{Z}_{\beta H_{\widetilde{S}}}(Q_{\widetilde{S}}^{i})}.
\end{equation}
Given that the control measurement succeeds, we can define the conditional probability for the work cost $w$ as $P_{+}(w|Q_{\widetilde{S}}^{f+}) := \tilde{P}_{+}(w)/\sum_{w'}\tilde{P}_{+}(w') = \tilde{P}_{+}(w)/f_{+}$. By using this we can rewrite (\ref{nlkdafrehe}) in the more symmetric form
\begin{equation}
\label{gijtwoi}
\big\langle e^{-\beta W}\big| Q_{\widetilde{S}}^{f+}\big\rangle
 = \frac{f_{-}}{f_{+}}\frac{\mathcal{Z}_{\beta H_{\widetilde{S}}}(Q_{\widetilde{S}}^{f})}{\mathcal{Z}_{\beta H_{\widetilde{S}}}(Q_{\widetilde{S}}^{i})},
\end{equation}
where we use the notation $\langle e^{-\beta W}|Q_{\widetilde{S}}^{f+}\rangle := \sum_{w} e^{-\beta w}P_{+}(w|Q_{\widetilde{S}}^{f+})$.

The conditional Jarzynski relation in (\ref{gijtwoi}) does remind of the Jarzynski equality under feedback control \cite{Sagawa10,Toyabe10,Morikuni11}, and it may be worthwhile to investigate this potential link further. However, we will not do so in this investigation.

\subsection{Examples}

\subsubsection{\label{AbolishPerfectControl} Abolishing perfect control}

As discussed in Appendix \ref{SecConditionsOnE} the perfect control (see equations (\ref{PerfectControl}) or (\ref{PerfectControlTimeRev}) and (\ref{PerfectControlTimeRev2})) can lead to an energy reservoir that has an unbounded spectrum from both above and below. From a physical point of view a spectrum that is unbounded from below is somewhat uncomfortable. With the conditional fluctuation relations we no longer need to assume perfectly functioning control systems. We could for example let the transition from initial to final control state fail if the energy reservoir runs out of energy, so to speak. Here we demonstrate this in the case of a harmonic oscillator as the energy reservoir. 

We make use of a model that was introduced in quite a detail in section IV in the Supplemental Material of \cite{Aberg13}. Due to this, only the briefest description will be provided here. The main point is that one can construct a family of unitary operators $V_{+}(U)$ that act identically to $V(U)$ as long as the energy in the energy reservoir is high enough. (We do not employ the `injection' of energy that was used in \cite{Aberg13}, but instead allow the procedure to fail.)

Apart from the Hamiltonian for the energy reservoir, $H_E^{+}:= s\sum_{j=0}^{+\infty}j|j\rangle\langle j|$ and the class of unitary operators $V_{+}(U)$, the rest of the model is as in Appendix \ref{minimalwith}.
We let
$H_{S'C} := H^{i}_{S'}\otimes P^{i}_C + H^{f}_{S'}\otimes P^{f}_C$, 
and let  $sz^{i}_n$, $|\chi^i_n\rangle$ be the eigenvalues and eigenvectors of $H^{i}_{S'}$, and $sz^{f}_n$, $|\chi^f_n\rangle$ the  eigenvalues and eigenvectors of $H^{f}_{S'}$.
To simplify the notation we let $\{|\psi_n\rangle\}_{n=1}^{4N}$ denote the orthonormal set 
$\{|\chi^i_n\rangle|c_{i+}\rangle, |\chi^i_n\rangle|c_{i-}\rangle, |\chi^f_n\rangle|c_{f+}\rangle, |\chi^f_n\rangle|c_{f-}\rangle\}_{n=1}^{N}$. Similarly, let $z_n$ denote the combined set of numbers $\{z^{i}_n\}_n\cup\{z^{f}_n\}_n$.  Furthermore define $z_{\textrm{max}} = \max_n z_n$ and $z_{\textrm{min}} = \min_n z_n$.
The projectors   $\{P^{(l)}_{+}\}_{l\geq z_{\textrm{min}}}$ onto the eigenspaces of $H_{S'C}\otimes \hat{1}_E + \hat{1}_{S'C}\otimes H_E$ are
\begin{equation*}
P^{(l)}_{+} =  \sum_{n:l\geq z_n}|\psi_n\rangle\langle \psi_n|\otimes |l-z_n\rangle\langle l-z_n|,\quad \forall l\geq z_{\textrm{min}}.
\end{equation*}
For all $l\geq z_{\textrm{max}}$ this simplifies to 
$P^{(l)}_{+} =  \sum_{n=1\ldots N}|\psi_n\rangle\langle \psi_n|\otimes |l-z_n\rangle\langle l-z_n|$. 
Let us define the following map
\begin{equation*}
\begin{split}
V_{+}(U) := & \sum_{l\geq z_{\textrm{max}}}V_{l}(U)+ \sum_{l=z_{\textrm{min}}}^{z_{\textrm{max}}-1}P^{(l)}_{+},\\
V_{l}(U) := & \sum_{n,n'}|\psi_n\rangle\langle\psi_n|U|\psi_{n'}\rangle\langle\psi_{n'}|\otimes |l\!-\!z_n\rangle\langle l\!-\!z_{n'}|.
\end{split}
\end{equation*}
For each unitary operator $U$ on $\mathcal{H}_{S'C}$ it follows that $V_{+}(U)$ is unitary on $\mathcal{H}_{S'CE}$. 
We let  $\mathcal{T}_E$ be the transpose with respect to $\{|j\rangle\}_{j\geq 0}$, and let $\mathcal{T}_{\widetilde{S}}$ be  such that $\mathcal{T}_{S'C}(H_{S'C}) = H_{S'C}$. (Note Lemma \ref{TOnHermitean}.) From this it follows that $\mathcal{T} :=\mathcal{T}_{S'C}\otimes\mathcal{T}_{E}$ is such that  $\mathcal{T}\big(V_{+}(U)\big) = V_{+}\big(\mathcal{T}_{S'C}(U)\big)$. 

One can also confirm that $V_{+}(U)|\psi\rangle|j\rangle = V(U)|\psi\rangle|j\rangle$ for all $|\psi\rangle\in\mathcal{H}_{S'C}$ and all $j\geq z_{\textrm{max}}-z_{\textrm{min}}$, where $V(U)$ is defined in (\ref{DefVU}).

More generally, in terms of the eigenprojectors $P^{(l)}_{+}$ of the total Hamiltonian $H$, it is the case that $V_{+}(U)P^{(l)}_{+} = V(U)P^{(l)}_{+}$ for $l\geq z_{\textrm{max}}$, while $V_{+}(U)P^{(l)}_{+} = P^{(l)}_{+}$ for $z_{\textrm{max}}-1\geq l\geq  z_{\textrm{min}}$.
One can say that $V_{+}$ `censors' our choice of $U$ in the sense that if $U$ entails an energy change that we cannot afford, then $V_{+}(U)$ avoids to perform the too expensive parts of the operation.

Define the projector $P^{\geq z_{\textrm{max}}-z_{\textrm{min}}}:= \sum_{j\geq z_{\textrm{max}}-z_{\textrm{min}}}|j\rangle\langle j|$. The condition $P^{\geq z_{\textrm{max}}-z_{\textrm{min}}}\sigma P^{\geq z_{\textrm{max}}-z_{\textrm{min}}} = \sigma$ guarantees that the actions of $V_+(U)$ and $V(U)$ are identical. 

As an example, let us modify the setup in Appendix \ref{minimalwith} such that we replace $V(U)$ with $V_{+}(U)$. For the sake of illustration we consider an extreme case where $H^{i}_{S'} := s|\chi^{i}_1\rangle\langle\chi^{i}_{1}| + 2s|\chi^{i}_2\rangle\langle\chi^{i}_{2}|$ and $H^{f}_{S'} := 3s|\chi^{f}_1\rangle\langle\chi^{f}_{1}| + 4s|\chi^{f}_2\rangle\langle\chi^{f}_{2}|$. For these choices, all transitions from the initial to the final Hamiltonian require energy. Moreover, $z_{\textrm{max}} = 4$ and $z_{\textrm{min}} = 1$. One furthermore finds that 
\begin{equation*}
\begin{split}
P^{(1)}_{+} = & |\chi^{i}_1\rangle\langle \chi^{i}_1|\otimes P^{i}_C\otimes |0\rangle\langle 0|,\\  
P^{(2)}_{+} = & |\chi^{i}_1\rangle\langle \chi^{i}_1|\otimes P^{i}_C\otimes |1\rangle\langle 1| + |\chi^{i}_2\rangle\langle \chi^{i}_2|\otimes P^{i}_C\otimes |0\rangle\langle 0|,\\
P^{(3)}_{+} = & |\chi^{i}_1\rangle\langle \chi^{i}_1|\otimes P_C^{i}\otimes |2\rangle\langle 2| + |\chi^{i}_2\rangle\langle \chi^{i}_2|\otimes P_C^{i}\otimes |1\rangle\langle 1|\\
& +  |\chi^{f}_1\rangle\langle \chi^{f}_1|\otimes P_C^{f}\otimes |0\rangle\langle 0|,\\
P^{(l)}_{+} = & |\chi^{i}_1\rangle\langle \chi^{i}_1|\otimes P_C^{i}\otimes |l-1\rangle\langle l-1|\\
&  + |\chi^{i}_2\rangle\langle \chi^{i}_2|\otimes P_C^{i}\otimes |l-2\rangle\langle l-2|\\
& +  |\chi^{f}_1\rangle\langle \chi^{f}_1|\otimes P_C^{f}\otimes |l-3\rangle\langle l-3|\\
& +  |\chi^{f}_2\rangle\langle \chi^{f}_2|\otimes P_C^{f}\otimes |l-4\rangle\langle l-4|,\quad l\geq 4.
\end{split}
\end{equation*}
Hence, if the control is in state $|c_{i+}\rangle$ and the reservoir is in the vacuum state $|0\rangle$, we find that $V_{+}(U)[\rho\otimes|c_{i+}\rangle\langle c_{i+}|\otimes|0\rangle\langle 0|]V_{+}(U)^{\dagger} = \rho\otimes|c_{i+}\rangle\langle c_{i+}|\otimes|0\rangle\langle 0|$ no matter what $U$ we feed into it, and  thus the control measurement will always signal `fail'. On the other hand, if there are four or more quanta of energy in the reservoir, then $V_{+}(U)[\rho\otimes|c_{i+}\rangle\langle c_{i+}|\otimes \sigma]V_{+}(U)^{\dagger} = V(U)[\rho\otimes|c_{i+}\rangle\langle c_{i+}|\otimes \sigma]V(U)^{\dagger}$. In particular, if  we would could choose $U$ as  in (\ref{reoioireoire}), then the control measurement would always succeed.

\subsubsection{\label{ExampleTwoQubits} $\widetilde{S}$ and $E$ as single qubits}

The conditional fluctuation relations allow us to treat energy reservoirs with finite-dimensional Hilbert spaces. Here we consider the extreme case where both $\widetilde{S}$ and the energy reservoir $E$ are single spin-half particles. 

We assume that the spins are associated with magnetic moments, and that they are affected by a constant external magnetic field, such that they are in resonance, i.e., the splitting of the eigenenergies are identical. More precisely, 
\begin{equation*}
H_{\widetilde{S}} = H_E =  -\frac{1}{2}s|0\rangle\langle 0| + \frac{1}{2}s|1\rangle\langle 1|,
\end{equation*}
for some $s>0$. The identical energy gap implies that there exist non-trivial energy conserving unitary operations with respect to $H:= H_{\widetilde{S}}\otimes\hat{1}_E + \hat{1}_{\widetilde{S}}\otimes H_E$. More specifically, $H$ has the eigenenergies $-s,0,s$ and corresponding energy eigenspaces $\Sp\{|0,0\rangle\}$, $\Sp\{|0,1\rangle,|1,0\rangle\}$, and $\Sp\{|1,1\rangle\}$. An energy conserving unitary operator thus has to be block diagonal with respect to these energy eigenspaces, 
\begin{equation*}
\begin{split}
V 
= &  e^{i\chi_{-}}|0\rangle\langle 0|\otimes |0\rangle\langle 0| + e^{i\chi_{+}}|1\rangle\langle 1|\otimes |1\rangle\langle 1|\\
& + U_{1,1}|0\rangle\langle 0|\otimes|1\rangle\langle 1| + U_{1,2}|0\rangle\langle 1|\otimes|1\rangle\langle 0|\\
& + U_{2,1}|1\rangle\langle 0|\otimes|0\rangle\langle 1|  + U_{2,2}|1\rangle\langle 1|\otimes |0\rangle\langle 0|,
\end{split}
\end{equation*}
where $\chi_{+},\chi_{-}\in\mathbb{R}$, and where $U = [U_{j,k}]_{j,k = 1,2}$ is a unitary $2\times 2$ matrix.

Let us choose $\mathcal{T}_{\widetilde{S}}$ and $\mathcal{T}_{E}$ as the transpose with respect to the eigenbasis $\{|0\rangle, |1\rangle\}$ of each space, respectively. Thus $\mathcal{T}_{\widetilde{S}}(H_{\widetilde{S}}) = H_{\widetilde{S}}$ and $\mathcal{T}_{E}(H_{E}) = H_{E}$. Let $\mathcal{T} := \mathcal{T}_{\widetilde{S}}\otimes\mathcal{T}_E$. One can confirm that  $\mathcal{T}(V) = V$ if and only if 
\begin{equation*}
U = e^{i\chi}\left[\begin{matrix}
-e^{-i\delta}\cos\theta & \sin\theta\\
\sin\theta & e^{i\delta}\cos\theta
\end{matrix}\right],
\end{equation*}
where $\chi,\delta,\theta\in\mathbb{R}$. (There are no restrictions on $\chi_{+},\chi_{-}$.)

The expansion of  the CPMs $\tilde{\mathcal{F}}_{\pm}$ for arbitrary $Q_{\widetilde{S}}^{i\pm}$ and $Q_{\widetilde{S}}^{f\pm}$ in terms of the $\{|0\rangle,|1\rangle\}$ basis  results in  remarkably bulky and unilluminating expressions. Therefore we shall here only consider the simpler special case where 
$Q_{\widetilde{S}}^{i+} :=  \hat{1}_{\widetilde{S}}$, $Q_{\widetilde{S}}^{f+} :=  |\psi\rangle\langle\psi|$, $|\psi\rangle := \frac{1}{\sqrt{2}}|0\rangle + i\frac{1}{\sqrt{2}}|1\rangle$.
Consequently $Q_{\widetilde{S}}^{i-} =  \mathcal{T}_{\widetilde{S}}(Q_{\widetilde{S}}^{i+}) =  \hat{1}_{\widetilde{S}}$, 
$Q_{\widetilde{S}}^{f-} =   \mathcal{T}_{\widetilde{S}}(Q_{\widetilde{S}}^{f+}) = |\psi^{*}\rangle\langle\psi^{*}|$, $|\psi^{*}\rangle =   \frac{1}{\sqrt{2}}|0\rangle - i\frac{1}{\sqrt{2}}|1\rangle$. 
The partition maps take the values
$\mathcal{Z}_{\beta H_{\widetilde{S}}}(Q_{\widetilde{S}}^{i}) =   e^{\beta E/2} + e^{-\beta E/2}$ and 
$\mathcal{Z}_{\beta H_{\widetilde{S}}}(Q_{\widetilde{S}}^{f}) =  (e^{\beta E/2} + e^{-\beta E/2})/2$.

The Gibbs map applied to $Q_{\widetilde{S}}^{i+} = \hat{1}_{\widetilde{S}}$ gives the initial state of the forward process
\begin{equation*}
\begin{split}
 \mathcal{G}_{\beta H_{\widetilde{S}}}(Q_{\widetilde{S}}^{i+}) 
 = &  \frac{e^{\beta s/2}|0\rangle\langle 0| + e^{-\beta s/2}|1\rangle\langle 1|}{e^{\beta s/2} + e^{-\beta s/2}},
\end{split}
\end{equation*}
which is the Gibbs state of $H_{\widetilde{S}}$.
The initial state of the reversed process is
\begin{equation*}
\begin{split}
\mathcal{G}_{\beta H_{\widetilde{S}}}(Q_{\widetilde{S}}^{f-}) 
= &\frac{1}{e^{\beta s/2} + e^{-\beta s/2}}\Big( e^{\beta s/2}|0\rangle\langle 0| \\
& + e^{-\beta s/2}|1\rangle\langle 1| +i|0\rangle\langle 1| - i|1\rangle\langle 0|\Big).
\end{split}
\end{equation*}

For the unitary operator $V$ we assume that $\chi = \chi_{\pm} = 0$, $\delta = 0$ (while we let $\theta$ be arbitrary). This results in 
\begin{equation*}
\begin{split}
\tilde{\mathcal{F}}_{+}(\sigma) = &  \frac{1}{e^{\beta s/2} + e^{-\beta s/2}}(V_{0+}\sigma V_{0+}^{\dagger} + V_{1+}\sigma V_{1+}^{\dagger}),\\
V_{0+}  := & \frac{e^{\beta s/4}}{\sqrt{2}} (|0\rangle\langle 0|  - \cos\theta |1\rangle\langle 1|  - i\sin\theta |0\rangle\langle 1|),\\
V_{1+}  := & \frac{ e^{-\beta s/4}}{\sqrt{2}}(|1\rangle\langle 1|  +\cos\theta |0\rangle\langle 0|+i\sin\theta |1\rangle\langle 0|),\\
\end{split}
\end{equation*}
and
\begin{equation*}
\begin{split}
\tilde{\mathcal{F}}_{-}(\sigma) = & \frac{1}{e^{\beta s/2} + e^{-\beta s/2}}(V_{0-}\sigma V_{0-}^{\dagger} + V_{1-}\sigma V_{1-}^{\dagger}),\\
V_{0-} := &  e^{\beta s/4}(|0\rangle\langle 0|  - \cos\theta|1\rangle\langle 1|)\\
& - i\sin\theta  e^{-\beta s/4}|1\rangle\langle 0|,\\
V_{1-} := &  e^{-\beta s/4}(|1\rangle\langle 1|  +\cos\theta|0\rangle\langle 0|)\\
&  +i\sin\theta e^{\beta s/4} |0\rangle\langle 1|.
\end{split}
\end{equation*}
By a slightly tedious but straightforward calculation one can confirm that  $\tilde{\mathcal{F}}_{\pm}$  satisfy the conditional fluctuation relation (\ref{ForwardRel}), as we already know that they should, due to Proposition \ref{PropConditionalFluctuationThm}. 

One can also confirm that $\tilde{\mathcal{F}}_{\pm}$ provide examples for the fact that the conditional maps in general do not decouple the evolution of diagonal and off-diagonal elements. For example
\begin{equation*}
\begin{split}
 \tilde{\mathcal{F}}_{+}(|0\rangle\langle 0|)
= &  \frac{1}{2}\frac{1}{e^{\beta s/2} + e^{-\beta s/2}}\bigg[ \sin^2\theta  e^{-\beta s/2} |1\rangle\langle 1|\\
& + (\cos^2\theta e^{-\beta s/2} +  e^{\beta s/2} ) |0\rangle\langle 0|\\
&  +i \sin\theta \cos\theta e^{-\beta s/2}( |1\rangle\langle 0|-|0\rangle\langle 1|)\bigg],
\end{split}
\end{equation*}
\begin{equation*}
\begin{split}
\tilde{\mathcal{F}}_{-}(|0\rangle\langle 1|)
= &  \frac{\cos\theta }{e^{\beta s/2} + e^{-\beta s/2}}\bigg[(e^{-\beta s/2}-e^{\beta s/2})|0\rangle\langle 1|\\
& + i\sin\theta( |1\rangle\langle 1|-|0\rangle\langle 0|)\bigg].
\end{split}
\end{equation*}
Hence, diagonal and off-diagonal elements get mixed.
In particular, a diagonal state such as $|0\rangle\langle 0|$ can be turned into a state $ \tilde{\mathcal{F}}_{+}(|0\rangle\langle 0|)$ with off-diagonal elements, which is a consequence of the non-diagonal measurement operator. 
Note that not only is $E$ initially in a diagonal state, but the global state $G(H_E)\otimes |0\rangle\langle 0|$ is also diagonal with respect to the global energy eigenspaces.

\section{\label{SecAltformTransProb} Alternative formulation}

Up to now we have focused on the dynamics of the energy reservoir, and formulated all our results in terms of channels or CPMs induced on this system. Here we shall take a step back and briefly re-examine the structure of these fluctuation theorems from a global point of view.

\subsection{\label{GlobalSymmetry}Global invariance}

Let us for a moment forget the division into systems, heat baths, and energy reservoirs, and consider one single system with a global Hamiltonian $H$, and a unitary evolution $V$ that is energy conserving $[H,V]=0$, and where this system satisfies a time-reversal symmetry $\mathcal{T}(H) = H$, $\mathcal{T}(V) = V$. For any pair of global measurement operators $Q^{i+}$ and $Q^{f+}$, it is the case that
\begin{equation}
\label{GlobalInvariance}
\begin{split}
& \Tr\big(Q^{f+}V\mathcal{J}_{\beta H}(Q^{i+})V^{\dagger}\big) \\
= &  \Tr\big[\mathcal{T}\big(Q^{f+}V e^{-\beta H/2} Q^{i+}e^{-\beta H/2} V^{\dagger}\big)\big]\\
= &  \Tr\big[Q^{i-} e^{-\beta H/2} V Q^{f-}V^{\dagger} e^{-\beta H/2} \big]\\
= & \Tr\big(Q^{i-}V\mathcal{J}_{\beta H}(Q^{f-})V^{\dagger}\big),
\end{split}
\end{equation}
where as usual $Q^{i-} :=\mathcal{T}(Q^{i+})$ and $Q^{f-}: =\mathcal{T}(Q^{f+})$.
In other words, (\ref{GlobalInvariance}) expresses an invariance of the quantity $\Tr\big(Q^{f}V\mathcal{J}_{\beta H}(Q^{i})V^{\dagger}\big)$ with respect to the transformation $(Q^{i},Q^{f})\mapsto (Q^{i'},Q^{f'}) := \big(\mathcal{T}(Q^{f}),\mathcal{T}(Q^{i})\big)$. All our fluctuation relations can in some sense be regarded as special cases of this global invariance, which here emerges from the combination of time-reversal symmetry and energy conservation. 

Time-reversal symmetry alone is not enough to derive this invariance; energy conservation also is needed. However, if it would be the case that $V = e^{-itH/\hbar}$, then it follows that $[H,V] = 0$, and the assumption $\mathcal{T}(H) = H$ would automatically yield $\mathcal{T}(V) = V$. Hence, in this case time-reversal symmetry would be enough. 

The relation (\ref{GlobalInvariance}) can be rewritten  as the global fluctuation relation (\ref{MainGlobal}) in the main text.

\subsection{\label{SecInaccessible} Factorization of non-interacting degrees of freedom}

The global symmetry in (\ref{GlobalInvariance}) does not explain how 
we can express fluctuation theorems in terms of channels or CPMs on the relevant systems. Throughout this  investigation we have repeatedly used the fact that the exponential function factorizes over non-interacting degrees of freedom. To be more precise, suppose that the global system be decomposed into two subsystems $1$ and $2$ (i.e., $\mathcal{H} = \mathcal{H}_1\otimes\mathcal{H}_2$) and assume a non-interacting Hamiltonian $H = H_{1}\otimes\hat{1}_{2} + \hat{1}_{1}\otimes H_{2}$, with the consequence that 
$\mathcal{J}_{\beta H} = \mathcal{J}_{\beta H_1}\otimes \mathcal{J}_{\beta H_2}$.
For product measurement operators $Q = Q_{1}\otimes Q_{2}$  this results in the factorization of the partition map
$\mathcal{Z}_{\beta H}(Q_1\otimes Q_{2}) = \mathcal{Z}_{\beta H_1}(Q_1) \mathcal{Z}_{\beta H_2}(Q_{2})$, 
and thus also for the Gibbs map $\mathcal{G}_{\beta H}(Q_1\otimes Q_{2}) = \mathcal{G}_{\beta H_1}(Q_1)\otimes \mathcal{G}_{\beta H_2}(Q_{2})$.
In other words, the two states $\mathcal{G}_{\beta H_1}(Q_1)$ and $\mathcal{G}_{\beta H_2}(Q_{2})$ can be prepared separately on respective system. 

One way in which the factorization could fail would be if the global Hamiltonian is interacting, and 
this is the topic of Appendix \ref{SecApproximateFluct}. In Appendix \ref{SecGeneralizedGibbs} we briefly discuss the more exotic alternative that thermal states would not be characterized by Gibbs states.

\subsection{\label{GibbsInaccess}Inaccessible degrees of freedom}

In the context of statistical mechanics we would typically deal with large numbers of degrees of freedom that we have no access to, e.g., a heat bath. This means that we neither have direct access to prepare arbitrary states on these degrees of freedom, nor to make arbitrary measurements on them. (One can of course imagine some form of partial accessibility, but to keep things simple we here assume  `all or nothing'.)
Thus imagine that the total system is divided into two subsystem $1$ and $2$, where $2$ is inaccessible to us, and $1$ is completely accessible. We also assume that the global Hamiltonian is non-interacting $H = H_1\otimes \hat{1}_2+ \hat{1}_1\otimes H_2$.

Since system $2$ is inaccessible to us, all available measurement operators are of the form $Q = Q_{1}\otimes \hat{1}_2$, i.e., we can only perform the trivial measurement on system $2$. Correspondingly, a trivial preparation would be an equilibrium state $G_{\beta}(H_2)$ on the unaccessible degrees of freedom, with the philosophy is that nature provides equilibrium states `for free'.  Hence, all possible initial states that we would be able to prepare would be of the form $\rho = \rho_1\otimes G_{\beta}(H_2)$.

To further highlight how the factorization property enters in the treatment of the unaccessible degrees of freedom, let us take a closer look at the global fluctuation relation in (\ref{MainGlobal}). Suppose that the forward process would be characterized by a measurement operator of the allowed form $Q^{f+} = Q^{f+}_1\otimes \hat{1}_2$. Then we know that the initial state of the reverse process would be given by $\mathcal{G}_{\beta H}(Q^{f-}_1\otimes \hat{1}_2)$, where we have assumed $\mathcal{T} = \mathcal{T}_1\otimes\mathcal{T}_2$. Due to the factorization property we know that  $\mathcal{G}_{\beta H}(Q^{f-}_1\otimes \hat{1}_2) = \mathcal{G}_{\beta H_1}(Q^{f-}_1)\otimes \mathcal{G}_{\beta H_2}(\hat{1}_2) = \mathcal{G}_{\beta H_1}(Q^{f-}_1)\otimes G_{\beta}(H_2)$. In other words, trivial measurements on the unaccessible degrees of freedom get mapped to trivial preparations on these systems, and vice versa. However, this would no longer be true if $\mathcal{G}_{\beta H}(Q_1\otimes \hat{1}_2)$ would not factorize. 

This line of reasoning may also fail if  $\mathcal{G}_{\beta H_2}(\hat{1}_2)$ would \emph{not} correspond to the equilibrium state on system $2$. This would be the case if we would choose  $\beta$ in our fluctuation relations to be different from the actual  $\beta'$ of the heat bath. From a purely mathematical point of view, the fluctuation relations are of course valid for all values of $\beta$ irrespective of whether they correspond to the actual temperature or not. However, in this case $G_{\beta'}(H_2)$ rather than $G_{\beta}(H_2)$ would be the true equilibrium state. Hence, it would require an active intervention on system $2$ to prepare the state  $\mathcal{G}_{\beta H_1}(Q^{f-}_1)\otimes G_{\beta}(H_2)$.

\subsection{\label{SecGeneralizedGibbs} The issue with non-exponential generalizations of the Gibbs maps}

As an illustration of the particular role of the exponential function and the Gibbs distribution, let us imagine that we attempt to use some other form of function to describe 
a generalized type of equilibrium states.
(This should not be confused with other types of generalizations, such as Jaynes  \cite{Jaynes57a,Jaynes57b} and recent approaches to multiple conserved quantities
\cite{Lostaglio15d,Halpern15c,Guryanova15,PerernauLlobet15}.)
 More precisely, let $g$ be any reasonable (possibly complex valued) function, and define the generalized map
$\mathcal{J}^{g}_{\beta H}(Q) := g(\beta H)Qg(\beta H)^{\dagger}$,
as well as the generalized Gibbs and partition maps
\begin{equation*}
\mathcal{G}^{g}_{\beta H}(Q) := \frac{\mathcal{J}^{g}_{\beta H}(Q)}{\mathcal{Z}^{g}_{\beta H}(Q)},\quad  \mathcal{Z}^{g}_{\beta H}(Q) := \Tr  \mathcal{J}^{g}_{\beta H}(Q). 
\end{equation*}
Here $\mathcal{G}_{g}(\hat{1}) = g(\beta H)g(\beta H)^{\dagger}/\Tr\big(g(\beta H)g(\beta H)^{\dagger}\big)$ would presumably take the role of a generalized form of equilibrium state. 

One could imagine constructing fluctuation relations for this generalized setup, and it is indeed straightforward to repeat the derivation of (\ref{GlobalInvariance}) to obtain
\begin{equation*}
\Tr\big(Q^{f+}V\mathcal{J}^{g}_{\beta H}(Q^{i+})V^{\dagger}\big) = \Tr\big(Q^{i-}V\mathcal{J}^g_{\beta H}(Q^{f-})V^{\dagger}\big).
\end{equation*}
At first sight this may seem like an endless source of non-standard fluctuation theorems for hypothetical non-Gibbsian distributions. However, since the factorization property fails for general non-exponential choices of $g$,  we can for example not reproduce the reasoning in Appendix \ref{GibbsInaccess}.

\section{\label{SecPrecorrelations} Pre-correlations}

Here we provide some further details on the example of section \ref{MainCorrelatedSE} in the main text, where we describe a setup similar to the one for the quantum Crooks relation (\ref{QuantumCrooks}), but where we wish to allow for the possibility that $S$ and $E$ are pre-correlated. The quantum Crooks relation (\ref{QuantumCrooks}) is formulated in terms of channels on the energy reservoir $E$. However, this implicitly assumes that $E$ initially is uncorrelated with the degrees of freedom it is about to interact with. Hence, in the present case we cannot express fluctuation relations in terms of channels or CPMs on $E$ alone, but one  alternative is to formulate a fluctuation relation in terms of channels on the joint system $SE$. 

We let the global Hamiltonian be
\begin{equation*}
\begin{split}
H:= & H^{i}_{SE}\otimes\hat{1}_B\otimes P^i_C + H^{f}_{SE}\otimes\hat{1}_B\otimes P^f_C \\
&   +  \hat{1}_{SE}\otimes H_B\otimes \hat{1}_{C},
\end{split}
\end{equation*}
where $H^{i}_{SE}$ and $H^{f}_{SE}$ are the initial and final Hamiltonians of the combined system and energy reservoir (where we can let these be non-interacting if we so wish). Moreover, like in Assumptions \ref{Def2}, $P^i_C$ and $P^f_C$ are the projectors onto the two orthogonal spaces $\Sp\{|c_{i+}\rangle,|c_{i-}\rangle\}$ and $\Sp\{|c_{f+}\rangle,|c_{f-}\rangle\}$, respectively. 

The total time-reversal is of the form $\mathcal{T}:= \mathcal{T}_{SE}\otimes\mathcal{T}_B\otimes \mathcal{T}_C$. (This is different from the decomposition $\mathcal{T}_{SBC}\otimes\mathcal{T}_E$ that we use in Appendix \ref{SecAQuantumFlctnThrm}.) We assume
 $\mathcal{T}_B(H_B) = H_B$.
(We do not need to assume $\mathcal{T}_{SE}(H^{i}_{SE}) = H^{i}_{SE}$ or $\mathcal{T}_{SE}(H^{f}_{SE}) = H^{f}_{SE}$.)
We furthermore let $\mathcal{T}_{C}$ be such that 
\begin{equation*}
\begin{split}
\mathcal{T}_{C}(|c_{i+}\rangle\langle c_{i+}|) = & |c_{i-}\rangle\langle c_{i-}|,\\
 \mathcal{T}_{C}(|c_{f+}\rangle\langle c_{f+}|) = & |c_{f-}\rangle\langle c_{f-}|.
\end{split}
\end{equation*}
We assume a global unitary evolution operator $V$ that is energy conserving $[V,H] = 0$, time-reversal symmetric $\mathcal{T}(V)=V$, and satisfies perfect control 
\begin{equation*}
V[\hat{1}_{SBE}\otimes |c_{i+}\rangle\langle c_{i+}|] = [\hat{1}_{SBE}\otimes |c_{f+}\rangle\langle c_{f+}|]V.
\end{equation*} 
We define the forward and reverse channels 
\begin{equation*}
\begin{split}
\overline{\mathcal{F}}_{+}(\chi) := & \Tr_{CB}(V [|c_{i+}\rangle\langle c_{i+}|\otimes G(H_B)\otimes\chi] V^{\dagger}),\\
\overline{\mathcal{F}}_{-}(\chi) := & \Tr_{CB}(V [|c_{f-}\rangle\langle c_{f-}|\otimes G(H_B)\otimes\chi] V^{\dagger}).
\end{split}
\end{equation*}
By using the relation $\mathcal{T}_{SE}\Tr_{CB} Q = \Tr_{CB}\mathcal{T}(Q)$, perfect control, and energy conservation, it follows, much as in previous derivations, that 
\begin{equation*}
\begin{split}
& \mathcal{T}_{SE}\overline{\mathcal{F}}_{-}^{*}(\chi) \\
& =  \Tr_{CB}\Big(\mathcal{T}\big([|c_{f-}\rangle\langle c_{f-}|\otimes G(H_B)\otimes\hat{1}_{SE}]\\
&\quad\quad\quad\quad\times V^{\dagger} [\hat{1}_C\otimes \hat{1}_B\otimes\chi] V\big)\Big)\\
& = \frac{1}{Z(H_B)} \Tr_{CB}\big([|c_{f+}\rangle\langle c_{f+}|\otimes e^{-\beta H_B}\otimes\hat{1}_{SE}]\\
&\quad\quad\quad\quad\quad\quad\times  V[\hat{1}_C\otimes\hat{1}_B\otimes\mathcal{T}_{SE}(\chi)]V^{\dagger}\big)\\
& = \frac{1}{Z(H_B)} \Tr_{CB}\big([|c_{f+}\rangle\langle c_{f+}|\otimes\hat{1}_B\otimes   e^{\beta H^{f}_{SE}/2}]  e^{-\beta H/2}\\
& \quad\quad\quad\quad\quad\quad\times V[\hat{1}_C\otimes\hat{1}_B\otimes\mathcal{T}_{SE}(\chi)]V^{\dagger}\\
& \quad\quad\quad\quad\quad\quad\times  e^{-\beta H/2}[|c_{f+}\rangle\langle c_{f+}|\otimes\hat{1}_B\otimes   e^{\beta H^{f}_{SE}/2}]\big)\\
 & = \frac{1}{Z(H_B)} e^{\beta H^{f}_{SE}/2}\Tr_{CB}\big(V[|c_{i+}\rangle\langle c_{i+}|\otimes\hat{1}_B\otimes \hat{1}_{SE}] \\
& \quad\quad\quad\quad\quad\quad \times e^{-\beta H/2}[\hat{1}_C\otimes\hat{1}_B\otimes\mathcal{T}_{SE}(\chi)]e^{-\beta H/2}\\
& \quad\quad\quad\quad \quad\quad \times [ |c_{i+}\rangle\langle c_{i+}|\otimes\hat{1}_B\otimes \hat{1}_{SE}]V^{\dagger} 
\big)e^{\beta H^{f}_{SE}/2}\\
 & =  \mathcal{J}_{\beta H^{f}_{SE}}^{-1}\bigg(\overline{\mathcal{F}}_{+}\Big(\mathcal{J}_{\beta H^{i}_{SE}}\big(\mathcal{T}_{SE}(\chi)\big)\Big)\bigg),
\end{split}
\end{equation*}
which can be rewritten as $\overline{\mathcal{F}}_{+} = \mathcal{J}_{\beta H_{SE}^{f}}\overline{\mathcal{F}}_{-}^{\ominus}\mathcal{J}^{-1}_{\beta H_{SE}^{i}}$.

\section{\label{SecApproximateFluct}Approximate fluctuation relations}

As discussed in Appendix \ref{MakeSense} we have up to now separated the role of the global Hamiltonian $H$ as characterizing energy, from its role as generator of the time evolution. The latter role we have so far assigned to the unitary operator $V$, with the restriction that it should be energy conserving $[H,V] = 0$. 
Here we consider the modifications needed to re-join these two roles in the sense that we let $V = e^{-it H/\hbar}$ (which  automatically satisfies the condition for energy conservation $[H,V] = 0$).  

Some issues appear when we try to fit  $V = e^{-it H/\hbar}$ with our previous assumptions on the structure of $H$.  For example, in Assumptions \ref{Def2} we explicitly assumed that the global Hamiltonian is of the form 
$H =  H_{S'C}\otimes\hat{1}_E + \hat{1}_{S'C}\otimes H_E$,
where $H_{S'C} =  H^{i}_{S'}\otimes P^{i}_C+ H^{f}_{S'}\otimes P^f_C + H^{\perp}$.
We furthermore assumed that the initial state of the control system in the forward process has support in the subspace onto which $P^{i}_C$ projects.
If the global evolution is given by $V = e^{-itH/\hbar}$, this implies that the control system will never leave the initial subspace, and thus fails to satisfy the assumption of perfect control. In other words, we cannot obtain the relevant dynamics with these combinations of assumptions.

This particular issue does not apply to the setting of the conditional fluctuation relations in Assumptions \ref{AssumpCondTimeRev}, since we there neither assume a special structure of $H_{S'C}$, nor of the initial state. 
However,  in the conditional setup we still assume that the global Hamiltonian is of the form
$H = H_{\widetilde{S}}\otimes\hat{1}_E  + \hat{1}_{\widetilde{S}}\otimes H_{E}$.
In other words, we assume that there is no interaction between $\widetilde{S} = S'C$ and the energy reservoir $E$. Needless to say, if the global evolution would be given by $V= e^{-it H/\hbar}$, then the state of the energy reservoir would be left unaffected by whatever happens in $S'C$, and the whole idea of the energy reservoir thus becomes meaningless. Hence, we do again find that an evolution of the form $V = e^{-itH/\hbar}$ clashes with our general assumptions in the sense that it generates a trivial dynamics.

\subsection{\label{SecApproxFluctuationRelations}A general notion of approximate fluctuation relations}

 To highlight the general structure we consider a separation into two anonymous subsystems $1$ and $2$. The reason for this is that we will consider different ways to partition the subsystems  $S$, $B$, $C$, and $E$.

\subsubsection{\label{SecApproxGeneral}The approximation}

In Appendix \ref{SecInaccessible} we pointed out that for non-interacting Hamiltonians $H = H_1\otimes\hat{1}_2 + \hat{1}_1\otimes H_2$, the function $\mathcal{J}_{\beta H}$ satisfies the factorization property $\mathcal{J}_{\beta H}(Q_1\otimes Q_2) = \mathcal{J}_{\beta H_1}(Q_1)\otimes \mathcal{J}_{\beta H_2}(Q_2)$. Since we in this section abandon these convenient non-interacting Hamiltonians, the question is what is supposed to replace them. 
Intuitively, the idea is that for positive operators with \emph{suitable} support, the action of the global Hamiltonian $H$ can be approximated by $H^{i}_{1}\otimes\hat{1}_{2} + \hat{1}_{1}\otimes H^{i}_2$, for some local Hamiltonians $H^{i}_{1}$ and  $H^i_2$. Similarly, for another suitable class of operators, the action of the global Hamiltonian can be approximated by  $H^{f}_{1}\otimes\hat{1}_{2} + \hat{1}_{1}\otimes H^{f}_2$, for Hamiltonians $H^{f}_{1}$ and  $H^{f}_2$.  (For the sake of generality and flexibility we here allow different initial and final Hamiltonians on both subsystems.)  

The more exact formulation is based on the $\mathcal{J}_{\beta H}$ map.
We assume a product time reversal $\mathcal{T} =\mathcal{T}_1\otimes\mathcal{T}_2$ and local approximate Hamiltonians $H_{1}^{i}$, $H_{1}^{f}$,  $H_{2}^{i}$, $H_{2}^{f}$  such that 
\begin{equation}
\label{CondHandT}
\begin{split}
& \mathcal{T}(H) =  H,\\
&  \mathcal{T}_{1}(H_{1}^{i}) =  H_{1}^{i},\quad \mathcal{T}_{1}(H_{1}^{f}) =  H_{1}^{f},\\
& \mathcal{T}_{2}(H^i_2) =  H^i_2,\quad  \mathcal{T}_2(H^f_2) =  H^f_2,
\end{split}
\end{equation}
Suppose that for the measurement operators $Q_{1}^{i+}, Q_{2}^{i+}, Q_{1}^{f+}, Q_{2}^{f+}$ the approximate factorization holds 
\begin{equation}
\label{Approximation}
\begin{split}
\mathcal{J}_{\beta H}(Q_{1}^{i+}\otimes Q_{2}^{i+})\approx  & \mathcal{J}_{\beta H^i_1}(Q_{1}^{i+})\otimes  \mathcal{J}_{\beta H^i_2}(Q_{2}^{i+}),\\
\mathcal{J}_{\beta H}(Q_{1}^{f+}\otimes Q_{2}^{f+})\approx  & \mathcal{J}_{\beta H^f_1}(Q_{1}^{f+})\otimes  \mathcal{J}_{\beta H^f_2}(Q_{2}^{f+}).
\end{split}
\end{equation}
The idea is that under these conditions we should obtain the following approximate global fluctuation relation
\begin{equation}
\label{nndsdsvk}
\begin{split}
&  \mathcal{Z}_{\beta H^i_1}(Q_1^{i})\mathcal{Z}_{\beta H^i_2}(Q_2^{i})P^{V}_{\beta H^i}[ Q_{1}^{i+}\otimes Q_{2}^{i+} \rightarrow Q_{1}^{f+}\otimes Q_{2}^{f+}]\\
\approx &   \mathcal{Z}_{\beta H^f_1}(Q_1^{f})\mathcal{Z}_{\beta H^f_2}(Q_2^{f})P^{V}_{\beta H^f}[ Q_{1}^{f-}\otimes Q_{2}^{f-} \rightarrow Q_{1}^{i-}\otimes Q_{2}^{i-}],
\end{split}
\end{equation}
where $H^{i} := H^i_1\otimes\hat{1}_2 + \hat{1}_1\otimes H^i_2$, $H^{f} := H^f_1\otimes\hat{1}_2 + \hat{1}_1\otimes H^f_2$, and where $V = e^{-it H/\hbar}$ for some $t\in \mathbb{R}$, and $Q_{1}^{i-}  := \mathcal{T}_{1}(Q_{1}^{i+})$, $Q_{2}^{i-}  := \mathcal{T}_{2}(Q_{2}^{i+})$, $Q_{1}^{f-}  := \mathcal{T}_{1}(Q_{1}^{f+})$, $Q_{2}^{f-}  := \mathcal{T}_{2}(Q_{2}^{f+})$.

It is straightforward to make an informal `derivation' of  (\ref{nndsdsvk}), which combines the approximations in (\ref{Approximation}) with the assumed properties (\ref{CondHandT}) of the time reversal together with $[V,H] =0$ and $\mathcal{T}(V) =  V$ 
\begin{equation*}
\begin{split}
 &\Tr\big([Q_{1}^{f+}\otimes Q_{2}^{f+} ]V[\mathcal{J}_{\beta H_1^{i}}(Q_{1}^{i+})\otimes \mathcal{J}_{\beta H_2^{i}}(Q_{2}^{i+})]V^{\dagger}\big) \\
\approx &\Tr\big([Q_{1}^{f+}\otimes Q_{2}^{f+} ]V\mathcal{J}_{\beta H}(Q_{1}^{i+}\otimes Q_{2}^{i+})V^{\dagger}\big) \\
= &\Tr\big(
[Q_{1}^{i-}\otimes Q_{2}^{i-}]\mathcal{J}_{\beta H}(V[Q_{1}^{f-}\otimes Q_{2}^{f-}]V^{\dagger})
\big) \\
= &\Tr\big(
[Q_{1}^{i-}\otimes Q_{2}^{i-}]V\mathcal{J}_{\beta H}(Q_{1}^{f-}\otimes Q_{2}^{f-})V^{\dagger}
\big) \\
\approx &\Tr(
[Q_{1}^{i-}\otimes Q_{2}^{i-}]V[\mathcal{J}_{\beta H_1^f}(Q_{1}^{f-})\otimes \mathcal{J}_{\beta H_2^f}(Q_{2}^{f-})]V^{\dagger}
).
\end{split}
\end{equation*}
In the final step to obtain (\ref{nndsdsvk}) we use assumptions (\ref{CondHandT}) and Lemma \ref{LemmaSimpleRelations} to get 
$\mathcal{Z}_{\beta H_1^i}(Q^{i-}_1) = \mathcal{Z}_{\beta H_1^i}\big(\mathcal{T}_1(Q^{i+}_1)\big) =: \mathcal{Z}_{\beta H_1^i}(Q^{i}_1)$, and analogously for the other partition maps.

\subsubsection{\label{formalizationsAppr} Quantitative formulation of the approximation}

Here we consider one  way to make the approximation as expressed by (\ref{Approximation})  and (\ref{nndsdsvk}) quantitative. We here assume that the underlying Hilbert spaces are finite-dimensional.  Define 
\begin{equation}
\label{Defdidf}
\begin{split}
 & d_{H_{1},H_{2}}^{H}(Q_{1},Q_{2})  \\
& := \big\Vert \mathcal{J}_{\beta H_1}(Q_{1})\otimes \mathcal{J}_{\beta H_2}(Q_{2})  - \mathcal{J}_{\beta H}(Q_{1}\otimes Q_{2}) \big\Vert_1, 
\end{split}
\end{equation}
where we use the trace norm $\Vert Q\Vert_1 := \Tr\sqrt{Q^{\dagger}Q}$.
In the following we will also use the standard operator norm $\Vert Q\Vert :=\sup_{\Vert \psi\Vert = 1}\Vert Q|\psi\rangle\Vert$.

By Lemma \ref{Tpreservesbound}, we can conclude that $\Vert Q_{1}^{i-}\Vert =  \Vert Q_{1}^{i+}\Vert$ and analogous for $Q_{2}^{i\pm}$, $Q_{1}^{f\pm}$, $Q_{2}^{f\pm}$. 
Moreover, one can confirm that  $d^{H}_{H_1^{i},H_2^{i}}(Q_{1}^{i-},Q_{2}^{i-}) = d^{H}_{H_1^{i},H_2^{i}}(Q_{1}^{i+},Q_{2}^{i+})$ and $d^{H}_{H_1^{f},H_2^{f}}(Q_{1}^{f-}, Q_{2}^{f-}) = d^{H}_{H_1^{f},H_2^{f}}(Q_{1}^{f+}, Q_{2}^{f+})$.

As a bit of a technical side-remark one may note that none of the proofs explicitly use the assumption that $V = e^{-it H/\hbar}$. We do still only rely on  $[V,H] = 0$ and $\mathcal{T}(V) = V$. However, due to the more forgiving structure we can now assume that  $V = e^{-it H/\hbar}$ and  yet have a non-trivial evolution on the energy reservoir.

\begin{Proposition}
\label{ApproxFlctnRel}
Let $H$, $H^{i}_1$, $H^{i}_2$, $H^{f}_1$, $H^{f}_2$ be Hermitian operators, let $\mathcal{T}, \mathcal{T}_{1}, \mathcal{T}_2$ be time reversals that satisfy the conditions in (\ref{CondHandT}), and let $V$ be a unitary operator such that $[H,V] = 0$ and $\mathcal{T}(V) = V$ (which in particular allows us to choose $V = e^{-it H/\hbar}$ for some $t\in\mathbb{R}$). Then 
\begin{equation*}
\begin{split}
&\bigg| \mathcal{Z}_{\beta H^i_1}(Q_1^{i})\mathcal{Z}_{\beta H^i_2}(Q_2^{i})P^{V}_{\beta H^i}[ Q_{1}^{i+}\otimes Q_{2}^{i+} \rightarrow Q_{1}^{f+}\otimes Q_{2}^{f+}]\\
 &   -\mathcal{Z}_{\beta H^i_1}(Q_1^{f})\mathcal{Z}_{\beta H^i_2}(Q_2^{f})P^{V}_{\beta H^f}[ Q_{1}^{f-}\otimes Q_{2}^{f-} \rightarrow Q_{1}^{i-}\otimes Q_{2}^{i-}]\bigg|\\
& \leq    \Vert Q_{1}^{f+}\Vert \Vert Q_{2}^{f+} \Vert  d^{H}_{H_1^{i},H_2^{i}}(Q_{1}^{i+},Q_{2}^{i+})  \\
& +   \Vert Q_{1}^{i+}\Vert \Vert Q_{2}^{i+}\Vert d^{H}_{H_1^{f},H_2^{f}}(Q_{1}^{f+}, Q_{2}^{f+}),
\end{split}
\end{equation*}
where $H^{i} := H^i_1\otimes\hat{1}_2 + \hat{1}_1\otimes H^i_2$, $H^{f} := H^f_1\otimes\hat{1}_2 + \hat{1}_1\otimes H^f_2$, and 
where  $Q_{1}^{i-}  := \mathcal{T}_{1}(Q_{1}^{i+})$, $Q_{2}^{i-}  := \mathcal{T}_{2}(Q_{2}^{i+})$, $Q_{1}^{f-}  := \mathcal{T}_{1}(Q_{1}^{f+})$, $Q_{2}^{f-}  := \mathcal{T}_{2}(Q_{2}^{f+})$.
\end{Proposition}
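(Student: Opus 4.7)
The plan is to reduce the inequality to two applications of the trace/operator-norm duality $|\Tr(XY)|\leq \Vert X\Vert\,\Vert Y\Vert_1$, with the global invariance of Section~\ref{GlobalSymmetry} supplying the exact identity that bridges the forward and reverse expressions. First I would rewrite both sides of the inequality as unweighted traces. Since $H^i = H_1^i\otimes\hat{1}_2 + \hat{1}_1\otimes H_2^i$ is non-interacting, the factorization property yields $\mathcal{J}_{\beta H^i}(Q_1^{i+}\otimes Q_2^{i+}) = \mathcal{J}_{\beta H_1^i}(Q_1^{i+})\otimes\mathcal{J}_{\beta H_2^i}(Q_2^{i+})$ and similarly for the partition maps, so the LHS becomes $A_1 := \Tr\bigl([Q_1^{f+}\otimes Q_2^{f+}]\,V\,[\mathcal{J}_{\beta H_1^i}(Q_1^{i+})\otimes\mathcal{J}_{\beta H_2^i}(Q_2^{i+})]\,V^\dagger\bigr)$, and analogously the RHS becomes $B_1 := \Tr\bigl([Q_1^{i-}\otimes Q_2^{i-}]\,V\,[\mathcal{J}_{\beta H_1^f}(Q_1^{f-})\otimes\mathcal{J}_{\beta H_2^f}(Q_2^{f-})]\,V^\dagger\bigr)$.

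Next I would insert the two auxiliary quantities in which the local exponentials are replaced by the global one: $A_2 := \Tr\bigl([Q_1^{f+}\otimes Q_2^{f+}]\,V\,\mathcal{J}_{\beta H}(Q_1^{i+}\otimes Q_2^{i+})\,V^\dagger\bigr)$ and $B_2 := \Tr\bigl([Q_1^{i-}\otimes Q_2^{i-}]\,V\,\mathcal{J}_{\beta H}(Q_1^{f-}\otimes Q_2^{f-})\,V^\dagger\bigr)$. The global invariance~(\ref{GlobalInvariance}), which uses exactly $[H,V]=0$ and $\mathcal{T}(V)=H$, $\mathcal{T}(H)=H$, gives $A_2 = B_2$ on the nose. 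Hence the triangle inequality reduces the problem to bounding $|A_1-A_2|$ and $|B_2-B_1|$ separately.

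For the first of these, the integrand differs only in the central factor, so I would use unitary invariance of the trace norm together with the submultiplicativity $|\Tr(XY)|\leq\Vert X\Vert\,\Vert Y\Vert_1$ and $\Vert A\otimes B\Vert = \Vert A\Vert\,\Vert B\Vert$ to obtain $|A_1-A_2| \leq \Vert Q_1^{f+}\Vert\,\Vert Q_2^{f+}\Vert\,d^H_{H_1^i,H_2^i}(Q_1^{i+},Q_2^{i+})$; the analogous estimate gives $|B_2-B_1|\leq \Vert Q_1^{i-}\Vert\,\Vert Q_2^{i-}\Vert\,d^H_{H_1^f,H_2^f}(Q_1^{f-},Q_2^{f-})$. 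Finally I would invoke Lemma~\ref{Tpreservesbound} to replace $\Vert Q_j^{i-}\Vert$ by $\Vert Q_j^{i+}\Vert$, and use the identity $d^H_{H_1^f,H_2^f}(Q_1^{f-},Q_2^{f-}) = d^H_{H_1^f,H_2^f}(Q_1^{f+},Q_2^{f+})$ noted right after~(\ref{Defdidf}) (which follows from $\mathcal{T}\circ\mathcal{J}_{\beta H}=\mathcal{J}_{\beta H}\circ\mathcal{T}$ whenever $\mathcal{T}(H)=H$, combined with the trace-norm isometry of $\mathcal{T}$).

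There is no real technical obstacle: all ingredients — factorization over non-interacting Hamiltonians, the global invariance~(\ref{GlobalInvariance}), and the isometry properties of $\mathcal{T}$ — are already in place. The only mild subtlety to be careful about is justifying $\Vert \mathcal{T}_1(Q_1^{i+})\otimes\mathcal{T}_2(Q_2^{i+})\Vert = \Vert Q_1^{i+}\Vert\,\Vert Q_2^{i+}\Vert$ and the analogous trace-norm equality inside $d^H$, but both are immediate from Lemma~\ref{Tpreservesbound} applied factorwise. Thus the proposition follows from a short three-line triangle-inequality argument once the global identity and the factorization are assembled.
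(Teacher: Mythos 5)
Your proposal is correct and follows essentially the same route as the paper's proof: establish the exact global identity $A_2=B_2$ from energy conservation and time-reversal symmetry, then apply the triangle inequality and the bound $|\Tr(XY)|\leq\Vert X\Vert\,\Vert Y\Vert_1$ to the two factorization errors, finishing with the norm-preservation properties of $\mathcal{T}$ (Lemma \ref{Tpreservesbound}). The only blemish is the typo ``$\mathcal{T}(V)=H$'' for ``$\mathcal{T}(V)=V$''; otherwise nothing is missing.
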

\begin{proof}
As the first step one can confirm the identity
\begin{equation}
\label{vnaknd}
\begin{split}
&\Tr\big([Q_{1}^{f+}\otimes Q_{2}^{f+} ]V\mathcal{J}_{\beta H}(Q_{1}^{i+}\otimes Q_{2}^{i+})V^{\dagger}\big) \\
 = &\Tr\big(
[Q_{1}^{i-}\otimes Q_{2}^{i-}]V\mathcal{J}_{\beta H}(Q_{1}^{f-}\otimes Q_{2}^{f-})V^{\dagger}
\big).
\end{split}
\end{equation}
For a more compact notation let $\sigma_i := \mathcal{J}_{\beta H^i_1}(Q_{1}^{i+})\otimes \mathcal{J}_{\beta H^i_2}(Q_{2}^{i+})$ and
$\sigma_f :=  \mathcal{J}_{\beta H_{1}^f}(Q_{1}^{f-})\otimes \mathcal{J}_{\beta H_{2}^f}(Q_{2}^{f-})$.
Then 
\begin{equation*}
\begin{split}
&\big|\Tr([Q_{1}^{f+}\!\otimes\! Q_{2}^{f+} ]V\sigma_iV^{\dagger})  - \Tr(
[Q_{1}^{i-}\!\otimes\! Q_{2}^{i-}]V\sigma_f V^{\dagger}
)\big| \\
&\quad\quad[\textrm{By (\ref{vnaknd}) and the triangle inequality}]\\
&\leq   \Big|\Tr\Big(V^{\dagger}[Q_{1}^{f+}\otimes Q_{2}^{f+} ]V\big(\sigma_i  - \mathcal{J}_{\beta H}(Q_{1}^{i+}\otimes Q_{2}^{i+}) \big)\Big)\Big| \\
&+\Big|\Tr\Big(
V^{\dagger}[Q_{1}^{i-}\otimes Q_{2}^{i-}]V\big( \mathcal{J}_{\beta H}(Q_{1}^{f-}\otimes Q_{2}^{f-})  -\sigma_f\big)
\Big)\Big|\\
&\quad\quad[\textrm{By the general relation $|\Tr(AQ)|\leq \Vert A\Vert \Vert Q\Vert_1$}]\\
&\leq   \Vert Q_{1}^{f+}\Vert \Vert Q_{2}^{f+} \Vert  d^{H}_{H_1^{i},H_2^{i}}(Q_{1}^{i+},Q_{2}^{i+})  \\
& +   \Vert Q_{1}^{i+}\Vert \Vert Q_{2}^{i+}\Vert  d^{H}_{H_1^{f},H_2^{f}}(Q_{1}^{f+}, Q_{2}^{f+}).
\end{split}
\end{equation*}
\end{proof}

\subsection{\label{SecApproxConditional} Approximate conditional fluctuation relations}

Here we consider approximate versions of the conditional fluctuation relations in Appendix \ref{SecConditional}. The general case is treated in the following subsection, and in the next we consider the more concrete special case of a particle as control system.

\subsubsection{The general case}

 Here we identify system $1$ of the previous section with $\widetilde{S}$, and system $2$ with $E$.
We thus assume
\begin{equation}
\label{transpham}
\begin{split}
& \mathcal{T}(H) =  H,\\
&  \mathcal{T}_{\widetilde{S}}(H_{\widetilde{S}}^{i}) =  H_{\widetilde{S}}^{i},\quad \mathcal{T}_{\widetilde{S}}(H_{\widetilde{S}}^{f}) =  H_{\widetilde{S}}^{f},\\
& \mathcal{T}_E(H^i_E) =  H^i_E,\quad  \mathcal{T}_E(H^f_E) =  H^f_E.
\end{split}
\end{equation}
Similar to the conditional fluctuation relations in Appendix \ref{SecConditional} we define CPMs on the energy reservoir conditioned on the successful control measurements
\begin{equation}
\begin{split}
\tilde{\mathcal{F}}_{+}(\sigma) := & \Tr_{\widetilde{S}}([Q_{\widetilde{S}}^{f+}\otimes \hat{1}_E]V[\mathcal{G}_{\beta H^i_{\widetilde{S}}}(Q_{\widetilde{S}}^{i+})\otimes \sigma]V^{\dagger}),\\
\tilde{\mathcal{F}}_{-}(\sigma) := & \Tr_{\widetilde{S}}([Q_{\widetilde{S}}^{i-}\otimes \hat{1}_E]V[\mathcal{G}_{\beta H^f_{\widetilde{S}}}(Q_{\widetilde{S}}^{f-})\otimes \sigma]V^{\dagger}).
\end{split} 
\end{equation}
One can verify that
\begin{equation*}
\begin{split}
& P^{\tilde{\mathcal{F}}_{+}}_{\beta H^i_E}[Q^{i+}_E\rightarrow Q^{f+}_E]=  P^{V}_{\beta H^i}[Q_{\widetilde{S}}^{i+}\otimes Q^{i+}_E\rightarrow Q_{\widetilde{S}}^{f+}\otimes Q^{f+}_E],\\
& P^{\tilde{\mathcal{F}}_{-}}_{\beta H^f_E}[Q^{f-}_E\rightarrow Q^{i-}_E]=  P^{V}_{\beta H^f}[Q_{\widetilde{S}}^{f-}\otimes Q^{f-}_E\rightarrow Q_{\widetilde{S}}^{i-}\otimes Q^{i-}_E], 
\end{split}
\end{equation*}
which together with Proposition \ref{ApproxFlctnRel} yield
\begin{equation}
\label{dsjflkfkldlfdk}
\begin{split}
&\bigg| \mathcal{Z}_{\beta H^i_{\widetilde{S}}}(Q_{\widetilde{S}}^{i})\mathcal{Z}_{\beta H^i_E}(Q_E^{i})P^{\tilde{\mathcal{F}}_{+}}_{\beta H^i_E}[Q^{i+}_E\rightarrow Q^{f+}_E]\\
&   -\mathcal{Z}_{\beta H^f_{\widetilde{S}}}(Q_{\widetilde{S}}^{f})\mathcal{Z}_{\beta H^i_E}(Q_E^{f}) P^{\tilde{\mathcal{F}}_{-}}_{\beta H^f_E}[Q^{f-}_E\rightarrow Q^{i-}_E]\bigg|\\
& \leq    \Vert Q_{\widetilde{S}}^{f+}\Vert \Vert Q_{E}^{f+} \Vert  d^{H}_{H_{\widetilde{S}}^{i},H_E^{i}}(Q_{\widetilde{S}}^{i+},Q_{E}^{i+})  \\
& +   \Vert Q_{\widetilde{S}}^{i+}\Vert \Vert Q_{E}^{i+}\Vert d^{H}_{H_{\widetilde{S}}^{f},H_E^{f}}(Q_{\widetilde{S}}^{f+}, Q_{E}^{f+}).
\end{split}
\end{equation}
With (\ref{dsjflkfkldlfdk}) as starting point one can also obtain an approximate fluctuation relation in terms of channels, i.e., more in the spirit of  (\ref{ConditionalFluctuationThm}). Let us define the following measure of difference between maps 
\begin{equation}
\textrm{Diff}(\phi_1,\phi_2):=\sup_{Q^i\geq 0,Q^f\geq 0}\frac{\big|\Tr\big(Q^f \phi_2(Q^i)\big)-\Tr\big(Q^f\phi_1(Q^{i})\big)\big|}{\Vert Q^i\Vert \Vert Q^f\Vert}.
\end{equation}
This particular choice of measure has no deeper reason than that it makes the derivations simple. Also define
\begin{equation}
\label{nlkvnadf2}
\begin{split}
D_i(Q_{\widetilde{S}}^{i+}) :=  & \sup_{Q\geq 0}\frac{1}{\Vert Q\Vert}d^{H}_{H_{\widetilde{S}}^{i},H_E^{i}}(Q_{\widetilde{S}}^{i+},Q),\\
D_f(Q_{\widetilde{S}}^{f+}) :=  & \sup_{Q\geq 0}\frac{1}{\Vert Q\Vert}d^{H}_{H_{\widetilde{S}}^{f},H_E^{f}}(Q_{\widetilde{S}}^{f+},Q).
\end{split}
\end{equation}
By the approximate conditional fluctuation relation (\ref{dsjflkfkldlfdk}) it follows that
\begin{equation}
\label{nvdsnlkdslkn}
\begin{split}
& \textrm{Diff}\big(
\mathcal{Z}_{\beta H^i_{\widetilde{S}}}(Q_{\widetilde{S}}^{i})\tilde{\mathcal{F}}_{+}\circ\mathcal{J}_{\beta H^i_E}\, ,\, \mathcal{Z}_{\beta H^f_{\widetilde{S}}}(Q_{\widetilde{S}}^{f})\mathcal{J}_{\beta H^f_E}\circ \tilde{\mathcal{F}}^{\ominus}_{-}\big)\\
&\quad\quad \leq  \Vert Q_{\widetilde{S}}^{f+}\Vert D_i(Q_{\widetilde{S}}^{i+})  +  \Vert Q_{\widetilde{S}}^{i+}\Vert D_f(Q_{\widetilde{S}}^{f+}).
\end{split}
\end{equation}
Hence, this is a quantitative version of the approximate conditional relation (\ref{MainApproxFluctThm}) in the main text.

In (\ref{dsjflkfkldlfdk}) we only consider the error for a specific pair $Q^{i+}_E,Q^{f+}_E$, while in  (\ref{nvdsnlkdslkn}) we ask for the worst case error over the entire set of positive semi definite operators. It may very well be the case that pointwise error can be small for some specific choice of operators, while uniform error would be large. 
Hence, the formulation via the transition probabilities can be more `forgiving' than the formulation via channels. This should be compared with the non-approximate case, where the choice largely is a matter of convenience.

\subsubsection{\label{controlparticle1}The special case of a control-particle}

Imagine a joint Hamiltonian of the form
\begin{equation}
\label{GlobalHamiltonian}
H = \frac{1}{2M_C}\hat{P}_{C}^2\otimes\hat{1}_{S'E} + H_{S'E}(\hat{X}_{C}).
\end{equation}
Here $M_C$ is the mass and $\hat{X}_{C},\hat{P}_{C}$ are the canonical position and momentum operators of the control particle.
Assume furthermore that  $H_{S'E}(x)$ is of the form 
\begin{equation}
\label{FinalInitial}
H_{S'E}(x) = \left\{\begin{matrix} H^{i}_{S'}\otimes\hat{1}_E + \hat{1}_{S'} \otimes H^i_E,\quad x \leq x_{i},\\
 H^{f}_{S'}\otimes\hat{1}_E + \hat{1}_{S'} \otimes H^f_E,\quad x \geq x_{f},
\end{matrix}\right.
\end{equation}
while in the interval $[x_i,x_f]$ there is some non-trivial dependence on $x$ (where we assume $x_i < x_f$).
The Hamiltonians $H^{i}_{\widetilde{S}}$ and $H^{f}_{\widetilde{S}}$ would in this case be
\begin{equation}
\label{kvald}
\begin{split}
H^{i}_{\widetilde{S}} := & \frac{1}{2M_C}\hat{P}_{C}^2\otimes\hat{1}_{S'} + \hat{1}_C\otimes H^{i}_{S'},\\
H^{f}_{\widetilde{S}} := & \frac{1}{2M_C}\hat{P}_{C}^2\otimes\hat{1}_{S'} + \hat{1}_C\otimes H^{f}_{S'}.
\end{split}
\end{equation}
 Hence, outside the `interaction region' $[x_i,x_f]$ all the three systems  systems $S'$, $E$, and $C$ are non-interacting.

For the sake of illustration we consider the special case of measurement operators of the form
\begin{equation}
\label{gaajgjga}
Q_{\widetilde{S}}^{i\pm} :=  Q_{C}^{i\pm}\otimes Q_{S'}^{i\pm},\quad Q_{\widetilde{S}}^{f\pm} :=  Q_{C}^{f\pm}\otimes Q_{S'}^{f\pm},
\end{equation}
where $Q_{S'}^{i\pm}$,  $Q_{S'}^{f\pm}$ are measurement operators on $S'$, and where $Q_{C}^{i\pm}$, $Q_{C}^{f\pm}$ are measurement operators on $C$ which are `concentrated' in the regions $(-\infty,x_i]$ and $[x_f,+\infty)$, respectively.
It seems intuitively reasonable  that the further down from $x_i$ that  $Q_{C}^{i+}$ is supported, the better is the approximation
\begin{equation}
\label{aklalkv}
\begin{split}
& \mathcal{J}_{\beta H}( Q_{\widetilde{S}}^{i+}\otimes Q)\\
& \approx   \mathcal{J}_{\beta H^{i}_{\widetilde{S}}}(Q_{\widetilde{S}}^{i+})\otimes \mathcal{J}_{\beta H^i_E}(Q)\\
& =  \mathcal{J}_{\beta \hat{K}}(Q_{C}^{i+})\otimes  \mathcal{J}_{\beta H^{i}_{S'}}(Q_{S'}^{i+})\otimes \mathcal{J}_{\beta H^i_E}(Q),
\end{split}
\end{equation}
where we have introduced the notation
$\hat{K} := \hat{P}_{C}^2/(2M_C)$,
 and used the fact that (\ref{kvald}) defines non-interacting Hamiltonians between $S'$ and $C$. 

The approximate fluctuation relation (\ref{nvdsnlkdslkn})  takes the form
\begin{equation*}
\begin{split}
& \textrm{Diff}\Big(
 \mathcal{Z}_{\beta \hat{K}}(Q_{C}^{i+})\mathcal{Z}_{H^i_{S'}}(Q_{S'}^{i+})
\tilde{\mathcal{F}}_{+}\circ \mathcal{J}_{\beta H^i_E},\\ 
&\quad\quad\quad \mathcal{Z}_{\beta \hat{K}}(Q_{C}^{f+})\mathcal{Z}_{H^f_{S'}}(Q_{S'}^{f+})
\mathcal{J}_{\beta H^f_E}\circ\tilde{\mathcal{F}}^{\ominus}_{-}
\Big)\\
& \leq  \Vert Q_{C}^{f+}\Vert \Vert Q_{S'}^{f+}\Vert D_i(Q_{C}^{i+}\otimes Q_{S'}^{i+})  \\
& \quad\quad\quad +  \Vert Q_{C}^{i+}\Vert \Vert Q_{S'}^{i+}\Vert D_f(Q_{C}^{f+}\otimes Q_{S'}^{f+}).
\end{split}
\end{equation*}
Note that if $Q_{C}^{f+}$ is a spatial translation of $Q_{C}^{i+}$, i.e., such that  $Q_{C}^{f+} = e^{ir\hat{P}_C}Q_{C}^{i+}e^{-ir\hat{P}_C}$ for some $r\in\mathbb{R}$, then $\mathcal{Z}_{\beta \hat{K}}(Q_{C}^{i+}) = \mathcal{Z}_{\beta \hat{K}}(Q_{C}^{f+})$. 

One may wonder how to choose the time-reversals $\mathcal{T}_{\widetilde{S}}$ and $\mathcal{T}_E$. One possibility is if there exist orthonormal complete bases of $\mathcal{H}_{S'}$ and $\mathcal{H}_E$, such that the family of Hamiltonians $H_{S'E}(x)$, $H^{i}_{S'}$, $H^{f}_{S'}$, $H^i_E$, and $H^f_E$ are real valued matrices in these bases.
In this case we can choose $\mathcal{T}_E$ and $\mathcal{T}_{S'}$ as transpositions with respect to these bases, and $\mathcal{T}_{C}$ as the transposition with respect to the position representation, and let $\mathcal{T}_{\widetilde{S}} = \mathcal{T}_{C}\otimes\mathcal{T}_{S'}$. This guarantees that 
(\ref{transpham}) holds. 
 
As a further comment one may observe that what we assign as being the energy reservoir is largely a matter of choice. In this particular example it is clear that also the control particle can donate energy.

\subsection{\label{SecApproximateTransProb} Joint control and energy reservoir}

Here we turn to a setting where the control  and the energy reservoir are one and the same system. Hence, instead of dividing the global system between $E$ and $S'C$, we here divide it into $CE$ and $S'$. 

\subsubsection{The general case}
Here we let system $1$ (in Appendix \ref{SecApproxGeneral}) be $S'$ and system $2$ be $CE$.
We assume   
\begin{equation}
\label{CondHandTtrans}
\begin{split}
& \mathcal{T}(H) =  H,\\
&  \mathcal{T}_{S'}(H_{S'}^{i}) =  H_{S'}^{i},\quad \mathcal{T}_{S'}(H_{S'}^{f}) =  H_{S'}^{f},\\
& \mathcal{T}_{CE}(H^i_{CE}) =  H^i_{CE},\quad  \mathcal{T}_{CE}(H^f_{CE}) =  H^f_{CE},
\end{split}
\end{equation}
as well as $[V,H] = 0$ and $\mathcal{T}(V) = V$. By Proposition \ref{ApproxFlctnRel}
\begin{equation}
\label{ApproxFluctQuantitative}
\begin{split}
&\bigg| \mathcal{Z}_{\beta H^i_{S'}}(Q_{S'}^{i})\mathcal{Z}_{\beta H^i_{CE}}(Q_{CE}^{i}) \mathcal{P}^{+} \\
&  -\mathcal{Z}_{\beta H^f_{S'}}(Q_{S'}^{f})\mathcal{Z}_{\beta H^f_{CE}}(Q_{CE}^{f}) \mathcal{P} ^{-} \bigg|\\
& \leq    \Vert Q_{S'}^{f+}\Vert \Vert Q_{CE}^{f+} \Vert  d^{H}_{H_{S'}^{i},H_{CE}^{i}}(Q_{S'}^{i+},Q_{CE}^{i+})  \\
& +   \Vert Q_{S'}^{i+}\Vert \Vert Q_{CE}^{i+}\Vert d^{H}_{H_{S'}^{f},H_{CE}^{f}}(Q_{S'}^{f+}, Q_{CE}^{f+}),
\end{split}
\end{equation}
where we introduce the following short hand notation for the transition probabilities
\begin{equation}
\label{jdslkfajd}
\begin{split}
\mathcal{P}^{+} :=& P^{V}_{\beta H^i}[ Q_{S'}^{i+}\otimes Q_{CE}^{i+} \rightarrow Q_{S'}^{f+}\otimes Q_{CE}^{f+}],\\
\mathcal{P}^{-} := & P^{V}_{\beta H^f}[ Q_{S'}^{f-}\otimes Q_{CE}^{f-} \rightarrow Q_{S'}^{i-}\otimes Q_{CE}^{i-}].
\end{split}
\end{equation}
Hence, (\ref{ApproxFluctQuantitative}) is the quantitative version of (\ref{MainApproxFluctQuantitative}) in the main text.

\subsubsection{\label{SecControlparticleAgain} A single particle as both control and energy reservoir}

In Appendix \ref{controlparticle1} we considered the case of a particle  whose motion implements the time-dependent Hamiltonian in Crooks relation. There we regarded the degrees of freedom of the particle as separate from the designated energy reservoir.  This made it possible  to express CPMs on the energy reservoir conditioned on the control measurements on the control particle. An intuitively reasonable alternative would be that the motion of the control particle also fuels the process, i.e., it is the initial kinetic energy of the particle that that drives the whole non-equilibrium process. 
The global Hamiltonian can in this case be chosen to be
\begin{equation}
\label{GlobalHamiltonianCE}
H = \frac{1}{2M_{CE}}\hat{P}_{CE}^2\otimes\hat{1}_{S'} + H_{S'}(\hat{X}_{CE}),
\end{equation}
where $M_{CE}$ is the mass, and $\hat{X}_{CE},\hat{P}_{CE}$ are the canonical position and momentum operators of the particle. 

We assume
\begin{equation}
\label{FinalInitialCE}
H_{S'}(x) = \left\{ \begin{matrix}H^{i}_{S'},\quad x \leq x_{i},\\
 H^{f}_{S'},\quad x \geq x_{f},
\end{matrix}\right.
\end{equation}
i.e., the Hamiltonian for $S'$ is constant outside the interaction region. Since the particle moreover is free outside the interaction region, we can take  the initial and final approximate Hamiltonians for $CE$ to be
$H_{CE}^i = H_{CE}^f = \frac{1}{2M_{CE}}\hat{P}_{CE}^2 =:\hat{K}$.
 For operators $Q_{CE}^{i+}$ and $Q_{CE}^{f+}$ that are well localized outside the interaction region it seems reasonable that $\mathcal{J}_{\beta H}(Q_{S'}^{i+}\otimes Q_{CE}^{i+})\approx  \mathcal{J}_{\beta H^{i}_{S'}}(Q_{S'}^{i+})\otimes \mathcal{J}_{\beta \hat{K}}(Q_{CE}^{i+})$ and   $\mathcal{J}_{\beta H}(Q_{S'}^{f+}\otimes Q_{CE}^{f+})\approx \mathcal{J}_{\beta H^{f}_{S'}}(Q_{S'}^{f+})\otimes \mathcal{J}_{\beta \hat{K}}(Q_{CE}^{f+})$.
Under these conditions we thus get the approximate fluctuation relation 
$\mathcal{Z}_{\beta H^i_{S'}}(Q_{S'}^{i+}) \mathcal{Z}_{\beta \hat{K}}(Q_{CE}^{i+})\mathcal{P}^{+} \approx  \mathcal{Z}_{\beta H^f_{S'}}(Q_{S'}^{f-}) \mathcal{Z}_{\beta \hat{K}}(Q_{CE}^{f-})\mathcal{P}^{-}$, with the quantitative version in (\ref{ApproxFluctQuantitative}).

\subsubsection{\label{SecNumericalEvaluation}Numerical evaluation}

\begin{figure}
 \includegraphics[width= 8cm]{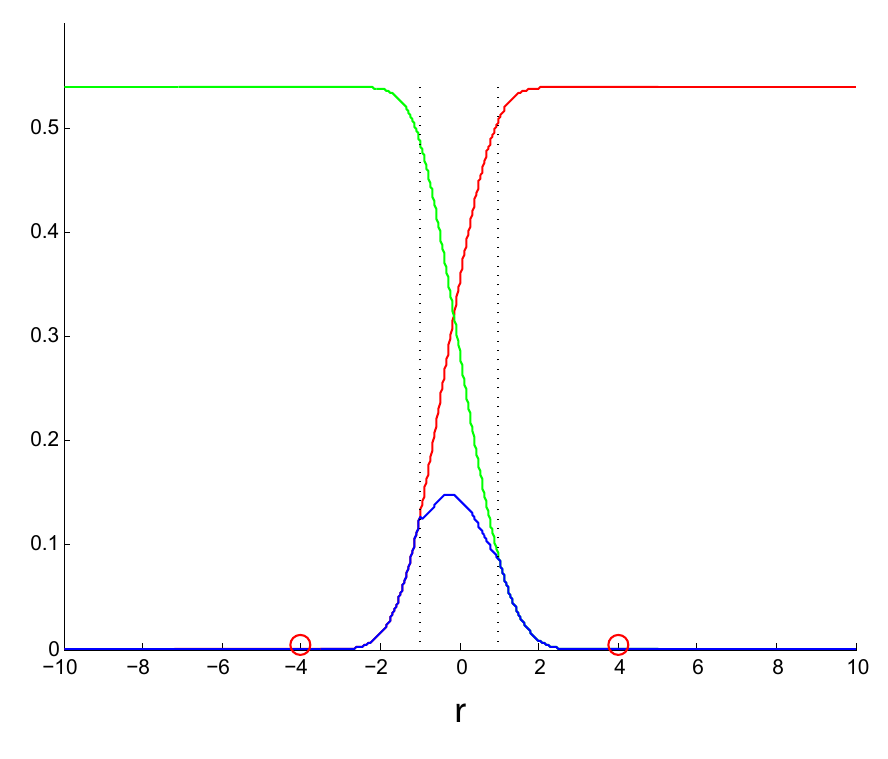} 
   \caption{\label{FigTestConditions}  {\bf Approximate factorization.}   
To assess the quality of the approximate factorization we evaluate 
$d^{H}_{\hat{K},E_0\sigma_z/2}(\hat{1}_{S'}\otimes |\alpha\rangle\langle \alpha|)$ (red curve), $d^{H}_{\hat{K},E_0\sigma_x/4}(\hat{1}_{S'}\otimes |\alpha\rangle\langle \alpha|)$ (green curve), 
$d^{H}_{\hat{K},E_0\overline{n}(ry_0)\cdot\overline{\sigma}/2}(\hat{1}_{S'}\otimes |\alpha\rangle\langle \alpha|)$ (blue curve)  for $\alpha := r+2i$ with $r \in [-10,10]$. The dotted black lines correspond to the borders of the interaction region. The red circles are the positions of the coherent states that gives the measurement operators $Q_{CE}^{i+} = |-4 +2i\rangle\langle -4+2i|$ and  $Q_{CE}^{f+} = |4 +2i\rangle\langle 4+2i|$.
 }
\end{figure}

To make the approximate fluctuation relations a bit more concrete we here make a numerical evaluation of a special case of the combined control and energy reservoir particle in the previous section.  We consider a single particle of mass $M$ that is restricted to move along the $y$-axis, and this spatial degree of freedom is taken as the combined control and energy reservoir $CE$. The particle also carries a magnetic moment corresponding to a spin-half degree of freedom, which we interpret as system $S'$. (A single spin is of course somewhat ridiculous regarded as a combined system and heat bath, but this example only serves to illustrate the formalism,  for which the sizes of the participating systems do not matter.)  We assume that the spin interacts with an external magnetic field that is time-independent, but is a function of $y$.
The total Hamiltonian can be expressed in terms of the differential operator
\begin{equation}
H = -\frac{\hbar^2}{2M}\frac{d^2}{dy^2} + \frac{1}{2}E_0\overline{\sigma}_{S'}\cdot \overline{n}(y),
\end{equation}
where $\overline{n}(y)$ determines the strength and direction of the external magnetic field
as a function of $y$, and where $\overline{\sigma}_{S'} = (\sigma^{(x)}_{S'},\sigma^{(y)}_{S'},\sigma^{(z)}_{S'})$ are the Pauli spin operators, with $\sigma^{(x)} = |0\rangle\langle 1|+|1\rangle\langle 0|$, $\sigma^{(y)} = i|0\rangle\langle 1|-i|1\rangle\langle 0|$, $\sigma^{(z)} = |1\rangle\langle 1|-|0\rangle\langle 0|$, with $\{|0\rangle,|1\rangle\}$ being  the eigenbasis of $\sigma^{(z)}$.
For $\Vert \overline{n}\Vert = 1$ it follows that $E_0$ is the excitation energy of the spin. 

To get a particularly simple model we here assume that $\overline{n}(y) = (0,0,1)$ for  $y < -y_0$,  
$\overline{n}(y) = [3/4-y/(4y_0)]\big(\sin(\frac{\pi (y+y_0)}{4 y_0}),0,\cos(\frac{\pi (y+y_0)}{4 y_0})\big)$ for $-y_0 \leq y \leq y_0$, and $\overline{n}(y) = (1/2,0,0)$ for $y_0 < y$.
Hence, for all positions below $-y_0$ the magnetic field is directed along the $z$-axis.  Within the interaction region $[-y_0,y_0]$ the field rotates in the $xz$-plane until it aligns with the $x$-axis at $y_0$, simultaneously as it decreases in strength to the half.
We choose $H^{i}_{CE} = H^{f}_{CE} = \hat{P}^2/(2M) =: \hat{K}$, $H^{i}_{S'} := E_0\sigma_z/2$, and $H^{f}_{S'} := E_0\sigma_x/4$.

Define $\mathcal{T}_{CE}$ to be the transpose with respect to the coordinate representation, and thus $\mathcal{T}_{CE}(\hat{K}) = \hat{K}$. For the chosen $\overline{n}(y)$, the Hamiltonian $\overline{n}(y)\cdot\overline{\sigma}$ is  represented as a real valued matrix in the eigenbasis of $\sigma_z$. Hence,  $\mathcal{T}_{S'}$ can be chosen as the transpose with respect to the eigenbasis of $\sigma_z$, and thus  $\mathcal{T}_{S'}\big(\overline{n}(y)\cdot\overline{\sigma}\big) = \overline{n}(y)\cdot\overline{\sigma}$, $\mathcal{T}_{S'}(H^{i}_{S'}) = H^{i}_{S'}$, and $\mathcal{T}_{S'}(H^{f}_{S'}) = H^{f}_{S'}$. We also get $\mathcal{T}(H) = H$. 

As measurement operators on $CE$ we choose projectors onto coherent states $Q_{CE}^{i+} := |\alpha_i\rangle\langle\alpha_i|$, $Q_{CE}^{f+} := |\alpha_f\rangle\langle\alpha_f|$, 
where the corresponding wave-functions are 
\begin{equation*}
 \psi_{\alpha}(y) = \frac{1}{(2\pi)^{1/4}}\frac{1}{\sqrt{\sigma}}e^{-\Imag(\alpha)^2}\exp[-\frac{1}{4}(\frac{y}{\sigma}-2\alpha)^2].
\end{equation*}
Here, $\sigma$ is the standard deviation, and $2\sigma\Real(\alpha)$ the expectation value, of the corresponding Gaussian distribution $|\psi_{\alpha}(y)|^2$. (With the coherent state defined as the displaced ground state of a harmonic oscillator, $\sigma$ is determined by the parameters of the chosen oscillator.) Similarly, $\hbar\Imag(\alpha)/\sigma$ is the average momentum of the coherent state. One can confirm that
$\mathcal{T}_{CE}(|\alpha\rangle\langle\alpha|)  = |\alpha^{*}\rangle\langle\alpha^{*}|$, and thus the time-reversal changes the sign of the momentum, but leaves the position intact.
For the spin degree of freedom we let the measurement operators be 
$Q_{S'}^{i+} = Q_{S'}^{f+} = \hat{1}_{S'}$.
We choose (somewhat arbitrarily) the parameters such that $\hbar^2/(ME_0y_0^2) = 0.1$, $\beta E_0 = 1$,  and such that the standard deviation is $\sigma = y_0/2$. This means that the typical thermal energy $kT$ is equal to the excitation energy of the spin (for $y\leq -y_0$), and the  width of the wave-packet is of the same order as the size of the interaction region.
Figure \ref{FigTestConditions} displays the numerical evaluation of the factorization error $d_{H_{1},H_{2}}^{H}(\hat{1}_{S'}, |\alpha\rangle\langle \alpha|)$ as defined in (\ref{Defdidf}). Here we choose $H_1:= \hat{K}$, and as $H_2$ we choose $E_0\sigma_z/2$ (red curve),  and $E_0\sigma_x/4$ (green curve), for $\alpha := r+2i$ with $r \in [-10,10]$.  Since the standard deviation is $\sigma = y_0/2$ it follows that the spatial wave-packet $\psi_{\alpha}$ is centered at $ry_0$. For the sake of comparison we also include the `local' approximation $H_2:=E_0\overline{n}(ry_0)\cdot\overline{\sigma}/2$ (blue curve), where for a state centered at the location $ry_0$ we approximate the total Hamiltonian $H$ with the non-interacting Hamiltonian $\hat{K} + E_0\overline{n}(ry_0)\cdot\overline{\sigma}/2$. By construction, the blue curve coincides with the green for $r\leq -1$, and with the red for $1\leq r$. Maybe unsurprisingly, the local approximation is better than the others inside the interaction region.

We now turn to the test of the approximate fluctuation relation, and for this purpose we choose an evolution time $t$ such that $tE_0/\hbar = 21.5$. (This particular choice happens to make the transition probability for the forward process fairly large.) In Fig.~\ref{FigDynamics} the final states of the particle in the forward and reverse processes are depicted, where we use the measurement operators $Q_{CE}^{i+} := |-4 +2i\rangle\langle -4+2i|$,  $Q_{CE}^{f+} := |4 +2i\rangle\langle 4+2i|$, and $Q_{S'}^{i+} = Q_{S'}^{f+} := \hat{1}_{S'}$.  The transition probabilities [in equation (\ref{jdslkfajd})] of the forward and reverse processes are $\mathcal{P}^{+} \approx 0.36$ and $\mathcal{P}^{-} \approx 0.39$, respectively. The error in the approximate fluctuation relation, as defined by the left hand side of equation (\ref{ApproxFluctQuantitative}) becomes $|Z_i\mathcal{P}^{+}- Z_f\mathcal{P}^{-}|\approx 1.6*10^{-8}$ where
$Z_i := Z_{\beta}(E_0\sigma_z/2) \mathcal{Z}_{\beta \hat{K}}(Q_{CE}^{i+})$ and $Z_f := Z_{\beta}(E_0\sigma_x/4) \mathcal{Z}_{\beta\hat{K}}(Q_{CE}^{f-})$. An estimate of the relative error is
$|Z_i\mathcal{P}^{+}- Z_f\mathcal{P}^{-}|/(|Z_i\mathcal{P}^{+}| + |Z_f\mathcal{P}^{-}|)\approx 2.2*10^{-8}$. One can also calculate the upper bound in the right hand side of  (\ref{ApproxFluctQuantitative}), which  becomes $d^{H}_{H_{S'}^{i},H_{CE}^{i}}(Q_{S'}^{i+},Q_{CE}^{i+})  + d^{H}_{H_{S'}^{f},H_{CE}^{f}}(Q_{S'}^{f+}, Q_{CE}^{f+})\approx 3.1*10^{-6}$ (where we use the fact that $\Vert Q_{S'}^{i+}\Vert =1$, $\Vert Q_{S'}^{f+} \Vert=1$, $\Vert Q_{CE}^{i+}\Vert =1$, $\Vert Q_{CE}^{f+}\Vert=1$).

\begin{figure}
 \includegraphics[width= 8cm]{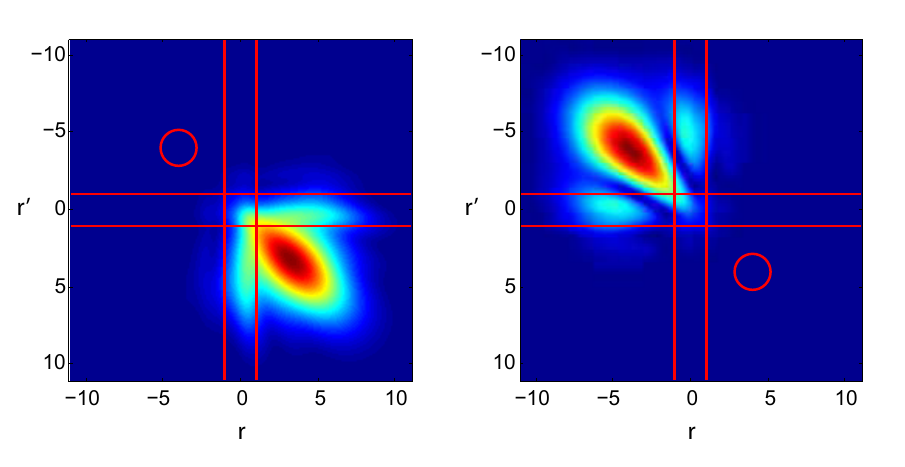} 
   \caption{\label{FigDynamics}  {\bf The forward and reverse processes.}   
The density operators of the combined control and energy reservoir particle in the position representation (the absolute values of the matrix elements) at the end of the forward (left) and reverse (right) process. The pairs of horizontal and vertical lines show the borders of  the interaction region. The red circles indicate the positions of the initial states $\mathcal{G}_{\beta \hat{K}}(Q_{CE}^{i+})$ and $\mathcal{G}_{\beta \hat{K}}(Q_{CE}^{f+})$ for the initial and final measurement operators $Q_{CE}^{i+} = |-4 +2i\rangle\langle -4+2i|$,  $Q_{CE}^{f+} = |4 +2i\rangle\langle 4+2i|$. \newline
The error in the approximate fluctuation relation is small (approximately $1.6*10^{-8}$), which corresponds to  the fact that the measurement operators that are well separated from the interaction region. One should compare this with the final wave packets, where one  can clearly see that these have significant weights within the interaction region. This illustrates the fact that it is the properties of the measurement operators, rather than the final states, that matter for the quality of the approximation. \newline 
One can also note that the final states are not mirror images of each other. Hence, the symmetry discussed in Appendix \ref{GlobalSymmetry} does not imply that the wave packets of the forward and reverse processes have to be symmetric images of each other.
}
\end{figure}

\subsubsection{\label{SecApproxFreeEnergyDiff} Approximate free energy differences}
One can use the setting of the joint control and energy reservoir to approximately evaluate the free energy difference between the final and initial Hamiltonians $H^{f}_{S'}$ and $H^{i}_{S'}$, respectively, or equivalently, the quotient of the partition functions $Z(H^{f}_{S'})/Z(H^{i}_{S'})$. 
Since $Q^{i\pm}_{S'} = Q^{f\pm}_{S'} = \hat{1}$, we obtain  $\mathcal{Z}_{\beta H^i_{S'}}(Q^{i}_{S'}) =  Z_{\beta}(H^{i}_{S'})$ and $\mathcal{Z}_{\beta H^f_{S'}}(Q^{f}_{S'}) =  Z_{\beta}(H^{f}_{S'})$. 
If the heat bath $B$ moreover is non-interacting with $S$ (or if $S = S'$), then $Z(H^{f}_{S'})/Z(H^{i}_{S'}) = Z(H^{f}_{S})/Z(H^{i}_{S})$, and (\ref{MainApproxFluctQuantitative}) can be rewritten as
\begin{equation}
\label{nsvdnsbslfk}
\begin{split}
 \frac{Z_{\beta}(H^f_{S})}{Z_{\beta}(H^i_{S})}\approx  & \frac{\mathcal{Z}_{\beta \hat{K}}(Q_{CE}^{i})}{\mathcal{Z}_{\beta \hat{K}}(Q_{CE}^{f})}\frac{P^{V}_{\beta H^i}[ \hat{1}\otimes Q_{CE}^{i+}\rightarrow \hat{1}\otimes Q_{CE}^{f+}]}{P^{V}_{\beta H^f}[ \hat{1}\otimes Q_{CE}^{f-}\rightarrow \hat{1}\otimes Q_{CE}^{i-}]}. 
\end{split}
\end{equation}
Hence, this enables us to approximately determine the free energy difference between the final and initial Hamiltonian.

For the very same setting as in the previous section, the true quotient is  $\frac{Z(H^f)}{Z(H^i)} = (e^{-1/4} + e^{1/4})/(e^{-1/2} +e^{1/2})\approx 0.91$. For the evaluation of the right hand side of  (\ref{nsvdnsbslfk}) we can use that $Q^{i\pm}_{CE}$ and $Q^{f\pm}_{CE}$  are space translations of each other, and thus $\mathcal{Z}_{\beta \hat{K}}(Q^{i\pm}_{CE}) = \mathcal{Z}_{\beta \hat{K}}(Q^{f\pm}_{CE})$, which thus cancel in (\ref{nsvdnsbslfk}).
Numerical evaluation yields a difference between the left and right hand side of  (\ref{nsvdnsbslfk}) that is approximately $4*10^{-8}$.

\section{\label{SecTurningTables} Assuming a global fluctuation relation}

\subsection{\label{SecComp}Comparisons}

Here we investigate the class of channels $\mathcal{F}$ that satisfy the relation
\begin{equation}
\label{ReductionHeatBathAgain}
\mathcal{F}\mathcal{J}_{\beta H} = \mathcal{J}_{\beta H}\mathcal{F}^{\ominus},
\end{equation}
i.e., the class of channels that satisfy the fluctuation relation  (\ref{ReductionHeatBath}) obtained in section \ref{SecMainTimeRevesalSymmetricThermalOperations}. More generally, we investigate the relations between this class, the thermal operations \cite{Janzing00,Janzing06,Horodecki11,Brandao13b,Gour,Brandao13,Renes14,Faist15,Lostaglio14b,Perry16,Scharlau16,Lostaglio16}, time-reversal symmetric thermal operations, as well as the Gibbs preserving maps \cite{Faist15}  (see Fig.~\ref{FigRelations}). For the sake of clarity we here state the definitions that we employ.
\begin{itemize}
\item A channel $\mathcal{F}$ on a Hilbert space $\mathcal{H}$ is a thermal operation with respect to $\beta\geq 0$, and   a Hermitian operator $H$ on $\mathcal{H}$, if here exists an ancillary Hilbert space $\mathcal{H}_B$,  a Hermitian operator $H_B$ on $\mathcal{H}_B$, and a unitary operator $U$ on $\mathcal{H}\otimes\mathcal{H}_B$, such that $[U,H\otimes \hat{1}_B+\hat{1}\otimes H_B] = 0$, and $\mathcal{F}(\rho) = \Tr_B(U[\rho\otimes G_{\beta}(H_B)]U^{\dagger})$ for all $\rho$ on $\mathcal{H}$. 

\item A channel $\mathcal{F}$ on $\mathcal{H}$ is a time-reversal symmetric thermal operation with respect to $\beta\geq 0$,  a Hermitian operator $H$ on  $\mathcal{H}$, and a time-reversal $\mathcal{T}$ on $\mathcal{H}$, if $\mathcal{T}(H) = H$, and if there exists an ancillary Hilbert space $\mathcal{H}_B$,  a Hermitian operator $H_B$ on $\mathcal{H}_B$, and a time-reversal $\mathcal{T}_B$ on $\mathcal{H}_B$, and a unitary operator $U$ on $\mathcal{H}\otimes\mathcal{H}_B$, such that $[U,H\otimes \hat{1}_B+\hat{1}\otimes H_B] = 0$, $\mathcal{T}_B(H_B) = H_B$, and $\mathcal{T}_{\mathrm{tot}}(U) = U$, where $\mathcal{T}_{\mathrm{tot}}:=\mathcal{T}\otimes\mathcal{T}_B$, and $\mathcal{F}(\rho) = \Tr_B(U[\rho\otimes G_{\beta}(H_B)]U^{\dagger})$ for all $\rho$ on 
$\mathcal{H}$. 

\item A channel $\mathcal{F}$ on $\mathcal{H}$ is Gibbs preserving with respect to $\beta\geq 0$,  and a Hermitian operator $H$ on  $\mathcal{H}$, if $\mathcal{F}\big(\mathcal{G}_{\beta}(H)\big) = \mathcal{G}_{\beta}(H)$.
\end{itemize}

By these definitions it is clear that we get a family of thermal operations, as well as a family of Gibbs preserving maps, for each choice of  $\beta$ and $H$ on a given Hilbert space. Similarly, we get a class of time-reversal symmetric thermal operations for each choice of $\beta$, $H$, and $\mathcal{T}$ such that $\mathcal{T}(H)=H$. Analogously we get a class of channels that satisfy (\ref{ReductionHeatBathAgain}), for each $\beta$, $H$, and $\mathcal{T}$ (where $\ominus$ is defined via $\mathcal{T}$). However, in this case we have the choice whether to additionally demand that $\mathcal{T}(H) = H$ or not.  In the setup of section \ref{SecMainTimeRevesalSymmetricThermalOperations}, the condition  $\mathcal{T}(H) = H$ is satisfied by assumption (since the starting point is time-reversal symmetric thermal operations). Moreover, for the applications of (\ref{ReductionHeatBathAgain}) in Appendices \ref{SecGlobalChnlIdeal} and \ref{GlobalFlctnThmApproximate} we assume that $\mathcal{T}(H) = H$ holds. (It is in any case a convenient assumption, since it implies that $\mathcal{T}\mathcal{J}_{\beta H} = \mathcal{J}_{\beta H}\mathcal{T}$.) Most of the results in this section are based on  (\ref{ReductionHeatBathAgain}) for $\mathcal{T}(H) = H$ (i.e. class F in Fig.~\ref{FigRelations}), with the exception of Lemma \ref{LemmaFixpointCHannel}, which holds true for every $\mathcal{T}$, regardless of its  relation to $H$.

\begin{figure}
 \includegraphics[width= 7cm]{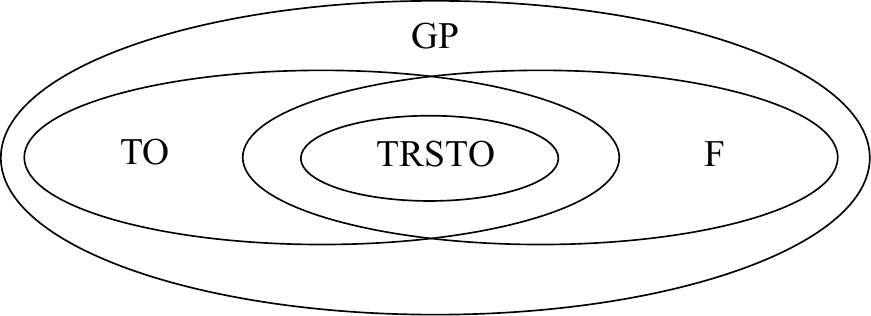} 
   \caption{\label{FigRelations} 
 {\bf Classes of channels.}  Schematic description of the relation between  the time-reversal symmetric thermal operations (TRSTO), the thermal operations (TO), the  Gibbs preserving maps (GP), and the set of channels (F) that satisfy the fluctuation relation  (\ref{ReductionHeatBathAgain}), for a given $\beta$, $H$, and $\mathcal{T}$ such that $\mathcal{T}(H) = H$.
 We know from section  \ref{SecMainTimeRevesalSymmetricThermalOperations} that $\mathrm{F}\supseteq \mathrm{TRSTO}$, and it is clear from the definitions that $\mathrm{TO}\supseteq \mathrm{TRSTO}$. Moreover $\mathrm{GP}\supseteq \mathrm{TO}$ \cite{Faist15}. In Lemma \ref{LemmaFixpointCHannel} it is shown that $\mathrm{GP}\supseteq \mathrm{F}$.
For specific choices of $\beta$, $H$, and $\mathcal{T}$, we moreover find explicit examples of channels in the sets $\mathrm{TO}\setminus \mathrm{F}$,  $\mathrm{F}\setminus \mathrm{TO}$, and $\mathrm{GP}\setminus \big(\mathrm{TO}\cup \mathrm{F}\big)$, i.e., there are cases where these sets are non-empty.  The drawing suggests that $(\mathrm{TO}\cap\mathrm{F})\setminus \mathrm{TRSTO}$ would be nonempty. However, it is not clear whether this is the case or not.
 In Appendix \ref{SecConvTimerevsymm} it is shown that TRSTO is convex. The set TO is also convex \cite{Lostaglio14b}, and one can realize that GP and F are convex.
}
\end{figure}

\subsubsection{\label{SecTimeRevSymmThrmOp}Time reversal symmetric thermal operations form a proper subset of thermal operations}

From the definitions in the previous section it is clear that all time-reversal symmetric thermal operations are thermal operations. 
Here we show that there exist  thermal operations that are not time-reversal symmetric thermal operations. We do this by proving a more general statement, namely that there exist thermal operations that do not satisfy (\ref{ReductionHeatBathAgain}) for any choice of time-reversal  $\mathcal{T}$ such that $\mathcal{T}(H) = H$. (This is more general since we know from section \ref{SecMainTimeRevesalSymmetricThermalOperations} that all time-reversal symmetric thermal operations satisfy (\ref{ReductionHeatBathAgain}).)

 We consider two non-interacting two-level systems in resonance, $H = H_B = E_0|0\rangle\langle 0| + E_{1}|1\rangle\langle 1|$, where we assume $E_1 > E_0$, and thus each of these systems is non-degenerate.  The joint Hamiltonian $H_{\textrm{tot}} = H\otimes \hat{1}_B + \hat{1}\otimes H_B$ has the three eigenspaces $\Sp\{|00\rangle\}$, $\Sp\{|01\rangle,|10\rangle\}$, and $\Sp\{|11\rangle\}$, and the most general energy conserving unitary operators on these two systems can be written
\begin{equation*}
\begin{split}
V = & e^{i\theta_0}|00\rangle\langle 00| + e^{i\theta_1}|11\rangle\langle 11|+ \left[\begin{matrix}|01\rangle & |10\rangle\end{matrix}\right] U \left[\begin{matrix} \langle 01|\\ \langle 10| \end{matrix}\right],
\end{split}
\end{equation*}
where $\theta_0$ and $\theta_1$ are arbitrary real numbers, and $U := \left[\begin{smallmatrix}U_{00} & U_{01}\\ U_{10} & U_{11}\end{smallmatrix}\right]$ is an arbitrary unitary $2\times 2$ matrix. We write the Gibbs state on the heat bath as
$G_{\beta}(H_B) = \lambda|0\rangle\langle 0| + (1-\lambda)|1\rangle\langle 1|$,
 where we for the sake of notational convenience have introduced $\lambda: = e^{-\beta E_0}/(e^{-\beta E_0}+ e^{-\beta E_1})$.
We can thus construct the following thermal operations on $S$ 
\begin{equation*}
\begin{split}
&\mathcal{F}(\rho) =  \Tr_{B}\big(V[\rho\otimes G_{\beta}(H_B)]V^{\dagger}\big)\\
& =  (1-\lambda) |U_{10}|^2 \langle 0|\rho|0\rangle |1\rangle\langle 1| \\
& + \lambda  |U_{01}|^2\langle 1|\rho|1\rangle|0\rangle\langle 0|\\
& + (1-\lambda) (e^{i\theta_1}|1\rangle\langle 1| +  U_{00}|0\rangle\langle 0|)\rho (e^{-i\theta_1}|1\rangle\langle 1| + U_{00}^{*}|0\rangle \langle 0|)\\
& + \lambda  (e^{i\theta_0}|0\rangle\langle 0|  + U_{11}|1\rangle  \langle 1|)\rho (e^{-i\theta_0}|0\rangle\langle 0|  + U_{11}^{*} |1\rangle \langle 1|).
\end{split}
\end{equation*}
Let us now consider any time-reversal $\mathcal{T}$ such that $\mathcal{T}(H) = H$. Since $H$ is assumed to be non-degenerate, it follows by Lemma \ref{TOnHermitean} that we must have $\mathcal{T}(|0\rangle\langle 0|) = |0\rangle\langle 0|$ and $\mathcal{T}(|1\rangle\langle 1|) = |1\rangle\langle 1|$.  
We can use this to show that 
\begin{equation*}
\begin{split}
& \mathcal{F}\mathcal{J}_{\beta H}(\rho)  - \mathcal{J}_{\beta H}\mathcal{F}^{\ominus}(\rho) \\
& = \lambda (1-\lambda) Z(H_B) (|U_{10}|^2-|U_{01}|^2)\\
& \quad \quad \times\big(\langle 0|\rho|0\rangle |1\rangle\langle 1| -\langle 1|\rho|1\rangle|0\rangle\langle 0|\big).
\end{split}
\end{equation*}
Hence, if $|U_{10}| \neq |U_{10}|$, then the thermal operation $\mathcal{F}$ does not  satisfy (\ref{ReductionHeatBathAgain}) for any choice of $\mathcal{T}$ such that $\mathcal{T}(H) = H$ (thus providing an element in the set $\mathrm{TO}\setminus\mathrm{F}$ described in Fig.~\ref{FigRelations}). 
From the discussion in section \ref{SecMainTimeRevesalSymmetricThermalOperations} it follows that every time-reversal symmetric thermal operation with respect to $\beta$, $H$, $\mathcal{T}$  with  $\mathcal{T}(H) = H$ has to satisfy (\ref{ReductionHeatBathAgain}). We can thus conclude that there exist thermal operations that are not time-reversal symmetric thermal operations.

\subsubsection{\label{SecExampleBeyondThermal}
There exist channels that satisfy (\ref{ReductionHeatBathAgain}), but are not thermal operations}

As pointed out in section \ref{SecMainTimeRevesalSymmetricThermalOperations}, all time-reversal symmetric thermal operations satisfy (\ref{ReductionHeatBathAgain}), where we by definition of the time-reversal symmetric thermal operations must have $\mathcal{T}(H) = H$. An immediate question is if  all channels that satisfy  (\ref{ReductionHeatBathAgain}) for $\mathcal{T}(H) = H$ also are time-reversal symmetric thermal operations. 
Here we show that there exist channels that satisfy (\ref{ReductionHeatBathAgain}) for $\beta$, $H$, and $\mathcal{T}$ with $\mathcal{T}(H) = H$, but that are not  thermal operations (and thus not time-reversal symmetric thermal operations).
The counterexample is based on the fact that thermal operations cannot map states that are diagonal in the energy eigenbasis of $H$, to states that have off-diagonal elements \cite{Lostaglio14b}.

This example moreover shows that  (\ref{ReductionHeatBathAgain}) admits channels that are not enhanced (or generalized) thermal operations \cite{Cwiklinski15}. The latter are Gibbs preserving channels $\mathcal{F}$ that also 
satisfy  $\mathcal{F}([H,\rho]) = [H,\mathcal{F}(\rho)]$ for all operators $\rho$, i.e., they are time-translation symmetric \cite{Lostaglio14b}. By the same reasoning as in Appendix \ref{DecouplingDiagonals}, the channels $\mathcal{F}$ decouple along the modes of coherence, and thus cannot create off-diagonal elements from diagonal inputs, for non-degenerate $H$.

Select some $\beta\geq 0$, Hermitian operator $H$, and time-reversal $\mathcal{T}$ on a Hilbert space $\mathcal{H}$, such that $\mathcal{T}(H) = H$.
Let $\{A_k\}_k$ be a POVM, and define the channel
\begin{equation}
\label{QCQ}
\mathcal{F}(\rho) = \sum_k \mathcal{G}_{\beta H}\big(\mathcal{T}(A_k)\big)\Tr(A_k\rho).
\end{equation}
This type of channels can be implemented by a measurement of the POVM $\{A_k\}_k$ on the input state $\rho$, followed by a preparation of state $\mathcal{G}_{\beta H}\big(\mathcal{T}(A_k)\big)$ conditioned on outcome $k$. (These are  `entanglement breaking' channels \cite{Holevo,Horodecki03,Ruskai03}).
By using the condition $\mathcal{T}(H) = H$, one can confirm that all channels (\ref{QCQ}) satisfy (\ref{ReductionHeatBathAgain}).
Let us consider the qubit case with $H = E_0|0\rangle\langle 0|+ E_1|1\rangle\langle1|$, with $E_1 > E_0$, and the channel (\ref{QCQ}) for the two-element POVM $\{A_1,A_2\}$ with $A_2 = \hat{1}-A_1$, where
\begin{equation*}
\begin{split}
A_1 := &  \frac{1}{2}(1+r)|0\rangle\langle 0| +\frac{1}{2}(1-r)|1\rangle\langle 1|\\ 
        & + \frac{1}{2}\eta\sqrt{1-r^2}(e^{i\theta}|0\rangle\langle 1| + e^{-i\theta}|1\rangle\langle 0|),
\end{split}
\end{equation*}
for $-1\leq r\leq 1$ and $-1\leq\eta\leq 1$.
One can confirm that the resulting channel $\mathcal{F}$ satisfies
\begin{equation*}
\begin{split}
& \langle 0|\mathcal{F}(|0\rangle\langle 0|)|1\rangle \\
 & = \frac{2\eta r\sqrt{1-r^2} e^{-\beta E_1}e^{-\beta (E_0+E_1)/2}e^{-i\theta}}{
[(e^{-\beta E_0} + e^{-\beta E_1})^2      - r^2(e^{-\beta E_0}-e^{-\beta E_1})^2  ]
},
\end{split}
\end{equation*}
and thus maps a diagonal state to a non-diagonal state for suitable choices of $r$ and $\eta$. Hence, $\mathcal{F}$ generally cannot be a thermal operation, or enhanced thermal operation. We have thus constructed examples of channels that satisfy (\ref{ReductionHeatBathAgain}) for $\mathcal{T}(H) = H$, but which are not thermal operations (and thus are elements of the set $\mathrm{F}\setminus\mathrm{TO}$ described in Fig.~\ref{FigRelations}). Consequently, these channels are also not time-reversal symmetric thermal operations.

\subsubsection{\label{SecImplyGibbs}  (\ref{ReductionHeatBathAgain}) versus Gibbs preservation}

The following lemma shows that every channel that satisfies (\ref{ReductionHeatBathAgain}) is a Gibbs preserving map. For general discussions on Gibbs preserving maps, see \cite{Faist15}.
\begin{Lemma}
\label{LemmaFixpointCHannel}
Let $\mathcal{F}$ be a channel that satisfies  (\ref{ReductionHeatBathAgain}) with respect to $\beta\geq 0$ and Hermitian operator $H$ (and an arbitrary time reversal $\mathcal{T}$). Then $\mathcal{F}(e^{-\beta H}) = e^{-\beta H}$.
Hence, if $Z_{\beta}(H)$ is finite, then  $\mathcal{F}\big(G_{\beta}(H)\big) = G_{\beta}(H)$.
\end{Lemma}

\begin{proof}
Since $\mathcal{F}$ is a channel, and thus by definition is trace preserving, it follows that $\mathcal{F}^{*}(\hat{1}) =\hat{1}$, and thus $\mathcal{F}^{\ominus}(\hat{1}) = \hat{1}$. If we apply both sides of  (\ref{ReductionHeatBathAgain}) to $\hat{1}$, then we thus obtain $\mathcal{F}(e^{-\beta H}) = e^{-\beta H}$.
\end{proof}

Since all channels that satisfy  (\ref{ReductionHeatBathAgain}) are Gibbs preserving, one may wonder whether all Gibbs preserving channels satisfy  (\ref{ReductionHeatBathAgain}). However, this is not the case, at least if one restricts to the class for which $\mathcal{T}(H) = H$. In order to demonstrate this we make use of the class of single qubit channels used in \cite{Faist15} to show that not all Gibbs preserving maps are thermal operations.
We consider a single qubit with Hamiltonian $H = E_0|0\rangle\langle 0| + E_1|1\rangle\langle 1|$, with $E_1 > E_0$, resulting in the Gibbs state $G(H) = \lambda|0\rangle\langle 0| + (1-\lambda)|1\rangle\langle 1|$, with $\lambda := e^{-\beta E_0}/(e^{-\beta E_0} +e^{-\beta E_1})$, and $\lambda > 1-\lambda$. For any qubit density operator $\eta$, it was shown in \cite{Faist15} that the map
$\mathcal{F}(\rho) := \langle 0|\rho|0\rangle \frac{1}{\lambda}\big(G(H)  -(1-\lambda)\eta\big) + \langle 1|\rho|1\rangle \eta$
is a Gibbs preserving channel.  Since we assume $\mathcal{T}(H) = H$, for a non-degenerate $H$, it follows by Lemma \ref{TOnHermitean} that $\mathcal{T}(|0\rangle\langle 0|) = |0\rangle \langle 0|$ and  $\mathcal{T}(|1\rangle\langle 1|) = |1\rangle \langle 1|$. One can use this to show that
$ \langle 0|[\mathcal{F}\mathcal{J}(\rho)  - \mathcal{J}\mathcal{F}^{\ominus}(\rho)]|1\rangle
=  (1-\lambda)Z(H)(\langle 1|\rho|1\rangle - \langle 0|\rho|0\rangle) \langle 0|\eta|1\rangle$,
and thus $\mathcal{F}$ does not satisfy (\ref{ReductionHeatBathAgain}) whenever $\langle 0|\eta|1\rangle \neq 0$  and $\lambda\neq 1$. Moreover, under these conditions $\mathcal{F}$ also fails to be a thermal operation \cite{Faist15}. 
Hence, we have an example of a Gibbs preserving map that neither is a thermal operation, nor belongs to the class of  operations that satisfy  (\ref{ReductionHeatBathAgain}) for $\mathcal{T}(H) = H$. (Thus, this example is an element of the set $\mathrm{GP}\setminus(\mathrm{TO}\cup\mathrm{F})$ described in Fig.~\ref{FigRelations}.)

\subsubsection{\label{SecConvTimerevsymm}Time-reversal symmetric thermal operations form a convex set}

Here we show that the time-reversal symmetric thermal operations, with respect to given $\beta$, $H$, and $\mathcal{T}$, form a convex set.  Note that a proof of the convexity of the set of general thermal operations can be found in Appendix C of \cite{Lostaglio14b}. One can also realize that the set of Gibbs preserving maps with respect to a given $H$ and $\beta$ is a convex set, as well as the set of channels that satisfy $\mathcal{F}\mathcal{J} = \mathcal{J}\mathcal{F}^{\ominus}$ with respect to given $\beta$, $H$, and $\mathcal{T}$ (irrespective of whether $\mathcal{T}(H) = H$ or not).
\begin{Lemma}
\label{TimeRevSymmThermalConv}
Suppose that $\mathcal{F}_0$ and $\mathcal{F}_1$ are time-reversal symmetric thermal operations with respect to the Hamiltonian $H$, time-reversal $\mathcal{T}$, and $\beta \geq 0$ (which by definition requires that $\mathcal{T}(H) = H$). Then $\lambda \mathcal{F}_0 + (1-\lambda)\mathcal{F}_1$,  for $0\leq \lambda\leq 1$,  is also a time-reversal symmetric thermal operation with respect to $H$, $\mathcal{T}$,  and $\beta$.
\end{Lemma}
The proof below is based on a two-dimensional ancillary system with a Hamiltonian that is adapted to the weights of the convex combination. An alternative, along the lines of the above mentioned proof in Appendix C of \cite{Lostaglio14b}, would be to instead consider a completely degenerate Hamiltonian on a sufficiently large ancillary Hilbert space, and let the unitary evolution yield the (arbitrarily well approximated) weights. Such an alternative may potentially be useful in the context of the issues discussed in Appendix \ref{SecCaution}.

\begin{proof}
Since  $\mathcal{F}_0$ and $\mathcal{F}_1$ are time-reversal symmetric thermal operations, there
 exist Hilbert spaces $\mathcal{H}_{B0}$ and $\mathcal{H}_{B1}$ and Hamiltonians $H_{B0}$ and $H_{B1}$, and time-reversals $\mathcal{T}_{B0}$ and $\mathcal{T}_{B1}$ such that $\mathcal{T}_{B0}(H_{B0}) = H_{B0}$ and $\mathcal{T}_{B1}(H_{B1}) = H_{B1}$. Moreover, there exist unitaries $V_0$ and $V_1$ with $[V_0,H\otimes \hat{1}_{B0} + \hat{1}\otimes H_{B0}] = 0$ and  $[V_1,H\otimes \hat{1}_{B1} + \hat{1}\otimes H_{B1}] = 0$. Furthermore, $[\mathcal{T}\otimes\mathcal{T}_{B0}](V_0) = V_0$, $[\mathcal{T}\otimes\mathcal{T}_{B1}](V_1) = V_1$, as well as  $\mathcal{F}_0(\rho) =  \Tr_{B0}(V_0[\rho\otimes G_{\beta}(H_{B0})]V_0^{\dagger})$ and 
 $\mathcal{F}_1(\rho) =  \Tr_{B1}(V_1[\rho\otimes G_{\beta}(H_{B1})]V_1^{\dagger})$.
Let $\mathcal{H}_{\tilde{B}}$ be a two-dimensional space with orthonormal basis $\{|0\rangle,|1\rangle\}$, equipped with the Hamiltonian 
\begin{equation*}
H_{\tilde{B}} := E|0\rangle\langle 0|  +\Big(E+\frac{1}{\beta}\ln\frac{\lambda}{1-\lambda}\Big)|1\rangle\langle 1|,
\end{equation*}
with $E$ being some arbitrary reference energy.
The Hamiltonian $H_{\tilde{B}}$ is constructed in such a way that $G_{\beta}(H_{\tilde{B}}) = \lambda|0\rangle\langle 0|  +(1-\lambda)|1\rangle\langle 1|$.
We furthermore define $\mathcal{T}_{\tilde{H}}$ as the transpose with respect to $\{|0\rangle,|1\rangle\}$. On $\mathcal{H}_{B'}:= \mathcal{H}_{B0}\otimes\mathcal{H}_{B1}\otimes\mathcal{H}_{\tilde{B}}$ we define
\begin{equation*}
\begin{split}
H_{B'} := & H_{B0}\otimes \hat{1}_{B1}\otimes\hat{1}_{\tilde{B}} +  \hat{1}_{B0}\otimes H_{B1}\otimes \hat{1}_{\tilde{B}} \\
&  + \hat{1}_{B0}\otimes \hat{1}_{B1}\otimes H_{\tilde{B}}
\end{split}
\end{equation*}
and $\mathcal{T}_{B'}:=\mathcal{T}_{B0}\otimes\mathcal{T}_{B1}\otimes\mathcal{T}_{\tilde{B}}$.
We let the global Hamiltonian be $H_{\mathrm{tot}}  :=   H\otimes \hat{1}_{B0}\otimes \hat{1}_{B1}\otimes\hat{1}_{\tilde{B}} + \hat{1}\otimes H_{B'}$, and the global unitary be $V := V_0\otimes \hat{1}_{B1}\otimes |0\rangle_{\tilde{B}}\langle 0| + V_1\otimes \hat{1}_{B0}\otimes |1\rangle_{\tilde{B}}\langle 1|$.  
One can confirm that $\mathcal{T}_{B'}(H_{B'}) = H_{B'}$,
 $[\mathcal{T}\otimes\mathcal{T}_{B'}](H_{\mathrm{tot}}) = H_{\mathrm{tot}}$, $[\mathcal{T}\otimes\mathcal{T}_{B'}](V) = V$, and $[H_{\mathrm{tot}},V] = 0$, as well as $\Tr_{B'}(V[\rho\otimes G(H_{B'})]V^{\dagger}) = \lambda \mathcal{F}_0(\rho) + (1-\lambda)\mathcal{F}_1(\rho)$. Hence, the convex combination is a time-reversal symmetric thermal operation. 
\end{proof}

\subsection{\label{SecGlobalChnlIdeal}Conditional fluctuation relations again}

Here we consider the counterpart of the conditional fluctuation relations in Appendix \ref{SecCondWithTimeRev}, but where we,  instead of a time-reversal symmetric energy conserving global unitary evolution with a heat bath, assume that the evolution on $SCE$ is determined by  channels that satisfy (\ref{ReductionHeatBath}). Analogous to Appendix \ref{SecApproximateFluct}, here we use the `anonymous' division of $SCE$ into two subsystems $1$ and $2$.
\begin{Assumptions}
\label{IdealChannelDef}
Let $\mathcal{H}_{1}$ and $\mathcal{H}_2$ be complex Hilbert spaces. Let $\mathcal{T}_{1}$ and $\mathcal{T}_2$ be time-reversals on $1$ and $2$, respectively, and let $\mathcal{T}_{SCE} := \mathcal{T}_{1}\otimes\mathcal{T}_2$.
\begin{itemize}
\item Let $H_{1}$ and $H_2$ be Hermitian operators on $\mathcal{H}_{1}$ and $\mathcal{H}_2$, respectively, and let
\begin{equation}
H_{SCE} := H_{1}\otimes \hat{1}_2 + \hat{1}_1\otimes H_{2}.
\end{equation}
\item Let $\mathcal{F}$ be a channel  such that 
\begin{equation}
\label{flbmmbdf}
\mathcal{F}\mathcal{J}_{\beta H_{SCE}}=  \mathcal{J}_{\beta H_{SCE}}\mathcal{F}^{\ominus}, 
\end{equation}
where $\ominus$ is defined with respect to $\mathcal{T}_{SCE}$.
\item Let $Q^{i+}_{2}$ and $Q^{f+}_{2}$ be operators on $\mathcal{H}_{2}$ such that $0\leq Q^{i+}_{2}\leq \hat{1}_{2}$ and $0\leq Q^{f+}_{2}\leq \hat{1}_{2}$.
\item Let $\mathcal{T}_1(H_1) = H_1$, $\mathcal{T}_{2}(H_{2}) = H_{2}$. Define $Q^{i-}_{2} := \mathcal{T}_{2}(Q^{i+}_{2})$ and $Q^{f-}_{2} := \mathcal{T}_{2}(Q^{f+}_{2})$.
\end{itemize}
\end{Assumptions}
One may note that the above assumptions imply that $\mathcal{T}_{SCE}(H_{SCE}) = H_{SCE}$.

With a channel $\mathcal{F}$ on the joint system $12$, we define the following CPMs on system $1$
\begin{equation}
\label{mdsflfmsb}
\begin{split}
\tilde{\mathcal{F}}_{+}(\sigma) = & \Tr_{2}\Big([\hat{1}_{1}\otimes Q^{f+}_{2}]\mathcal{F}\big(\sigma\otimes\mathcal{G}_{\beta H_{2}}(Q^{i+}_2) \big)\Big),\\
\tilde{\mathcal{F}}_{-}(\sigma) = & \Tr_{2}\Big([\hat{1}_{1}\otimes Q^{i-}_{2}]\mathcal{F}\big(\sigma\otimes \mathcal{G}_{\beta H_{2}}(Q^{f-}_2) \big)\Big).
\end{split}
\end{equation}

The following is the counterpart of Proposition \ref{PropConditionalFluctuationThm}.
\begin{Proposition}
\label{PropIdealGlbChnl}
With Assumptions \ref{IdealChannelDef}, the CPMs $\tilde{\mathcal{F}}_{+}$ and $\tilde{\mathcal{F}}_{-}$ as defined in Eq.~(\ref{mdsflfmsb}) are related as 
\begin{equation}
\label{fbfbbfs}
\mathcal{Z}_{\beta H_{2}}(Q^{i}_{2})  \tilde{\mathcal{F}}_{+} \mathcal{J}_{\beta H_{1}} = \mathcal{Z}_{\beta H_{2}}(Q^{f}_{2})\mathcal{J}_{\beta H_{1}}\tilde{\mathcal{F}}^{\ominus}_{-}.
\end{equation}
\end{Proposition}
Here we again have made use of Lemma \ref{LemmaSimpleRelations} to define $\mathcal{Z}_{\beta H_{2}}(Q_{2}^{i}):=  \mathcal{Z}_{\beta H_{2}}(Q_{2}^{i+})=  \mathcal{Z}_{\beta H_{2}}(Q_{2}^{i-})$ and $\mathcal{Z}_{\beta H_{2}}(Q_{2}^{f}):=  \mathcal{Z}_{\beta H_{2}}(Q_{2}^{f+})=  \mathcal{Z}_{\beta H_{2}}(Q_{2}^{f-})$. The proof of Proposition \ref{PropIdealGlbChnl} is obtained if one first observes the relation
\begin{equation*}
\begin{split}
\mathcal{Z}_{\beta H_{2}}(Q^{f}_2)\tilde{\mathcal{F}}_{-}^{\ominus}(Q) 
= &  \Tr_{2}\big( [\hat{1}_1\otimes \mathcal{J}_{\beta H_{2}}(Q^{f+}_2)]\mathcal{F}^{\ominus}(Q\otimes Q^{i+}_{2})\big)\\
\end{split}
\end{equation*}
and uses this together with (\ref{flbmmbdf}), the definition (\ref{mdsflfmsb}), and $\mathcal{J}_{\beta H_{SCE}} = \mathcal{J}_{\beta H_1}\otimes\mathcal{J}_{\beta H_2}$.

\subsection{\label{GlobalFlctnThmApproximate}An approximate version}
Analogous to what we did in Appendix \ref{SecApproximateFluct} we can use the assumption of the global fluctuation relation (\ref{ReductionHeatBath}) in order to derive approximate fluctuation relations.

\begin{Proposition}
\label{GlobalApprox}
Let $H^{i}_1$, $H^{f}_1$ be Hermitian operators on the complex Hilbert space $\mathcal{H}_1$ with a time reversal $\mathcal{T}_1$, and let $H^{i}_2$, $H^{f}_2$ be Hermitian operators on the complex Hilbert space $\mathcal{H}_2$ with time-reversal $\mathcal{T}_2$. $H$ is a Hermitian operator on $\mathcal{H}_1\otimes\mathcal{H}_2$. Let $\mathcal{T} := \mathcal{T}_1\otimes\mathcal{T}_2$, $\mathcal{T}_1$, $\mathcal{T}_2$ satisfy the conditions (\ref{CondHandT}). Let $\mathcal{F}$ be a channel on $\mathcal{H}_1\otimes\mathcal{H}_2$ such that $\mathcal{F}\mathcal{J}_{\beta H} = \mathcal{J}_{\beta H}\mathcal{F}^{\ominus}$. Then 
\begin{equation}
\label{GlobalApproxFlctnRel}
\begin{split}
\Big\vert & \mathcal{Z}_{\beta H^i_1}(Q^{i}_1)\mathcal{Z}_{\beta H^i_2}(Q^{i}_2)P^{\mathcal{F}}_{\beta H^i}[Q^{i+}_1 \otimes Q^{i+}_2\rightarrow Q^{f+}_1 \otimes Q^{f+}_2]\\
& -  \mathcal{Z}_{\beta H^f_1}(Q^{f}_1)\mathcal{Z}_{\beta H^f_2}(Q^{f}_2)P^{\mathcal{F}}_{\beta H^f}[Q^{f-}_1 \otimes Q^{f-}_2\rightarrow Q^{i-}_1 \otimes Q^{i-}_2]\Big\vert \\ 
& \leq  \Vert Q^{f+}_1\Vert \Vert Q^{f+}_2\Vert d^{H}_{H^i_1,H^i_2}(Q^{i+}_1, Q^{i+}_2)\\
& \quad +  \Vert Q^{i+}_1\Vert  \Vert Q^{i+}_2\Vert d^{H}_{H^f_1,H^f_2}(Q^{f+}_1, Q^{f+}_2),
\end{split}
\end{equation}
where $H^{i}:= H^{i}_1\otimes \hat{1}_2 + \hat{1}_1\otimes H^{i}_2$, and $H^{f}:= H^{f}_1\otimes \hat{1}_2 + \hat{1}_1\otimes H^{f}_2$, and where $Q^{i-}_1 :=\mathcal{T}_1(Q^{i+}_1)$, $Q^{i-}_2 :=\mathcal{T}_2(Q^{i+}_2)$, $Q^{f-}_1 :=\mathcal{T}_1(Q^{f+}_1)$, $Q^{f-}_2 :=\mathcal{T}_2(Q^{f+}_2)$, and where $d^{H}_{H_1,H_2}$ is as defined in (\ref{Defdidf}).
\end{Proposition}
The proof proceeds analogously to that of Proposition \ref{ApproxFlctnRel} in Appendix \ref{SecApproximateFluct}, but with the unitary channel $V\cdot V^{\dagger}$ substituted with the channel $\mathcal{F}$. The analogue of (\ref{vnaknd}) is proved via (\ref{flbmmbdf}). For the further derivations one can use the general relation $\Vert \mathcal{F}(X)\Vert_1 \leq \Vert X\Vert_1$ for channels $\mathcal{F}$, which can be found in \cite{Raginsky02}.

\section{\label{FlctnThrmsForMstrEqs}Implicit heat baths: Fluctuation relations for Markovian master equations}

In Appendix \ref{SecTurningTables} we introduced the generalization where we model the effect of the heat bath by assuming that the induced quantum channel satisfies a global fluctuation relation. In this section we re-express this generalization in terms of master equations.

Suppose that we have a (sufficiently smooth) family of completely positive maps $\{\mathcal{F}_t\}_{t\geq 0}$ that satisfies the master equation
$\frac{d}{dt}\mathcal{F}_t = \mathcal{L}\mathcal{F}_t$, $\mathcal{F}_{0} = \mathcal{I}$,
for some generator $\mathcal{L}$. 
If each $\mathcal{F}_t = e^{t\mathcal{L}}$  satisfies $\mathcal{F}_t\mathcal{J}_{\beta H}=  \mathcal{J}_{\beta H}\mathcal{F}_t^{\ominus}$,
 then, by differentiation at $t =0$, it follows that 
\begin{equation}
\label{GeneratorRelation}
\mathcal{L}\mathcal{J}_{\beta H}=  \mathcal{J}_{\beta H}\mathcal{L}^{\ominus}.
\end{equation} 
 Vice versa, if $\mathcal{L}$ satisfies (\ref{GeneratorRelation}), then the family of channels $\mathcal{F}_{t} = e^{t\mathcal{L}}$ satisfies $\mathcal{F}_t\mathcal{J}_{\beta H}=  \mathcal{J}_{\beta H}\mathcal{F}_t^{\ominus}$, which can be shown by repeatedly applying (\ref{GeneratorRelation}) to the components in the Taylor expansion of the exponential.  
Hence, if a generator satisfies (\ref{GeneratorRelation}), then we can apply the fluctuation relations developed in Appendix  \ref{SecTurningTables}  to the resulting channels $\mathcal{F}_{t} = e^{t\mathcal{L}}$.

\subsection{\label{ExamplesSatisfyingCond}Examples}
Here we consider a few examples of generators that satisfy the condition  (\ref{GeneratorRelation}).

\subsubsection{\label{SecModelForThrmls}A model of thermalization}

Assume that $\mathcal{T}$ is the transpose with respect to an energy eigenbasis $\{|k\rangle\}_k$ of $H$, with corresponding eigenvalues $E_k$.
Assume that the Lindbladians in (\ref{nvsfklbnmain}) are given by 
\begin{equation}
\label{ThermalizingLindbladian}
L_{k',k} = \sqrt{r(k'|k)}|k'\rangle \langle k|,
\end{equation}
where $r(k'|k)\geq 0$, thus resulting in the generator 
\begin{equation}
\label{dvmvmv}
\begin{split}
\mathcal{L}(\rho) =&  -\frac{i}{\hbar}[H,\rho] \\
& +  \sum_{k',k}r(k'|k) |k'\rangle \langle k|\rho |k\rangle\langle k'|  \\
& -\frac{1}{2}\sum_{k',k}r(k'|k)|k\rangle\langle k|\rho  -\frac{1}{2}\sum_{k',k}r(k'|k)\rho|k\rangle\langle k|.
\end{split}
\end{equation}
This is a special case of  the Davies generators \cite{Davies74,Roga10,Temme13}, where one may note that the evolution of the diagonal elements $p_{l} := \langle l|\rho|l\rangle$ are decoupled from the off-diagonal elements, and satisfy the classical master equation $\frac{d}{dt}p_{l} =  \sum_{k}r(l|k)p_k  -\sum_{k}r(k|l)p_l$.

 One can furthermore show that 
\begin{equation*}
\begin{split}
& \mathcal{L}\mathcal{J}_{\beta H}(Q)-  \mathcal{J}_{\beta H}
\mathcal{L}^{\ominus}(Q) \\
& =        \sum_{k',k} \big[ r(k'|k)e^{-\beta E_{k}}-  r(k|k')  e^{-\beta E_{k'}}  \big] |k'\rangle \langle k|Q |k\rangle \langle k'|.
\end{split}
\end{equation*}
Hence, if the rates $r(k'|k)$ of the classical master equation satisfy detailed balance,
\begin{equation}
\label{ClassicalDetailedBalance}
r(k'|k)e^{-\beta E_{k}} =  r(k|k')e^{-\beta E_{k'}},
\end{equation}
then $\mathcal{L}$ satisfies (\ref{GeneratorRelation}).

A special case of (\ref{ThermalizingLindbladian}), with $r(k'|k) = rG_{\beta}(H)_{k'}$ for all $k',k$ yields 
\begin{equation}
\label{nsdlvvd}
\mathcal{L}(\rho) =  -\frac{i}{\hbar}[H,\rho]  +r G_{\beta}(H)\Tr(\rho)- r \rho. 
\end{equation}

\subsubsection{\label{SecModelDecoherence}A model of decoherence}

Suppose that we have a Hamilton operator $H$ such that $\mathcal{T}(H) = H$, and a collection of observables $D_k$ that commute with $H$, and are time-reversal symmetric, i.e., $D_k^{\dagger} = D_k$, $[H,D_k] = 0$, and $\mathcal{T}(D_k) = D_k$.  If we take $D_k$ as the Lindblad operators in (\ref{nvsfklbnmain}), then for a non-degenerate  $H$, the resulting master equation is such that the off-diagonal elements of the density operator in the energy eigenbasis decay exponentially, while the diagonal elements remain invariant. 
Moreover it is the case that $\mathcal{L}^{\ominus} = \mathcal{L}$ and $[\mathcal{J}_{\beta H},\mathcal{L}] = 0$, which implies that (\ref{GeneratorRelation}) is satisfied.
A special case is a single spin-half particle with Hamiltonian $H := \frac{1}{2}E\sigma_z$, and a single Lindblad operator $D = \sqrt{r}\sigma_z$ for some $r\geq 0$.  In this case the generator takes the form $\mathcal{L}(\rho) = -i[H,\rho]/\hbar + r\sigma_z\rho \sigma_z -r\rho$.

\subsubsection{Thermalizing harmonic oscillator}

Consider a bosonic mode with annihilation and creation operators $a$ and $a^{\dagger}$, and Hamiltonian $H :=\hbar\omega(a^{\dagger}a + \frac{1}{2}\hat{1})$ for $\omega>0$. One can model the thermalization of this mode by the following generator (see e.g.~\cite{Ford96})  with $\Gamma>0$
\begin{equation}
\label{mgklbklfgm}
\begin{split}
\mathcal{L}(\rho) = & -\frac{i}{\hbar}[\hbar\omega a^{\dagger}a,\rho]\\
&  + \Gamma(1+n_B)(a\rho a^{\dagger} -\frac{1}{2}a^{\dagger}a\rho -\frac{1}{2}\rho a^{\dagger}a) \\
& +\Gamma n_B (a^{\dagger}\rho a -\frac{1}{2}aa^{\dagger}\rho -\frac{1}{2}\rho aa^{\dagger}),
\end{split}
\end{equation}
where $n_B = \frac{1}{e^{\beta\hbar\omega}-1}$ is the average number of quanta in the thermal state of the mode. Hence, in this case we have the two Lindblad operators $L_1 := \sqrt{\Gamma(1+n_B)}a$ and $L_2 := \sqrt{\Gamma n_B}a^{\dagger}$.

If we assume that $\mathcal{T}$ is the transpose with respect to the number basis of the mode, then $\mathcal{T}(a) = a^{\dagger}$, and one can confirm that (\ref{mgklbklfgm}) satisfies (\ref{GeneratorRelation}), where the relations 
$ae^{-\beta \omega a^{\dagger}a/2} =  e^{-\beta \omega/2} e^{-\beta\omega a^{\dagger}a/2}a$, 
$e^{-\beta \omega a^{\dagger}a/2} a^{\dagger} =  e^{-\beta \omega/2}a^{\dagger} e^{-\beta\omega a^{\dagger}a/2}$, and $e^{-\beta\hbar\omega} = n_B/(1 + n_B)$ are useful.

\subsection{\label{SecAssemblingGenerators} Composing generators that satisfy (\ref{GeneratorRelationMain})}

In Appendix \ref{SecImplyGibbs} we found that if a channel $\mathcal{F}$ satisfies (\ref{ReductionHeatBath}), then $\mathcal{F}$ is Gibbs preserving (see Lemma \ref{LemmaFixpointCHannel}). The following lemma provides the corresponding statement for generators.
\begin{Lemma}
\label{FixPointOfL}
Let $H$  be a Hermitian operator on the complex Hilbert space $\mathcal{H}$. If $\mathcal{L}$ satisfies (\ref{GeneratorRelation}), and moreover is such that $\mathcal{L}^{*}(\hat{1}) = 0$, then $\mathcal{L}(e^{-\beta H}) = 0$. Hence, if $Z_{\beta}(H)$ is finite, then  $\mathcal{L}\big(G_{\beta}(H)\big) = 0$.
\end{Lemma}
One may note that the condition $\mathcal{L}^{*}(\hat{1}) = 0$ is another way of saying that $\Tr\mathcal{L}(\rho) = 0$, which means that the generator yields a trace preserving evolution, which by construction is satisfied by generators on the Lindblad  form (\ref{nvsfklbnmain}).

Although a direct consequence of linearity, the following lemma is convenient since it can be applied even if the two generators $\mathcal{L}_a$ and $\mathcal{L}_b$ do not commute, and it thus may be difficult to evaluate $e^{t(\mathcal{L}_a +\mathcal{L}_b)}$, even if we can evaluate $e^{t\mathcal{L}_a}$ and $e^{t\mathcal{L}_b}$ separately. (An example of such non-commuting generators can be found in Appendix \ref{DissipativeSpins}.)
\begin{Lemma}
\label{LemmaAdditionGenerators}
Let $H$  be a Hermitian operator on the complex Hilbert space $\mathcal{H}$. Let $\mathcal{L}_a$ and $\mathcal{L}_b$ be linear maps such that $\mathcal{L}_a\mathcal{J}_{\beta H} =  \mathcal{J}_{\beta H}\mathcal{L}_a^{\ominus}$ and 
$\mathcal{L}_b\mathcal{J}_{\beta H} =  \mathcal{J}_{\beta H}\mathcal{L}_b^{\ominus}$.
Then $\mathcal{L}: = \mathcal{L}_a +\mathcal{L}_b$ satisfies $\mathcal{L}\mathcal{J}_{\beta H} =  \mathcal{J}_{\beta H}\mathcal{L}^{\ominus}$.
\end{Lemma}

\begin{Lemma}
\label{LemmaClosedEvolution}
Let $\tilde{H}$ be a Hermitian operator, and let $\mathcal{T}$ be a time-reversal such that $\mathcal{T}(\tilde{H}) = \tilde{H}$. Let $H$ be a Hermitian operator such that $[H,\tilde{H}] = 0$, then  $\mathcal{L}_{\tilde{H}}\mathcal{J}_{\beta H} = \mathcal{J}_{\beta H}\mathcal{L}^{\ominus}_{\tilde{H}}$, where $\mathcal{L}_{\tilde{H}}(\rho) := -i[\tilde{H},\rho]/\hbar$, for all operators $\rho$.
\end{Lemma}

The following proposition can be used to `glue'  generators on different subsystems via an interaction Hamiltonian in such a way that the global generator satisfies (\ref{GeneratorRelation}). 
\begin{Proposition}
\label{GlueGenerators}
Let $H_1$ and $H_2$ be Hermitian operators on the complex Hilbert spaces $\mathcal{H}_1$ and $\mathcal{H}_2$, respectively. Define $H := H_1\otimes\hat{1}_2 + \hat{1}_1\otimes H_2$. 
Let $\mathcal{T} := \mathcal{T}_1\otimes\mathcal{T}_2$ where $\mathcal{T}_1$ and $\mathcal{T}_2$ are time-reversals on $\mathcal{H}_1$ and $\mathcal{H}_2$, respectively. Let $\beta \geq 0$. Let $\mathcal{L}_1$ and $\mathcal{L}_2$ be generators on $\mathcal{H}_1$ and $\mathcal{H}_2$, respectively, such that 
\begin{equation}
\label{nfdjkfjkgd}
\begin{split}
\mathcal{L}_1\mathcal{J}_{\beta H_1} = & \mathcal{J}_{\beta H_1}\mathcal{L}^{\ominus}_1,\quad 
\mathcal{L}_2\mathcal{J}_{\beta H_2} =  \mathcal{J}_{\beta H_2}\mathcal{L}^{\ominus}_2.
\end{split}
\end{equation}
Let $H_{\mathrm{int}}$ be a Hermitian operator on $\mathcal{H}_1\otimes\mathcal{H}_2$ such that $[H, H_{\mathrm{int}}] = 0$, and $\mathcal{T}(H_{\mathrm{int}}) = H_{\mathrm{int}}$, and define $\mathcal{L}_{\mathrm{int}}(\rho) := -i[H_{\mathrm{int}},\rho]/\hbar$, for all operators $\rho$.
Then $\mathcal{L}:= \mathcal{L}_1\otimes \mathcal{I}_2 + \mathcal{I}_1\otimes\mathcal{L}_2 + \mathcal{L}_{\mathrm{int}}$ is such that $\mathcal{L}\mathcal{J}_{\beta H} = \mathcal{J}_{\beta H}\mathcal{L}^{\ominus}$.
\end{Proposition}

\begin{proof}
We first note that (\ref{nfdjkfjkgd}) implies $[\mathcal{L}_1\otimes \mathcal{I}_2]\mathcal{J}_{\beta H} = \mathcal{J}_{\beta H}[\mathcal{L}_1\otimes \mathcal{I}_2]^{\ominus}$, with the analogous statement for $\mathcal{I}_1\otimes \mathcal{L}_2$.
 By Lemma \ref{LemmaAdditionGenerators} we can conclude that $\tilde{\mathcal{L}} := \mathcal{L}_1\otimes \mathcal{I}_2 + \mathcal{I}_1\otimes\mathcal{L}_2$ satisfies $\tilde{\mathcal{L}}\mathcal{J}_{\beta H} = \mathcal{J}_{\beta H}\tilde{\mathcal{L}}^{\ominus}$. Since $[H, H_{\textrm{int}}] = 0$ and $\mathcal{T}(H_{\textrm{int}}) = H_{\textrm{int}}$ it follows by Lemma \ref{LemmaClosedEvolution} that
$\mathcal{L}_{\textrm{int}}\mathcal{J}_{\beta H} = \mathcal{J}_{\beta H}\mathcal{L}^{\ominus}_{\textrm{int}}$.
We can thus again use Lemma \ref{LemmaAdditionGenerators} to conclude that $\mathcal{L} = \tilde{\mathcal{L}} + \mathcal{L}_{\textrm{int}}$ satisfies $\mathcal{L}\mathcal{J}_{\beta H} = \mathcal{J}_{\beta H}\mathcal{L}^{\ominus}$. 
\end{proof}

\begin{Corollary}
\label{CorGlue}
With the assumptions as in Proposition \ref{GlueGenerators}, and if moreover 
$\mathcal{T}_1(H_1) = H_1$ and $\mathcal{T}_2(H_2) = H_2$, then all the conditions  in Assumptions \ref{IdealChannelDef} are satisfied for each channel  $\mathcal{F}_t := e^{t\mathcal{L}}$, $t \geq 0$, and  thus Proposition \ref{PropIdealGlbChnl} is applicable.
\end{Corollary}

One may observe that by Lemma \ref{FixPointOfL} it follows that the generator $\mathcal{L}$ in Proposition \ref{GlueGenerators} has $G_{\beta}(H_1)\otimes G_{\beta}(H_2)$ as a fixpoint, even though there is an interaction term. In the general case $G_{\beta}(H_1\otimes\hat{1}_2 + \hat{1}_1\otimes H_2 + H_{\mathrm{int}})$ would not be a fixpoint; an example being the generator in Appendix \ref{DissipativeSpins}.

\subsection{\label{SecDecouplingDiagAgain}Decoupling of diagonals again}

In  Appendices \ref{DecouplingDiagonals} and \ref{SecDiagonalMeasuremnts} we discussed the decoupling of diagonals, or modes of coherence. In the following section we show that the condition (\ref{GeneratorRelationMain}) on the generators is not strong enough to guarantee such decompositions. In section \ref{secRegainDecoupl} we discuss a sufficient condition for regaining the decoupling.

\subsubsection{An example}

Consider a Hamiltonian $H$ on  a finite-dimensional Hilbert space with eigenvalues $E_n$ and eigenbasis $|n\rangle$. For the sake of convenience we introduce the notation $\mathcal{C}_H(\rho) := [H,\rho]$.
In the following we say that a map $\mathcal{F}$ satisfies the mode decomposition with respect to $H$, if $\langle n|\mathcal{F}(|m\rangle\langle m'|)|n'\rangle = 0$ for every $n,n',m,m'$ such that $E_{m}-E_{m'}\neq E_{n}-E_{n'}$. (If $H$ in addition is non-degenerate, then the decomposition implies that diagonal elements can only be mapped to diagonal elements.)

Let $\mathcal{L}$ be a generator, and let $\mathcal{F}_t :=e^{t\mathcal{L}}$. If $\mathcal{F}_t$ satisfies the mode-decomposition with respect to $H$ for all $t\geq 0$, then it follows that $\mathcal{C}_H\mathcal{L}  = \mathcal{L}\mathcal{C}_H$. (To see this, one can first note that the mode decomposition implies that $\mathcal{C}_H\mathcal{F}_t = \mathcal{F}_t\mathcal{C}_H$ for each $t\geq 0$. By differentiation at $t=0$, one obtains $\mathcal{C}_H\mathcal{L}  = \mathcal{L}\mathcal{C}_H$.) By negation if follows that if $\mathcal{C}_H\mathcal{L}  \neq \mathcal{L}\mathcal{C}_H$, then there must exist some $t\geq 0$ for which $\mathcal{F}_t$ fails to satisfy the mode-decomposition.

For every channel $\Phi$ it is the case that $\mathcal{L}:= \Phi-\mathcal{I}$ is a Lindblad generator. (To see this, take the operators in a Kraus representation of $\Phi$ as the Lindblad operators of $\mathcal{L}$.) 
Recall the class of channels (\ref{QCQ}) defined in Appendix \ref{SecExampleBeyondThermal}. Since every such channel $\Phi$ satisfies (\ref{ReductionHeatBath}) we can conclude that $\mathcal{L}:=\Phi-\mathcal{I}$ satisfies (\ref{GeneratorRelationMain}).
For $A_k := |\psi_k\rangle\langle\psi_k|$, where $\{|\psi_k\rangle\}_k$ is some orthonormal basis,  one can confirm that $[\mathcal{C}_H\Phi-\Phi\mathcal{C}_H](\hat{1}) = -\mathcal{J}_{\beta H}\mathcal{T}(R)$, where
$R := [H,\sum_{k}|\psi_k\rangle\langle \psi_k|/\langle\psi_k|e^{-\beta H}|\psi_k\rangle]$.
Since both $\mathcal{J}_{\beta H}$ and $\mathcal{T}$ are invertible, it follows that a non-zero $R$ implies $\mathcal{C}_H\Phi\neq \Phi\mathcal{C}_H$.
 One can convince oneself that if $H$ is non-degenerate, then it is indeed possible to find an orthonormal basis $\{|\psi_k\rangle\}_k$ such that $R\neq 0$. Consequently both the resulting channel $\Phi$ and the generator $\mathcal{L}:=\Phi-\mathcal{I}$ fail to commute with $\mathcal{C}_H$. By the reasoning above, we know that there must exist some time $t\geq 0$ such that $\mathcal{F}_t:=e^{t\mathcal{L}}$ fails to satisfy the mode decomposition, although $\mathcal{L}$ satisfies (\ref{GeneratorRelationMain}).

The above construction provides an example of a channel that fails the decoupling. However, in section \ref{SecMainJCwithDissipation}, we primarily focus on the decoupling of the conditional CPMs $\tilde{F}_{\pm}$ defined in (\ref{euezuie}). To construct an example also for this case, take the above $\mathcal{L}$ as the generator for $E$, and make the trivial extension $\mathcal{L}_{SCE}:= \mathcal{L}\otimes \mathcal{I}_{SC}$, and let $H_{SCE}:= H\otimes\hat{1}_{SC} + \hat{1}\otimes H_{SC}$, for some arbitrary Hamiltonian $H_{SC}$ with $\mathcal{T}_{SC}(H_{SC}) = H_{SC}$. We can conclude that $\mathcal{L}_{SCE}$ satisfies (\ref{GeneratorRelationMain}) with respect to $H_{SCE}$, while the conditional CPMs $\tilde{F}_{\pm}$ generally fail to decouple.

\subsubsection{\label{secRegainDecoupl}Regaining the decoupling}
If $\mathcal{L}$ is such that $\mathcal{C}_H\mathcal{L} = \mathcal{L}\mathcal{C}_H$, where $\mathcal{C}_H(\rho) := [H,\rho]$, then the evolution $\mathcal{F}_t:=e^{t\mathcal{L}}$ is time-translation symmetric \cite{Lostaglio17}, i.e.,  $e^{is\mathcal{C}_H}\mathcal{F}_t =\mathcal{F}_t e^{is\mathcal{C}_H}$ for all $t\geq 0$ and all $s$, or equivalently $e^{-isH}\mathcal{F}_{t}(\rho)e^{isH} = \mathcal{F}_t(e^{-isH}\rho e^{is H})$ for all $\rho$, $t\geq 0$ and $s$.

Let us now assume that $H:= H_1\otimes\hat{1}_2 +\hat{1}_1\otimes H_2$, and define the conditional CPM $\tilde{\mathcal{F}}_{+}(\rho):= \Tr_2\big([\hat{1}\otimes Q^{f+}_2]\mathcal{F}_t(\rho\otimes\mathcal{G}_{\beta H_2}(Q^{i+}_{2}))\big)$, for some measurement operators $Q^{f+}_2$, $Q^{i+}_2$. With the assumptions $\mathcal{C}_H\mathcal{L} = \mathcal{L}\mathcal{C}_H$, $[Q^{f+}_2,H_2]=0$, and $[Q^{i+}_2,H_2] = 0$, it follows that $\mathcal{C}_{H_1}\tilde{\mathcal{F}}_{+} = \tilde{\mathcal{F}}_{+}\mathcal{C}_{H_1}$. (One can prove this by using $e^{-isH}\mathcal{F}_{t}(\rho)e^{isH} = \mathcal{F}_t(e^{-isH}\rho e^{is H})$ in the definition of $\tilde{\mathcal{F}}_{+}$, and differentiate at $s=0$.) By the same reasoning as in Appendix \ref{DecouplingDiagonals} one can conclude that the CPM $\tilde{\mathcal{F}}_{+}$ decouples along the modes of coherence.

It is straightforward to establish the counterpart to the gluing of generators discussed in Appendix \ref{SecAssemblingGenerators}. Suppose that $\mathcal{C}_{H_1}\mathcal{L}_1 = \mathcal{L}_1\mathcal{C}_{H_1}$, $\mathcal{C}_{H_2}\mathcal{L}_2 = \mathcal{L}_2\mathcal{C}_{H_2}$, and $[H_{\mathrm{int}},H_1\otimes \hat{1}_2 + \hat{1}_1\otimes H_2] = 0$. One can confirm that $\mathcal{C}_{H_1\otimes \hat{1}_2 + \hat{1}_1\otimes H_2}\mathcal{L} = \mathcal{L}\mathcal{C}_{H_1\otimes \hat{1}_2 + \hat{1}_1\otimes H_2}$, where $\mathcal{L}:= \mathcal{L}_1\otimes\mathcal{I}_2 + \mathcal{I}_1\otimes\mathcal{L}_2 + \mathcal{L}_{\mathrm{int}}$, for $\mathcal{L}_{\mathrm{int}}(\rho):= -i[H_{\mathrm{int}},\rho]/\hbar$.

\subsection{\label{SecGeneratorThrmOp}Generators of thermal operations and time-reversal symmetric thermal operations}

As mentioned in Section \ref{SecMainGenThrmOp}, a drawback with relying on the condition (\ref{GeneratorRelationMain}) is that it is unclear what it implies concerning the evolution of resources.  Here we develop some tools that enable us to show that a generator yields thermal operations. We also find a class of generators that results in time-reversal symmetric thermal operations.

\subsubsection{Generators of thermal operations}

Recall that we defined thermal operations in Appendix \ref{SecTimeRevSymmThrmOp}.
\begin{Lemma}
\label{ThermOpComposition}
Let $\mathcal{F}_1$ and $\mathcal{F}_2$  be two thermal operations with respect to the same Hamiltonian $H$ and  the same $\beta$. Then  the composition $\mathcal{F}_2\circ\mathcal{F}_1$ is also a thermal operation corresponding to $H$ and $\beta$.
\end{Lemma}
\begin{proof}
Since $\mathcal{F}_j$ is a thermal operation on the underlying Hilbert space $\mathcal{H}$, then there exists an ancillary Hilbert space $\mathcal{H}_{Bj}$, a corresponding Hamiltonian $H_{Bj}$, and  a unitary $U_{j}$ on $\mathcal{H}\otimes\mathcal{H}_{Bj}$ such that $[U_{j}, H\otimes \hat{1}_{Bj} + \hat{1}\otimes H_{Bj}] = 0$ and $\mathcal{F}_j(\rho) = \Tr_{Bj}(U_j[\rho\otimes G_{\beta}(H_{Bj})]U^{\dagger}_j)$. On the Hilbert space $\mathcal{H}\otimes\mathcal{H}_{B1}\otimes\mathcal{H}_{B2}$ we construct the unitary $U := [U_2\otimes \hat{1}_{B1}][U_{1}\otimes\hat{1}_{B2}]$, and the Hamiltonian $H_B :=  H_{B1}\otimes\hat{1}_{B2} + \hat{1}_{B1}\otimes H_{B2}$. One can confirm that $[U,H\otimes\hat{1}_{B1}\otimes\hat{1}_{B2}+ \hat{1}\otimes H_B] = 0$, and $\mathcal{F}_2\big(\mathcal{F}_1(\rho)\big) =\Tr_{B1,B2}(U[\rho\otimes G_{\beta}(H_B)]U^{\dagger})$. Thus $\mathcal{F}_2\circ\mathcal{F}_1$ is a thermal operation with respect to $H$ and $\beta$.
\end{proof}

For a given $\beta \geq 0$ and Hermitian operator $H$, we say that a linear map $\mathcal{L}$ is a generator of thermal operations with respect to $\beta$ and $H$, if $e^{t\mathcal{L}}$ is a thermal operation with respect to $\beta$ and $H$, for each $t\geq 0$.

A more or less direct consequence of Lemma  \ref{ThermOpComposition} is the following.
\begin{Lemma}
\label{LemmaCommutingGenerators}
If $\mathcal{L}_1$ and $\mathcal{L}_2$ are generators of thermal operations with respect to $H$ and $\beta\geq 0$, and if $[\mathcal{L}_1,\mathcal{L}_2] = 0$, then $\mathcal{L}_1+\mathcal{L}_2$ is also a generator of thermal operations with respect to $H$ and $\beta$.
\end{Lemma}

\subsubsection{\label{SecGenForThrmOp}An example}

Here we consider the class of generators (\ref{dvmvmv}) in Appendix \ref{SecModelForThrmls}, but restricted to the special case of a two-level system with $H := \frac{1}{2}E\sigma_z$. With the condition (\ref{ClassicalDetailedBalance}) there are three remaining free parameters $r_0,r_1,r\geq 0$, where
\begin{equation}
\begin{split}
& r(0|0) := r_0,\quad r(1|1) := r_1,\\
& r(1|0) := re^{-\beta E/2},\quad r(0|1) := re^{\beta E/2}.
\end{split}
\end{equation}
One can realize that the resulting generator $\mathcal{L}$ in (\ref{dvmvmv}) can be decomposed as $\mathcal{L} = \mathcal{L}_{H} + \mathcal{L}_{r} +\mathcal{L}_{r_0,r_1}$, where $\mathcal{L}_{H}(\rho) := -i[H,\rho]/\hbar$, 
\begin{equation*}
\begin{split}
 \mathcal{L}_{r}(\rho) :=&  re^{-\beta E/2} |1\rangle \langle 0|\rho |0\rangle\langle 1|+  re^{\beta E/2} |0\rangle \langle 1|\rho |1\rangle\langle 0|  \\
& -\frac{1}{2}re^{-\beta E/2}|0\rangle\langle 0|\rho   -\frac{1}{2}re^{-\beta E/2}\rho|0\rangle\langle 0|\\
& -\frac{1}{2} re^{\beta E/2}|1\rangle\langle 1|\rho   -\frac{1}{2} re^{\beta E/2}\rho|1\rangle\langle 1|,\\
\mathcal{L}_{r_0,r_1}(\rho) := & -\frac{r_0+r_1}{2}\big(|0\rangle\langle 0|\rho|1\rangle\langle 1|  +|1\rangle\langle 1|\rho |0\rangle\langle 0| \big).
\end{split}
\end{equation*}
The generator $\mathcal{L}_{r}$ yields the family of channels
\begin{equation*}
\begin{split}
 e^{t\mathcal{L}_{r}}(\rho) 
& =  G_{\beta}(H)\Tr(\rho)\\
& - \frac{1}{Z}e^{-rt Z} \sigma_z(e^{-\beta E/2}\langle 0|\rho|0\rangle - e^{\beta E/2} \langle 1|\rho|1\rangle)\\
 & + e^{ -\frac{t}{2} rZ}|1\rangle\langle 1|\rho|0\rangle\langle 0| +  e^{ -\frac{t}{2}rZ}|0\rangle\langle 0|\rho|1\rangle\langle 1|,
\end{split}
\end{equation*}
where $Z:= e^{\beta E/2} + e^{-\beta E/2}$. 
This family of channels can be obtained via an ancillary two-level system with Hamiltonian $H_B = \frac{1}{2}E\sigma_z$. One can confirm that the unitary operators
\begin{equation*} 
\begin{split}
V_t  & := |0\rangle\langle 0|\otimes |0\rangle\langle 0|   + |1\rangle\langle 1|\otimes |1\rangle\langle 1|\\ 
&+e^{-\frac{1}{2}rtZ}\big(|0\rangle\langle 0|\otimes |1\rangle\langle 1| + |1\rangle\langle 1|\otimes |0\rangle\langle 0|\big)\\
& + \sqrt{1-e^{-rtZ}}\big(|0\rangle\langle 1|\otimes |1\rangle\langle 0|  - |1\rangle\langle 0|\otimes |0\rangle\langle 1|\big)
\end{split}
\end{equation*}
satisfy $[V_t,H\otimes\hat{1}_B+ \hat{1}\otimes H_B] = 0$ and that  $ e^{t\mathcal{L}_{r}}(\rho) = \Tr_B(V_t[\rho \otimes G_{\beta}(H_B)]V_t^{\dagger})$. We can thus conclude that $\mathcal{L}_{r}$ is a generator of thermal operations.

Similarly it is the case that 
\begin{equation}
\begin{split}
& e^{t\mathcal{L}_{r_0,r_1}} =  (1-e^{ -\frac{t}{2}(r_0 + r_1)})\mathcal{D}    +e^{ -\frac{t}{2}(r_0 + r_1)}\mathcal{I}, \\
& \mathcal{D}(\rho) :=|0\rangle\langle 0|\rho|0\rangle \langle 0|  +|1\rangle\langle 1|\rho|1\rangle\langle 1|.
\end{split}
\end{equation}
In other words,  $e^{t\mathcal{L}_{r_0,r_1}} $ is a convex combination of the identity mapping and the pinching $\mathcal{D}$.
For an ancillary two-level system with a degenerate Hamiltonian $H_B  \propto \hat{1}_B$, the unitary operator $U = |0\rangle\langle 0|\otimes\sigma_x + |1\rangle\langle 1|\otimes \sigma_y$, is globally energy conserving, and $\mathcal{D}(\rho) = \Tr_{B}(U[\rho\otimes \frac{1}{2}\hat{1}_B]U^{\dagger})$. Hence, $\mathcal{D}$ is a thermal operation. Since $\mathcal{D}$ and $\mathcal{I}$ are thermal operations, it follows that their convex combination $e^{t\mathcal{L}_{r_0,r_1}}$
also is a thermal operation for each $t\geq 0$ (see  Appendix C in \cite{Lostaglio14b}). Hence, $\mathcal{L}_{r_0,r_1}$ is a generator of thermal operations.

One can furthermore confirm that $\mathcal{L}_H$, $\mathcal{L}_{r}$, and $\mathcal{L}_{r_0,r_1}$ commute with each other. Hence, we can apply Lemma \ref{LemmaCommutingGenerators}, which implies that $\mathcal{L}$ also is a generator of thermal operations.

It is plausible that the general class of generators in (\ref{dvmvmv}) that satisfy  (\ref{ClassicalDetailedBalance}) are generators of thermal operations. However, we will not consider this question here.

\subsubsection{Non-commuting generators of thermal operations}

In the following we wish to generalize the addition of generators to the case when these do not necessarily commute. The idea is that we shall use the Trotter decomposition, i.e., the relation $e^{t\mathcal{L}_1+t\mathcal{L}_2} = \lim_{n\rightarrow \infty}(e^{t\mathcal{L}_1/n}e^{t\mathcal{L}_2/n})^{n}$. By repeated applications of  Lemma \ref{ThermOpComposition} it follows that $(e^{t\mathcal{L}_1/n}e^{t\mathcal{L}_2/n})^{n}$ is a thermal operation for each $n$. However, it is not clear (at least not to the author) whether the set of thermal operations is closed, i.e., that the limit map  $e^{t\mathcal{L}_1+t\mathcal{L}_2}$ actually is a thermal operation, as we have defined them.
One may wonder whether it would not be possible to  define a `limit bath' for $n\rightarrow\infty$. However, one can realize that the use of Lemma  \ref{ThermOpComposition} in the proof of Proposition \ref {AddGeneratorsForThermalOp} entails an indefinitely increasing number of thermal ancillary systems as $n$ increases. Hence, the `limit object' that would realize the limit channel for this particular construction would correspond to an infinite tensor product. Here, we do not consider the rather technical issue of whether one can make sense of that limit object or not, but we leave this as an open question. 
 To be on the safe side, here we allow for the possibility that the set of thermal operations is not closed, and settle for maps that  generate channels in the closure. The closure is defined with respect to the choice of  norm in the following bound, which is obtained from Theorem 3 in \cite{Suzuki76}. 
\begin{equation}
\label{SuzukiBound}
\begin{split}
& \Vert e^{\sum_{j=1}^{p} A_j} -(e^{A_1/n}\cdots e^{A_p/n} )^{n}\Vert \\
& \quad\quad \leq \frac{2}{n}(\sum_{j=1}^{p}\Vert A_j\Vert)^2 e^{\frac{n+2}{n}\sum_{j=1}^{p}\Vert A_j\Vert},
\end{split}
\end{equation}
where $A_1,\ldots,A_p$ are bounded operators with respect to the norm $\Vert\cdot\Vert$ of some Banach algebra.

\begin{Proposition}
\label{AddGeneratorsForThermalOp}
Let $\mathcal{L}_1, \mathcal{L}_2,\ldots,\mathcal{L}_p$ be generators of thermal operations with respect to a Hermitian operator $H$ and $\beta \geq 0$.
If these generators are bounded with respect to a  norm $\Vert\cdot\Vert$, then $\mathcal{L}: = \sum_{j=1}\mathcal{L}_j$ is a generator of maps in the closure (with respect to $\Vert \cdot\Vert$) of the set of thermal operations with respect to  $H$ and $\beta$.
\end{Proposition}
\begin{proof}
We let $A_j :=  t\mathcal{L}_j$ in (\ref{SuzukiBound}). 
Since we assume that  $\Vert\mathcal{L}_1\Vert, \Vert\mathcal{L}_2\Vert,\ldots,\Vert\mathcal{L}_p\Vert$ are finite, if follows that the right hand side of  (\ref{SuzukiBound}) goes to zero as $n\rightarrow\infty$ for each fixed $t$. Hence, for each fixed $t$ it is the case that  
\begin{equation}
\label{bfndndfb}
\lim_{n\rightarrow\infty}\Vert e^{t\mathcal{L}} -(e^{t\mathcal{L}_1/n}\cdots e^{t\mathcal{L}_p/n} )^{n}\Vert =0. 
\end{equation}
Since $\mathcal{L}_1,\ldots,\mathcal{L}_p$ are assumed to be generators for thermal operations with respect to $H$ and $\beta$, it follows that $e^{t\mathcal{L}_1/n},\cdots,  e^{t\mathcal{L}_p/n}$ are all thermal operations. By Lemma \ref{ThermOpComposition} we can conclude 
 that $(e^{t\mathcal{L}_1/n}\cdots e^{t\mathcal{L}_p/n})^{n}$ is also a thermal operation with respect to $H$ and $\beta$. By (\ref{bfndndfb}) we know that for every neighborhood of $e^{t\mathcal{L}}$  (for a fixed $t$) with respect to $\Vert \cdot\Vert$, there exists an $n$ such that the thermal operation $(e^{t\mathcal{L}_1/n}\cdots e^{t\mathcal{L}_p/n})^{n}$ is in that neighborhood. 
We can conclude that for every $t\geq 0$, the channel  $e^{t\mathcal{L}}$ is in the closure of the set of thermal operations. Hence, $\mathcal{L}$ is a generator of maps in the closure of the thermal operations with respect to $H$ and $\beta$.
\end{proof}

Here we provide a general method to glue generators of thermal operations, much analogous to what Proposition \ref{GlueGenerators} does for generators that satisfy (\ref{GeneratorRelationMain}), but with the caveat that we can only prove that the resulting generator produces channels in the closure of the set of thermal operations. 
In the following we define  $\mathcal{L}_{\mathrm{int}}(\rho) := -i[H_{\mathrm{int}},\rho]$, for all operators $\rho$.
\begin{Proposition}
\label{GlueGeneratorsForThermalOp}
Let $H_1$ and $H_2$ be Hermitian operators on the complex Hilbert spaces $\mathcal{H}_1$ and $\mathcal{H}_2$, respectively. Let $\beta \geq 0$. Let $\mathcal{L}_1$ be a generator of thermal operations with respect to $H_1$ and $\beta$,  and let $\mathcal{L}_2$ be a generator of thermal operations with respect to  $H_2$ and $\beta$. Moreover, let $H_{\mathrm{int}}$ be a Hermitian operator such that $[H_{\mathrm{int}}, H_1\otimes\hat{1}_2 + \hat{1}_1\otimes H_2] =0$. Then $\mathcal{L}_1\otimes \mathcal{I}_2 + \mathcal{I}_1\otimes\mathcal{L}_2 + \mathcal{L}_{\mathrm{int}}$ is a generator of maps in the closure of the thermal operations with respect to $H_1\otimes\hat{1}_2 + \hat{1}_1\otimes H_2$ and $\beta$.
\end{Proposition}

\begin{proof}
One can first observe that if $\mathcal{L}_1$ is a generator with respect to $H_1$, then $\mathcal{L}_1\otimes\mathcal{I}_2$ is a generator with respect to $H_1\otimes \hat{1}_2$. Since $\mathcal{L}_1\otimes \mathcal{I}_2$ and $\mathcal{I}_1\otimes\mathcal{L}_2$ commute, it follows by Lemma \ref{LemmaCommutingGenerators} that $\mathcal{L}_1\otimes \mathcal{I}_2 + \mathcal{I}_1\otimes\mathcal{L}_2$ is a generator of thermal operations with respect to $H_1\otimes\hat{1}_2 + \hat{1}_1\otimes H_2$.\\
Next we observe that if $[H_{\mathrm{int}}, H_1\otimes\hat{1}_2 + \hat{1}_1\otimes H_2] =0$, then it follows that $\mathcal{L}_{\mathrm{int}}$ is a generator of thermal operations with respect to $H_1\otimes\hat{1}_2 + \hat{1}_1\otimes H_2$. By Proposition \ref{AddGeneratorsForThermalOp}  we can conclude that by adding $\mathcal{L}_1\otimes \mathcal{I}_2 + \mathcal{I}_1\otimes\mathcal{L}_2$ and $\mathcal{L}_{\mathrm{int}}$ we obtain a generator of maps in the closure of the thermal operations.
\end{proof}

\subsubsection{\label{SecGeneratorTimeRevThrmOp} Generators of time-reversal symmetric thermal operations}

Recall the definition of time-reversal symmetric thermal operations in Appendix \ref{SecTimeRevSymmThrmOp}.

For a given $\beta \geq 0$, Hermitian operator $H$, and time-reversal $\mathcal{T}$, such that $\mathcal{T}(H) = H$, we say that a linear map $\mathcal{L}$ is a generator of time-reversal symmetric thermal operations, if $e^{t\mathcal{L}}$ is a time-reversal symmetric thermal operation with respect to $\beta$, $H$, and $\mathcal{T}$, for all $t\geq 0$.

 Unfortunately, it seems difficult to directly generalize Lemma \ref{ThermOpComposition}, and thus also Proposition \ref{AddGeneratorsForThermalOp}.
The issue is that even if the unitary operators $U_1$ and $U_2$ in the proof of Lemma \ref{ThermOpComposition} would be time-reversal symmetric with respect to the time-reversals $\mathcal{T}\otimes\mathcal{T}_{B1}$ and $\mathcal{T}\otimes\mathcal{T}_{B2}$, respectively, it is not clear that  $U := [U_2\otimes \hat{1}_{B1}][U_{1}\otimes\hat{1}_{B2}]$ would be time-reversal symmetric under $\mathcal{T}\otimes\mathcal{T}_{B1}\otimes\mathcal{T}_{B2}$. It is conceivable that the set of time-reversal symmetric thermal operations is not closed under composition of channels. Although an interesting issue, we leave it as an open question, and here  explicitly construct a family of generators that yield time-reversal symmetric thermal operations.

For the purpose of establishing our example, let $H$ be a non-degenerate Hamiltonian on a finite-dimensional Hilbert space, and let  $\mathcal{T}$ be a time-reversal such that $\mathcal{T}(H) = H$. We focus on the generator (\ref{nsdlvvd}), i.e., $\mathcal{L}(\rho) =  -\frac{i}{\hbar}[H,\rho]  +r G_{\beta}(H)\Tr(\rho)- r \rho$.
The solution to the master equation corresponding to this generator is given by the family of channels
$\mathcal{F}_t(\rho) = e^{-rt}e^{-itH/\hbar}\rho e^{itH/\hbar} + (1-e^{-rt})G_{\beta}(H)\Tr(\rho)$.

In the following we show that the channel $\mathcal{F}_t$ for each $t$ is a time-reversal symmetric thermal operation. (It is conceivable that the same is true for the channels obtained from the more general generators in (\ref{dvmvmv}), but we will not consider this question here.) Our first observation is that the channel $\mathcal{F}_{t}$ for each $t$ is a convex combination of the Hamiltonian evolution $\rho \mapsto e^{-itH/\hbar}\rho e^{itH/\hbar}$  and the replacement map $\mathcal{R}(\rho):= G_{\beta}(H)\Tr(\rho)$. The Hamiltonian evolution is a time-reversal symmetric thermal operation. In the following we shall show that $\mathcal{R}$ also is a time-reversal symmetric thermal operation.

We let $\mathcal{H}_{B}$ be isomorphic to $\mathcal{H}$. With respect to some selected isomorphism, let the Hamiltonian  on $\mathcal{H}_B$ be a `copy' of the Hamiltonian $H$. Similarly we let the time reversal on $\mathcal{H}_B$ be the isomorphic copy of the one on $\mathcal{H}$. Let $|k\rangle$ simultaneously denote both the energy eigenbasis of $H$ and  $H_{B}$, in such a way that $|k\rangle$ corresponds to the same energy eigenvalue in both cases. 
By assumption, $H$ is non-degenerate. Lemma \ref{TOnHermitean} thus yields $\mathcal{T}(|k\rangle\langle k|) = |k\rangle\langle k|$. Corollary \ref{CorrBasisInvariant} implies that there exist real numbers $\theta_k$ such that $\mathcal{T}(|k\rangle\langle l|) = e^{i(\theta_l-\theta_k)}|l\rangle\langle k|$.
Define the unitary operator $W := \sum_{k,l}|k\rangle\langle l|\otimes |l\rangle\langle k|$ (a swap-operator). It follows that $[\mathcal{T}\otimes\mathcal{T}](W) = W$, and thus $W$ is time-reversal symmetric. One can also see that $[\mathcal{T}\otimes\mathcal{T}](H\otimes\hat{1}+\hat{1}\otimes H) = H\otimes\hat{1}+\hat{1}\otimes H$, and $[H\otimes\hat{1}+\hat{1}\otimes H,W] = 0$. Moreover, $\Tr_B(W[\rho\otimes G_{\beta}(H)]W^{\dagger}) = G_{\beta}(H)\Tr(\rho) = \mathcal{R}(\rho)$. Hence, $\mathcal{R}$  is a time-reversal symmetric thermal operation. Since $\mathcal{F}_t$ is a convex combination of time-reversal symmetric thermal operations, it follows by Lemma \ref{TimeRevSymmThermalConv} that $\mathcal{F}_t$ is also a time-reversal symmetric thermal operation. We can conclude that $\mathcal{L}$ is a generator of time-reversal symmetric thermal operations.

\subsubsection{\label{SecCaution}A word of caution}
The definition of generators of thermal operations (or time-reversal symmetric thermal operations, or their closures) as we have stated it, only requires that the induced channel $\mathcal{F}_t$ is a (time-reversal symmetric) thermal operation for each single time $t$.  In other words, the definition is `pointwise' in the sense that it in principle allows for a different physical setup for each time $t$. We made use of this for the construction in the previous section, where the convex combination of the Hamiltonian evolution and the replacement map is obtained via Lemma \ref{TimeRevSymmThermalConv}, where the weight $\lambda := e^{-rt}$ is obtained via the ancillary Hamiltonian, which has to be adapted for each $t$. (Although as mentioned in Appendix \ref{SecConvTimerevsymm}, one can obtain a proof via an alternative technique.)

This adaptive construction should be put in contrast with another scenario (cf.~the discussion in Appendix \ref{MakeSense}) where we have a fixed interaction Hamiltonian $H_{\mathrm{int}}$ such that $[H_{\mathrm{int}},H_{\textrm{tot}}]=0$, where $H_{\textrm{tot}}:= H\otimes \hat{1}_B + \hat{1}_{SCE}\otimes H_B$, and where the evolution is determined  by $H_{\mathrm{evol}}:= H_{\mathrm{int}} + H_{\textrm{tot}}$. 
For all $t\geq 0$, the channels $\mathcal{F}_{t}(\sigma) = \Tr_{B}(e^{-itH_{\mathrm{evol}}/\hbar}[\sigma\otimes G_{\beta}(H_B)]e^{itH_{\mathrm{evol}}/\hbar})$ are by construction thermal operations with respect to $\beta$ and $H$. In this case we thus have a single physical setup, in the sense of a fixed global Hamiltonian $H_{\mathrm{evol}}$ that generates the evolution for all times.
 However, one would not generally expect that the resulting family of maps $\{\mathcal{F}_t\}_{t\geq 0}$ would satisfy a time-independent Lindblad master equation. (In derivations of Markovian master equations one tends to consider limits of infinite heat baths, see e.g.~\cite{BreuerPetruccione,Davies76}.)

The question is if a generator of thermal operations (in the point-wise sense) always can be implemented by a time-independent physical setup of the type described above (allowing for an infinite heat bath).
We will not explore this question here. However, a potential starting point for such investigations could be the notion of Davies generators \cite{Davies74,Roga10,Temme13}.

\subsection{\label{DissipativeSpins}Further details on the two thermalizing spins in section \ref{SecMainTwoCoupledSpins}}

We consider  two two-level systems, or the internal degrees of freedom of two spin-half particles, with Hamiltonians $H_1 := \frac{1}{2}E\sigma_{z1}$ and $H_{2} := \frac{1}{2}E\sigma_{z2}$. We also assume an interaction Hamiltonian of the form $H_{\mathrm{int}} := \lambda |01\rangle\langle 10| + \lambda|10\rangle\langle 01|$. We let $\mathcal{T}_1$ and $\mathcal{T}_2$ be the transposes in the eigenbasis of $\sigma_{z1}$ and $\sigma_{z2}$, respectively.
On the separate spins we assume thermalizing generators $\mathcal{L}_1$ and $\mathcal{L}_2$ as defined in (\ref{MainTwoSpin}). 
We know from Appendix \ref{SecModelForThrmls} that $\mathcal{L}_1$ and $\mathcal{L}_2$ separately satisfy (\ref{GeneratorRelation}). 
By adding an interaction Hamiltonian, via the generator $\mathcal{L}_{\mathrm{int}}(\rho) := -i\lambda[H_{\mathrm{int}},\rho]/\hbar$,  we obtain the global generator $\mathcal{L}:= \mathcal{L}_1\otimes\mathcal{I}_2 +\mathcal{I}_1\otimes\mathcal{L}_2 + \mathcal{L}_{\mathrm{int}}$, where $\mathcal{I}$ denotes the identity map. By construction, $[\hat{1}_{1}\otimes H_2 + H_1\otimes\hat{1}_2 + \hat{1}_1\otimes, H_{\mathrm{int}}] = 0$, and the conditions of Proposition \ref{GlueGenerators} are satisfied, and thus Proposition \ref{PropIdealGlbChnl} is applicable to all the induced channels $\mathcal{F}_t = e^{t\mathcal{L}}$. On one of the reduced systems, e.g., system $1$, we can generate the conditional dynamics $\tilde{\mathcal{F}}_{\pm}$ as defined in (\ref{mdsflfmsb}) for each $\mathcal{F}_t$, where $\tilde{\mathcal{F}}_{\pm}$ satisfy (\ref{fbfbbfs}). In other words, with one of the particles acting as the energy reservoir for the other, we regain the conditional fluctuation relation.

The generator $\mathcal{L}$ does not only satisfy  (\ref{GeneratorRelation}), but also yields channels in the closure of the thermal operations with respect to $H_1\otimes \hat{1}_2 + \hat{1}_1\otimes H_2$ and $\beta$. To see this, we know from Appendix \ref{SecGeneratorTimeRevThrmOp} that $\mathcal{L}_1$ and $\mathcal{L}_2$ are generators for time-reversal symmetric thermal operations, which is a particular case of thermal operations. Since  $[\hat{1}_{1}\otimes H_2 + H_1\otimes\hat{1}_2, H_{\mathrm{int}}] = 0$, we can use Proposition \ref{GlueGeneratorsForThermalOp} to conclude that $\mathcal{L}$ generates channels in the closure of the thermal operations with respect to $\hat{1}_{1}\otimes H_2 + H_1\otimes\hat{1}_2$ and $\beta$.

One can moreover confirm that $[H_1,\mathcal{L}_1(\rho)] = \mathcal{L}_1([H_1,\rho])$, $[H_2,\mathcal{L}_2(\rho)] = \mathcal{L}_2([H_2,\rho])$, and  we know that $[\hat{1}_{1}\otimes H_2 + H_1\otimes\hat{1}_2, H_{\mathrm{int}}] = 0$. If we assume $[H_2,Q_{2}^{i\pm}] = 0$  and $[H_2,Q_{2}^{f\pm}] = 0$, then the reasoning outlined in Appendix \ref{SecDecouplingDiagAgain} yields that the CPMs $\tilde{\mathcal{F}}_{\pm}$ decouple with respect to the modes of coherence. Since system $1$ in the present case is a single two-level system, it follows that the mode structure becomes particularly simple. There is the diagonal mode, corresponding to $\{|0\rangle\langle 0|,|1\rangle\langle 1|\}$, and the two off-diagonal modes corresponding to $\{|0\rangle\langle 1|\}$ and $\{|1\rangle\langle 0|\}$, respectively.  Due to the decoupling it follows that, e.g., $\langle 0| \tilde{\mathcal{F}}_{\pm} (|0\rangle\langle 1|)|0\rangle =0$.  It must be emphasized though, that the decoupling can only be expected to hold if $Q_{2}^{i\pm}$ and $Q_{2}^{f\pm}$ are diagonal in the energy eigenbasis of  $H_2$.

On a different note one can confirm that $\mathcal{L}_{\mathrm{int}} (\mathcal{L}_1\otimes\mathcal{I}_2 + \mathcal{I}_1\otimes\mathcal{L}_2) \neq (\mathcal{L}_1\otimes\mathcal{I}_2 + \mathcal{I}_1\otimes\mathcal{L}_2)\mathcal{L}_{\mathrm{int}}$. Hence, even though $H_{\mathrm{int}}$ commutes with $H_1\otimes \hat{1}_2 + \hat{1}_1\otimes H_2$, this does not imply that $\mathcal{L}_{\mathrm{int}}$ commutes with  $\mathcal{L}_1\otimes\mathcal{I}_2 + \mathcal{I}_1\otimes\mathcal{L}_2$. As a further remark along these lines, one may observe that Lemma \ref{FixPointOfL} yields that $G_{\beta}(H_1)\otimes G_{\beta}(H_2)$ is a fixpoint of $\mathcal{L}$. By explicit evaluation one  can show  that $G_{\beta}(H_1\otimes\hat{1}_2 + \hat{1}_1\otimes H_2 + H_{\mathrm{int}})$ is not a fixpoint. Hence, in spite of the coupling term $H_{\textrm{int}}$, the fixpoint still remains the product of the local Gibbs states.

\subsection{\label{TwoSpinGlobalThrm}Example of the approximate fluctuation relation: Two weakly interacting spins with global thermalization}
The example of the exact fluctuation relation in Appendix \ref{DissipativeSpins} requires that the spins are resonant and that the interaction Hamiltonian commutes with the sum of the local Hamiltonians.  Here we relax those assumptions, and instead apply the global approximate fluctuation relation of Proposition \ref{GlobalApprox}.

\begin{figure}
 \includegraphics[width= 9cm]{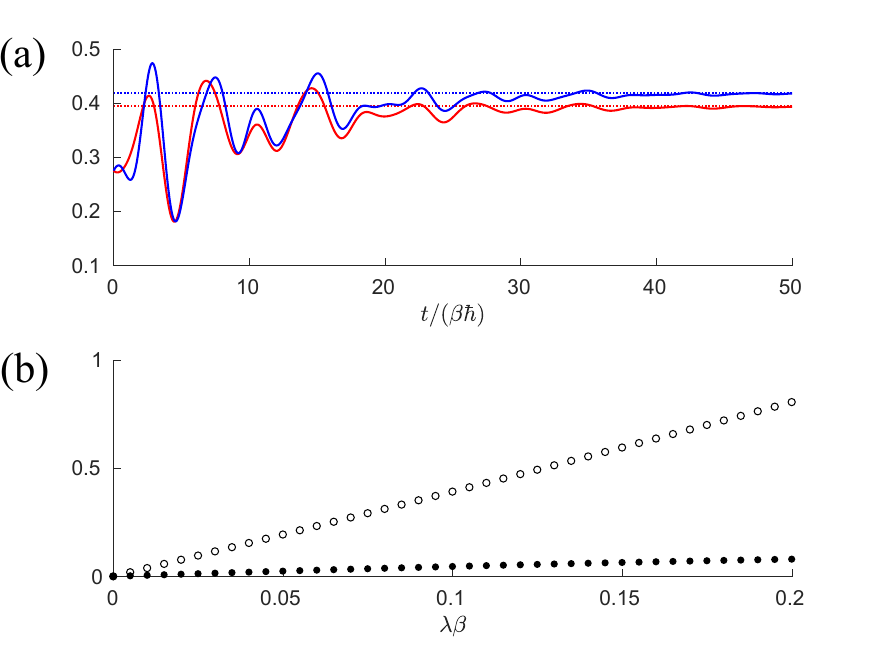} 
   \caption{\label{FigApproxMstrEq} 
 {\bf Approximate fluctuation relation with error bound.}   
(a) $f(t)$ is plotted as the red solid line, and $r(t)$ the blue solid line, for the interval $t/(\beta\hbar)\in[0,50]$.
In the ideal case $f$ and $r$ would be identical.
In the interval the maximal error is $\max_{t/(\beta\hbar)\in[0,50]}|f(t)-r(t)|\approx 0.0795$. This should be compared with the upper bound  in Proposition \ref{GlobalApprox}, which is approximately $0.806$. 
The dotted lines correspond to the limits $\lim_{t\rightarrow +\infty}f(t)$ and  $\lim_{t\rightarrow +\infty}r(t)$, where the system approaches the fixpoint of the master equation.\\
(b) For otherwise fixed settings, the maximum error $\max_{t/(\beta\hbar)\in[0,50]}|f(t)-r(t)|$ is plotted (filled circles) against the value of $\lambda\beta$. The empty circles correspond to the error bound in Proposition \ref{GlobalApprox}
}
\end{figure}

In Appendix \ref{SecControlparticleAgain} we discussed the approximate fluctuation relations in terms of interaction regions, where the quality of the approximation depends on how far out in the non-interacting region that the operators $Q^{i\pm}$ and $Q^{f\pm}$ are localized. Here we take to opportunity to consider an alternative setting, more in the spirit of perturbation theory, where the quality of the approximation rather depends on the interaction strength.

Similar to Appendix \ref{DissipativeSpins} we consider  two  spin-half particles, but where we allow for different excitation energies, i.e., we have the Hamiltonians $H_1 := \frac{1}{2}E_1\sigma_{z1}$ and $H_{2} := \frac{1}{2}E_2\sigma_{z2}$. We also have an interaction Hamiltonian $H_{\textrm{int}}$, resulting in the global Hamiltonian
\begin{equation*}
\begin{split}
H_{\mathrm{tot}} :=   H_0 + \lambda H_{\textrm{int}},\quad H_0 :=  H_1\otimes\hat{1}_2 + \hat{1}_1\otimes H_2. 
\end{split}
\end{equation*}
Instead of the two local generators that we used in Appendix \ref{DissipativeSpins}, here we apply a global relaxation
\begin{equation}
\label{nbfkorgnv}
\mathcal{L}(\rho) = -\frac{i}{\hbar}[H_{\mathrm{tot}},\rho] + rG_{\beta}(H_{\mathrm{tot}})\Tr(\rho) -r\rho.
\end{equation}
 We let $\mathcal{T}_1$ and $\mathcal{T}_2$ be the transposes in the eigenbasis of $\sigma_{z1}$ and $\sigma_{z2}$, respectively.  We assume that the interaction Hamiltonian is such that $[\mathcal{T}_1\otimes \mathcal{T}_2](H_{\textrm{int}}) = H_{\textrm{int}}$. According to the results in Appendix \ref{SecGeneratorTimeRevThrmOp} it follows that $\mathcal{L}$ satisfies (\ref{GeneratorRelation}).

If we let $H^i_1 := H^{f}_1 := H_1$ and $H^i_2 := H^f_2 := H_2$, then all the conditions of Proposition \ref{GlobalApprox} are satisfied for the channels $\mathcal{F}_t=e^{t\mathcal{L}}$, thus yielding the approximate fluctuation relation  (\ref{GlobalApproxFlctnRel}).
Since the local approximate Hamiltonians $H^i_1$, $H^i_2$, $H^f_1$, $H^f_2$ are obtained by disregarding the interaction term, it does intuitively seem reasonable that the error should be small when $\lambda$ is small.

As a concrete example we let $H_{\mathrm{int}} :=  \sigma_{x1}\otimes\sigma_{x2}$, where this interaction Hamiltonian is chosen such that it does not commute with $H_0$, but is such that $[\mathcal{T}_1\otimes\mathcal{T}_2](\sigma_{x1}\otimes\sigma_{x2}) = \sigma_{x1}\otimes\sigma_{x2}$.
For each channel $\mathcal{F}_t = e^{t\mathcal{L}}$, Proposition \ref{GlobalApprox} bounds the difference between 
\begin{equation}
\label{mblmbmb}
\begin{split}
f(t) := & \mathcal{Z}_{\beta H_1}(Q^{i}_1)\mathcal{Z}_{\beta H_2}(Q^{i}_2) \mathcal{P}^{+},\\
r(t):= & \mathcal{Z}_{\beta H_1}(Q_{1}^{f})\mathcal{Z}_{\beta H_{2}}(Q_{2}^{f}) \mathcal{P} ^{-},
\end{split}
\end{equation}
where
\begin{equation*}
\begin{split}
\mathcal{P}^{+} :=& P^{\mathcal{F}_t}_{\beta H_0}[ Q_{1}^{i+}\otimes Q_{2}^{i+} \rightarrow Q_{1}^{f+}\otimes Q_{2}^{f+}],\\
\mathcal{P}^{-} := & P^{\mathcal{F}_t}_{\beta H_0}[ Q_{1}^{f-}\otimes Q_{2}^{f-} \rightarrow Q_{1}^{i-}\otimes Q_{2}^{i-}].
\end{split}
\end{equation*}
For the calculations yielding Fig.~\ref{FigApproxMstrEq}, we assume
\begin{equation*}
E_1\beta = 1,\quad E_2\beta = 1.5,\quad \lambda\beta = 0.2,\quad r\beta\hbar = 0.1,
\end{equation*}
where these are dimensionless groups of parameters.
In Fig.~\ref{FigApproxMstrEq}(a) the pair $f(t), r(t)$ is plotted for times $t/(\beta\hbar)\in[0,50]$, where the operators $Q^{i+}_1, Q^{i+}_2, Q^{f+}_1, Q^{f+}_2$ have been chosen as projectors onto pure states that are selected independently according to the Haar distribution. In the limit of large evolution times, the state of the system evolves to $G_{\beta}(H_{\mathrm{tot}})$, and thus $f(t)$ and $r(t)$ each approaches a limit value, which correspond to the dotted lines. In Fig.~\ref{FigApproxMstrEq}(b) the maximum error over the given time interval is plotted as a function of the interaction strength $\lambda$. This is compared with the bound from  Proposition \ref{GlobalApprox}. Analogous to what we found in Appendix \ref{SecNumericalEvaluation} concerning the error bound in Proposition \ref{ApproxFlctnRel} (and more specifically concerning the bound in (\ref{ApproxFluctQuantitative})), the error bound in  Proposition \ref{GlobalApprox} (empty circles in  Fig.~\ref{FigApproxMstrEq}(b)) seems to be somewhat pessimistic compared to the actual error (filled circles). As expected, the error decreases for decreasing $\lambda$.

\subsection{\label{SecMoreWidelyAppl}More widely applicable approximate fluctuation relations?}

The approximate relation in Proposition \ref{GlobalApprox} allows us to apply the fluctuation relations in a wider setting than the exact relations. Moreover, Proposition \ref{GlobalApprox} provides an analytical bound (although potentially crude) that  is time-independent, which thus allows us to estimate the quality of the approximate fluctuation relation irrespective of how long we allow the system to evolve. However, this comes at the price that the conditions of Proposition \ref{GlobalApprox} have to be satisfied. In particular, for a given generator $\mathcal{L}$ we have to find a Hermitian operator $H$ such that  $\mathcal{L}\mathcal{J}_{\beta H} = \mathcal{J}_{\beta H}\mathcal{L}^{\ominus}$, and this can be challenging. It would thus be desirable to find generalizations that would be more easily applied. As an indication that such generalizations may be possible, here we numerically test the approximate fluctuation relation in a scenario where Proposition \ref{GlobalApprox} is not obviously applicable.

Here we combine the two local generators of Appendix \ref{DissipativeSpins} with the Hamiltonian part that was used in Appendix \ref{TwoSpinGlobalThrm}. In other words, we assume a generator of the form
\begin{equation}
\label{sdvmmvs}
\begin{split}
\mathcal{L}(\rho) := & -\frac{iE_1}{2\hbar}[\sigma_{z1}\otimes\hat{1}_2,\rho] +r_1G_1\otimes\Tr_1(\rho) -r_1\rho\\
 & -\frac{iE_2}{2\hbar}[\hat{1}_1\otimes \sigma_{z2},\rho] +r_2\Tr_2(\rho) \otimes G_2 -r_2\rho\\
& -\frac{i}{\hbar}\lambda[\sigma_{x1}\otimes\sigma_{x2}].
\end{split}
\end{equation}
For the parameters we choose
\begin{equation*}
\begin{split}
& E_1\beta = 1,\,\, E_2\beta = 1.5,\,\, \lambda\beta = 0.2,\,\, r_1\beta\hbar = 0.1,\,\, r_2\beta\hbar = 0.2.
\end{split}
\end{equation*}
Like in Appendix \ref{TwoSpinGlobalThrm} we choose  $Q^{i+}_1, Q^{i+}_2, Q^{f+}_1, Q^{f+}_2$ as projectors onto random pure states.  Analogous to  Fig.~\ref{FigApproxMstrEq} we do in Fig.~\ref{FigNoBound}(a) compare the functions $f(t)$ and $r(t)$ defined in (\ref{mblmbmb}), but where we now evaluate these functions for the generator (\ref{sdvmmvs}). A comparison of Fig.~\ref{FigNoBound} and Fig.~\ref{FigApproxMstrEq} suggests that the behaviors on a qualitative level are rather similar, which gives some indication that a generalization of Proposition \ref{GlobalApprox} could be possible.
For example, one could imagine introducing some systematic approximation at the level of the global fluctuation relation, i.e.,  replacing the condition  $\mathcal{L}\mathcal{J}_{\beta H} = \mathcal{J}_{\beta H}\mathcal{L}^{\ominus}$ with some approximate version. However, we will not consider this question further in this investigation.

\begin{figure}
 \includegraphics[width= 8cm]{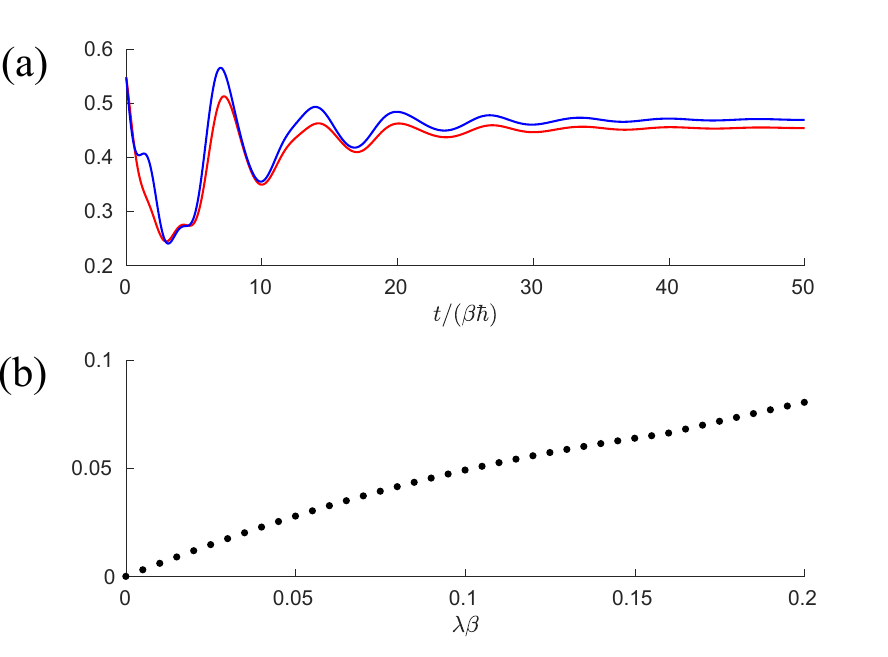} 
   \caption{\label{FigNoBound}  
{\bf Approximate fluctuation relation without bound.}   
(a) Analogous to Fig.~\ref{FigApproxMstrEq}, the functions $f(t)$ (red solid line) and $r(t)$ (blue solid line) are as defined in (\ref{mblmbmb}), but for the new generator (\ref{sdvmmvs}). These are plotted for the interval $t/(\beta\hbar)\in[0,50]$.\\
(b) The maximum error $\max_{t/(\beta\hbar)\in[0,50]}|f(t)-r(t)|$ is plotted (filled circles) against the value of $\lambda\beta$.
}
\end{figure}

\subsection{\label{SecQEFT} Further details on the example in section \ref{SecMainQEFT}}

We let $\{|n\rangle\}_{n}$ be an orthonormal eigenbasis of the Hamiltonian $H_{SC}$, with $H_{SC}|n\rangle = h_n|n\rangle$. The corresponding generator is $\mathcal{L}_{H_{SC}}(\rho) := -\frac{i}{\hbar}[H_{SC},\rho]$. The set of eigenstates of $H_{SC}$ are partitioned into the three subsets $D_0$, $D_1$, and $D_2$, where $D_0$ represents the ground state basin, $D_1$ the desired meta-stable configurations, and $D_2$ all the other states. We  construct the corresponding projectors $P_j :=\sum_{n\in D_j}|n\rangle\langle n|$. For each such set of eigenstates we also define a corresponding partial partition function $Z_j := \sum_{n\in D_j}e^{-\beta h_n} = \mathcal{Z}_{\beta H_{SC}}(P_j)$.

For each of $j = 0,1,2$ we assume a generator of the form
\begin{equation}
\label{localdissipator}
\begin{split}
\mathcal{L}_{j}(\rho) := &   \sum_{k,k'\in D_j}r_j(k'|k) |k'\rangle \langle k|\rho |k\rangle\langle k'|  \\
& -\frac{1}{2}\sum_{k,k'\in D_j}r_j(k'|k)|k\rangle\langle k|\rho \\
&  -\frac{1}{2}\sum_{k,k'\in D_j}r_j(k'|k)\rho|k\rangle\langle k|,
\end{split}
\end{equation}
where we assume that each set $\{r_j(k'|k)\}_{k,k'\in D_j}$ satisfies the detailed balance (\ref{ClassicalDetailedBalance}) described in Appendix \ref{SecModelForThrmls}.
By construction $\mathcal{L}_j$ only causes transitions within $D_j$,  leading to a local thermalization within this basin. We moreover assume a global thermalization $\mathcal{L}_{\mathrm{global}}$ on the same form as (\ref{localdissipator}), but where the sums span over all eigenstates. We model the slow equilibration between the basins by choosing the transition rates $r(k'|k)$ of the global dissipator much smaller than those for the local dissipators.

We assume a two-level energy reservoir with Hamiltonian $H_E := E_0|0\rangle\langle 0| + E_{1}|1\rangle\langle 1|$, and corresponding generator $\mathcal{L}_{H_E}(\rho) := -\frac{i}{\hbar}[H_E,\rho]$. 
The interaction between the energy reservoir and the system is generated via a resonant coupling that causes a transition from the ground state $|0\rangle$ to a selected state $|n^{*}\rangle$ in the desired basin, $n^{*}\in D_1$, such that $E_1-E_0 = h_{n^*}-h_0$. More precisely, the interaction Hamiltonian is of the form
\begin{equation}
\label{InteractHamil}
H_{\mathrm{int}} := \lambda|n^{*}\rangle\langle 0|\otimes|0\rangle\langle 1|+ \lambda |0\rangle\langle n^{*}|\otimes|1\rangle\langle 0|.
\end{equation}
This interaction Hamiltonian is, so to speak, the handle by which we push the system towards the desired basin.
The total generator is
\begin{equation}
\label{GeneratorQEFT}
\mathcal{L} := \mathcal{L}_{H_{SC}}  +\mathcal{L}_{H_E}+ \mathcal{L}_{H_\textrm{int}} + \mathcal{L}_{0} + \mathcal{L}_{1} +\mathcal{L}_{2} + \mathcal{L}_{\mathrm{global}}.
\end{equation}
We choose the time-reversals on both the system and energy reservoir as the transpose with respect to the corresponding energy eigenbasis. One can confirm that the generators $\mathcal{L}_j$ and $\mathcal{L}_{\mathrm{global}}$ all satisfy (\ref{GeneratorRelation}) with respect to $H_{SC}$, by virtue of being special cases of the generator in Appendix \ref{SecModelForThrmls}. By Lemma \ref{LemmaAdditionGenerators} it follows that their sum also satisfies (\ref{GeneratorRelation}) with respect to $H_{SC}$. Lemma \ref{LemmaClosedEvolution} yields that  $\mathcal{L}_{H_E}$ satisfies (\ref{GeneratorRelation}) with respect to $H_{E}$. Since $[H_{SC}\otimes\hat{1}_E+ \hat{1}_{SC}\otimes H_E,H_{\mathrm{int}}] =0$ it follows by Proposition \ref{GlueGenerators},  that $\mathcal{L}$  satisfies (\ref{GeneratorRelation}) with respect to $H_{SC}\otimes\hat{1}_E+ \hat{1}_{SC}\otimes H_E$.
Hence, each $\mathcal{F}_t = e^{t\mathcal{L}}$ satisfies the global fluctuation relation (\ref{nbmvbmn}).

We use the fluctuation relation (\ref{nbmvbmn}) in order to determine the partition function quotient $Z_1/Z_0$ between the desired meta-stable basin $D_1$ and the ground state basin $D_0$. For this purpose we assign $Q^{i+}_{SC}  := P_1$ and  $Q^{f+}_{SC} := P_2$ in   (\ref{nbmvbmn}). By $\mathcal{Z}_{\beta H_{SCE}}(Q^{i\pm}) = Z_0\mathcal{Z}_{\beta H_E}(Q^{i\pm})$ and  $\mathcal{Z}_{\beta H_{SCE}}(Q^{f\pm}) = Z_1\mathcal{Z}_{\beta H_E}(Q^{f\pm})$ we obtain (\ref{ndfklbklnbfdmain}) and (\ref{sdvlkvsdmain}) in the main text.

In the following we describe the specific choices for the numerical evaluation presented in Fig.~\ref{FigQEFT2}.
In order to mimic a somewhat `messy' system with no particular structure, up to the general picture painted in the main text, we select energy levels and transition rates randomly. We let $D_0$ contain $5$ states, $D_1$ consist of $5$ states, and $D_2$ consist of $20$ states. The eigenvalues in $D_0$ are constructed by drawing all $h_n\beta$ independently from the uniform distribution in the interval $[0,2]$, after which the whole spectrum is shifted such that the lowest eigenvalue is at zero, yielding the global ground state. The elements of $D_1$ are such that all $h_n\beta$ are drawn independently and uniformly in the interval $[3,5]$, and those of $D_2$ from the interval $[0,4]$. We choose $n^*$ as the lowest energy level in $D_1$, i.e., the `local' ground state. These constructions implement the idea that the desired basin $D_1$ is higher up in energy compared to the ground state basin $D_0$, and that there are several alternative states in $D_2$ that are energetically favorable. For the transition rates $r_j(k'|k)$ and $r(k'|k)$ we first construct a symmetric matrix $A$ with real non-negative elements, and let $r(k'|k) := A_{k',k} e^{\beta (h_{k}-h_{k'})/2}/(\beta\hbar)$. (The division by $\beta\hbar$ is there in order to make $A$ unit-free.) One can confirm that this guarantees that $r(k'|k)$ satisfies the detailed balance (\ref{ClassicalDetailedBalance}) with respect to $e^{-\beta h_k}$. 
For our implementation we let each independent element of $A$ be assigned as the absolute value of a random number drawn from the Gaussian distribution with zero mean and unit variance. The same procedure is repeated for the local transition rates, but where we additionally take into account that all transitions to the other basins should be zero. In order to model the slower global transitions we furthermore multiply the global $r(k'|k)$ with $0.005$, thus making these global transitions about $200$ times slower than the local transitions within each basin. Finally, we choose the interaction strength $\lambda$ such that $\lambda\beta = 1$.
 For the particular instance in Fig.~\ref{FigQEFT2} we have  $Z_0 \approx 2.27$, $Z_1 \approx 0.0728$, and thus $Z_1/Z_0 \approx 0.0320$.

\subsection{\label{DetailsJC} Further details on the JC-model in section \ref{SecMainJCwithDissipation}}

\subsubsection{$\mathcal{L}$ satisfies (\ref{GeneratorRelationMain})}
Here we show that  the generator $\mathcal{L}$ defined in (\ref{ndfbkjnjkdfb}) satisfies 
(\ref{GeneratorRelationMain}) (which is the same as (\ref{GeneratorRelation})). 
We know from Appendix \ref{ExamplesSatisfyingCond} that the various components in (\ref{ndfbkjnjkdfb}) satisfy (\ref{GeneratorRelation}) with respect to $H_{SC}$. Hence, by Lemma \ref{LemmaAdditionGenerators} we can conclude that $\mathcal{L}_{SC}$ satisfies (\ref{GeneratorRelation}) with respect to $H_{SC}$. The generator $\mathcal{L}_E$ trivially satisfies (\ref{GeneratorRelation}) with respect to $H_{E}$. Since $[H_{SCE},H_{\mathrm{int}}]=0$, 
Proposition \ref{GlueGenerators} yields $\mathcal{L}\mathcal{J}_{\beta H_{SCE}} = \mathcal{J}_{\beta H_{SCE}}\mathcal{L}^{\ominus}$. Since the conditions of Corollary \ref{CorGlue} are satisfied, it follows that all  channels $\mathcal{F}_{t} := e^{t\mathcal{L}}$ for $t\geq 0$ satisfy Eq.~(\ref{flbmmbdf}) in Assumptions \ref{IdealChannelDef}, and thus Proposition \ref{PropIdealGlbChnl} is applicable. We can thus conclude that the conditional fluctuation relation (\ref{fbfbbfsMain}) is satisfied on the energy reservoir for the conditional maps $\tilde{\mathcal{F}}_{\pm}$, based on the underlying evolution $\mathcal{F}_{t}$.

\subsubsection{Decomposition of the dynamics into modes of coherence}
As mentioned in the main text, the generator $\mathcal{L}$ yields  dynamics that decomposes with respect to the modes of coherence.
To see this, one can first confirm that 
$[H_{SC},\mathcal{L}_{SC}(\rho)] = \mathcal{L}_{SC}([H_{SC},\rho])$ and $[H_{E},\mathcal{L}_{E}(\rho)] = \mathcal{L}_{E}([H_E,\rho])$. Since $[H_1\otimes \hat{1}_2 + \hat{1}_1\otimes H_2,H_{\mathrm{int}}] = 0$, and $[H_{SC}, Q_{SC}^{i\pm}] = 0$ and $[H_{SC}, Q_{SC}^{f\pm}] = 0$,  it follows by  the argument outlined in Appendix \ref{SecDecouplingDiagAgain} that the CPMs $\tilde{\mathcal{F}}_{\pm}$ decouple along the modes of coherence.

\subsubsection{The case of no decoupling}

\begin{figure}
 \includegraphics[width= 8cm]{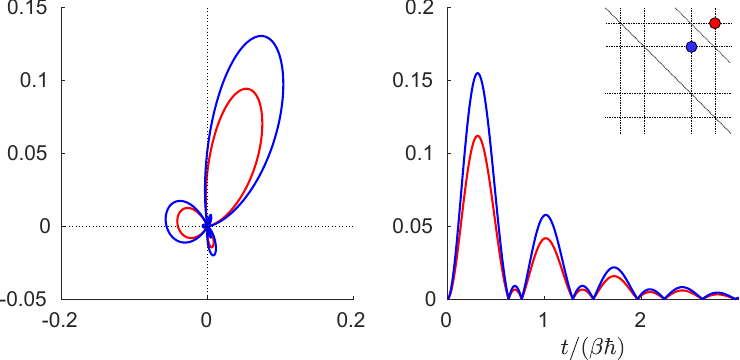} 
   \caption{\label{FigJCcoffNoDiag} 
 {\bf Off-diagonal fluctuation relation in the case of no decoupling.}
The fluctuation theorem (\ref{davnndvaf}) relates the evolution of the coherences carried by two off-diagonal elements along two different displaced diagonals. Hence, as opposed to the case in Fig.~\ref{FigJCcoff}, the two off-diagonal elements belong to two different modes of coherence. For the setting in Fig.~\ref{FigJCcoff} the quantities $d_{\pm}$ defined in (\ref{davnndvaf}) would be identically zero, while for the new measurement operators (\ref{newmeasurmntop}) they become non-trivial, and satisfy the fluctuation relation in  (\ref{davnndvaf}). The plot on the right displays $|d_{+}|$ (red curve) and $|d_{-}|$ (blue curve), while the left depicts the trajectories  $d_{+}$ (red curve) and $d_{-}$ (blue curve) in the complex plane. Like in  Fig.~\ref{FigJCcoff} the proportionality of the absolute values, and identical phase factors, predicted by (\ref{davnndvaf}) are visible. 
}
\end{figure}

As mentioned in the main text, the off-diagonal fluctuation relations are valid even if there is no decoupling. The decoupling merely makes certain fluctuation relations trivial. To illustrate this, let us  consider the pair of off-diagonal elements $|19 \rangle\langle 23|$ and $|20\rangle\langle 22|$, which one can realize belong to two different modes of coherence. The relation (\ref{fbfbbfsMain}) in this case yields
\begin{equation}
\label{davnndvaf}
\begin{split}
& \mathcal{Z}_{\beta H_{SC}}(Q^{i}_{SC}) d_{+}  = \mathcal{Z}_{\beta H_{SC}}(Q^{f}_{SC}) d_{-},\\
& d_{+} := \langle 20| \tilde{\mathcal{F}}_{+} \big(|19 \rangle\langle 23|\big) |22\rangle,\,\, d_{-}:=\langle 19|\tilde{\mathcal{F}}_{-} (|20\rangle\langle 22|)|23 \rangle.
\end{split}
\end{equation}
For diagonal measurement operators, both $d_{+}$ and $d_{-}$ vanish, and thus trivially satisfy (\ref{davnndvaf}). However, let us consider a new couple of measurement operators that are not diagonal in the energy eigenbasis, namely 
\begin{equation}
\label{newmeasurmntop}
\begin{split}
Q^{i+}_{2} := & |\psi_i\rangle\langle\psi_i|,\quad |\psi_i\rangle := (|0\rangle +|1\rangle)/\sqrt{2},\\
Q^{f+}_{2} := & |\psi_f\rangle\langle\psi_f|,\quad  |\psi_f\rangle := (|0\rangle +2|1\rangle)/\sqrt{5}.
\end{split}
\end{equation}
Figure \ref{FigJCcoffNoDiag} illustrates the fluctuation relation (\ref{davnndvaf}) for this choice of  $Q^{i+}_{2}$ and $Q^{f+}_{2}$.

\subsubsection{Is $\mathcal{L}$ a generator of maps in the closure of thermal operations?}

One can argue that the total generator $\mathcal{L}$ should reasonably yield maps in  the closure of the set of thermal operations with respect to $H_{SC}\otimes\hat{1}_E + \hat{1}_E\otimes H_E$. First of all, one can note that $\mathcal{L}_{SC}$ is a special case of the class of generators considered in Appendix \ref{SecGenForThrmOp} (with $r_0$, $r_1$ such that $r_0 +r_1 = 4\kappa$), and thus is a generator of thermal operations with respect to $H_{SC}$. Moreover, $\mathcal{L}_E$ is only the Hamiltonian evolution of $E$ and is thus trivially a thermal operation with respect to $H_E$. By construction, $H_{\mathrm{int}}$ commutes with  $H_{SC}\otimes\hat{1}_E + \hat{1}_E\otimes H_E$, and thus one may be tempted to apply Proposition \ref{AddGeneratorsForThermalOp}. However, this proposition is strictly speaking not applicable since the Hamiltonians $H_E$ and $H_{\mathrm{int}}$, and thus the generators $\mathcal{L}_{E}$ and $\mathcal{L}_{\mathrm{int}}$, are unbounded. However, it appears reasonable that if the initial state has bounded energy, then $\mathcal{L}$ could be truncated to a finite number of energy eigenstates, to an arbitrarily good accuracy. For that truncated  system, Proposition \ref{AddGeneratorsForThermalOp} would reasonably be applicable. This argument suggests that $\mathcal{L}$ is some sense could be a generator of channels in the closure of thermal operations. However, strictly speaking this remains to be proved.
Alternatively, one could consider some generalization of Trotter's decomposition for unbounded operators (see e.g. \cite{Trotter59,Tosio78,Ichinose04}). However, we leave such generalizations as an open question.

\section{\label{SecAddtnlRmrks}Some additional remarks}

\subsection{\label{AppDetailedBalance}Detailed balance}

One way of obtaining fluctuation relations in the classical case is via stochastic dynamics that satisfies detailed balance \cite{ReviewJarzynski,CrooksTheorem,Crooks98}. Apart from Section \ref{SecMainMarkovian} and Appendix \ref{SecModelForThrmls} we have not referred much to detailed balance in our discussions, or used it in the derivations. The reason is that energy conservation and time reversal symmetry in some sense supersedes detailed balance, which we here demonstrate briefly.

Let $H_1$ and $H_2$ be non-degenerate Hermitian operators on a finite-dimensional Hilbert space. We let the global Hamiltonian be non-interacting $H = H_1\otimes\hat{1}_2 + \hat{1}_1\otimes H_2$, and assume an energy conserving unitary evolution $[H,V] = 0$, and a product time-reversal $\mathcal{T} = \mathcal{T}_1\otimes\mathcal{T}_2$ with $\mathcal{T}_1(H_1) = H_1$ and $\mathcal{T}_2(H_2) = H_2$, and $\mathcal{T}(V) = V$. 
Assuming that system $2$ is in equilibrium we can define the transition probability of changing the state of system $1$ from eigenstate $n$ to $n'$ (for a non-degenerate $H_1$) as
\begin{equation}
p(n'|n) :=  \Tr([|n'\rangle\langle n'|\otimes \hat{1}_2]V[|n\rangle\langle n|\otimes G(H_2)]V^{\dagger}).
\end{equation}
By using the energy conservation (and the observation $G(H)[|n\rangle\langle n|\otimes \hat{1}_2] = G_n(H_1)|n\rangle\langle n|\otimes G(H_2)$) it follows that  
\begin{equation*}
\begin{split}
p(n'|n)G_{n}(H_1) =  & G_{n'}(H_1)\Tr([|n\rangle\langle n|\otimes\hat{1}_2]\\
& V^{\dagger}[|n'\rangle\langle n'|\otimes G(H_2)]V).
\end{split}
\end{equation*}
This is almost what we want, apart from the fact that the evolution is reversed. Here we can make use of the time reversal symmetry (and Lemma \ref{TOnHermitean}) to obtain $p(n'|n)G_{n}(H_1) = p(n|n')G_{n'}(H_1)$, i.e., the transition probability $p(n'|n)$ satisfies detailed balance.

\subsection{\label{HeatBathGibbsState}Heat baths in the Gibbs state}

We do in this investigation often assume that the heat bath initially is in the Gibbs state corresponding to a given temperature. Although not an unusual assumption, e.g., in  derivations of master equations \cite{Caldeira83} and \cite{BreuerPetruccione} (see in particular sections 3.6.2.1 and 4.2.2), it is nevertheless worth considering the justification, especially since one may argue that it is not the heat bath \emph{per se} that is Gibbs distributed, but rather systems that are weakly coupled to it.  One possible argument would be that the environment can be separated into a `near environment' that is relevant on the time scale of the experiment, and a `far environment' (or  `super bath') that puts the near environment into  the Gibbs state (see e.g. the discussions in \cite{Hanggi15}).  Another approach would be to assume that an ideal heat bath in some sense behaves  \emph{as if} it is Gibbs distributed. These issues approach the question of thermalization in closed systems \cite{Polkovnikov11,Gogolin15,DAlessio15}, and along these lines one may speculate that typicality \cite{Popescu06,Goldstein06} could be employed to obtain a more refined analysis of fluctuation relations.

\end{appendix}


\begin{thebibliography}{99}
\bibitem{Bustamante05} C. Bustamante, J. Liphardt, and F. Ritort, {\it The nonequilibrium thermodynamics of small systems}, Phys. Today {\bf 58}, No. 7, 43 (2005).
\bibitem{Rondoni07} L. Rondoni and C. Mej\'ia-Monasterio, {\it Fluctuations in nonequilibrium statistical mechanics: models, mathematical theory, physical mechanisms}, Nonlinearity {\bf 20}, R1 (2007). 
\bibitem{ReviewJarzynski} C. Jarzynski, {\it Nonequilibrium work relations: foundations and applications}, Eur. Phys. J. B {\bf 64}, 331 (2008).
\bibitem{ReviewSeifert} U. Seifert, {\it Stochastic thermodynamics: principles and perspectives}, Eur. Phys. J. B {\bf 64}, 423 (2008).
\bibitem{ReviewMarconi} U. M. B. Marconi, A. Puglisi, L. Rondoni, and A. Vulpiani, {\it Fluctuation-dissipation: Response theory in statistical physics}, Phys. Rep. {\bf 461}, 111 (2008).
\bibitem{ReviewFluctThm} C. Jarzynski, {\it Equalities and inequalities: Irreversibility and the second law of thermodynamics at the nanoscale}, Annu. Rev. Condens. Matter Phys. {\bf 2}, 329 (2011). 
\bibitem{CrooksTheorem} G. E. Crooks, {\it Entropy production fluctuation theorem and the nonequilibrium work relation for free energy differences}, {\it Phys. Rev. E} {\bf 60}, 2721 (1999). 
\bibitem{Esposito09} M. Esposito, U. Harbola, and S. Mukamel, {\it Nonequilibrium fluctuations, fluctuation theorems, and counting statistics in quantum systems}, Rev. Mod. Phys. {\bf 81}, 1665 (2009).
\bibitem{Campisi11a} M. Campisi, P. H\"anggi, and P. Talkner, {\it Colloquium: Quantum fluctuation relations: Foundations and applications},  Rev. Mod. Phys. {\bf 83}, 771 (2011).
\bibitem{Hanggi15} P. H\"anggi and P. Talkner, {\it The other QFT}, Nature Physics {\bf 11}, 108 (2015).
\bibitem{Tasaki00} H. Tasaki {\it Jarzynski relations for quantum systems and some applications}, arXiv:cond-mat/0009244 (2000).
\bibitem{Kurchan01} J. Kurchan, {\it A quantum fluctuation theorem}, arXiv:cond-mat/0007360 (2001).
\bibitem{Mukamel03b} S. Mukamel, {\it Quantum extension of the Jarzynski relation: Analogy with stochastic dephasing}, Phys. Rev. Lett. {\bf 90}, 170604 (2003).
\bibitem{Jarzynski04} C. Jarzynski and D. K. W\'ojcik, {\it Classical and quantum fluctuation theorems for heat exchange}, Phys. Rev. Lett. {\bf 92}, 230602 (2004).
\bibitem{Talkner05} P. Talkner, E. Lutz, and P. H\"anggi, {\it Fluctuation theorems: Work is not an observable}, Phys. Rev. E {\bf 75}, 050102(R) (2007).
\bibitem{Saito08} K. Saito and Y. Utsumi {\it Symmetry in full counting statistics, fluctuation theorem, and relations among nonlinear transport coefficients in the presence of a magnetic field}, Phys. Rev. B {\bf 78}, 115429 (2008).
\bibitem{Quan08} H. T. Quan and H. Dong, {\it Quantum Crooks fluctuation theorem and quantum Jarzynski equality in the presence of a reservoir}, arXiv:0812.4955 (2008).
\bibitem{Talkner08} P. Talkner, P. H\"anggi, and M. Morillo, {\it Microcanonical quantum fluctuation theorems}, Phys. Rev. E {\bf 77}, 051131 (2008).
\bibitem{Talkner09} P. Talkner, M. Campisi, and P. H\"anggi, {\it Fluctuation theorems in driven open quantum systems}, J. Stat. Mech. {\bf P02025} (2009).
\bibitem{Andrieux09} D. Andrieux, P. Gaspard, T. Monnai, and S. Tasaki, {\it The fluctuation theorem for currents in open quantum systems}, New J. Phys. {\bf 11}, 043014 (2009).
\bibitem{Campisi10} M. Campisi, P. Talkner, and P. H\"anggi, {\it Fluctuation theorems for continuously monitored quantum fluxes} Phys. Rev. lett. {\bf 105}, 140601 (2010).
\bibitem{Campisi10b} M. Campisi, P. Talkner, and P. H\"anggi, {\it Quantum Bochkov-Kuzovlev work fluctuation theorems}, Phil. Trans. R. Soc. A {\bf 369}, 291 (2011).
\bibitem{Deffner11b} S. Deffner and E. Lutz, {\it Nonequilibrium entropy production for open quantum systems}, Phys. Rev. Lett. {\bf 107}, 140404 (2011).
\bibitem{Campisi11b} M. Campisi, P. Talkner, and P. H\"anggi, {\it Influence of measurements on the statistics of work performed on a quantum system}, Phys. Rev. E {\bf 83}, 041114 (2011).
\bibitem{Talkner13} P. Talkner, M. Morillo, J. Yi, and P. H\"anggi, {\it Statistics of work and fluctuation theorems for microcanonical initial states}, New J. Phys. {\bf 15}, 095001 (2013).
\bibitem{Alhambra16} \'A. M. Alhambra, L. Masanes, J. Oppenheim, and C. Perry,  {\it Fluctuating work: From quantum thermodynamical identities to a second law equality}, Phys. Rev. X, {\bf 6}, 041017 (2016).
\bibitem{Janzing00} D. Janzing, P. Wocjan, R. Zeier, R. Geiss, and Th. Beth, {\it Thermodynamic cost of reliability and low temperatures: Tightening Landauer's principle and the second law}, Int. J. Theor. Phys. {\bf 39}, 2717 (2000).
\bibitem{Horodecki11} M. Horodecki and J. Oppenheim,  {\it Fundamental limitations for quantum and nanoscale thermodynamics}, Nat. Comm. {\bf 4}, 2059 (2013).
\bibitem{Brandao13} F. Brand\~ao, M. Horodecki, N. Ng, J. Oppenheim, and S. Wehner, {\it The second laws of quantum thermodynamics}, PNAS {\bf 112}, 3275 (2015).
\bibitem{Brandao13b} F. G. S. L. Brand\~ao, M. Horodecki, J. Oppenheim, J. M. Renes, and R. W. Spekkens, {\it Resource theory of quantum states out of thermal equilibrium}, Phys. Rev. Lett. {\bf 111}, 250404 (2013).
\bibitem{Skrzypczyk13} P. Skrzypczyk, A. J. Short, and S. Popescu, {\it Extracting work from quantum systems}, arXiv:1302.2811 (2013).
\bibitem{Aberg13} J. {\AA}berg, {\it Catalytic coherence}, Phys. Rev. Lett. {\bf 113}, 150402 (2014).
\bibitem{Ng14} N. H. Y. Ng, L. Man\v cinska, C. Cirstoiu, J. Eisert, and S. Wehner, {\it Limits to catalysis in quantum thermodynamics}, New. J. Phys, {\bf 17}, 085004 (2015).
\bibitem{Lostaglio14a} M. Lostaglio, D. Jennings, and T. Rudolph, {\it Description of quantum coherence in thermodynamics processes requires constraints beyond free energy}, Nat. Commun. {\bf 6}, 6383 (2015).
\bibitem{Lostaglio14b} M. Lostaglio, K. Korzekwa, D. Jennings, and T. Rudolph, {\it Quantum coherence, time-translation symmetry, and thermodynamics}, Phys. Rev. X, {\bf 5}, 021001 (2015).
\bibitem{Lostaglio15c} M. Lostaglio, M. P. M\"uller, and M. Pastena,  
{\it Stochastic independence as a resource in small-scale thermodynamics}, Phys. Rev. Lett. {\bf 115}, 150402 (2015).
\bibitem{Cwiklinski15} P. \'Cwikli\'nski, M. Studzi\'nski, M. Horodecki, and J. Oppenheim, {\it Limitations on the evolution of quantum coherences: Towards fully quantum second laws of thermodynamics}, Phys. Rev. Lett. {\bf 115}, 210403 (2015). 
\bibitem{Narasimhachar15} V. Narasimhachar and G. Gour, {\it Low-temperature thermodynamics with quantum coherence},  Nat. Commun. {\bf 6}, 7689 (2015).
\bibitem{Korzekwa15} K. Korzekwa, M. Lostaglio, J. Oppenheim, and D. Jennings, {\it The extraction of work from quantum coherence}, New J. Phys. {\bf 18}, 023045 (2016).
\bibitem{Masanes15} L. Masanes and J. Oppenheim, {\it A derivation (and quantification) of the third law of thermodynamics}, arXiv:1412.3828 (2015).
\bibitem{Wehner15} S. Wehner, M. M. Wilde, and M. P. Woods, {\it Work and reversibility in quantum thermodynamics}, arXiv: 1506.08145 (2015).
\bibitem{Gallego15} R. Gallego, J. Eisert, and H. Wilming, {\it Thermodynamic work from operational principles}, New J. Phys. {\bf 18}, 103017 (2016).
\bibitem{Woods15} M. Woods, N. Ng, and S. Wehner, {\it The maximum efficiency of nano heat engines depends on more than temperature}, arXiv:1506.02322 (2015).
\bibitem{Alhambra15} \'A. M. Alhambra, J. Oppenheim, and C. Perry, {\it Fluctuating states: What is the probability of a thermodynamical transition?}, Phys. Rev. X {\bf 6}, 041016 (2016).
\bibitem{Skrzypczyk14} P. Skrzypczyk, A. J. Short, and S. Popescu, {\it Work extraction and thermodynamics for individual quantum systems}, Nat. Commun.
{\bf 5}, 4185 (2014).
\bibitem{KrausBook} K. Kraus, {\it States, effects, and operations: Fundamental notions of quantum theory}, Springer lecture notes in physics, Vol. 190 (Springer, Berlin, 1983).
\bibitem{Crooks08} G. E. Crooks, {\it Quantum operation time reversal}, Phys. Rev. A {\bf 77}, 034101 (2008).
\bibitem{Petz86} D. Petz, {\it Sufficient subalgebras and the relative entropy of states of a von Neumann algebra}, Commun. Math. Phys. {\bf 105}, 123 (1986).
\bibitem{Petz88} D. Petz, {\it Sufficiency of channels over von Neumann algebras}, Q. J. Math. {\bf 39}, 97 (1988).
\bibitem{Barnum02} H. Barnum and E. Knill, {\it Reversing quantum dynamics with near-optimal quantum and classical fidelity}, J. Math. Phys. {\bf 43}, 2097 (2002).
\bibitem{Hayden04} P. Hayden, R. Jozsa, D. Petz, and A. Winter, {\it Structure of states which satisfy strong subadditivity of quantum entropy with equality}, Commun. Math. Phys. {\bf 246}, 359 (2004).
\bibitem{Ticozzi09} F. Ticozzi and M. Pavon, {\it On time-reversal and space-time harmonic processes for Markovian quantum channels}, Quantum Inf. Process. {\bf 9}, 551 (2010). 
\bibitem{Albash13} T. Albash, D. A. Lidar, M. Marvian, and P. Zanardi, {\it Fluctuation theorems for quantum processes}, Phys. Rev. E {\bf 88}, 032146 (2013).
\bibitem{Rastegin13} A. E. Rastegin, {\it Non-equilibrium equalities with unital quantum channels}, J. Stat. Mech.  P06016 (2013).
\bibitem{Rastegin14} A. E. Rastegin and K. \.Zyczkowski, {\it Jarzynski equality for quantum stochastic maps}, Phys. Rev. E {\bf 89}, 012127 (2014).
\bibitem{Manzano15} G. Manzano, J. M. Horowitz, and J. M. R. Parrondo, {\it Nonequilibrium potential and fluctuation theorems for quantum maps, Phys. Rev. E {\bf 92}, 032129 (2015).}
\bibitem{Sakurai} J. J. Sakurai, {\it Modern quantum mechanics. Revised edition} (Addison-Wesley, New York, 1994).
\bibitem{Andrieux08} D. Andrieux and P. Gaspard, {\it Quantum work relations and response theory}, Phys. Rev. Lett. {\bf 100}, 230404 (2008).
\bibitem{Sanpera97} A. Sanpera, R. Tarrach, and G. Vidal, {\it Quantum separability, time reversal and canonical decompositions}, arXiv: quant-ph/9707041 (1997). 
\bibitem{Bullock05} S. S. Bullock, G. K. Brennen, and D. P. O'Leary, {\it Time reversal and $n$-qubit canonical decompositions}, J. Math. Phys. {\bf 46}, 062104 (2005).
\bibitem{Peres96} A. Peres, {\it Separability criterion for density matrices}, Phys. Rev. Lett. {\bf 77}, 1413 (1996).
\bibitem{Horodecki96} M. Horodecki, P. Horodecki, and R. Horodecki, {\it Separability of mixed states: necessary and sufficient conditions}, Phys. Lett. A {\bf 223}, 1 (1996).
\bibitem{Nielsen96} I. L. Chuang and M. A. Nielsen, {\it Prescription for experimental determination of the dynamics of a quantum black box}, J. Mod. Opt. {\bf 44}, 2455 (1997).
\bibitem{DAriano02} G. M. D'Ariano, M. De Laurentis, M. G. A. Paris, A. Porzio, and S. Solimeno, {\it Quantum tomography as a tool for the characterization of optical devices}, J. Opt. B {\bf 4}, S127 (2002).
\bibitem{Altepeter03} J. B. Altepeter, D. Branning, E. Jeffrey, T. C. Wel, P. G. Kwiat, R. T. Thew, J. L. O'Brien, M. A. Nielsen, and A. G. White, {\it Ancilla-assisted quantum process tomography}, Phys. Rev. Lett. {\bf 90}, 193601 (2003).
\bibitem{Mohseni08} M. Mohseni, A. T. Rezakhani, and D. A. Lidar, {\it Quantum-process tomography: Resource analysis of different strategies}, Phys. Rev. A {\bf 77}, 032322 (2008).
\bibitem{NielsenChuang} M. A. Nielsen and I. L. Chuang {\it Quantum computation and quantum information}, (Cambridge university press, Cambridge, 2000).
\bibitem{Bochkov81} G. N. Boschkov and Yu. E. Kuzovlev, {\it Nonlinear fluctuation-dissipation relations and stochastic models in nonequilibrium thermodynamics: I. Generalized fluctuation-dissipation theorem}, Physica A {\bf 106}, 443 (1981).
\bibitem{Yukawa00} S. Yukawa, {\it A quantum analogue of the Jarzynski equality}, J. Phys. Soc. Jpn. {\bf 69}, 2367 (2000).
\bibitem{Monnai03} T. Monnai and S. Tasaki, {\it Quantum correction of fluctuation theorem}, arXiv:cond-mat/0308337 (2003).
\bibitem{Engel07} A. Engel and R. Nolte, {\it Jarzynski equation for a simple quantum system: Comparing two definitions of work}, Europhys. Lett. {\bf 79}, 10003 (2007).
\bibitem{Allahverdyan05} A. E. Allahverdyan and Th. M. Nieuwenhuizen, {\it Fluctuations of work from quantum subensambles: The case against quantum work-fluctuation theorems}, Phys. Rev. E {\bf 71}, 066102 (2005).
\bibitem{Gelin08} M. F. Gelin and D. S. Kosov, {\it Unified approach to the derivation of work theorems for equilibrium and steady-state, classical and quantum Hamiltonian systems}, Phys. Rev. E {\bf 78}, 011116 (2008).
\bibitem{Chernyak04} V. Chernyak and S. Mukamel, {\it Effect of quantum collapse on the distribution of work in driven single molecules}, Phys. Rev. Lett. {\bf 93}, 048302 (2004).
\bibitem{Chernyak06} V. Chernyak, F. \v Sanda, and S. Mukamel, {\it Coherence and correlations in multitime quantum measurements of stochastic quantum trajectories}, Phys. Rev. E. {\bf 73}, 036119 (2006).
\bibitem{Li13} H. Li and J.-S. Wang, {\it Work distribution under continuous quantum histories}, arXiv:1304.6286 (2013).
\bibitem{Mukamel03} S. Mukamel, {\it Quantum extension of the Jarzynski relation: Analogy with stochastic dephasing} Phys. Rev. Lett. {\bf 90}, 170604 (2003).
\bibitem{Roeck04} W. De Roeck and C. Maes, {\it Quantum version of free-energy-irreversible-work relations}, Phys. Rev. E {\bf 69}, 026115 (2004).
\bibitem{Monnai05} T. Monnai, {\it Unified treatment of the quantum fluctuation theorem and the Jarzynski equality in terms of microscopic reversibility}, Phys. Rev. E {\bf 72}, 027102 (2005).
\bibitem{Talkner07} P. Talkner and P. H\"anggi, {\it The Tasaki-Crooks quantum fluctuation theorem}, J. Phys. A {\bf 40}, F569 (2007).
\bibitem{Allahverdyan14} A. E. Allahverdyan, {\it Nonequilibrium quantum fluctuations of work}, Phys. Rev. E {\bf 90}, 032137 (2014).
\bibitem{Scully03} 
M. O. Scully, M. S. Zubairy, G. S. Agarwal, and H. Walther, {\it Extracting work from a single heat bath via vanishing quantum coherence},  Science {\bf 299}, 862 (2003).
\bibitem{Rosario13} 
C. A. Rodr\'iguez-Rosario, T. Frauenheim, and A. Aspuru-Guzik, {\it Thermodynamics of quantum coherence}, arXiv 1308.1245 (2013).
\bibitem{Klimowski14} 
D. Gelbwaser-Klimovsky and G. Kurizki, {\it Heat-machine control by quantum-state preparation:From quantum engines to refrigerators}, Phys. Rev. E
90, 022102 (2014).
\bibitem{Uzdin15}
 R. Uzdin, A. Levy, and R. Koslof, {\it Equivalence of quantum heat machines, and quantum-thermodynamic signatures}, Phys. Rev. X {\bf 5}, 031044 (2015).
\bibitem{Streltsov16} 
A. Streltsov, G. Adesso, and M. B. Plenio, {\it Quantum coherence as a resource}, Rev. Mod. Phys. {\bf 89}, 041003 (2017).
\bibitem{Lostaglio17}
M. Lostaglio, K. Korzekwa, and A. Milne, {\it Markovian evolution of quantum coherence under symmetric dynamics}, Phys. Rev. A {\bf 96}, 032109 (2017).
\bibitem{Cirstoiu17}
 C. C\^irstoiu and D. Jennings, {\it Irreversibility and quantum information flow under local \& local gauge symmeties}, arXiv:1707.09826 (2017).
\bibitem{Gour17}
 G. Gour, D. Jennings, F. Buscemi, R. Duan, and I. Marvian, {\it Quantum majorization and a complete set of entropic conditions for quantum thermodynamics}, arXiv:1708.04302 (2017).
\bibitem{Marvian14} I. Marvian and R. W. Spekkens, {\it Modes of asymmetry: The application of harmonic analysis to symmetric quantum dynamics and quantum reference frames}, Phys. Rev. A {\bf 90}, 062110 (2014). 
\bibitem{Jarzynski97} C. Jarzynski, {\it Nonequilibrium equality for free energy differences}, Phys. Rev. Lett. {\bf 78}, 2690 (1997).
\bibitem{Sagawa10} T. Sagawa and M. Ueda, {\it Generalized Jarzynski equality under nonequilibrium feedback control}, Phys. Rev. Lett. {\bf 104}, 090602 (2010).
\bibitem{Toyabe10} S. Toyabe, T. Sagawa, M. Ueda, E. Muneyuki, and M. Sano, {\it Experimental demonstration of information-to-energy conversion and validation of the generalized Jarzynski equality},  Nature Physics, {\bf 6}, 988  (2010).
\bibitem{Morikuni11} Y. Morikuni and H. Tasaki, {\it Quantum Jarzynski-Sagawa-Ueda relations}, J. Stat. Phys. {\bf 143}, 1 (2011).
\bibitem{Procaccia76} I. Procaccia and R. D. Levine, {\it Potential work: A statistical-mechanical approach for systems in disequilibrium}, J. Chem. Phys. {\bf 65}, 3357 (1976).
\bibitem{Takara10} K. Takara, H.-H. Hasegawa, and D. J. Driebe, {\it Generalization of the second law for a transition between nonequilibrium states}, Phys. Lett. A {\bf 375}, 88 (2010).
\bibitem{Esposito11} M. Esposito and C. Van den Broeck, {\it Second law and Landauer principle far from equilibrium},  Europhys. Lett. {\bf 95}, 40004 (2011).
\bibitem{Lindblad83} G. Lindblad, {\it Non-equilibrium entropy and irreversibility} (Reidel, Dortrecht, 1983).
\bibitem{BookHaar} D. ter Haar and H. Wergeland, {\it Elements of thermodynamics}, (Addison-Wesley, Reading, 1966).
\bibitem{Jarzynski99} C. Jarzynski, {\it Microscopic analysis of Clausius-Duhem processes}, J. Stat. Phys. {\bf 96}, 415 (1999).
\bibitem{Zurek03} W. H. Zurek, {\it Decoherence, einselection, and the quantum origins of the classical}, Rev. Mod. Phys. {\bf 75}, 715 (2003).
\bibitem{BlumeKohout06} R. Blume-Kohout, {\it Quantum Darwinism: Entanglement, branches, and the emergent classicality of redundantly stored quantum information}, Phys. Rev. A {\bf 73}, 062310 (2006).
\bibitem{Zurek09} W. H. Zurek, {\it Quantum Darwinism}, Nature Phys. {\bf 5}, 181  (2009).
\bibitem{Cleuren06} B. Cleuren, C. Van den Broeck, and R. Kawai, {\it Fluctuation and dissipation of work in a Joule experiment}, Phys. Rev. Lett. {\bf 96}, 050601 (2006). 
\bibitem{Campisi08} M. Campisi, {\it Complementary expressions for the entropy-from-work theorem}, Phys. Rev. E {\bf 78}, 012102 (2008).
\bibitem{Seifert05} U. Seifert, {\it Entropy production along a stochastic trajectory and an integral fluctuation theorem}, Phys. Rev. Lett. {\bf 95}, 040602 (2005).
\bibitem{Seifert12} U. Seifert, {\it Stochastic thermodynamics, fluctuation theorems and molecular machines}, Rep. Prog. Phys. {\bf 75}, 126001 (2012).
\bibitem{Maragakis08} P. Maragakis, M. Spichty and M. Karplus, {\it A differential fluctuation theorem}, J. Phys. Chem B, {\bf 112}, 6168 (2008).
\bibitem{Junier09}   I. Junier, A. Mossa, M. Manosas and F. Ritort, {\it Recovery of free energy branches in single molecule experiments},  Phys. Rev. Lett. {\bf 102}, 070602 (2009). 
\bibitem{Alemany12}  A. Alemany, A. Mossa, I. Junier, and F. Ritort, {\it Experimental free-energy measurements of kinetic molecular states using fluctuation theorems}, Nature Phys. {\bf 8}, 688 (2012).
\bibitem{Liphardt12} J. Liphardt, {\it Thermodynamic limits}, Nature Phys. {\bf 8}, 638 (2012). 
\bibitem{Roldan14}  \'E. Rold\'an, I. A. Mart\'inez, J. M. R. Parrondo, and D. Petrov, {\it Universal features in the energetics of symmetry breaking}, Nature Phys. {\bf 10}, 457 (2014).
\bibitem{Gavrilov17}  M. Gavrilov, R. Ch\'etrite, and J. Bechhoefer, {\it Direct measurement of nonequilibrium system entropy is consistent with Gibbs-Shannon form}, arXiv:1703.07601 (2017).
\bibitem{Davies76} E. B. Davies, {\it Quantum theory of open systems} (Academic Press, NewYork, 1976).
\bibitem{Holevo82} A. S. Holevo, {\it Probabilistic and statistical aspects of quantum theory} (North Holland, Amsterdam,1982).
\bibitem{Malabarba15} A. S. L. Malabarba, A. J. Short, and P. Kammerlander, {\it Clock-driven quantum thermal engines}, New. J. Phys. {\bf 17}, 045027 (2015).
\bibitem{Frenzel15} M. F. Frenzel, D. Jennings, and T. Rudolph, {\it Quasi-autonomous quantum thermal machines and quantum to classical energy flow}, New J. Phys. {\bf 18}, 023037 (2016).
\bibitem{BreuerPetruccione} H.-P. Breuer and F. Petruccione, {\it The theory of open quantum systems}  (Oxford University Press, Oxford, 2010).
\bibitem{Esposito06} 
M. Esposito and S. Mukamel, {\it Fluctuation theorems for quantum master equations}, Phys. Rev. E {\bf 73}, 046129 (2006).
\bibitem{Kawamoto11}
T. Kawamoto and N. Hatano, {\it Test of fluctuation theorems in non-Markovian open quantum systems}, Phys. Rev. E {\bf 84}, 031116 (2011).
\bibitem{Chetrite12} R. Chetrite and K. Mallick, {\it Quantum fluctuation relations for the Lindblad master equation}, J. Stat. Phys. {\bf 148}, 480 (2012).
\bibitem{Leggio13a} B. Leggio, A. Napoli, H.-P. Breuer, and A. Messina, {\it Fluctuation theorems for non-Markovian quantum processes}, Phys. Rev. E {\bf 87}, 032113 (2013).
\bibitem{Suomela14} S. Suomela, P. Solinas, J. P. Pekola, J. Ankerhold, and T. Ala-Nissila, {\it Moments of work in the two-point measurement protocol for a driven open quantum system}, Phys. Rev. B {\bf 90}, 094304 (2014).
\bibitem{Leggio13b} B. Leggio, A. Napoli, A. Messina, and H.-P. Breuer, {\it Entropy production and information fluctuations along quantum trajectories}, Phys. Rev. A {\bf 88}, 042111 (2013).
\bibitem{Suomela15} S. Suomela, J. Salmilehto, I. G. Savenko, T. Ala-Nissila, and M. M\"ott\"onen, {\it Fluctuations of work in nearly adiabatically driven open quantum systems}, Phys. Rev. E {\bf 91}, 022126 (2015).
\bibitem{Bai14} Z.-W. Bai, {\it Quantum work fluctuation theorem: Nonergodic Brownian motion case}, Phys. Lett. A {\bf 378}, 1679 (2014).
\bibitem{Janzing06} D. Janzing {\it Computer science approach to quantum control} (Karlsruhe, Universit\"atsverlag Karlsruhe, 2006).
\bibitem{Gour} G. Gour, M. P. M\"uller, V. Narasimhachar, R. W. Spekkens, and N. Y. Halpern, {\it The resource theory of informational nonequilibrium in thermodynamics}, Phys. Rep. {\bf 583}, 1 (2015).
\bibitem{Renes14} J. M. Renes, {\it Work cost of thermal operations in quantum thermodynamics}, Eur. Phys. J. Plus {\bf 129}, 153 (2014).
\bibitem{Faist15} P. Faist, J. Oppenheim, and R. Renner, {\it Gibbs-preserving maps outperform thermal operations in the quantum regime}, New J. Phys. {\bf 17}, 043003 (2015).
\bibitem{Perry16} 
C. Perry, P. \'Cwikli\'nski, J. Anders, M. Horodecki, J. Oppenheim, {\it A sufficient set of experimentally implemetable thermal operations}, arXiv:1511.06553 (2016).
\bibitem{Scharlau16}
 J. Scharlau and M. P. M\"uller, {\it Quantum Horn's lemma, finite heat baths, and the third law of thermodynamics}, arXiv:1605.06092 (2016).
\bibitem{Lostaglio16}
M. Lostaglio, A. M. Alhambra, and C. Perry, {\it Elementary thermal operations}, arXiv:1607.00394 (2016).
\bibitem{Lindblad76} G. Lindblad, {\it On the generators of quantum dynamical semigroups}, Commun. Math. Phys. {\bf 48}, 119 (1976).
\bibitem{Gorini76} V. Gorini, A. Kossakowski, and E. C. G. Sudarshan, {\it Completely positive dynamical semigroups of $N$-level systems}, J. Math. Phys. {\bf 17}, 821 (1976).
\bibitem{Temme13} K. Temme {\it Lower bounds to the spectral gap of Davies generators}, J. Math. Phys. {\bf 54}, 122110 (2013).
\bibitem{Jaynes63} E. T. Jaynes and F.W. Cummings, {\it Comparison of quantum and semiclassical radiation theories with application to the beam maser},  Proc. IEEE {\bf 51}, 89
(1963).
\bibitem{Shore93} B.W. Shore and P. L. Knight,  {\it The Jaynes Cummings model}, J. Mod. Opt. {\bf 40}, 1195 (1993).
\bibitem{Gambetta08} J. Gambetta, A. Blais, M. Boissonneault, A. A. Houck, D. I. Schuster, and S. M. Girvin, {\it Quantum trajectory approach to circuit QED: Quantum jumps and the Zeno effect}, Phys. Rev. A {\bf 77}, 012112 (2008).
\bibitem{Schmidt10} S. Schmidt, D. Gerace, A. A. Houck, G. Blatter, and H. E. T\"ureci, {\it Nonequilibrium delocalization-localization transition of photons in circuit quantum electrodynamics} Phys. Rev. B {\bf 82}, 100507(R) (2010).
\bibitem{Houck12} A. A. Houck, H. E. T\"ureci, and J. Koch, {\it On-chip quantum simulations with superconducting circuits}, Nature Physics {\bf 8}, 292  (2012).
\bibitem{Struck14} P. R. Struck, H. Wang, and G. Burkard, {\it Nanomechanical readout of a single spin}, Phys. Rev. B {\bf 89}, 045404 (2014).
\bibitem{Palyi12} A. P\'alyi, P. R. Struck, M. Rudner, K. Flensberg, and G. Burkhard, {\it Spin-orbit-induced strong coupling of a single spin to a nanomechanical resonator}, Phys. Rev. Lett. {\bf 108}, 206811 (2012).
\bibitem{Tiwari08} R. P. Tiwari and D. Stroud, {\it Numerical study of energy loss by a nanomechanical oscillator coupled to a Cooper-pair box}, Phys. Rev. B {\bf 77}, 214520 (2008).
\bibitem{Liphardt02} J. Liphardt, S. Dumont, S. B. Smith, I. Tinoco Jr., and C. Bustamante, {\it Equilibrium information from nonequilibrium measurements in an experimental test of Jarzynski's equality}, {\it Science} {\bf 296}, 1832  (2002).
\bibitem{Collin05} D. Collin, F. Ritort, C. Jarzynski, S. B. Smith, I. Tinoco Jr., and C. Bustamante, {\it Verification of the Crooks fluctuation theorem and recovery of RNA folding free energies}, Nature {\bf 437}, 231  (2005).
\bibitem{Douarche05} F. Douarche, S. Ciliberto, A. Petrosyan, and I. Rabbiosi, {\it An experimental test of the Jarzynski equality in a mechanical experiment}, Europhys. Lett. {\bf 70}, 593 (2005).
\bibitem{Harris07} N. C. Harris, Y. Song, and C.-H. Kiang, {\it Experimental free energy surface reconstruction from single-molecule force spectroscopy using Jarzynski's equality}, Phys. Rev. Lett. {\bf 99}, 068101 (2007).
\bibitem{Gieseler14} J. Gieseler, R. Quidant, C. Dellago, and L. Novotny {\it Dynamic relaxation of a levitated nanoparticle from a non-equilibrium steady state}, Nature Nanotechnology {\bf 9}, 358 (2014).
\bibitem{Huber08} G. Huber, F. Schmidt-Kaler, S. Deffner, and E. Lutz, {\it Employing trapped cold ions to verify the quantum Jarzynski equality}, Phys. Rev. Lett. {\bf 101}, 070403 (2008).
\bibitem{Heyl12} M. Heyl and S. Kehrein, {\it Crooks relation in optical spectra: Universality in work distributions for weak local quenches}, Phys. Rev. Lett. {\bf 108}, 190601 (2012).
\bibitem{Pekola13} J. P. Pekola, P. Solinas, A. Shnirman, and D. V. Averin, {\it Calorimetric measurement of work in a quantum system}, New J. Phys. {\bf 15}, 115006 (2013).
\bibitem{Dorner13} R. Dorner, S. R. Clark, L. Heaney, R. Fazio, J. Goold, and V. Vedral, {\it Extracting quantum work statistics and fluctuation theorems by single-qubit interferometry}, Phys. Rev. Lett. {\bf 110}, 230601 (2013).
\bibitem{Mazzola13} L. Mazzola, G. De Chiara, and M. Paternostro, {\it Measuring the characteristic function of the work distribution}, Phys. Rev. Lett. {\bf 110}, 230602 (2013).
\bibitem{Campisi13b} M. Campisi, R. Blattmann, S. Kohler, D. Zueco, and P. H\"anggi, {\it Employing circuit QED to measure non-equilibrium work fluctuations}, New J. Phys. {\bf 15}, 105028 (2013).
\bibitem{Campisi} M. Campisi, J. Pekola, and R. Fazio, {\it Nonequilibrium fluctuations in quantum heat engines: Theory, example, and possible solid state experiments}, New J. Phys. {\bf 17}, 035012 (2015).
\bibitem{Batalhao14} T. B. Batalh\~ao, A. M. Souza, L. Mazzola, R. Auccaise, R. S. Sarthour, I. S. Oliveira, J. Goold, G. De Chiara, M. Paternostro, and R. M. Serra, {\it Experimental reconstruction of work distribution and study of fluctuation relations in a closed quantum system}, Phys. Rev. Lett. {\bf 113}, 140601 (2014).
\bibitem{An15} S. An, J.-N. Zhang, M. Um, D. Lv, Y. Lu, J. Zhang, Z.-Q. Yin, H. T. Quan, and K. Kim, {\it Experimental test of the quantum Jarzynski equality with a trapped-ion system}, Nature Physics {\bf 11}, 193 (2015).
\bibitem{Saira12} O.-P. Saira, Y. Yoon, T. Tanttu, M. M\"ott\"onen, D. V. Averin, and J. P. Pekola, {\it Test of the Jarzynski and Crooks fluctuation relations in an electronic system}, Phys. Rev. Lett. {\bf 109}, 180601 (2012).
\bibitem{Schuler05} S. Schuler, T. Speck, C. Tietz, J. Wrachtrup, and U. Seifert, {\it Experimental test of the fluctuation theorem for a driven two-level system with time-dependent rates}, Phys. Rev. Lett. {\bf 94}, 180602 (2005).
\bibitem{Popescu06} S. Popescu, A. J. Short, and A. Winter, {\it Entanglement and the foundations of statistical mechanics}, Nature Physics, {\bf 2}, 754 (2006).
\bibitem{Goldstein06} S. Goldstein, J. L. Lebowitz, R. Tumulka, and N. Zangh\`i, {\it Canonical typicality}, Phys. Rev. Lett. {\bf 96}, 050403 (2006).
\bibitem{Yi12}  J. Yi, Y. W. Kim, and P. Talkner,   {\it Work fluctuations for Bose particles in grand canonical initial states}, Phys. Rev. E {\bf 85}, 051107 (2012).
\bibitem{Lostaglio15d} M. Lostaglio, D. Jennings, and T. Rudolph, {\it Thermodynamic resource theories, non-commutativity and maximum entropy principles}, New J. Phys. {\bf 19}, 043008 (2017).
\bibitem{Halpern15c} N. Y. Halpern, P. Faist, J. Oppenheim, and A. Winter, {\it Microcanonical and resource-theoretic derivations of the thermal state of a quantum system with noncommuting charges}, Nat. Commun. {\bf 7}, 12051 (2016).   
\bibitem{Guryanova15} Y. Guryanova, S. Popescu, A. J. Short, R. Silva, and P. Skrzypczyk, {\it Thermodynamics of quantum systems with multiple conserved quantities}, Nat. Commun. {\bf 7}, 12049 (2016).
\bibitem{PerernauLlobet15} M. Perarnau-Llobet, A. Riera, R. Gallego, H. Wilming, and J. Eisert, {\it Work and entropy production in generalized Gibbs ensembles}, New J. Phys. {\bf 18}, 123035 (216).
\bibitem{Horodecki13} M. Horodecki and J. Oppenheim, {\it (Quantumness in the context of) resource theories}, Int. J. Mod. Phys. B {\bf 27}, 1345019 (2013).
\bibitem{Dahlsten} O. C. O. Dahlsten, R. Renner, E. Rieper, and V. Vedral, {\it Inadequacy of von Neumann entropy for characterizing extractable work},  New J. Phys. {\bf 13}, 053015 (2011).
\bibitem{delRio} L. del Rio, J. {\AA}berg, R. Renner, O. Dahlsten, and V. Vedral, {\it The thermodynamic meaning of negative entropy}, Nature {\bf 474}, 61 (2011). 
\bibitem{Aberg12} J. {\AA}berg, {\it Truly work-like work extraction via a single-shot analysis}, Nat. Commun. {\bf 4}, 1925 (2013).
\bibitem{Egloff} D. Egloff, O. C. O. Dahlsten, R. Renner, and V. Vedral, {\it A measure of majorization emerging from single-shot statistical mechanics}, New J. Phys. {\bf 17}, 073001 (2015). 
\bibitem{EgloffThesis} D. Egloff,  {\it Work value of mixed states in single instance work extraction games}, Master's thesis, ETH Zurich (2010).
\bibitem{Faist} P. Faist, F. Dupuis, J. Oppenheim, and R. Renner, {\it The minimal work cost of information processing}, Nat. Commun. {\bf 6}, 7669 (2015).
\bibitem{Hovhannisyan13} K. V. Hovhannisyan, M. Perarnau-Llobet, M. Huber, and A. Ac\'in, {\it Entanglement generation is not necessary for optimal work extraction}, 
Phys. Rev. Lett. 111, 240401 (2013).
\bibitem{Reeb14} D. Reeb and M. M. Wolf, {\it An improved Landauer principle with finite-size corrections}, New J. Phys. {\bf 16}, 103011 (2014).
\bibitem{Halpern14} N. Y. Halpern and J. M. Renes, {\it Beyond heat baths: Generalized resource theories for small-scale thermodynamics}, Phys. Rev. E {\bf 93}, 022126 (2016).
\bibitem{Kammerlander15} P. Kammerlander and J. Anders, {\it Coherence and measurement in quantum thermodynamics}, Sci. Rep. {\bf 6}, 22174 (2016).
\bibitem{Gemmer15} J. Gemmer and J. Anders, {\it From single-shot towards general work extraction in a quantum thermodynamic framework}, New J. Phys. {\bf 17}, 085006 (2015).
\bibitem{PerernauLlobet15b} M. Perarnau-Llobet, K. V. Hovhannisyan, M. Huber, P. Skrzypczyk, N. Brunner, and A. Ac\'in, {\it Extractable work from correlations}, Phys. Rev. X {\bf 5}, 041011 (2015).
\bibitem{PerernauLlobet15c} M. Perarnau-Llobet, K. V. Hovhannisyan, M. Huber, P. Skrzypczyk, J. Tura, and A. Ac\'in, {\it Most energetic passive states}, Phys. Rev. E {\bf 92}, 042147 (2015).
\bibitem{Goold15} J. Goold, M. Huber, A. Riera, L. del Rio, and P. Skrzypczyk, {\it The role of quantum information in thermodynamics - a topical review}, J. Phys. A:Math. Theor. {\bf 49}, 143001 (2016).
\bibitem{Vinjanampathy15} S. Vinjanampathy and J. Anders, {\it Quantum thermodynamics}, Contemporary Physics {\bf 57}, 545 (2016).
\bibitem{Halpern15a} N. Y. Halpern, A. J. P. Garner, O. C. O. Dahlsten, and V. Vedral, {\it Introducing one-shot work into fluctuation relations},  New J. Phys. {\bf 17}, 095003 (2015).
\bibitem{Salek15} S. Salek and K. Wiesner, {\it Fluctuations in single-shot $\epsilon$-deterministic work extraction},  arXiv:1504.05111 (2015).
\bibitem{Dahlsten15} O. C. O. Dahlsten, M.-S. Choi, D. Braun, A. J. P. Garner, N. Y. Halpern, and V. Vedral, {\it Equality for worst-case work at any protocol speed}, New J. Phys. {\bf 19}, 043013 (2017).
\bibitem{Halpern15b} N. Y. Halpern, A. J. P. Garner, O. C. O. Dahlsten, and V. Vedral, {\it What's the worst that could happen? One-shot dissipated work from R\'enyi divergences}, arXiv:1505.06217 (2015).
\bibitem{Sagawa08} T. Sagawa and M. Ueda, {\it Second law of thermodynamics with discrete quantum feedback control}, Phys. Rev. Lett. {\bf 100}, 080403 (2008).
\bibitem{Wigner56} E. P. Wigner, {\it Group theory and its application to the quantum mechanics of atomic spectra},   (Academic Press, 1959).
\bibitem{Roberts92} J. A. G. Roberts and G. R. W. Quispel, {\it Chaos and time-reversal symmetry. Order and chaos in reversible dynamical systems}, Physics Reports {\bf 216}, 63 (1992).
\bibitem{Lamb98}  J. S. W. Lamb and J. A. G. Roberts, {\it Time-reversal symmetry in dynamical systems: A survey}, Physica D {\bf 112}, 1 (1998).
\bibitem{Chetrite08} R. Chetrite and K. Gaw{\c e}dzki, {\it Fluctuation relations for diffusion processes}, Commun. Math. Phys. {\bf 282}, 469 (2008).
\bibitem{Gottfried66} K. Gottfried {\it Quantum mechanics}, Vol. 1 (W. A Benjamin, London, 1966).
\bibitem{Kreyszig}  E. Kreyszig, {\it Introductory functional analysis with applications} (Wiley, 1989).
\bibitem{HornJohnson} R. A. Horn and C. R. Johnson, {\it Matrix analysis}, 2nd Ed. (Cambridge University Press, Cambridge, 2013).
\bibitem{MatrixAnalysisBhatia} R. Bhatia, {\it Matrix analysis}, Graduate texts in mathematics 169 (Springer, New York, 1997).
\bibitem{Jaynes57a} E. T. Jaynes, {\it Information theory and statistical mechanics}, Phys. Rev. {\bf 106}, 620 (1957). 
\bibitem{Jaynes57b} E. T. Jaynes, {\it Information theory and statistical mechanics. II}, Phys. Rev. {\bf 108}, 171 (1957).
\bibitem{Holevo} A. S. Holevo, {\it Quantum coding theorems} Russian Math. Surveys {\bf 53}, 1295 (1998).
\bibitem{Horodecki03} M. Horodecki, P. W. Shor, and M. B. Ruskai, {\it Entanglement breaking channels} Rev. Math. Phys {\bf 15}, 629 (2003).
\bibitem{Ruskai03} M. B. Ruskai, {\it Qubit entanglement breaking channels} Rev. Math. Phys {\bf 15}, 643 (2003).
\bibitem{Raginsky02} M. Raginsky, {\it Strictly contractive quantum channels and physically realizable quantum computers}, Phys. Rev. A {\bf 65}, 032306 (2002).
\bibitem{Davies74} E. B. Davies, {\it Markovian master equations}, Commun. Math. Phys. {\bf 39}, 91 (1974).
\bibitem{Roga10} W. Roga, M. Fannes, and K. \.Zyczkowski, {\it Davies maps for qubit and qutrits}, Rep. Math. Phys. {\bf 66}, 311 (2010).
\bibitem{Ford96} G. W. Ford, J. T. Lewis, and R. F. O'Connell, {\it Master equation for an oscillator coupled to the electromagnetic field}, Ann. Phys. {\bf 252}, 362 (1996).
\bibitem{Suzuki76} M. Suzuki, {\it Generalized Trotter's formula and systematic approximants of exponential operators and inner derivations with applications to many-body problems}, Commun. Math. Phys. {\bf 51}, 183 (1976).
\bibitem{Trotter59} 
H. F. Trotter, {\it On the product of semi-groups of operators}, Proc. Amer. Math. Soc. {\bf 10}, 545 (1959).
\bibitem{Tosio78} 
K. Tosio, {Trotter's product formula for an arbitrary pair of self-adjoint contraction semigroups}, in Topics in functional analysis, Adv. in Math. Suppl. Stud. {\bf 3}, 185 (Academic Press, New York, 1978).
\bibitem{Ichinose04} 
T. Ichinose, H. Neidhardt, and V. A. Zagrebnov, {\it Trotter-Kato product formula and fractional powers of self-adjoint generators}, J. Func. Anal. {\bf 207}, 33 (2004).
\bibitem{Crooks98} G. E. Crooks, {\it Nonequilibrium measurements of free energy differences for microscopically reversible Markovian systems},  J. Stat. Phys. {\bf 90}, 1481 (1998).
\bibitem{Caldeira83} A. O. Caldeira and A. J. Legget, {\it Path integral approach to quantum Brownian motion}, Physica A {\bf 121}, 587 (1983).
\bibitem{Polkovnikov11} A. Polkovnikov, K. Sengupta, A. Silva, and M. Vengalattore, {\it Colloquium: Nonequilibrium dynamics of closed interacting quantum systems}, Rev. Mod. Phys. {\bf 83}, 863 (2011).
\bibitem{Gogolin15} C. Gogolin and J. Eisert, {\it Equilibration, thermalization, and the emergence of statistical mechanics in closed quantum systems}, Rep. Prog. Phys. {\bf 79}, 056001 (2016).
\bibitem{DAlessio15} L. D' Alessio, Y. Kafri, A. Polkovnikov, and M. Rigol, {\it From quantum chaos and eigenstate thermalization to statistical mechanics and thermodynamics}, Advances in Physics {\bf 65}, 239 (2016).




\end{thebibliography}
\end{document}